\documentclass[11pt]{article}
\usepackage{amssymb,amsfonts,amsmath,amsthm,amscd,dsfont,mathrsfs}
\usepackage{graphicx,float,psfrag,epsfig,color,url}
\usepackage{latexsym}

\newtheoremstyle{ex}
 {8pt}%	Space above
{8pt}%	Space below
 {}%	Body font
{} %	Indent amount
{\bf}% Theorem head font
{.}%	Punctuation after theorem head
{.5em}%	Space after theorem head
 {}%	Theorem head spec (can be left empty, meaning normal)

 \textwidth  6.5in
\textheight 8.5in

\topmargin -0.25in
\oddsidemargin 0.in \evensidemargin 0.in

%%%%%%%%%%%% BEGIN COLOR DEFINITIONS

\definecolor{Red}{rgb}{1,0,0}
\definecolor{Blue}{rgb}{0,0,1}
\definecolor{Olive}{rgb}{0.41,0.55,0.13}
\definecolor{Green}{rgb}{0,1,0}
\definecolor{MGreen}{rgb}{0,0.8,0}
\definecolor{DGreen}{rgb}{0,0.55,0}
\definecolor{Yellow}{rgb}{1,1,0}
\definecolor{Cyan}{rgb}{0,1,1}
\definecolor{Magenta}{rgb}{1,0,1}
\definecolor{Orange}{rgb}{1,.5,0}
\definecolor{Violet}{rgb}{.5,0,.5}
\definecolor{Purple}{rgb}{.75,0,.25}
\definecolor{Brown}{rgb}{.75,.5,.25}
\definecolor{Grey}{rgb}{.5,.5,.5}
\definecolor{Pink}{rgb}{1,0,1}
\definecolor{DBrown}{rgb}{.5,.34,.16}

\definecolor{Black}{rgb}{0,0,0}

%%%%%%%%%%%%%%%%%%%%%%%%%% END COLOR DEFINITIONS

\def\Pr{{\sf P}}

\def\E{{\mathbb E}}

\def\bx{{\bf X}}
\def\by{{\bf Y}}
\def\bz{{\bf Z}}
\def\cS{{\cal S}}

\def\os{\overline{s}}

\def\onu{\overline{\nu}}
\def\E{\mathbb{E}}

\def\de{{\rm d}}
\def\ind{{\mathbb I}}

\def\id{{\rm I}}
\def\supp{{\rm supp}}
\def\tR{\widetilde{R}}

\newcommand{\hx}{\widehat{x}}
\newcommand{\hp}{\widehat{p}}

\newcommand{\halpha}{\widehat{\alpha}}
\newcommand{\hPT}{\widehat{{\rm PT}}}
\newcommand{\hbeta}{\widehat{\beta}}

\newcommand{\normal}{\hat{x}_{1.\lambda}}
\def\normal{{\sf N}}

\def\sT{{+}}

\def\de{{\rm d}}

\def\reals{{\bf R}}
\def\naturals{{\bf N}}
\def\E{{\mathbb E}}
\def\<{\langle}
\def\>{\rangle}

\def\prob{{\mathbb P}}

\def\sign{{\rm sign}}

\def\mfp{{m_{\rm fp}}}
\def\mono{{\rm mono}}
\newcommand{\cMono}{{\cal M}}

% macros added by dave
%\newcommand{\goto}{\rightarrow}
\newcommand{\beqa}{\begin{eqnarray}}
\newcommand{\eeqa}{\end{eqnarray}}

\newcommand{\bitem}{\begin{itemize}}
\newcommand{\eitem}{\end{itemize}}
\newcommand{\beq}{\begin{equation}}
\newcommand{\eeq}{\end{equation}}
\newcommand{\goto}{\rightarrow}

\newcommand{\cA}{{\cal A}}
\newcommand{\cP}{{\cal P}}

\newcommand{\bR}{{\bf R}}

\newcommand{\bZ}{{\bf Z}}
\newcommand{\cE}{{\cal E}}

\newcommand{\cH}{{\cal H}}
\newcommand{\cV}{{\cal V}}
\newcommand{\eps}{{\varepsilon}}
\newcommand{\ve}{{\varepsilon}}

\newcommand{\argmin}{\mbox{argmin}}

%\newcommand{\qed}{$\Box$}

%%%

\definecolor{cardinal}{rgb}{.64,0.,.11}

\numberwithin{equation}{section}
\newtheorem{theorem}{Theorem}[section]

\newtheorem{lemma}{Lemma}[section]
\newtheorem{proposition}{Proposition}[section]
\newtheorem{definition}{Definition}[section]

\theoremstyle{ex}
\newtheorem{example}{Example}[section]
\def\eex{\hfill $\blacksquare$}

\def\MSE{{\rm MSE}}

\def\cJ{{\cal J}}

\def\HFP{{\rm HFP}}

\def\MSE{{\rm MSE}}

\def\onu{\overline{\nu}}

\def\id{{\bf I}}

\def\sign{{\rm sgn}}
\def\sT{{\sf T}}
\def\onsager{{\sf b}}
\def\Soft{{\rm Soft}}
\def\SoftPos{{\rm SoftPos}}
\def\Pos{{\rm Pos}}
\def\Cap{{\rm Cap}}
\def\Firm{{\rm Firm}}
\def\Hard{{\rm Hard}}
\def\Minimax{{\rm Minimax}}
\def\Blocksoft{{\rm BlockSoft}}
\def\JamesStein{{\rm JamesStein}}
\def\Mono{{\rm MonoReg}}
\def\Oracle{{\rm Oracle}}
\def\efirm{\eta^{firm}}
\def\esoft{\eta^{soft}}
\def\ecap{\eta^{cap}}
\def\esoftpos{\eta^{softpos}}
\def\ehard{\eta^{hard}}
\def\eJS{\eta^{JS}}
\def\emono{\eta^{mono}}
\def\etv{\eta^{tv}}
\def\meas{{\cal L}}
\def\eall{\eta^{all}}

\def\div{{\rm div}}

\def\HFP{{\rm HFP}}
\def\oB{\overline{B}}
\def\ox{\overline{x}}
\def\omu{\overline{\mu}}
\def\oz{\overline{z}}
\def\eplus{\eta^{+}}
\def\tgamma{\widetilde{\gamma}}

\newcommand{\cF}{{\cal F}}

\title{Accurate Prediction of Phase Transitions in Compressed Sensing\\
via a Connection to Minimax Denoising}

\author{David  Donoho\thanks{Department of Statistics, Stanford
    University},\;\;\;\;\; Iain  Johnstone${}^*$\;\; and\;\;
Andrea Montanari${}^{*,}$\thanks{Department of Electrical Engineering, Stanford
University}}

\date{\today}
\begin{document}
\maketitle

\begin{abstract}
Compressed sensing posits that, within
limits, one can undersample a sparse signal
and yet reconstruct it accurately. Knowing the precise limits
to such undersampling is important both for theory and practice.

We present a formula that characterizes the allowed
undersampling of  generalized sparse objects. The formula applies to
Approximate Message Passing (AMP) algorithms for compressed sensing, 
which are here  generalized to employ  denoising operators besides the traditional 
scalar soft thresholding denoiser. This paper 
gives several examples
including  scalar denoisers not derived from  convex penalization -- 
the \emph{firm} shrinkage nonlinearity and  the \emph{minimax}  
nonlinearity  --  and also
nonscalar denoisers --  block thresholding,  monotone regression, and total variation minimization.

Let the variables $\eps = k/N$ and $\delta = n/N$
  denote the generalized sparsity and undersampling fractions for
  sampling the $k$-generalized-sparse $N$-vector $x_0$
  according to $y=Ax_0$.
Here $A$ is an $n\times N$  measurement matrix whose entries are  iid 
standard Gaussian.
The formula states that the
phase transition curve $ \delta = \delta(\eps)$ separating successful 
from unsuccessful reconstruction of $x_0$ by AMP
is given by:
  \[
     \delta = M(\eps|  \mbox{Denoiser}),
  \]
  where $M(\eps| \mbox{Denoiser})$ denotes the per-coordinate minimax
  mean squared error (MSE)
  of the specified, optimally-tuned denoiser in the \emph{directly 
observed} problem $y = x + z$.
  In short, the phase transition of a noiseless \emph{undersampling} 
problem is identical
  to the minimax MSE in a \emph{denoising} problem.
We prove that this formula follows from  state evolution and present numerical results
validating it in a wide range of settings.

The above formula generates numerous new insights, both in the
scalar and in the nonscalar cases.
  \end{abstract}

\vspace{.1in}
{\bf Key Words:}  Approximate Message Passing.
Lasso. Group Lasso,  Joint Sparsity, James-Stein,
Minimax Risk over Nearly-Black Objects.
Minimax Risk of Soft Thresholding.
Minimax Risk of Firm Thresholding. Minimax Shrinkage.
Nonconvex penalization. State Evolution.
Total Variation Minimization. Monotone Regression.
\vspace{.1in}

\vspace{.1in}
{\bf Acknowledgements.}
NSF DMS-0505303, NSF DMS-0906812, NSF CAREER CCF-0743978.
\vspace{.1in}

\tableofcontents 

\newpage

\newcommand{\peeTwoOne}{(P_{2,1})}
\newcommand{\peeTwoOneL}{(P_{2,1,\lambda})}
\newcommand{\peeOne}{(P_{1})}
\newcommand{\peeOneL}{(P_{1,\lambda})}

\section{Introduction}

In the noiseless compressed sensing problem,
we are given a collection of  linear measurements of an unknown
vector $x_0$:
\beq
\label{eq:obsdata} y=Ax_0.
\eeq
Here the measurement matrix $A$ is $n$ by $N$, $n < N$,
and the $N$-vector $x_0$ is the object we wish to recover.
Both $y$ and $A$ are known, while $x_0$ is unknown and we
seek to recover an approximation to $x_0$.

Since  $n < N$, the equations are underdetermined.
It seems hopeless to recover $x_0$ in general, but
in compressed sensing one also assumes that the object is
\emph{sparse} in the appropriate sense. 
Suppose that the object 
is known to be \emph{$k$-sparse}, i.e. to have $k$ nonzero entries. 
If the problem dimensions $(k,n,N)$ are large, many recovery algorithms
exhibit the phenomenon
of \emph{phase transition}.

Explicitly, let $\eps = k/N$ and $\delta = n/N$ denote the sparsity and undersampling parameters,
respectively. Hence  $(\eps,\delta) \in [0,1]^2$ 
defines a phase space for the different kinds of limiting situations we may encounter
as $(k,n,N)$ grow large.  For a variety of algorithms and Gaussian
matrices $A$ with iid entries, one finds that
this phase space can be partitioned into two phases: ``success'' and ``failure''.
Namely, for a given algorithm $\cA$ and given sparsity fraction $\eps$, there exists a critical fraction
$\delta(\eps| \cA)$ such that if the sampling rate $\delta$ is larger
than the critical value, $\delta > \delta(\eps|\cA)$,
then the algorithm is successful in recovering the underlying object
$x_0$ with high probability\footnote{Throughout the paper, we will
  write that an event holds with high probability (w.h.p.) if its
  probability  converges to $1$ in the large system limit
  $N,n\to\infty$ with $\delta =n/N$ and $\eps=k/N$ fixed.}, while
if $\delta <  \delta(\eps|\cA)$ the algorithm is unsuccessful, also
with high probability.
In particular,  $\delta(\eps|\cA) < 1$ means that it is indeed possible
to undersample and still recover the unknown signal. In fact $\delta(\eps|\cA)$
shows precisely the limits of allowable undersampling.
By now a large amount of empirical and theoretical knowledge has been compiled about the
phase transitions exhibited by different algorithms: we refer the
reader to
 \cite{DoTa10,Do,DoTa08,HassibiXu,BlCaTa11,StojnicMMV10,KabashimaTanaka,DMM09,DMM-NSPT-11,Wainwright2009}.
In a parallel line of work, a number of sufficient conditions under
which undersampling is possible using deterministic matrices have been
studied, see e.g. \cite{CandesTao,BickelEtAl,Indyk,CandesReview}.

It is however fair to say that the research focused so far on
`unstructured' notions of sparsity whereby $k$ simply counts the
number on non-zero entries in $x_0$. (We refer to Section \ref{sec:Related} for
an overview of related literature.) On the other hand, applications
naturally lead to `structured' notions of sparsity.
This paper applies an algorithm framework \--  \emph{Approximate Message Passing} (AMP) \--
 to construct specific algorithms applicable to a variety of compressed sensing settings, 
including block and structured sparsity, convex and nonconvex penalization, 
and  develops a single unifying formula that, specialized
to each instance, gives the actual phase transition that we observe in practice.
To give a preview of our results, we first recall some facts about
statistical decision theory and AMP reconstruction. For the sake of
illustration, the classical case of simple sparse vectors will be used as a
running example.

\subsection{Signal models}
\label{sec:SignalModels}

Throughout this paper, we will consider  estimation of unknown
structured signals $x\in\bR^N$ from a minimax point of view. 
Various notions of structures can be formalized by considering a
family $\cF_N$ of probability measures over $\bR^N$. One such
probability measure will be denoted by $\nu_N\in\cF_N$ and a signal
with distribution $\nu_N$ will often be
denoted as $\bx\sim \nu_N$. 

The family $\cF_N$ will typically include degenerate distributions,
i.e. point masses $\nu_N = \delta_{x_0}$ for some $x_0\in\bR^N$. 
\begin{example}
The case of simple sparse vectors corresponds to the family 
\begin{eqnarray}
\cF_{N,\eps}\equiv\Big\{\nu_N\in\cP(\bR^N)\, :\;\;\;
\E_{\nu_N}\big\{\|\bx\|_0\big\}\le N\eps \Big\}\, ,\label{eq:SimpleSparse}
\end{eqnarray}
where $\cP(\bR^N)$ is the set of Borel probability measures on $\bR^N$
and, as usual, $\|v\|_0$ denotes the number of nonzero entries of the
vector $v$. \eex
\end{example}
As exemplified in this case $\cF_N$ is often indexed by a sparsity
parameter $\eps$, with $N\eps$ corresponding to the number of non-zero
entries. We will sometimes use the notation $\cF_{N,\eps}$ to
indicate this dependency, also beyond the last example. Two further
common properties that will 
always hold unless otherwise stated are the following. 
\begin{enumerate}
\item \emph{Nestedness.} If $\eps_1\le\eps_2$ then  $\cF_{N,\eps_1}\subseteq\cF_{N,\eps_2}$. 
\item \emph{Scale invariance.} If $\nu_N\in\cF_{N,\eps}$ then any
  scaled version of $\nu_N$ (defined by letting $\nu^a_N(B) =
  \nu(aB)$ for some $a>0$) is also in $\cF_{N,\eps}$.
\end{enumerate}
We will often omit the subscript $N$ if $N=1$.
%
%***************************************************************************************
%
\subsection{Denoising and minimax MSE}

The denoising problem requires to reconstruct a signal
$x\in\bR^N$ from observations $\by = x+\bZ$ whereby $\bZ\sim
\normal(0,\sigma^2\id_{N\times N})$ is a noise vector of known
variance. (Here and below $\id_{m\times m}$ denoted the identity
matrix in $m$ dimensions.)
A denoiser is a mapping 
\begin{align*}
\eta(\, \cdot\, ;\tau,\sigma):&\bR^N\to \bR^N\, ,\\
&y\mapsto \eta(y;\tau,\sigma)\, ,
\end{align*}
that returns an estimate of $x$ when applied to observations
$y=\by$. The denoiser is parametrized by the noise scale $\sigma$
and additional tuning parameters $\tau\in\Theta$. Often 
denoisers have the property $\|\eta(y;\tau,\sigma)\|_2\le \|y\|_2$ and
are hence called `shrinkers'. We will often have $\Theta = \bR_+$,
i.e. the denoiser depends on a single non-negative parameter, but more
complex choices of the parameter space $\Theta$ fit in the formalism
as well.

Following the minimax formulation in the previous section, we evaluate denoisers
on signals  $\bx\sim\nu_N\in\cF_{N,\eps}$, for specific class of
distributions $\cF_{N,\eps}$. Because of the scale invariance
property of $\cF_{N,\eps}$, it is sufficient to consider scale
invariant denoisers:
\begin{align}
\eta(y;\tau,\sigma) = \sigma\,\eta\Big(\frac{y}{\sigma};\tau,1\Big)
\equiv \sigma\,\eta\Big(\frac{y}{\sigma};\tau\Big)\, ,\label{eq:ScalingRelation}
\end{align}
Hence we omit the last argument when this is $\sigma=1$.
We evaluate a denoiser $\eta$ through its minimax mean square error (MSE)
per coordinate 
\begin{eqnarray}
 \label{generalM}
     M(\cF_{N,\eps} | \eta) =\frac{1}{N} \inf_{\tau\in\Theta} \sup_{\nu_N \in \cF_{N,\eps}} \E_{\nu_N} \Big\{\big\| \bx - \eta(\by; \tau) \big\|_2^2\Big\} ,
\end{eqnarray}
where expectation is taken with respect to $\bx\sim\nu_N$ and
$\by=\bx+\bz$, $\bz\sim\normal(0,\id_{N\times N})$.  
In words, we tune the denoiser optimally to control the
(per-coordinate) mean square error
for typical signals from even the most unfavorable
choice within our class $\cF_{N,\eps}$.

In the following we will be particularly interested in the
high-dimensional limit of the minimax MSE. 
It will be implicitly understood that we are given a sequence of probability
distributions classes $\{\cF_{N,\eps}\}_{N\ge 1}$ indexed by the dimension and a
sequence of denoisers $\eta = \{\eta_N\}_{N\ge 1}$ also indexed by the
dimension (the subscript $N$ will be omitted on $\eta$). 
We define the asymptotic minimax MSE through the
following limit (whenever it exists)
\begin{eqnarray}
 M(\eps | \eta) =\lim_{N\to\infty}\frac{1}{N} \inf_{\tau\in\Theta}
 \sup_{\nu_N \in \cF_{N,\eps}} \E_{\nu_N} \Big\{\big\| \bx - \eta(\by;
 \tau) \big\|_2^2\Big\} . \label{eq:generalMAsymptotic}
\end{eqnarray}

We say that a denoiser is \emph{separable} if, for
$v=(v_1,\dots,v_N)\in\bR^N$, we have 
\begin{eqnarray}
\eta_N(v;\tau) = \big(\eta_1(v_1;\tau),\eta_1(v_2;\tau),\dots,\eta_1(v_N;\tau)\big)\, . \label{eq:Separable}
\end{eqnarray}
\begin{example}\label{example:SoftThresholding}
A well studied denoiser is coordinatewise  soft-thresholding, that we
will denote by $\esoft$. This is a separable denoiser with a unique
parameter $\tau\in \Theta=\bR_+$ (the threshold). On each coordinate
$y\in\bR$ this acts as
\begin{align*}
\esoft(y;\tau) = \begin{cases}
y-\tau& \mbox{ if $\tau\le y$,}\\
0& \mbox{ if $-\tau\le y\le \tau$,}\\
y+\tau & \mbox{ if $ y<-\tau$.}
\end{cases}
\end{align*}
Soft thresholding is well suited for sparse signals from the class
$\cF_{N,\eps}$ defined in Eq.~(\ref{eq:SimpleSparse}).
It turns out that the resulting minimax MSE $M(\cF_{N,\eps} | \eta)$  can be characterized in
terms of a scalar estimation problem, namely for all $N$,
$M(\eps|\esoft)=M(\cF_{N,\eps}|\esoft) = M(\cF_{1,\eps}|\esoft)$. 
Explicitly, all these quantities are given by
\begin{align*}
 M(\eps | \Soft) = \inf_{\tau\in\bR_+} \sup_{\nu \in \cF_\eps }
 \E_\nu\big\{ \big[X - \eta^{soft}(X+Z;\tau)\big]^2\big\} \, ,
\end{align*}
where expectation is taken with respect to $X\sim \nu$ and
$Z\sim\normal(0,1)$ independent of $X$.
We refer to \cite{DJ94a,DMM09,DMM-NSPT-11} for an explicit
characterization of this quantity (a summary being provided in Section
\ref{sec-Scalar}). In particular $M(\eps|\Soft)$ can be explicitly
evaluated.\eex
\end{example}
In several other  examples  $M(\cF_{N,\eps}| \eta)$ has been
explicitly evaluated (see \cite{DMM09}, Supplementary Information).
\begin{example} \label{example:Softpos}
 The positive-constrained case, where $x_0 \geq 0$ can be modeled by 
considering the family of probability distributions 
$\cF_{N,\eps,+} \equiv \{\nu_N\in\cP(\bR^N_+):\;
\E_{\nu_N}\{\|\bx\|_0\}\le N\eps\,\}$ supported in the positive
orthant. A natural denoiser is positive soft-thresholding $\esoftpos$. This is
again separable with, for $y\in\bR$,
$\esoftpos(y;\tau) = (y-\tau)_+\equiv \max(y-\tau,0)$. We have, again,
 $M(\eps|\esoftpos)=M(\cF_{N,\eps,+}|\esoftpos) =
 M(\cF_{1,\eps,+}|\esoftpos)$. 
 These quantities will be denoted by $M(\eps | \SoftPos)$.\eex
\end{example}
\begin{example} \label{example:Box}
The box-constrained case where $x_0 \in [0,1]^N$, 
can be modeled through the class $\cF_{N,\eps,\Box} \equiv \{\nu_N\in\cP([0,1]^N):\;
\E_{\nu_N}\{\sum_{i=1}^N 1_{\{x_i\in (0,1)\}}\}\le N\eps\,\}$.
A natural denoiser is coordinatewise capping. Namely, for $y\in\reals$ 
$\ecap(y) = \min(1,\max(y,0))$ (in this case there is no tuning
parameter, $\Theta=\emptyset$).
Notice that, in this case, the signal class is not scale invariant, and
hence the present framework does not apply directly. 
We discuss in Appendix \ref{Appendix:classical} how to modify it.\eex
 \end{example}
In this paper we will give several other calculations of
$M(\cF_{N,\eps}|\eta)$,
for signal structures and denoisers going considerably beyond these examples.

%
%*******************************************************
%
\subsection{Compressed sensing and AMP reconstruction}

Consider now the noiseless compressed sensing problem, i.e. the problem
of recovering a signal $x_0\in\bR^N$ from $n<N$ linear observations
$y=Ax_0$,  cf. Eq. (\ref{eq:obsdata}).
The key intuition is that this can be done exploiting the structure of
$x_0$, sparsity being a special example.
Approximate Message Passing (AMP) is an iterative scheme that allows
to exploit richer types of structure in a flexible way.
Given a denoiser $\eta(\,\cdot\,;\tau,\sigma):\bR^N\to\bR^N$ that is well suited for reconstructing $x_0$
from observations $x_0+\bz$, the AMP framework turns it into a scheme
for solving the compressed sensing problem.

The AMP iteration starts from $x^{0} = 0$,
and proceeds for iterations $t=1$, $2$, \dots  by maintaining
a \emph{current reconstruction} $x^{t}\in\bR^N$ and a \emph{current working residual} $z^{t}\in\bR^n$, and adjusting these
iteratively. At iteration $t$, it forms a vector of \emph{current pseudo-data} $ y^t = x^{t} + A^{\sT} z^{t}$
and the next iteration's estimate is obtained by applying $\eta$ to 
the current pseudo-data:
\begin{eqnarray}
        y^{t}  &=& x^{t} + A^{\sT} z^{t}\, ,                         \label{AMPA} \\
        x^{t+1}   &=& \eta(y^{t};\tau,\sigma_t)\, ,   \label{AMPB} \\
        z^{t+1} &=& y - Ax^{t+1} +  \onsager_t z^t\, .  \label{AMPC}  
\end{eqnarray}
Here $\onsager_t$ is a scalar determined by 
\begin{align}
\onsager_t \equiv \left.\frac{1}{n}
  \,\div\, \eta(y;\tau,\sigma_{t-1})\right|_{y=y^{t-1}}\, .
\end{align}
The rationale for this specific choice of $\onsager_t$ is discussed in
\cite{DMM09,DMM_ITW_I,BM-MPCS-2011}:
a justification goes betond the scope of the present paper.
The parameter $\sigma_t$ is can be interpreted as the noise standard
deviation for the pseudo-data $y^t$. This can be estimated from $y^t$
or $z^t$ as explained in Appendix \ref{Appendix:Computation}.

Conceptually, AMP  constructs an artificial denoising problem at each iteration
and solves it using the denoising defined by  $\eta$. In other words, it solves a compressed
sensing problem by successive denoising.
For the purpose of this paper, this description should be sufficient, save for two remarks.

\emph{First}, the specifics of the construction are
absolutely crucial for the  results of  this paper.
These are embedded in the specification of the scale factors $\onsager_t$ and $\sigma_t$.

\emph{Second}, the above algorithm framework was originally proposed in
  \cite{DMM09,DMM_ITW_I} in the case of a \emph{separable} denoiser $\eta$,
i.e. a denoiser acting independently on each coordinate.
In that paper the algorithm was derived by constructing a proper belief propagation message
passing algorithm,  and then obtaining the above algorithm as a
first-order approximation. Specific separable denoisers corresponded to
different choices of the prior in belief propagation.

A central point of this paper is  that  the form of the algorithm 
(\ref{AMPA}), (\ref{AMPB}), (\ref{AMPC}) is really more general and can be used in settings outside
the original definition.

\subsection{Phase transition for AMP}

A recurring property of AMP algorithms is that they undergo a
\emph{phase transition}. When the undersampling ratio $\delta$ decreases below
a certain threshold (that depends on the signal class $\cF_{\eps,N}$
and the denoiser $\eta$), the algorithm behavior changes from being
successful most of the times, to failing most of the times. In order
to formalize this notion, we introduce the following terminology.
\begin{definition}
We say that \emph{AMP succeeds with high probability} for the signal class
$\cF_{N,\eps}$, and denoising procedure $\eta$ if there exist a choice
of the tuning parameter $\tau\in\Theta$ such that the following
happens. For each $\xi>0$,
there exists a function $o(t)$ with $\lim_{t\to\infty} o(t) = 0$ such
that, for any $\nu_N\in \cF_{N,\eps}$,
\begin{align}
\lim\sup_{N\to\infty}\prob_{\nu_N}\big\{\|x^t-x_0\|_2^2\ge N\xi
\big\} \le o(t)\, .
\end{align}
Here probability is taken with respect to $x_0=\bx\sim \nu_N$ and the
sensing matrix $A$. Further, the limit $N\to\infty$ is taken with $n/N\to\delta$.

Viceversa we say that \emph{AMP fails with high probability}  for the signal class
$\cF_{N,\eps}$, and denoising procedure $\eta$ if for any
$\tau\in\Theta$ the following happens.  There exists
$\xi>0$ and a sequence $\nu_N\in\cF_{N,\eps}$ such that, for all $t\ge 0$
\begin{align}
\lim\sup_{N\to\infty}\prob_{\nu_N}\big\{\|x^t-x_0\|_2^2\ge N\xi
\big\} =1\, .
\end{align}
\end{definition}
Note that we could have chosen other, slightly different, notions of
convergence, e.g. requiring
$\prob_{\nu_N}\{\lim_{t\to\infty}\|x^t-x_0\|_2\}\to 1$ as $N\to\infty$.
The notion of convergence adopted here corresponds instead to
achieving arbitrarily small MSE per coordinate \emph{in a constant
number of iterations} (independent of $N$). This notion is more
appropriate for practical applications and better suited to the theory
of AMP algorithms (see Section \ref{sec:DerivePT}).

Our main result is the following general relation between denoising
and compressed sensing.
\begin{quotation}
\noindent{\bf Phase Transition Formula for AMP.}   
Consider compressed sensing reconstruction over the signal class
$\cF_{N,\eps}$, using AMP with the denoiser $\eta$. Denote by $M(\eps|
\eta)$ the asymptotic minimax MSE per coordinate using denoiser $\eta$.

Then AMP succeeds with high probability if
\begin{equation} \label{generalPT}
    \delta > M(\eps| \eta) .
\end{equation}
Viceversa AMP fails with high probability for   $\delta < M(\eps|
\eta)$. 
\end{quotation}
\begin{example}\label{example:Classical}
Let $\cF_{N,\eps}$ be the class of signals with at most $N\eps$
non-zero entries (in expectation) and consider AMP with soft
thresholding $\esoft(\,\cdot\,;\tau)$. Then the above formula states 
that reconstruction will succeed if $\delta>M(\eps|\Soft)$ and
fail for $\delta>M(\eps|\Soft)$. This result was first proved in
\cite{DMM09} to follow from  state evolution. State evolution  was subsequently
established as a rigorous tool in \cite{BM-MPCS-2011}.

The same paper \cite{DMM09} studied AMP with positive soft
thresholding and showed that it succeeds for  $\delta > M(\eps|\SoftPos)$,
AMP with capping, proving that it succeeds for  $\delta > M(\eps|\Cap)$.

Appendix \ref{Appendix:classical} spells out how these existing results fall under the aegis of Eq.~(\ref{generalPT}).\eex
\end{example}

\emph{Comparison to $(\rho,\delta)$ phase diagrams.} In prior literature on phase transitions in compressed sensing,
\cite{DoTa10,Do,DoTa08,BlCaTa11,DMM09,DMM-NSPT-11}, the authors
considered a different phase diagram, based on variables $\delta$ and $\rho = \eps/\delta$.
 The relation $\eps = \rho \delta$ makes for a 1-1 relationship between the diagrams,
 so all information in the two diagrams can be presented  in either format.

\subsection{This Paper}

Our aim in this paper is to show that formula 
(\ref{generalPT}) is correct in settings extending far beyond the
three cases just mentioned in Example \ref{example:Classical}. We lay out several
denoising problems, and in each one verify the
general formula.  This requires in each case $(a)$ calculating the
minimax MSE for a problem of statistical decision theory;
$(b)$ implementing AMP for compressed sensing with the given
denoising family; and $(c)$ verifying empirically 
that the phase transition does indeed occur at the precise
sparsity/undersampling tradeoff indicated by the formula.

In particular, we consider the following denoising tasks, and
corresponding compressed sensing problems.
\begin{description}
 \item[Firm Shrinkage.]
Again we consider the class od sparse vectors $\cF_{N,\eps}$
but instead of soft-thrresholding, we use the \emph{firm shrinkage}
denoiser $\efirm(\,\cdot\,;\tau)$. This is again a separable denoiser with two tuning parameters
$\tau=(\tau_1,\tau_2)$ with $\tau\in\Theta\equiv\{(\tau_1,\tau_2):\;
0\le\tau_1<\tau_2<\infty\}$. It acts on each coordinate by setting $\efirm(y;\tau)=0$
for $|y| < \tau_1$,  $\efirm(y;\tau) =y$ for $|y| >
\tau_2$ and interpolating linearly.

Denoting by $M(\eps|\Firm)$ the associated asymptotic minimax MSE, we
will show that $M(\eps | \Firm) < M(\eps |\Soft) $ strictly.
By verifying the general formula, we show that the
phase transition curve for optimally-tuned AMP firm shrinkage is slightly better than the phase transition
for  optimally tuned AMP soft shrinkage.  
 \item[Minimax Shrinkage.]
For the same class of sparse vectors $\cF_{N,\eps}$.
we consider the separable  denoiser $\eta$ applies coordinatewise shrinkage 
using a \emph{minimax} shrinkage. In other words
implicitly we are optimizing the mean square error over 
$\Theta \equiv \{ \mbox{ all scalar nonlinearities } \}$. We calculate
the minimax MSE function $M(\eps | \Minimax)$,
and  show that $M(\eps | \Minimax) < M(\eps | \Firm) $ strictly. By
verifying the general formula we show  that
 the phase transition curve for  AMP minimax shrinkage is slightly better than the phase transition
for  both AMP soft or firm shrinkage. 
 \item[Block Thresholding.] 
Here we consider the  class of block sparse vectors $\cF_{N,\eps,B}$
(see Section \ref{sec:blockThresh} for a formal definition).
 We use two block-separable denoisers: 
either block
soft thresholding (for block length $B$, the $B$-variate nonlinearity
 obeys $\eta_{B,\lambda}(y) = y \cdot ( 1 - \|y\|_2/\lambda)_+$) or block
 James-Stein denoiser.
 We will compute the minimax MSE function $M_B(\eps | \Blocksoft)$,
 and bound the minimax MSE function $M_B(\eps |\JamesStein)$. 
We will verify that  the phase transition curve for optimally-tuned
AMP with  block-separable denoisers
follows the general formula.
 \end{description}
Notice that, as demonstrated numerically in \cite{DMM09}, and proved
in \cite{BayatiMontanariLASSO} in the case of Gaussian sensing
matrices, soft-thresholding AMP reconstruction coincides with LASSO
reconstruction (in the large system limit). By the above results,
firm-shrinkage AMP and minimax AMP both outperform LASSO
reconstruction. Correspondingly, it can be argued that blocksoft
thresholding AMP coincides asymptotically with group LASSO, and hence 
James-Stein AMP outperforms the latter.

In all of the above examples, the denoisers are coordinatewise or
at least blockwise separable.
We next consider examples where the denoiser has more subtle structure.
We find that formula (\ref{generalPT}) applies more generally.
\begin{description}
 \item[Monotone Regression.] We consider the class $\cF_{N,\eps,\mono}$
   of vectors that are monotone with at most $N\eps$ points of increase.
As denoiser, we use the least-squares projection 
$\eta$ onto the cone of monotone increasing functions.  
 \item[Total variation minimization.] 
We consider the class $\cF_{N,\eps,TV}$
   of vectors that have at most $N\eps$ points of change.
The denoiser $\eta$  minimizes the residual sum of squares penalized by $\tau$ times the
 total variation of the signal.  
\end{description}
In these cases, evaluating the asymptotic minimax MSE is more
challenging
than for separable denoisers and simpler classes of signals. 
Nevertheless, we will show that it can be done quite explicitly.
We find well-defined phase transitions for AMP reconstruction, precisely at the location
predicted by the general formula  (\ref{generalPT}).
%
%***********************************************
%
\subsection{Contributions}

We list eight contributions, beginning with the two most obvious:
\begin{enumerate}
\item \emph{Application of the AMP framework to a wider range of
    shrinkers $\eta(\, \cdot\,)$.} 
We implement and study AMP algorithms that don't correspond
to $\ell_1$ penalization (e.g. the Firm and Minimax scalar shrinkers)
and also  that don't correspond to scalar separable nonlinearities:
both the block separable case and the general non-separable cases.
\item \emph{A formula for phase transitions of AMP algorithms. } 
We confirm that formula (\ref{generalPT}) accurately describes the sparsity-undersampling
tradeoff under which AMP algorithms successfully recover a sparse
structured signal from underdetermined measurements. We prove that this
relation follows from the state evolution formalism.
\item  \emph{A formula predicting the phase transitions of many convex optimization problems. }
Much work on compressed sensing
establishes the possibility of recovery under sparsity by solving convex optimization problems.
Unfortunately,considerable
 work was required to obtain sharp phase transition results for \emph{one} convex optimization algorithm: $\ell_1$ minimization
\cite{Do,DoTa05,DoTa08,DoTa10}. The arguments needed to attack --for example--
the block-sparsity case seemed to be quite different
\cite{StojnicMMV10}. 

As demonstrated in \cite{DMM-NSPT-11} and proved in
\cite{BM-MPCS-2011} in the case of the LASSO, there exists a correspondence between convex
optimization methods and specific AMP algorithms. We will show
that this correspondence is considerably more general.
This provides  a unified approach which yields
sharp phase transition predictions in numerous cases.
\item \emph{Reconstruction approaches not based on convex penalization, with sharp guarantees.}
We introduced three new AMP algorithms, 
based on Firm, Minimax, and James-Stein shrinkage,
which do not correspond to any obvious convex penalization. These
methods  have better phase transitions
than the corresponding convex optimization problems,
in their domains (e.g. Firm and Minimax outperform $\ell_1$ minimization, while James-Stein outperforms block soft shrinkage for large $B$).
We show that these algorithms are in correspondence with penalized
least square problems, but
that the implied penalties are nonconvex\footnote{Fornasier and Rauhut \cite{FornasierRauhut} show
  that some denoisers corresponding to  non-convex penalties can be
  implemented via convex optimization by adding suitable auxiliary
  variables. This is the case, for instance, for firm thresholding
  denoising. Unfortunately, the same method does not apply --in
  general-- to the
  compressed sensing problem (i.e. for non-diagonal matrices $A$).}. Nevertheless, AMP
appear to converge to the correct solution with high probability, as
long as the undersampling ratio is above the phase
transition boundary.
 In the interior of the success phase,
these methods typically converge exponentially fast.
\item \emph{Limited benefit of nonconvex penalization for ordinary sparsity.}
Within the class of scalar separable AMP algorithms, the best achievable
phase transition is obtained by the minimax shrinker. Unfortunately the 
improvement in the transition is relatively small.
\item \emph{Existence of algorithms for the block sparse case approaching ``ideal'' behavior}.
Most attention in the group sparse case concerns block soft
thresholding and the corresponding $\ell_2-\ell_1$ penalized regression, a.k.a. group
LASSO \cite{YuanLin}.
We show here that the phase transitions for block thresholding do not tend as $B \goto \infty$
to the ideal transition, i.e. that compressed sensing reconstruction
is possible as soon as $\delta>\eps$ (i.e. from, roughly, as many measurements
as nonzeros).
On the other hand, we show that positive-part James-Stein shrinkage
\emph{does} tend to such an ideal limit.
\item \emph{Identification of combinatorial geometry phase transitions
    with minimax Mean-Square Error.}
The phase transitions for $\ell_1$ optimization, and for positivity-constrained $\ell_1$ optimization
are determined by combinatorial geometry, see \cite{DoTa08}.
By our general formula (\ref{generalPT}), these transitions are
the same as the minimax MSE in problems of scalar denoising.
\item \emph{Calculation of the minimax MSE of monotone regression and total variation denoising}.
We are not aware of any previous work computing the minimax MSE of these denoising procedures
under the condition of $\eps$-sparse first differences. We prove here
a characterization for each of these cases
and show that it  agrees with the phase transition of both AMP and convex optimization algorithms.
\end{enumerate}

A conjectures flow naturally from this work:
\begin{itemize}
 \item[] \emph{State Evolution accurately describes the behavior of a wide range of AMP algorithms, for large system sizes $N$}.
State evolution is a formalism that allows to characterize the
asymptotic behavior of AMP as the number of dimension tend to infinity
\cite{DMM09}. We show in Section \ref{sec:DerivePT} that the general
relation (\ref{generalPT})  can be proved by assuming state evolution
to hold.

In the case of separable denoisers, under suitable regularity conditions, the correctness of state evolution
as a description of AMP is proved by \cite{BM-MPCS-2011}.  
Since formula (\ref{generalPT}) is apparently successful beyond the
separable case, it is natural to conjecture that state evolution applies much more generally
than to the cases proven so far. 
\end{itemize}
Our study supports the general conclusion that AMP provides a general
tool in compressed sensing, that is applicable beyond simple sparse signal. If one knows that a certain
shrinker is appropriate for denoising a certain type of signal, then the corresponding
AMP algorithm provides an efficient reconstruction method for the
associate  compressed sensing problem. The denoising  minimax MSE then
maps to the sparsity undersampling tradeoff.

An interesting research direction is the study of the noisy linear
model  $y=Ax_0+w$, whereby $w$ is a noise vector
(e.g. $w\sim\normal(0,\sigma^2\id_{m\times m})$). In analogy
\cite{DMM-NSPT-11}, we expect reconstruction to be stable with respect
to noise for $\delta < M(\eps|\eta)$ and instable for $\delta>M(\eps|\eta)$.

\subsection{Related literature}
\label{sec:Related}

Approximate message passing algorithms for compressed sensing
reconstruction were introduced in \cite{DMM09}. They were largely
motivated by the connection with message passing algorithms in
iterative decoding systems \cite{RiU08}, and with mean field methods
in statistical physics \cite{MezardMontanari} (in particular the cavity
method and TAP equations). We refer to \cite{DMM-NSPT-11} for a
discussion of these connections.

The original AMP framework \cite{DMM09,DMM_ITW_I} included 
iterations of the form defined in Eqs.~(\ref{AMPA}), (\ref{AMPB}), (\ref{AMPC}) 
whereby the denoiser is separable.
While this covers the $\Firm$ and $\Minimax$ shrinkage rules studied
in this paper, it did not include the various non-separable denoisers we discuss below,
namely the block, monotone and total variation denoisers.  Further,
in \cite{DMM09}, the phase transition behavior was validated
numerically only for $\Soft$, $\SoftPos$ and $\Cap$ denoisers, that
are in correspondence with well-studied convex optimization problems.
The extension to a noisy linear model $y=Ax_0+w$, with $w\in\reals^n$ 
a vector of iid random entries was carried out in \cite{DMM-NSPT-11}.
We also refer to \cite{MontanariChapter} for an overview of this work.

Several papers investigate generalizations of the original framework
put forward in \cite{DMM09}.  The paper \cite{BM-MPCS-2011} defines a
general class of approximate message passing algorithms for which the
state evolution  was proved to be correct.  This include in
particular all separable Lipschitz-continuous
denoisers. Generalizations of this result were proved in
\cite{BLM-Universality,JM-BlockCase}.
Notice that all the separable denoisers treated in this papers are
Lipschitz continuous with the exception of hard thresholding. 
While the last case is not covered by \cite{BM-MPCS-2011}, we expect
 state evolution to hold for hard thresholding AMP as well, by a suitable
approximation argument.

Rangan
\cite{RanganGAMP} introduces a class of generalized approximate
message passing (G-AMP) algorithms that cope with --roughly-- two
extensions of the basic noisy linear model. First, the noisy
measurement vector $y$ can be a non-linear (random) function of the  
noiseless measurement $Ax_0$. Second, each of the `coordinates' of
$x_0$ can itself be a --low dimensional-- vector. Interesting
applications of  this framework were developed in
\cite{RanganQuantized,SchniterOFDM}. Let us notice that G-AMP does not
cover any of the non-separable cases treated here (even the block
sparse example), and hence provides a generalization in an
`orthogonal' direction.

In a parallel line of work, Schniter  applied AMP to a number of
examples in which the signal $x_0$ has a structured prior
\cite{SchniterTurbo,Schniter-NonUniform-2010,SchniterTree}.
Inference with respect to the prior is carried out using belief
propagation, and this is combined with AMP to compute \emph{a
  posteriori} estimates. This type of application fits within the
class of problems studied here, by choosing the denoiser $\eta_t$
in Eq.~(\ref{AMPB}) be given by the appropriate conditional expectation
with respect to the signal prior. 
Note however that the general scheme provided by 
Eqs.~(\ref{AMPA}), (\ref{AMPB}), (\ref{AMPC}) encompasses cases in
which the denoiser is not the Bayes estimator for a specific prior.

A special case of known prior is the one in which $x_0=\bx\sim \nu_N$
is
distributed according to the (known) product measure
$\nu_N=\nu\times\cdots\times\nu$
(i.e. the coordinates of $\bx$ are iid with known distribution
$\nu$). The fundamental limits for compressed sensing  reconstruction
were established in \cite{WuVerdu}. The natural AMP algorithm uses in
this case a posterior expectation denoiser \cite{DMM_ITW_I}. It was proved in 
\cite{OurSpatial} that, for suitable sensing matrices with
heteroscedastic entries, this approach achieves the fundamental limits
of  \cite{WuVerdu} (this approach was put forward in
\cite{KrzakalaEtAl} on the basis of a statistical
physics argument).  This case fits within the general
philosophy of the present paper whereby the class $\cF_{N,\eps}$
consists of a single distribution, namely $\cF_{N,\eps}=\{\nu_N\}$. However, we
prefer not treating this example in the present paper because it is a degenerate
case, and the fact that $\cF_{N,\eps}$ is not scale invariant leads to
some technical differences. We refer instead to \cite{OurSpatial}.

Statistical physics methods were also used to study $\ell_1$-based
reconstruction methods in \cite{RanganFletcherGoyal,KabashimaTanaka}.

Maleki, Anitori,  Yang and Baraniuk \cite{MalekiComplex} used methods analogous to the ones 
developed here to study phase transitions for compressed sensing with
complex vectors.
This is closely related to the block-separable
setting considered in Section \ref{sec:blockThresh} (there is however
some difference in the structure of the sensing matrix).

Structured sparsity models are studied from a different
point of view in \cite{ModelCS,CevherMRF,CevherReview}. Those works focus on
deriving  sparsity models that capture a variety applications, and of convex relaxations
that promote the relevant sparsity patters. Reconstruction guarantees
are proved under suitable `isometry' assumptions on the sensing
matrix.

Closer to our approach is a recent series of papers 
\cite{ChandrasekaranRecht,RechtNowakTight,RechtNowakTight},
considering general classes of structured signals under random
measurements. Let us emphasize two important differences with respect
to our work. First, these papers only deal with convex reconstruction
methods, while we shall analyze several approaches that are not
derived from convex optimization and
demonstrate improvements. Second, they establish reconstruction
guarantees using concentration-of-measure arguments, while we propose
\emph{exact} asymptotics (essentially based on weak convergence), which
enables us to unveil the key relation (\ref{generalPT}) between denoising and the compressed
sensing phase transition.

\section{Scalar-separable denoisers}
\label{sec-Scalar}

In this section we study scalar-separable denoisers,
cf. Eq.~(\ref{eq:Separable}), that further satisfy the scaling
relation (\ref{eq:ScalingRelation}). Unless stated otherwise, we will assume that
signals belong to the  simple sparsity class introduced
in Eq.~(\ref{eq:SimpleSparse}), to be denoted as $\cF_{N,\eps}$.

\subsection{Minimax MSE of a separable denoiser}
\label{sec-Scalar-Minimax}

As mentioned in the previous section, the computation of the minimax
MSE is greatly simplified for separable denoisers. We state and prove
the following elementary result in greater generality than necessary
for this section. (In particular $\cF_{N,\eps}$ is here a general
family of probability distributions.)
\begin{lemma}\label{lemma:ScalarMSE}
 Let $\cF_{N,\eps}\subseteq
\cP(\bR^N)$ be any family of probability distributions satisfies the following conditions: $(i)$ If
$\nu_1\in\cF_{1,\eps}$, then defining $\nu_N\equiv \nu_1\times\cdots\times
\nu_1$ ($N$ times), we have $\nu_N\in\cF_{N,\eps}$; $(ii)$ Viceversa, if
$\nu_N\in\cF_N$, then letting $\nu_{N,i}$ denote the $i$-th marginal
of $\nu_N$, we have $\onu_N\equiv N^{-1}\sum_{i=1}^N\nu_{N,i}\in\cF_{1,\eps}$. 

Then, for any separable denoiser $\eta$, and for any $N$,
\begin{eqnarray*}
M(\eps|\eta) = M(\cF_{N,\eps}|\eta) = M(\cF_{1,\eps}|\eta)\, .
\end{eqnarray*}
\end{lemma}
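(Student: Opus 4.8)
The plan is to prove the chain of equalities by combining two easy inequalities with the separability structure. Recall that $M(\cF_{N,\eps}|\eta) = \frac1N\inf_{\tau}\sup_{\nu_N\in\cF_{N,\eps}}\E_{\nu_N}\|\bx-\eta(\by;\tau)\|_2^2$ and, since $\eta$ is separable and the noise is i.i.d.\ $\normal(0,1)$ on each coordinate, $\E_{\nu_N}\|\bx-\eta(\by;\tau)\|_2^2 = \sum_{i=1}^N\E_{\nu_{N,i}}[(X_i-\eta_1(X_i+Z_i;\tau))^2] = \sum_{i=1}^N r(\nu_{N,i};\tau)$, where $r(\mu;\tau) \equiv \E_{\mu}[(X-\eta_1(X+Z;\tau))^2]$ is the one-dimensional risk functional and the inner expectation over $Z\sim\normal(0,1)$ is understood. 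Note $r$ is affine (in fact linear) in the distribution $\mu$, a fact I will use twice.

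First I would show $M(\cF_{N,\eps}|\eta)\ge M(\cF_{1,\eps}|\eta)$. Fix any $\tau\in\Theta$ and any $\nu_1\in\cF_{1,\eps}$; by hypothesis $(i)$, the product $\nu_N=\nu_1^{\otimes N}\in\cF_{N,\eps}$, and for this choice $\frac1N\E_{\nu_N}\|\bx-\eta(\by;\tau)\|_2^2 = r(\nu_1;\tau)$. Hence $\frac1N\sup_{\nu_N\in\cF_{N,\eps}}\E_{\nu_N}\|\bx-\eta(\by;\tau)\|_2^2 \ge \sup_{\nu_1\in\cF_{1,\eps}} r(\nu_1;\tau)$, and taking $\inf_\tau$ on both sides gives $M(\cF_{N,\eps}|\eta)\ge M(\cF_{1,\eps}|\eta)$.

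Next I would show the reverse inequality $M(\cF_{N,\eps}|\eta)\le M(\cF_{1,\eps}|\eta)$. Fix $\tau$ and any $\nu_N\in\cF_{N,\eps}$. Then $\frac1N\E_{\nu_N}\|\bx-\eta(\by;\tau)\|_2^2 = \frac1N\sum_{i=1}^N r(\nu_{N,i};\tau) = r(\onu_N;\tau)$ by linearity of $r$ in $\mu$, where $\onu_N = N^{-1}\sum_i\nu_{N,i}$. By hypothesis $(ii)$, $\onu_N\in\cF_{1,\eps}$, so this equals $r(\onu_N;\tau)\le\sup_{\nu_1\in\cF_{1,\eps}} r(\nu_1;\tau)$. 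Taking $\sup$ over $\nu_N\in\cF_{N,\eps}$ and then $\inf$ over $\tau$ yields $M(\cF_{N,\eps}|\eta)\le M(\cF_{1,\eps}|\eta)$. Combining the two bounds gives $M(\cF_{N,\eps}|\eta) = M(\cF_{1,\eps}|\eta)$ for every $N$, and since the right-hand side does not depend on $N$, the asymptotic quantity $M(\eps|\eta)$ in~(\ref{eq:generalMAsymptotic}) equals it as well.

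The only genuinely delicate point — and the one I would state carefully rather than gloss over — is the interchange of expectation and summation and the justification that $\E_{\nu_N}\|\bx-\eta(\by;\tau)\|_2^2$ decomposes coordinatewise; this needs separability of $\eta$ (so the $i$-th output depends only on $y_i$) together with independence of the noise coordinates, and it is cleanest to simply observe that $\eta(\by;\tau)_i = \eta_1(X_i+Z_i;\tau)$ is a function of $(X_i,Z_i)$ alone, so $(X_i-\eta(\by;\tau)_i)^2$ has expectation depending only on the marginal law $\nu_{N,i}$ of $X_i$ regardless of the joint law of $\bx$. Everything else is a bookkeeping exercise with $\inf\sup$ and the linearity of the scalar risk in the prior. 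I would also remark that the supremum need not be attained, so all statements are phrased as inequalities between suprema, which is harmless.
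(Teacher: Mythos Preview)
Your proof is correct and follows essentially the same approach as the paper: fix $\tau$, use hypothesis~$(i)$ with product measures to get the lower bound on the $N$-dimensional sup, and use hypothesis~$(ii)$ together with the coordinatewise decomposition (equivalently, linearity of the scalar risk $r(\,\cdot\,;\tau)$ in the prior) to get the matching upper bound. Your explicit introduction of the scalar risk functional and the remark on why the coordinatewise decomposition only depends on marginals are helpful clarifications, but the structure of the argument is identical to the paper's.
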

\begin{proof}
Fix $\tau\in\Theta$ and define, for $\bz\sim\normal(0,\id_{N\times
  N})$,
\begin{align*}
M(\cF_{N,\eps} | \eta,\tau)\equiv\frac{1}{N} \sup_{\nu_N \in \cF_{N,\eps}}
\E_{\nu_N} \big\{\big\| \bx - \eta(\bx+\bz; \tau) \big\|_2^2\Big\} \, .
\end{align*}
The lemma then follows immediately if we prove that, for any $N$,
$M(\cF_{N,\eps} | \eta,\tau) = M(\cF_{1,\eps} | \eta,\tau)$. In order
to prove the last statement, first notice that, by property $(i)$:
\begin{align*}
M(\cF_{N,\eps} | \eta,\tau)&\ge\frac{1}{N} \sup_{\nu_1 \in \cF_{1,\eps}}
\E_{\nu_1\times\cdots\times \nu_1} \big\{\big\| \bx - \eta(\bx+\bz;
\tau) \big\|_2^2\Big\} \\
&=\sup_{\nu_1 \in \cF_{1,\eps}}
\E_{\nu_1} \big\{\big\| X_1- \eta(X_1+Z;
\tau) \big\|_2^2\Big\}  = M(\cF_{1,\eps} | \eta,\tau)\, .
\end{align*}
The proof is finished by property $(ii)$, 
since
\begin{align*}
M(\cF_{N,\eps} | \eta,\tau)&=\frac{1}{N} \sup_{\nu_N \in \cF_{N,\eps}}
\sum_{i=1}^N\E_{\nu_N} \big\{\big\| X_i- \eta(X_i+Z;
\tau) \big\|_2^2\Big\} \\
& =  \sup_{\nu_N \in \cF_{N,\eps}}
\E_{\onu_N} \big\{\big\| X- \eta(X+Z;
\tau) \big\|_2^2\Big\}  \le M(\cF_{N,\eps};\eta,\tau)\, .
\end{align*}
\end{proof}
This lemma reduces the problem to solving a minimax scalar estimation problem.
This problem was characterized before for soft thresholding $\eta = \esoft$,
positive soft thresholding $\eta=\esoftpos$,
and also hard thresholding $\eta(y;\tau) = y 1_{\{|y| > \tau\}}$ 
\cite{DJHS92,DJ94a}. Plots of the minimax soft threshold $\tau^*(\eps)$ and
the minimax MSE are available in \cite{DMM09,DMM-NSPT-11}. Such plots also appear later in this paper
as baselines for comparison of  interesting  new families, namely Firm
and  Minimax shrinkage. 

\subsection{Firm shrinkage}

A frequently-voiced criticism of $\ell_1$ minimization and soft thresholding is
the tendency to shrink large values by more than warranted.
In the mid 1990's, \emph{firm} shrinkage was introduced  to correct this tendency
by Gao and Bruce \cite{GaBr97}. As suggested by the  name, this denoiser  is intermediate
between soft and hard thresholding: it is continuous like soft
thresholding, but does not shrink large
values, like hard thresholding. 
Formally, for $\tau=(\tau_1,\tau_2)\in\Theta$,
$\Theta\equiv\{(\tau_1,\tau_2):0\le\tau_1<\tau_2<\infty\}$, we have
\[
   \efirm(y ; \tau_1, \tau_2) =  \left\{ \begin{array}{ll} 
                                                                0 & |y| < \tau_1 ,\\
                                                                \sign(y)\,
                                                                (|y|-\tau_1)\,\tau_2/(\tau_2-\tau_1)
                                                                &
                                                                \tau_1
                                                                < |y|
                                                                <
                                                                \tau_2
                                                                ,\\
                                                                y & |y| > \tau_2 .\\
                                              \end{array} \right. 
\]
Soft and hard thresholding can be recovered as limiting cases:
\begin{equation} \label{eq:SS-limits}
    \esoft(y;\tau_1) = \lim_{\tau_2 \goto \infty} \efirm(y;\tau_1,\tau_2) , \quad
    \ehard(y;\tau_1) = \lim_{\tau_2 \goto \tau_1} \efirm(y;\tau_1,\tau_2) .
\end{equation}
Lemma \ref{lemma:ScalarMSE} yields the following formula for the
minimax MSE of firm shrinkage:
\begin{eqnarray}
    M(\eps | \Firm ) = M( \cF_{1,\eps} | \efirm ) =  \inf_{0 \le
      \tau_1 < \tau_2 } \sup_{\nu \in \cF_{1,\eps}}
\E_{\nu}\big\{[X-\efirm(X+Z;\tau_1,\tau_2)]^2\big\}\, .\label{eq:FirmMMAX}
\end{eqnarray}
Notice that the supremum over $\nu$ is necessarily achieved at a
probability distribution of the form $\nu = \nu_{\eps,\mu} =
(\eps/2)\delta_{\mu}+(1-\eps)\delta_0 +(\eps/2)\delta_{-\mu}$, for
some $\mu\in [0,\infty]$ (indeed
the risk is  an even function of $X$, and the extreme points of the
even subset of $\cF_{1,\eps}$ take this form). Explicit expressions
under this distribution are given in \cite{GaBr97}.  
The computation of $M(\eps|\Firm)$ is therefore reduced to the
calculus problem of finding the saddle point of
$f(\mu,\tau_1,\tau_2)\equiv \E_{\nu_{\eps,\mu}}\big\{[X-\efirm(X+Z;\tau_1,\tau_2)]^2\big\}$.
This can be efficiently solved numerically, yielding the minimax risk
and the optimal thresholds $\tau_1^*(\eps)$ and $\tau_2^*(\eps)$. It
follows immediately from the definition  that
 $0 \leq M(\eps | \Firm )  \leq 1$, that $M$ is monotone
 increasing with $\eps$, that  $M(\eps|\Firm) \goto 1$ as $\eps \goto 1$ and that $M(\eps|\Firm) \goto 0$ as $\eps\goto 0$.

Figure \ref{fig:MMSEscalar}  and Table \ref{table:MMSEscalar} show the minimax MSE for firm shrinkage
as resulting from this calculation.
The  figure also
shows similar results for soft and hard thresholding, for comparative purposes. 
Over the range presented,
the minimax MSE for firm thresholding is \emph{strictly smaller} 
than the MSE for hard or soft thresholding.  Namely, over this range of $\eps$,
\[
    M(\eps | \Firm) < M(\eps| \Soft) < M(\eps | \Hard)\, .
\]
This validates the criticisms of soft thresholding, which is often
said to shrink large values too heavily\footnote{Note, however, that
  the use of hard thresholding instead of soft thresholding leads to a
larger worst case mean square error.}.

Figure \ref{fig:tauScalar} shows the minimax thresholds. 
At least for $\eps < 1/3$ we see  clearly that $\tau_1^*(\eps) < \tau_2^*(\eps) < \infty$,
so  firm thresholding is preferred over the  limiting cases of hard and soft thresholding\footnote{These are numerical results.
It is an open question whether, for $\eps>1/3$  the minimax firm threshold have parameter $\tau_2(\eps) = \infty$
reducing it to  soft thresholding.}. Figure \ref{fig:Nonlinearities}
shows the corresponding minimax denoisers for specific values of $\eps$.
Finally, Figure \ref{fig:LFMu} plots the minimax value of $\mu$ as a
function of $\eps$ (corresponding to the minimax probability
distribution $\nu_{\eps,\mu}$).

\begin{table}
\begin{center}
\begin{tabular}{|l|l|l|l|l|}
\hline
$\eps$ & $M(\eps|\Hard)$ & $M(\eps|\Soft)$ & $M(\eps|\Firm)$ & $M(\eps|\Minimax)$ \\
\hline
0.010 & 0.0729 & 0.0612 & 0.0552 & 0.0533 \\    
 %0.020 & 0.1291 & 0.1042 & 0.0956 &    \\    
 0.025 & 0.1547 & 0.1231 & 0.1137 & 0.1093 \\    
 0.050 & 0.2676 & 0.2039 & 0.1921 & 0.1841 \\    
 0.100 & 0.4497 & 0.3288 & 0.3165 & 0.3025 \\    
 0.150 & 0.5960 & 0.4279 & 0.4171 & 0.3983 \\    
 0.200 & 0.7161 & 0.5111 & 0.5024 & 0.4802 \\    
 0.250 & 0.8141 & 0.5829 & 0.5763 & 0.5516 \\   
\hline
\end{tabular}
\caption{Minimax MSE of various separable denoisers applied to sparse
  vectors from the class $\cF_{N,\eps}$, across sparsity levels}
\label{table:MMSEscalar}
\end{center}
\end{table}

\begin{figure}
\begin{center}
\includegraphics[height=3in]{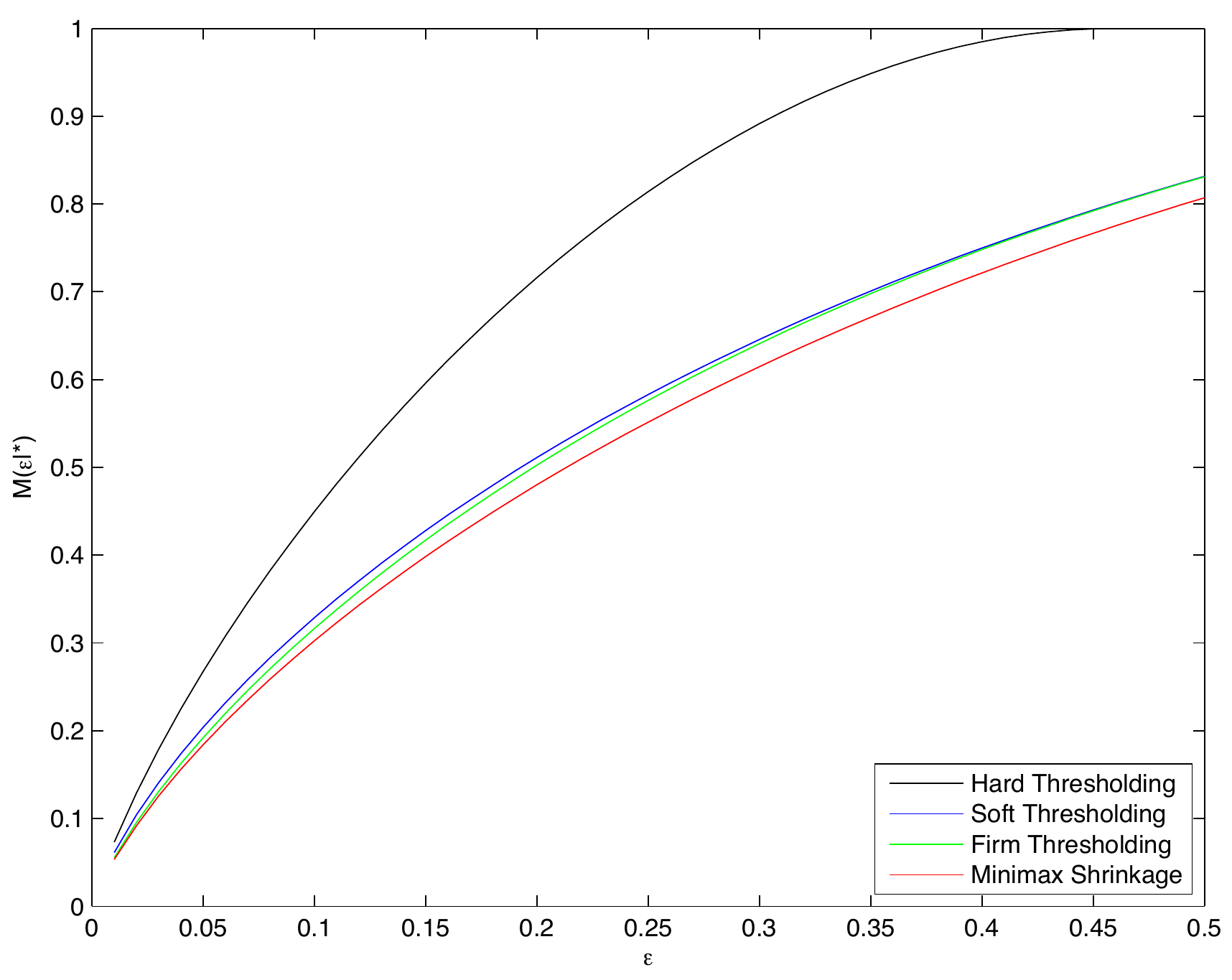}
\caption{Minimax MSE of various separable denoisers, as a function of sparsity parameter $\eps$.
The minimax MSE of firm thresholding is very close to that of soft
thresholding for $\eps > 1/3$.
From top to bottom the curves refer to hard thresholding, soft
thresholding, firm thresholding and minimax denoiser.}
\label{fig:MMSEscalar}
\end{center}
\end{figure}

\begin{figure}
 \begin{center}
\includegraphics[height=3in]{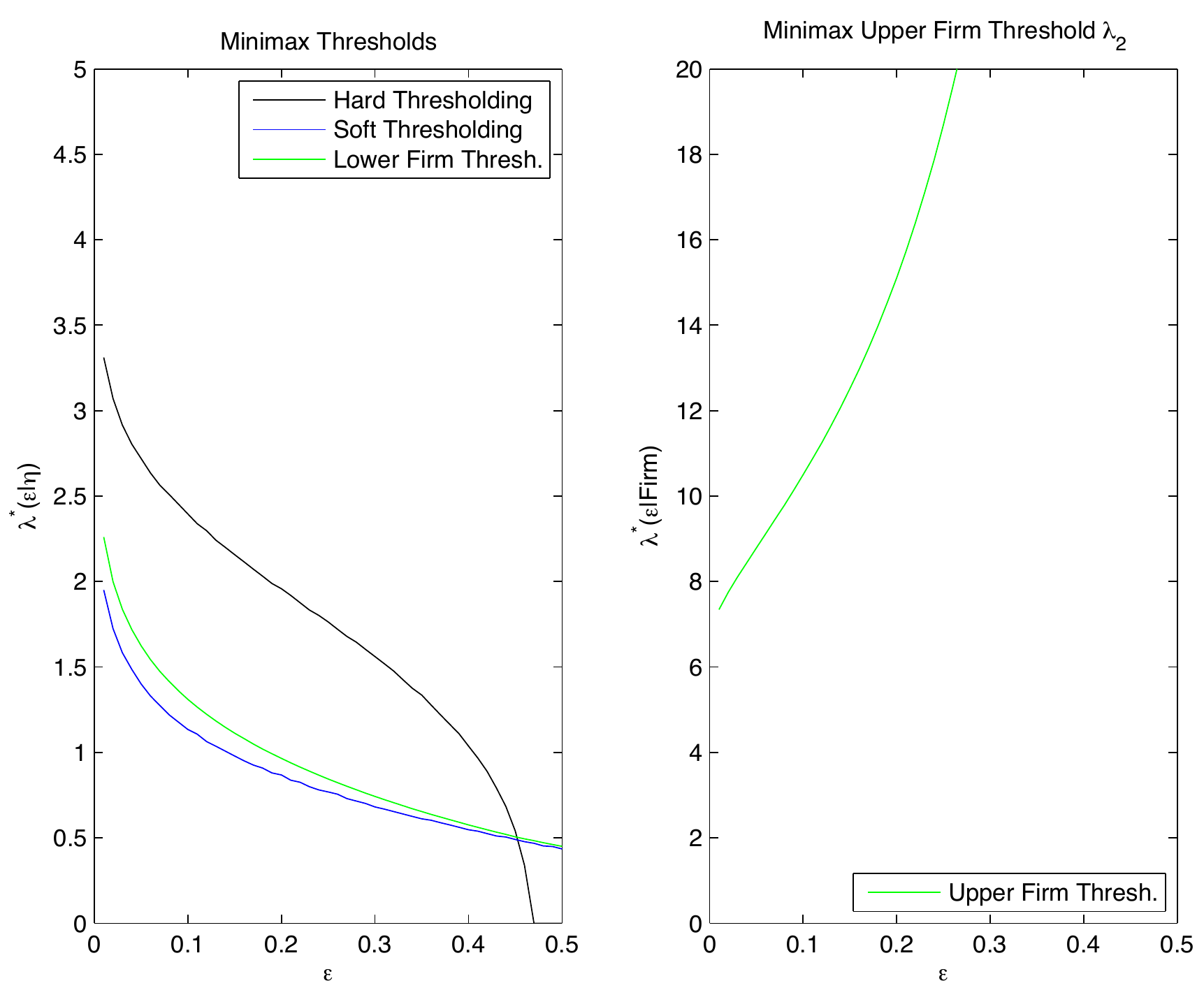}
\caption{Minimax thresholds of various separable denoisers, as a function of sparsity $\eps$. The sudden drop in the
value of the hard threshold to zero near $0.45$ coincides with the value of the minimax MSE
in Figure \ref{fig:MMSEscalar} reaching 1. The numerical approximation to the minimax lower firm threshold approaches 
the minimax soft threshold as $\eps$ increases beyond $0.3$, and the upper firm threshold increases rapidly.}
\end{center}
\label{fig:tauScalar}
\end{figure}

\begin{figure}
 \begin{center}
\includegraphics[height=3in]{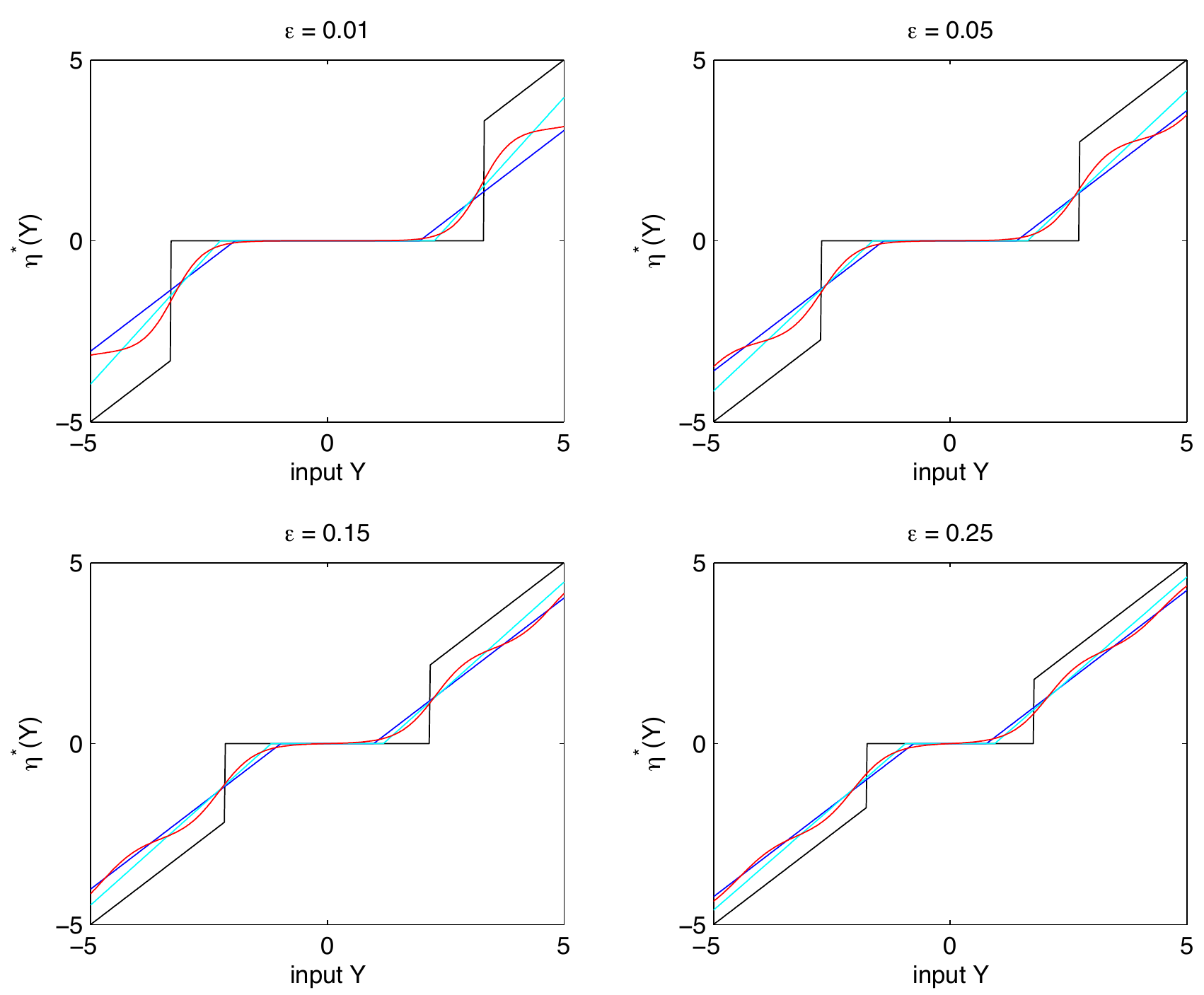}
\caption{Minimax scalar denoisers of various types, at sparsity $\eps
  =0.01$, $0.05$, $0.15$, $0.25$.
Black: hard thresholding;  Blue: soft; Aqua: firm;  Red: minimax shrinkage.}
\end{center}
\label{fig:Nonlinearities}
\end{figure}

\begin{figure}
 \begin{center}
\includegraphics[height=3in]{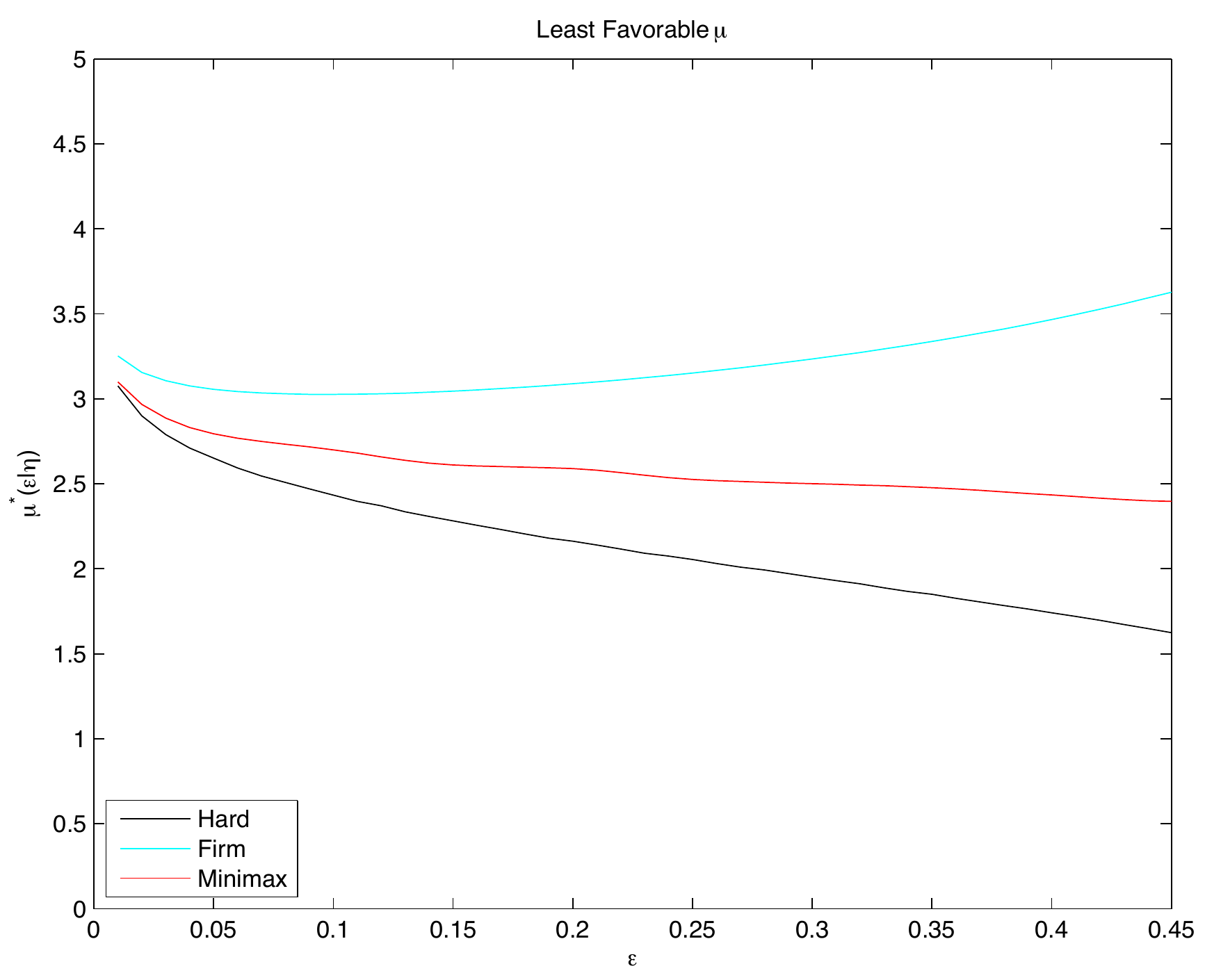}
\caption{Least favorable $\mu$ for various separable denoisers, as a
  function of sparsity $\eps$. 
From top to bottom the curves refer to firm thresholding, minimax
denoiser and hard thresholding.
Soft thresholding has minimax $\mu = \pm \infty$, not shown.}
\end{center}
\label{fig:LFMu}
\end{figure}

\subsection{Minimax shrinkage}
\label{sec:MMAX}

The previous example showed that a parametric family of shrinkers
can improve on soft thresholding, and hence improve the predicted phase transition
according to (\ref{generalPT}). The ultimate improvement one could make in this direction
is to use the \emph{globally} minimax nonlinear shrinker. This is the
separable denoiser $\eta$ that
is minimax not within some parametric family, such as the soft
thresholding or the broader  firm thresholding family,
but minimax over \emph{all} measurable nonlinearities
$\eta:\bR\to\bR$.  While this notion might appear somewhat abstract,
it can be in fact implemented in practice as illustrated in Figure
\ref{fig:Nonlinearities}, that present plots of the more familiar
denoisers (hard, soft, and firm) together with the minimax denoiser.

Formally,  
let $\Theta\equiv \meas(\reals)$ be the set of all measurable functions
$\tau:\reals\to\reals$, for such a $\tau$,
set $\eall(x ; \tau) \equiv \tau(x)$.  The minimax MSE over this class is
\begin{eqnarray}
M(\eps | \Minimax)  &=&  \inf_{\eta\in\meas(\reals)} \sup_{\nu \in \cF_{1,\eps}}
\E\big\{[X-\eta(X+Z)]^2\big\} \, . \label{eq:AllMMAX}
\end{eqnarray}
The calculation of this quantity uses a variety of ideas from minimax
decision theory, developed through several papers
\cite{Bi81,CaSt81,BiCo83,DJHS92,Jo94a,Jo94b,DJ94a}:
details are given in Appendix \ref{Appendix:minimax}.
A key point of this computation  is  the characterization of the
minimax nonlinearity as the minimal MSE Bayes
rule (that is, the conditional expectation) for the so-called
least-favorable prior. The least-favorable prior is the solution of Mallows' classical Fisher information problem \cite{Mallows78},
for which we compute numerical upper and lower bounds that coincide
within the stated precision.

Table \ref{table:MMSEscalar} and Figure \ref{fig:MMSEscalar} present  numerical values
associated with the solution of the minimax problem.  As expected, 
$M(\eps | \Minimax ) < M(\eps | \Firm ) \le M(\eps | \Soft)$,
i.e. optimizing over all nonlinearities yields a smaller mean square
error than soft or firm thresholding. On the other hand, Table \ref{table:MMSEscalar}
shows that the improvements are typically of size 0.01 or smaller over the range
$\eps \in (0.01,0.25)$. For very small $\eps$  it was  pointed out in \cite{DMM09}
 that \cite{DJ94a} implies
 \[
   \lim_{\eps \goto 0} \frac{M(\eps | \Minimax )}{M(\eps | \Soft)}  = 1 .
 \]
In the limit of extreme sparsity, there is nothing to be gained by completely general nonlinearities
over soft thresholding. The improvement is non-vanishing, but moderate
for $\eps$ non-vanishing.

\subsection{Empirical phase transition behavior}
\label{sec-PTExperiments-scalar}

The research hypothesis driving this paper is that Eq.~(\ref{generalPT})
describes the phase transition of AMP algorithms.  
In order to be completely explicit, we need to check the following
predictions, for each nonlinearity $\eta$ of interest
\begin{enumerate}
\item There exists a curve $\eps\mapsto\delta(\eps|\eta)$ such that
  for $\delta > \delta(\eps|\eta)$ the corresponding AMP algorithm will typically succeed
in reconstructing the unknown signal $x_0$, and  for $\delta < \delta(\eps|\eta)$ the algorithm will typically fail.
\item The curve is related to the corresponding scalar denoising
  problem  by $\delta(\eps | \eta ) = M(\eps | \eta)$.
\end{enumerate}
We now test this hypothesis for the firm and
globally minimax nonlinearities $\eta\in\{\efirm, \eall\}$.

Our experiment was conducted along  the same lines
as \cite{MaDo10,DoTa09b,DMM09,DMM-NSPT-11,BlTa11}.
We considered a range of problem sizes $N \in \{ 1000, 2000, 4000 \}$ and
a range of sparsity parameters $\eps \in \{
0.01,0.02,0.05,.10,0.15,0.20,0.25\}$, and  a grid of $\delta$ values
surrounding the predicted phase transition $\delta(\eps|\eta)$. We
ran $N_{\rm sample}= 1000$ Monte Carlo
reconstructions at each parameter combination. We declared ``success''  when the relative mean-squared error  
was below $1\%$:
\[
     \frac{ \| \hx^t - x_0\|_2^2}{ \|x_0\|_2^2} < 0.01 .
\]
We used $t = 300$ iterations of AMP\footnote{In most cases the
  mentioned convergence criterion is reached after a much smaller
  number of iterations (roughly 20) }.  We repeated a subset 
of our simulations with different requirements on $\| \hx^t -
x_0\|_2^2/\|x_0\|_2^2$ and different number of iterations, without
significant changes in the threshold location. This point is further
justified in Appendix \ref{Appendix:Convergence}. In the interest of
reproducibility, a suite of Java classes for carrying out these and
other simulations in the paper is made available as
\cite{AssemblaCode}.  

% latex table generated in R 2.11.1 by xtable 1.5-6 package
% Tue Aug 16 10:52:03 2011
\begin{table}[hp]
\begin{center}
\begin{tabular}{rrrrrrrr}
\hline
 \multicolumn{8}{|c|}{Firm  Thresholding with Minimax Tuning}\\
    \hline
 $\eps$ & Pred & off.1000 & off.2000 & ci.1000.lo & ci.1000.hi & ci.2000.lo & ci.2000.hi \\ 
  \hline
0.01 & 0.0550 & 0.0057 & 0.0040 & 0.0054 & 0.0060 & 0.0038 & 0.0043 \\ 
  0.02 & 0.0960 & 0.0065 & 0.0039 & 0.0062 & 0.0069 & 0.0036 & 0.0041 \\ 
  0.05 & 0.1920 & 0.0064 & 0.0047 & 0.0060 & 0.0068 & 0.0043 & 0.0050 \\ 
  0.10 & 0.3170 & 0.0067 & 0.0050 & 0.0063 & 0.0071 & 0.0047 & 0.0054 \\ 
  0.15 & 0.4190 & 0.0070 & 0.0049 & 0.0065 & 0.0074 & 0.0045 & 0.0052 \\ 
  0.20 & 0.5070 & 0.0068 & 0.0052 & 0.0064 & 0.0073 & 0.0048 & 0.0055 \\ 
  0.25 & 0.5840 & 0.0069 & 0.0055 & 0.0064 & 0.0073 & 0.0051 & 0.0058 \\ 
   \hline
  \multicolumn{8}{|c|}{Minimax Denoiser}\\
  \hline
 $\eps$ & Pred & off.1000 & off.2000 & ci.1000.lo & ci.1000.hi & ci.2000.lo & ci.2000.hi \\ 
  \hline
0.01 & 0.0530 & 0.0075 & 0.0052 & 0.0072 & 0.0078 & 0.0050 & 0.0055 \\ 
  0.05 & 0.1840 & 0.0126 & 0.0099 & 0.0122 & 0.0130 & 0.0096 & 0.0102 \\ 
  0.10 & 0.3020 & 0.0183 & 0.0164 & 0.0178 & 0.0188 & 0.0160 & 0.0167 \\ 
  0.15 & 0.3980 & 0.0147 & 0.0127 & 0.0142 & 0.0151 & 0.0124 & 0.0131 \\ 
  0.20 & 0.4800 & 0.0148 & 0.0114 & 0.0143 & 0.0152 & 0.0110 & 0.0118 \\ 
  0.25 & 0.5510 & 0.0155 & 0.0124 & 0.0151 & 0.0160 & 0.0120 & 0.0127 \\ 
   \hline
    \multicolumn{8}{|c|}{Soft Thresholding with Minimax Tuning}\\
  \hline
$\eps$  & Pred & off.1000 & off.2000 & ci.1000.lo & ci.1000.hi & ci.2000.lo & ci.2000.hi \\ 
  \hline
0.01 & 0.0610 & 0.0061 & 0.0045 & 0.0058 & 0.0064 & 0.0043 & 0.0048 \\ 
  0.02 & 0.1040 & 0.0065 & 0.0052 & 0.0062 & 0.0069 & 0.0049 & 0.0055 \\ 
  0.05 & 0.2030 & 0.0089 & 0.0072 & 0.0085 & 0.0093 & 0.0068 & 0.0075 \\ 
 % 0.10 & 0.3280 & 0.0106 &  & 0.0101 & 0.0110 &  &  \\ 
  0.15 & 0.4270 & 0.0117 & 0.0100 & 0.0112 & 0.0122 & 0.0096 & 0.0104 \\ 
  0.20 & 0.5110 & 0.0118 & 0.0100 & 0.0113 & 0.0124 & 0.0096 & 0.0104 \\ 
  0.25 & 0.5830 & 0.0127 & 0.0106 & 0.0122 & 0.0132 & 0.0102 & 0.0110 \\ 
   \hline    
\end{tabular}
\caption{Empirical phase transition studies for AMP algorithms on the
  class of simple sparse vectors $\cF_{N,\eps}$ based
  on three separable denoisers.
At each value of $\delta,\eps$ we carried out 1000 Monte Carlo repetitions at $N=1000$, $2000$.
Column \emph{Pred} corresponds to  general formula
(\ref{generalPT}) yielding the critical value of $\delta$,  and is
thought to be accurate for large $N$ . 
The columns \emph{off.1000}  and \emph{off.2000}  report values of the
offset $\hPT$   estimated by logistic regression,
 using Eqs.~(\ref{eq:LOGIT}) and (\ref{eq:OffSetDef}).
 (Note that offsets are systematically smaller at $N=2000$ than at $N=1000$, consistent with hypothesis that they vanish as $N \goto \infty$.)
 Columns \emph{ci.1000.lo} , \emph{ci.1000.hi}, \emph{ci.2000.lo}, \emph{ci.2000.lo} give  lower and upper endpoints of formal 95\% confidence intervals
 for the offset $\hPT$.}
 \label{table:EmpiricalPT}
\end{center}
\end{table}

We proceeded to analyse the outcomes of these numerical simulations
as follows, see also Appendix \ref{Appendix:NScaling} (a similar analysis was already carried out in in
\cite{DMM09,DoTa09b}).
The simulations generated a data set, containing, for each algorithm
and each fixed $\eps$,  a list of values $\delta_i$ and empirical success fractions $\hp_i$.
The success fractions observed at $\delta > M(\eps|\eta)$ were indeed
typically better than $50\%$ and at $\delta < M(\eps|\eta)$ were typically worse than $50\%$.

To quantify this tendency, we fit a logistic regression
\begin{eqnarray}
      \log\frac{\hp_i}{1-\hp_i} = \alpha + \beta \big(\delta_i -
      \delta(\eps|\eta)\big) , \label{eq:LOGIT}
\end{eqnarray}
where $\delta(\eps|\eta)=M(\eps|\eta)$ was computed analytically 
using  ideas mentioned
earlier. The choice of the model (\ref{eq:LOGIT}) is motivated by the
observation that the success probability increases rapidly around the
phase transition, and by the common statistical use of logistic
models. Also, similar  models have been proved to be asymptotically
correct in analogous phase transition phenomena \cite{DemboMontanari}.

For each set of data corresponding to given $(N,\eps)$ and each
non-linearity, we estimate $\alpha$ and $\beta$ from the  logit fit,
leading to values $\halpha, \hbeta$. Using these quantities, we
estimate the phase transition location as the value at which the
probability $\hp$ of success is $50\%$. Using Eq.~(\ref{eq:LOGIT})
this corresponds to $\halpha  + \hbeta (\delta-\delta(\eps|\eta)) = 0$,
i.e. $\delta =\delta(\eps|\eta) -(\halpha/\beta)$.
We are therefore led to define the offset between the empirical phase
transition and the prediction $\delta(\eps|\eta) = M(\eps|\eta)$ as
\begin{eqnarray}
\hPT  \equiv - \frac{\halpha}{\hbeta}\, .\label{eq:OffSetDef}
\end{eqnarray}
In order to check the general relation provided by
Eq.~(\ref{generalPT}) we need to show that $\hPT(N,\eps)$ tends to zero as
$N$ gets large, to within the statistical uncertainty.
In Table \ref{table:EmpiricalPT} we report our results on the empirical phase
transition, confirming that indeed the offset is small and decreasing
with $N$.

A few additional remarks on these data are of interest:
\bitem
  \item We calculated formal $95\%$ confidence intervals for $\hPT$,
   indicating the tight control we have of the correct value.
  \item 
As in earlier studies \cite{DoTa09b},
  we expect that  $\hPT(N,\eps)$ tend  $0$ 
at a rate that is inversely proportional to a power of $N$. Namely
  \[
       \hPT(N,\eps)\approx  \frac{{\rm const}}{N^{\gamma}} + \mbox{sampling error},
  \]
for some $\gamma\in (0,1]$.
  Our data supports
   this relationship, with $\gamma \approx1/3$.  See Appendix \ref{Appendix:NScaling}.
  \item Denoting by $\hbeta_N$
the fitted slope coefficient at dimension  $N$, evidence that $\hbeta_N$ is
increasing with larger $N$ indicates that a sharpening of the phase transition is indeed occurring.
Appendix \ref{Appendix:NScaling} shows that $\hbeta_N \sim \sqrt{N}$ is consistent with our data.
\eitem
We refer to Appendices \ref{Appendix:Computation} and \ref{Appendix:NScaling}
for further details.

\section{Block-separable denoisers}
\label{sec:blockThresh}

We now turn to the case of block-structured sparsity,
first introducing some notational conventions.
We partition the vector $x = (x_1, x_2,\dots, x_N)$
into $M$ blocks each of size $B$. Denoting by
$block_m(x) = (x_{(m-1)B + 1}, \dots, x_{mB})$ the $m$-th block, we
hence write
\[
 x =  (block_1(x), \dots, block_M(x))\, ,
\]
with, of course, $N=MB$. 

A block-separable denoiser is a mapping $\eta(\,\cdot\, ;\tau):\reals^N\to\reals^N$
that decomposes according to the above partition:
\begin{eqnarray}
\eta(x;\tau) = \big(\eta(block_1(x);\tau), \dots, \eta(block_M(x);\tau)\big)\, ,
\end{eqnarray}
where, with an abuse of notation, we use the same symbol to denote 
the single-block denoiser $\eta(\,\cdot\,;\tau):\reals^B\to\reals^B$.
The last equation replaces Eq.~(\ref{eq:Separable}) which correspond
to the simple separable case. The above form applies to noise with
variance $\sigma^2=1$. For general variance we adopt again the
scaling relation (\ref{eq:ScalingRelation}).

We will apply these denoiser to signals from the block-sparse class
$\cF_{N,\eps,B}$ defined as follows for $\eps\in [0,1]$, $B\in
\naturals$, $M\equiv N/B$,
\begin{align}
\cF_{N,\eps,B}\equiv \Big\{\;\nu_N\in\cP(\bR^N)\, :\;\;\;
\E_{\nu_N}\big[\#\{i\in [M] : \; block_i(\bx)\neq 0\}\big]\le M\eps\; \Big\}\, .\label{eq:BlockSparseDef}
\end{align}
In words, this is the class of (random) vectors $\bx$ that have (in
expectation) at most $M\eps$ blocks different from $0$.
For simplicity, we will write $\cF_{\eps,B}$ for the $M=1$ case, $\cF_{B,\eps,B}$ 

The same simplifications described in
Section \ref{sec-Scalar-Minimax} applies, 
with obvious modifications, to the present context. 
\begin{lemma}\label{lemma:BlockMSE}
 Let $\cF_{N,\eps}\subseteq
\cP(\bR^N)$ be any family of probability distributions satisfies the following conditions: $(i)$ If
$\nu_B\in\cF_{B,\eps}$, then defining $\nu_N\equiv \nu_B\times\cdots\times
\nu_B$ ($M=N/B$ times), we have $\nu_N\in\cF_{N,\eps}$; $(ii)$ Viceversa, if
$\nu_N\in\cF_N$, then letting $\nu_{N,i}$ denote the marginal of the
$i$-th block under  $\nu_N$, we have $\onu_N\equiv M^{-1}\sum_{i=1}^M\nu_{N,i}\in\cF_{B,\eps}$. 

Then, for any block-separable denoiser $\eta$, and for any $N$
multiple of $B$
\begin{eqnarray*}
M(\eps|\eta) = M(\cF_{N,\eps}|\eta) = M(\cF_{B,\eps}|\eta)\, .
\end{eqnarray*}
\end{lemma}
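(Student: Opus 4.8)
The plan is to repeat, essentially verbatim, the proof of Lemma~\ref{lemma:ScalarMSE}, with ``coordinate'' replaced throughout by ``block of $B$ coordinates''. Fix $\tau\in\Theta$ and, for $\bz\sim\normal(0,\id_{N\times N})$, set
\[
M(\cF_{N,\eps} | \eta,\tau)\equiv\frac{1}{N}\sup_{\nu_N\in\cF_{N,\eps}}\E_{\nu_N}\big\{\big\|\bx-\eta(\bx+\bz;\tau)\big\|_2^2\big\}.
\]
As in the scalar case it suffices to show $M(\cF_{N,\eps} | \eta,\tau)=M(\cF_{B,\eps} | \eta,\tau)$ for every $N$ that is a multiple of $B$; applying $\inf_{\tau\in\Theta}$ to both sides then gives the first two equalities, and since the common value does not depend on $N$, the limit defining $M(\eps | \eta)$ exists and equals $M(\cF_{B,\eps} | \eta)$.

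For the \textbf{lower bound} I would invoke property $(i)$: given any $\nu_B\in\cF_{B,\eps}$, the product $\nu_N\equiv\nu_B\times\cdots\times\nu_B$ ($M=N/B$ factors) lies in $\cF_{N,\eps}$. Because $\eta$ is block-separable and the Gaussian noise is independent across the $M$ blocks, the squared loss decouples block by block, so
\[
\E_{\nu_N}\big\{\big\|\bx-\eta(\bx+\bz;\tau)\big\|_2^2\big\}=M\cdot\E_{\nu_B}\big\{\big\|\bY-\eta(\bY+\bW;\tau)\big\|_2^2\big\},
\]
with $\bY\sim\nu_B$ on $\bR^B$ and $\bW\sim\normal(0,\id_{B\times B})$. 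Dividing by $N=MB$ and taking the supremum over $\nu_B\in\cF_{B,\eps}$ yields $M(\cF_{N,\eps} | \eta,\tau)\ge M(\cF_{B,\eps} | \eta,\tau)$.

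For the \textbf{upper bound}, fix an arbitrary $\nu_N\in\cF_{N,\eps}$. Block-separability lets us write the loss as a sum over blocks, and the $i$-th summand depends on $\nu_N$ only through $\nu_{N,i}$, the law of $block_i(\bx)$ under $\nu_N$ (the noise restricted to the $i$-th block is independent of everything else). Writing $R(\mu)\equiv\E_{\bY\sim\mu}\{\|\bY-\eta(\bY+\bW;\tau)\|_2^2\}$ for $\mu\in\cP(\bR^B)$, this gives $\E_{\nu_N}\{\|\bx-\eta(\bx+\bz;\tau)\|_2^2\}=\sum_{i=1}^M R(\nu_{N,i})$. The key point, exactly as in Lemma~\ref{lemma:ScalarMSE}, is that $R$ is \emph{linear} in $\mu$, being an expectation, so $\tfrac1M\sum_{i=1}^M R(\nu_{N,i})=R(\onu_N)$ with $\onu_N=M^{-1}\sum_{i=1}^M\nu_{N,i}$. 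By property $(ii)$, $\onu_N\in\cF_{B,\eps}$, hence $R(\onu_N)\le\sup_{\nu_B\in\cF_{B,\eps}}R(\nu_B)=B\,M(\cF_{B,\eps} | \eta,\tau)$. Dividing the displayed identity by $N=MB$ gives $\tfrac1N\E_{\nu_N}\{\,\cdots\,\}=\tfrac1B R(\onu_N)\le M(\cF_{B,\eps} | \eta,\tau)$, and taking the supremum over $\nu_N\in\cF_{N,\eps}$ completes the reverse inequality.

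I do not anticipate a genuine obstacle: the argument is structurally identical to the scalar one. The only points requiring care are keeping the per-coordinate normalizations straight ($N=MB$ on one side versus $M$ blocks on the other, each of ``size'' $B$), and the linearity-in-$\mu$ step, which is precisely what allows the block-averaging map $\nu_N\mapsto\onu_N$ supplied by property $(ii)$ to turn a supremum over $\cF_{N,\eps}$ into one over $\cF_{B,\eps}$ with no loss.
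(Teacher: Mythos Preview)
Your proposal is correct and is precisely the approach intended by the paper, which in fact omits the proof entirely and simply remarks that it is an immediate generalization of Lemma~\ref{lemma:ScalarMSE}. Your write-up spells out exactly that generalization---fix $\tau$, use property~$(i)$ with product measures for the lower bound and property~$(ii)$ with linearity of the block risk for the upper bound---so there is nothing to add.
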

The proof is omitted since it is an immediate generalization of the one of
Lemma \ref{lemma:BlockMSE}.
The class $\cF_{N,\eps,B}$ to be studied in the rest of this section
clearly satisfy the assumption of this lemma.

\subsection{Block soft thresholding}

Block-soft thresholding $\esoft(\,\cdot\,;\tau):\bR^B\to\bR^B$  is the
nonlinear shrinker defined by letting, for $y\in\bR^B$, and $\tau\in\bR_+$,
\begin{align}
\esoft(y;\tau) =   \Big( 1 - \frac{\tau}{\|y\|_2}
\Big)_+\cdot y\, ,
\end{align}
where $(z)_+ \equiv\max(z,0)$.
The case $B=1$ reduces to  traditional soft  thresholding of Example
\ref{example:SoftThresholding}.
More generally, $\esoft(y;\tau)$ shrinks its argument $y$ to $0$ if
$\|y\|_2\le \tau$ and moves it by an amount $\tau$ towards the origin
otherwise. It can also be regarded as the solution of a penalized
least squares problem, namely
\begin{align*}
\esoft(y;\tau) \equiv \arg\min_{x\in\bR^B}
\left\{\frac{1}{2}\|y-x\|^2_2 + \tau\|x\|_2\right\}\, .
\end{align*}
Block thresholding has previously been considered by  Hall,
Kerkyacharian and Picard \cite{HaKePi98} and
by Cai \cite{Cai99}
although in specific `wavelet' applications. 

In view of Lemma \ref{lemma:BlockMSE}, computing the minimax risk
reduce to solve the block minimax problem (in this section we add the subscript $B$
for greater clarity)
\begin{align}
  M_B(\eps | \Blocksoft) \equiv   \frac{1}{B}\, \inf_{\tau\in\bR_+}
  \sup_{\nu\in\cF_{\eps,B}} 
\E_{\nu}\big\{[\bx-\esoft(\bx+\bz;\tau)]^2\big\}\, ,\label{eq:MMAXBlockSoft}
\end{align}
where expectation is taken with respect to $X\sim \nu$ independent of
$\bz\sim \normal(0,\id_{B\times B})$. Notice that the condition
$\nu\in\cF_{\eps,B}$ simply amount to saying that $\nu$ is a
probability measure on $\bR^B$ with $\nu(\{0\})\ge 1-\eps$.
The calculation of $M_B(\eps|\Blocksoft)$ can be reduced to a calculus
problem. We state the results below deferring calculations to Appendix \ref{Appendix:block}.
\begin{lemma}\label{lemma:MSEBlockSoft}
Let $X_B$ bye a chi-square random variable with $B$ degrees of freedom
and define the functions $g,h:\bR_+\to\bR$ as follows
\begin{align*}
h(\tau^2) \equiv\frac{\tau}
{\E\Big\{ \big(\sqrt{X_B}-\tau\big)\ind_{X_B\ge\tau^2}\Big\}}\, ,\;\;\;\;\;\;\;
g(\tau^2) \equiv\frac{\tau\E\Big\{ \big(\sqrt{X_B}-\tau\big)^2
\ind_{X_B\ge\tau^2}\Big\}}
{\E\Big\{ \big(\sqrt{X_B}-\tau\big)\ind_{X_B\ge\tau^2}\Big\}}\, .
\end{align*}
The minimax risk of block soft thresholding over the class
$\cF_{N,\eps,B}$ is given by 
\begin{align}
M_B(\eps|\Blocksoft) =\frac{B+\tau^2+g(\tau^2)}{B(1+h(\tau^2))}\,
,\;\;\;\;\;\;
\eps = \frac{1}{1+h(\tau^2)}\, .
\end{align}
This is a parametric expression for $\tau\in [0,\infty)$. The
parameter corresponds to the minimax threshold $\tau$.
\end{lemma}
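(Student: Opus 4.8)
The plan is to follow the classical Donoho--Johnstone recipe of a least-favorable two-point prior, now carried out in $B$ dimensions. First, Lemma~\ref{lemma:BlockMSE} reduces the computation to the single-block minimax value
\[
 B\cdot M_B(\eps|\Blocksoft)=\inf_{\tau\in\bR_+}\ \sup_{\nu\in\cF_{\eps,B}}\E_\nu\big\{\|\bx-\esoft(\bx+\bz;\tau)\|_2^2\big\},\qquad \cF_{\eps,B}=\{\nu\in\cP(\bR^B):\nu(\{0\})\ge 1-\eps\}.
\]
Write $r(\mu;\tau)\equiv\E\big\{\|\mu-\esoft(\mu+\bz;\tau)\|_2^2\big\}$ for the risk at the point mass $\delta_\mu$; by rotational invariance of the Gaussian and of block soft thresholding, $r(\mu;\tau)$ depends on $\mu$ only through $\|\mu\|_2$. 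Since $\nu\mapsto\E_\nu\{\cdots\}$ is affine and $\cF_{\eps,B}$ has extreme points of the form $(1-\eps)\delta_0+\eps\,\delta_\mu$ (together with their weak limits as $\|\mu\|_2\to\infty$), I would conclude $\sup_{\nu\in\cF_{\eps,B}}\E_\nu\{\cdots\}=(1-\eps)\,r(0;\tau)+\eps\,\sup_{\mu\in\bR^B}r(\mu;\tau)$.

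Next I would evaluate the two ingredients. At $\mu=0$, $\esoft(\bz;\tau)=\bz\,(1-\tau/\|\bz\|_2)_+$, so $\|\esoft(\bz;\tau)\|_2^2=(\|\bz\|_2-\tau)_+^2$ and hence $r(0;\tau)=\E\big\{(\sqrt{X_B}-\tau)^2\ind_{X_B\ge\tau^2}\big\}$ with $X_B=\|\bz\|_2^2\sim\chi^2_B$. The key lemma is that $\sup_{\mu}r(\mu;\tau)=B+\tau^2$, attained only in the limit $\|\mu\|_2\to\infty$. To prove it I would use Stein's unbiased risk estimate: since $v\mapsto g(v)\equiv\esoft(v;\tau)-v$ is Lipschitz (it equals $-v$ on $\{\|v\|_2\le\tau\}$ and $-\tau v/\|v\|_2$ on $\{\|v\|_2>\tau\}$), SURE gives, with $v=\mu+\bz$, $r(\mu;\tau)=\E\big\{B+2\,\div g(v)+\|g(v)\|_2^2\big\}$, and using $\div(v/\|v\|_2)=(B-1)/\|v\|_2$,
\[
 r(\mu;\tau)-(B+\tau^2)=-2\tau(B-1)\,\E\big\{\|v\|_2^{-1}\ind_{\|v\|_2>\tau}\big\}+\E\big\{(\|v\|_2^2-2B-\tau^2)\,\ind_{\|v\|_2\le\tau}\big\}\ \le\ 0 ,
\]
while dominated convergence shows the left-hand side $\to 0$ as $\|\mu\|_2\to\infty$. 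This SURE step---justifying Stein's identity for the non-differentiable block shrinker and correctly accounting for the $\{\|v\|_2\le\tau\}$ contribution---is the part I expect to be the main obstacle; the rest is routine.

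With both pieces in hand, $B\cdot M_B(\eps|\Blocksoft)=\inf_{\tau\ge 0}\big\{(1-\eps)\,\E[(\sqrt{X_B}-\tau)^2\ind_{X_B\ge\tau^2}]+\eps(B+\tau^2)\big\}$. The objective is convex in $\tau$ (the derivative of $\tau\mapsto\E[(\sqrt{X_B}-\tau)_+^2]$ is $-2\E[(\sqrt{X_B}-\tau)_+]$, which is nondecreasing), so stationarity characterizes the minimizer. Differentiating under the integral---the Leibniz boundary term vanishes because the integrand is zero at $x=\tau^2$---the condition is $(1-\eps)\,\E[(\sqrt{X_B}-\tau)\ind_{X_B\ge\tau^2}]=\eps\tau$, i.e. $h(\tau^2)=(1-\eps)/\eps$, equivalently $\eps=1/(1+h(\tau^2))$; a monotonicity check on $h$ ($h(0)=0$, $h(\tau^2)\to\infty$) makes $\tau\mapsto\eps$ a bijection of $[0,\infty)$ onto $(0,1]$, giving the claimed parametrization. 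Finally I would substitute back: from $(1+h)(1-\eps)=h$, $(1+h)\eps=1$, and the identity $g(\tau^2)=h(\tau^2)\,\E[(\sqrt{X_B}-\tau)^2\ind_{X_B\ge\tau^2}]=h(\tau^2)\,r(0;\tau)$ (immediate from the definitions of $g,h$),
\[
 (1-\eps)\,r(0;\tau)+\eps(B+\tau^2)=\frac{h(\tau^2)\,r(0;\tau)+B+\tau^2}{1+h(\tau^2)}=\frac{B+\tau^2+g(\tau^2)}{1+h(\tau^2)},
\]
which equals $B\,M_B(\eps|\Blocksoft)$, proving the stated formula.
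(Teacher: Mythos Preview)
Your proposal is correct and follows essentially the same route as the paper: reduce to two-point mixtures, compute $R(0;\tau)=\E\{(\sqrt{X_B}-\tau)_+^2\}$, show $\sup_\mu R(\mu;\tau)=B+\tau^2$, and then optimize over $\tau$. The paper establishes $\sup_\mu R(\mu;\tau)=B+\tau^2$ by proving monotonicity of $\mu\mapsto R(\mu;\tau)$ via a SURE/density argument (deferred to the appendix on risk properties), whereas you prove the upper bound $R(\mu;\tau)\le B+\tau^2$ directly from SURE; both are equivalent here. You also carry out explicitly the final optimization step that the paper leaves as ``a calculus exercise,'' and your derivation of the stationarity condition $\eps=1/(1+h(\tau^2))$ together with the substitution $g(\tau^2)=h(\tau^2)\,r(0;\tau)$ is exactly what is needed to recover the stated parametric formula.
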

 
In Figure \ref{fig:blockMSE} we present graphs
of $M(\eps) = M_B(\eps|\Blocksoft)$ as a function of $\eps$.
It is immediate to prove the following  structural properties:
$(i)$ $0 \leq M(\eps) \leq 1$ (the upper bound follows from taking
$\tau=0$); $(ii)$ 
$M(\eps)$ is monotone increasing and concave (monotonicity is a
consequence of $\cF_{B,\eps}\subseteq \cF_{B,\eps'}$ for $\eps\le
\eps'$, and concavity follows since any measure in
$\cF_{q\eps_1+(1-q)\eps_2,B}$ can be written as convex combination of
measures in $\cF_{\eps_1,B}$ and in $\cF_{\eps_2,B}$); $(iii)$
$M(\eps) \goto 0$ as $\eps \goto 0$; 
 $M(\eps) \goto 1$ as $ \eps \goto 1$.
(Recall that we are considering the MSE per coordinate.)
Associated with the minimax problem is also an optimal threshold value
$\tau^*(\eps|B)$, that we plot in Figure \ref{fig:blockThresh}.

\begin{figure}
\begin{center}
\includegraphics[height=3in]{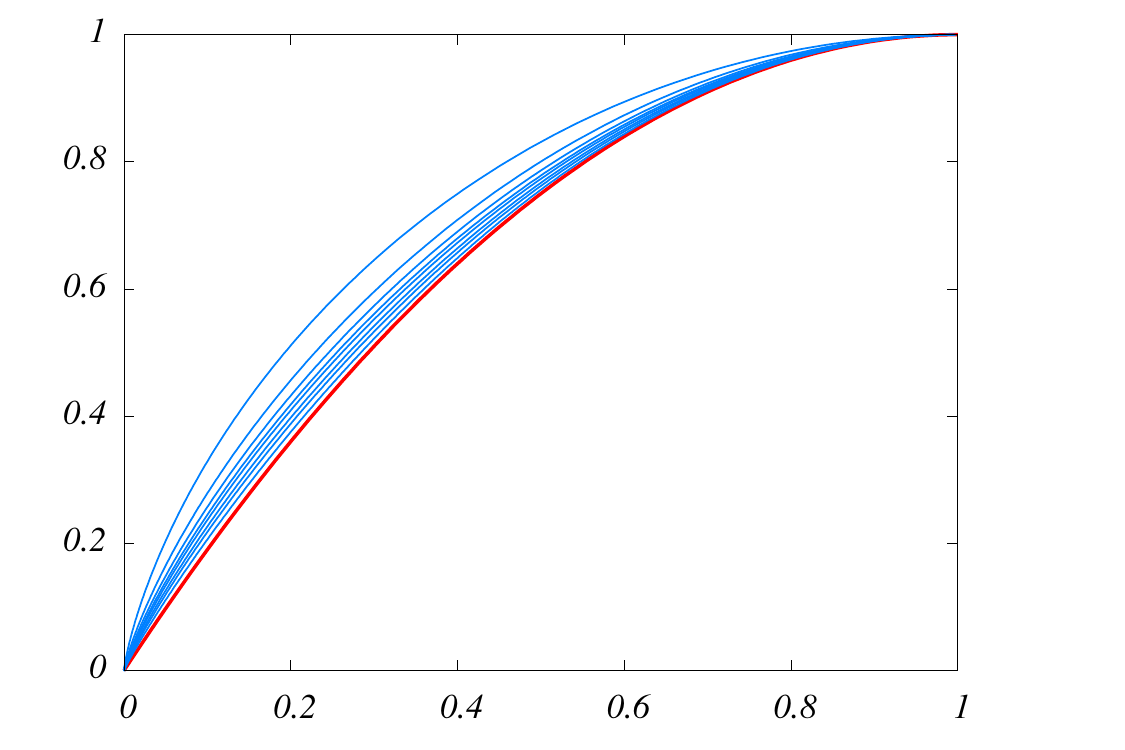}
\put(-170,-5){$\eps$}
\put(-220,80){$M_B(\eps|\Blocksoft)$}
\caption{Minimax MSE $M_B(\eps|\Blocksoft)$ of block soft-thresholding for block sparse
  vectors from the class $\cF_{N,\eps,B}$,
  cf. Eq.~(\ref{eq:MMAXBlockSoft}). 
Thin blue curves refer (from top to bottom) to block sizes  $B =
1$, $2$, $3$, $5$, $7$, $10$, $20$. Thick red line is the large
block size  limit $M_{\infty}(\eps|\Blocksoft)=2\eps-\eps^2$.}
\end{center}
\label{fig:blockMSE}
\end{figure}
\begin{figure}
\begin{center}
\includegraphics[height=2.4in]{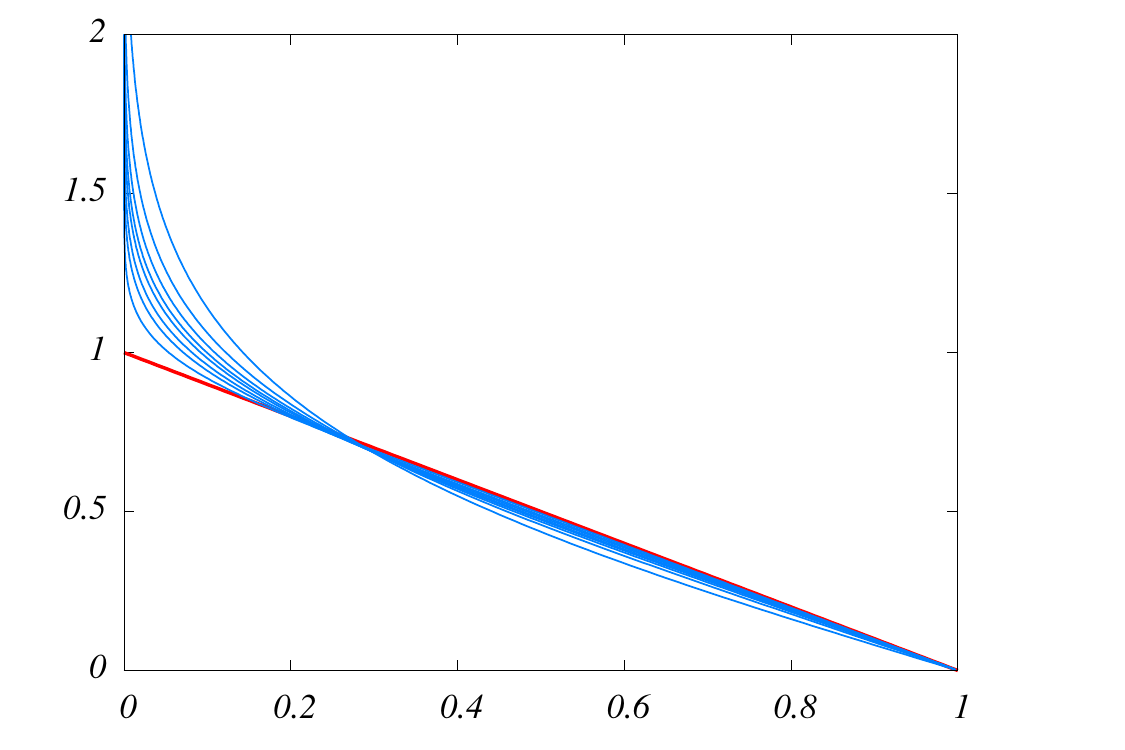}
\put(-135,-5){$\eps$}
\put(-130,70){$\tau^*(\eps|B)/\sqrt{B}$}
\caption{Minimax threshold $\tau^*(\eps|B)$  of block soft-thresholding for block sparse
  vectors from the class $\cF_{N,\eps,B}$. Thin blue lines are the normalized
  threshold  $\tau^*(\eps|B)/\sqrt{B}$
as a function of   $\eps$, for $B = 1$, $2$, $3$, $4$, $5$, $7$, $10$,
$20$ (from top to bottom on the left side of the plot, and from bottom
to top on the right). The thick red line is the large
block size  limit $\tau^*(\eps|\Blocksoft)=1-\eps$.}
\end{center}
\label{fig:blockThresh}
\end{figure}

A particularly interesting case is the one of large blocks.
As $B \goto \infty$ the minimax MSE has a well defined,
and particularly explicit limit.
\begin{lemma}
\label{lem:blockMSElimit}
For large $B$ we have
\[
  \lim_{B\to\infty}  M_B(\eps| \Blocksoft)  = M_\infty(\eps|\Blocksoft) \equiv 2\eps-\eps^2.
\]
Further, when properly normalized, the minimax threshold
converges with increasing block size:  
\[
\lim_{B\to\infty}\tau^*(\eps|B)/\sqrt{B} =\tau^*(\eps| \infty) \equiv
(1 -\eps)\, .
\] 
\end{lemma}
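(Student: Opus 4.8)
The plan is to start from the exact parametric formulas of Lemma~\ref{lemma:MSEBlockSoft} and push them through a concentration argument for the chi-square variable $X_B$. Introduce the rescaled threshold parameter $s := \tau/\sqrt{B}\in[0,\infty)$, so that $\tau^2 = s^2 B$ and the event $\{X_B\ge\tau^2\}$ is $\{X_B/B\ge s^2\}$. Since $X_B\ge\tau^2\Leftrightarrow\sqrt{X_B}\ge\tau$ for $\tau\ge 0$, the denominator common to $h$ and $g$ is $D(\tau):=\E\big(\sqrt{X_B}-\tau\big)_+$, which is continuous, strictly decreasing and convex in $\tau$, running from $\E\sqrt{X_B}$ down to $0$; hence $h(\tau^2)=\tau/D(\tau)$ is a continuous strictly increasing bijection of $[0,\infty)$ onto itself, and $s\mapsto\eps_B(s):=1/(1+h(s^2B))$ is a continuous strictly decreasing bijection of $(0,\infty)$ onto $(0,1)$, whose inverse recovers the minimax threshold via $\tau^*(\eps|B)=s_B^*(\eps)\sqrt{B}$, $\eps_B(s_B^*(\eps))=\eps$. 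The facts I will use about $X_B$ are the elementary moments $\E X_B = B$, $\E X_B^2 = B^2+2B$, the bound $\E\sqrt{X_B}=\sqrt{B}+O(1/\sqrt{B})$ (from $\E\sqrt{X_B}=\sqrt{B-\Var\sqrt{X_B}}$ with $\Var\sqrt{X_B}$ bounded, or directly from the Gamma-ratio formula and Stirling), and the exponential lower-tail bound: for each fixed $s<1$ there is $c_s>0$ with $\prob\{X_B\le s^2B\}\le e^{-c_s B}$.

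Next I would compute the two key asymptotics at a fixed $s\in(0,1)$. Writing $D(s\sqrt{B}) = \E[\sqrt{X_B}-s\sqrt{B}] - \E[(\sqrt{X_B}-s\sqrt{B})\ind_{X_B<s^2B}]$, the first term is $(1-s)\sqrt{B}+O(1/\sqrt{B})$ and the second is bounded in absolute value by $s\sqrt{B}\,\prob\{X_B<s^2B\}\le s\sqrt{B}\,e^{-c_s B}=o(1)$, so $D(s\sqrt{B})=(1-s)\sqrt{B}+o(1)$. Similarly, $\E[(\sqrt{X_B}-s\sqrt{B})_+^2] = \E[(\sqrt{X_B}-s\sqrt{B})^2]+o(1) = \E X_B - 2s\sqrt{B}\,\E\sqrt{X_B}+s^2B+o(1) = (1-s)^2B+O(1)$. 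Feeding these into the definitions of $h$ and $g$ gives $h(s^2B)=s\sqrt{B}/D(s\sqrt{B})\to s/(1-s)$, hence $\eps_B(s)\to 1/(1+s/(1-s))=1-s$; and $g(s^2B)=s\sqrt{B}\cdot\E[(\sqrt{X_B}-s\sqrt{B})_+^2]/D(s\sqrt{B})=s(1-s)B\,(1+o(1))$. Plugging $\tau^2=s^2B$ into the formula of Lemma~\ref{lemma:MSEBlockSoft}, the quantity $m_B(s):=M_B(\eps_B(s)|\Blocksoft)=\dfrac{B+s^2B+g(s^2B)}{B(1+h(s^2B))}$ has numerator over $B$ tending to $1+s^2+s(1-s)=1+s$ and denominator over $B$ tending to $1+s/(1-s)=1/(1-s)$, so $m_B(s)\to(1+s)(1-s)=1-s^2$. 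Thus the parametrized curve $s\mapsto(\eps_B(s),m_B(s))$ converges pointwise to $s\mapsto(1-s,\,1-s^2)$, i.e. to the graph of $\eps\mapsto 1-(1-\eps)^2=2\eps-\eps^2$, with the rescaled minimax threshold parameter equal to $1-\eps$ along this limit curve.

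Finally I would convert this pointwise convergence of the parametrization into the two claimed limits at a fixed $\eps_0\in(0,1)$. Given $\xi>0$, choose $s_1<1-\eps_0<s_2$ with $|1-s_j^2-(2\eps_0-\eps_0^2)|<\xi$ (continuity). Since $\eps_B(s_1)\to 1-s_1>\eps_0$ and $\eps_B(s_2)\to 1-s_2<\eps_0$, for all large $B$ we have $\eps_B(s_2)<\eps_0<\eps_B(s_1)$; because $\eps_B$ is strictly decreasing, the minimax parameter $s_B^*(\eps_0)$ then lies in $(s_1,s_2)$, so $\tau^*(\eps_0|B)/\sqrt{B}=s_B^*(\eps_0)\in(s_1,s_2)$, and letting $s_1\uparrow 1-\eps_0$, $s_2\downarrow 1-\eps_0$ yields $\tau^*(\eps_0|B)/\sqrt{B}\to 1-\eps_0$. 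For the MSE, monotonicity of $\eps\mapsto M_B(\eps|\Blocksoft)$ (property $(ii)$ noted after Lemma~\ref{lemma:MSEBlockSoft}) gives $m_B(s_2)=M_B(\eps_B(s_2)|\Blocksoft)\le M_B(\eps_0|\Blocksoft)\le M_B(\eps_B(s_1)|\Blocksoft)=m_B(s_1)$ for large $B$; since $m_B(s_j)\to 1-s_j^2$, this pins $M_B(\eps_0|\Blocksoft)$ to within $2\xi$ of $2\eps_0-\eps_0^2$ for large $B$, and $\xi\to 0$ finishes it. The endpoints $\eps_0\in\{0,1\}$ are immediate ($M_B(0|\Blocksoft)=0$ by taking $\tau\to\infty$ and $M_B(1|\Blocksoft)=1$ by taking $\tau=0$, both matching $2\eps_0-\eps_0^2$). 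The only genuinely delicate ingredients are the truncation bounds on $\E[(\sqrt{X_B}-s\sqrt{B})\ind_{X_B<s^2B}]$ and $\E[(\sqrt{X_B}-s\sqrt{B})^2\ind_{X_B<s^2B}]$ and the estimate $\E\sqrt{X_B}=\sqrt{B}+O(1/\sqrt{B})$; everything else is bookkeeping with the explicit formulas.
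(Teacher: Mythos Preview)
Your proof is correct. Both you and the paper rescale the threshold as $\tau=c\sqrt{B}$ and use the concentration of $\sqrt{X_B}$ around $\sqrt{B}$, but you organize the argument differently. The paper works with the variational expression $M_B(\eps|\Blocksoft)=B^{-1}\inf_\tau\{(1-\eps)\E(\sqrt{X_B}-\tau)_+^2+\eps(B+\tau^2)\}$, asserts by a compactness argument that the minimizer scales like $\sqrt{B}$, computes the pointwise limit of the objective via the central limit theorem, and then minimizes the limit; the interchange of limit and infimum is left implicit. You instead start from the parametric formulas of Lemma~\ref{lemma:MSEBlockSoft}, where the optimization over $\tau$ is already carried out, compute the limits of $h(s^2B)$ and $g(s^2B)$ directly from moment identities and a chi-square lower-tail bound, and then convert parametric convergence into convergence at fixed $\eps_0$ by a sandwich using the monotonicity of $\eps\mapsto M_B(\eps|\Blocksoft)$. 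Your route is a little longer but cleaner on the exchange-of-limits point: because you never write an infimum, there is nothing to interchange, and the threshold convergence $\tau^*(\eps|B)/\sqrt{B}\to 1-\eps$ falls out of the same sandwich rather than requiring a separate argmin-continuity argument.
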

This lemma is proved in Appendix \ref{app:LargeBlocks}.

\subsection{Block James-Stein}
\label{sec:BlockJS}

From the point of view  of  $\ell_1$-penalized estimation, and 
compressed sensing, block soft thresholding seems very natural.
However, the limiting relationship in Lemma \ref{lem:blockMSElimit}
shows that this approach leaves room for improvement at  large
block sizes. 
A simple upper bound on the MSE is the one achieved by a denoiser
which utilizes a special oracle that  tells us without
error which blocks are zero and which are nonzero.
We refer to this as to the \emph{ideal} (or \emph{oracle}) \emph{MSE}. It 
easy to see that the minimax ideal MSE is  $ M_B(\eps |\Oracle) = \eps$. This
is considerably smaller than  $M_B(\eps|\Blocksoft)$, even in the
 $B \goto \infty$ limit characterized by  Lemma
 \ref{lem:blockMSElimit}.
On the other hand. the denoising/compressed sensing problems become 
easier as $B\to\infty$. Can we hope to achieve $M_B(\eps|\Oracle)$?

In order to approach oracle MSE for large $B$, we propose to use
the positive-part James-Stein shrinkage estimator \cite{JamesStein}.
This is again a block-separable denoiser that acts as follows on a block
$y\in \bR^B$:
\[
\eJS(y) =\Big( 1 - \frac{(B-2)}{\|y\|_2^2}\Big)_+  y\, .
\]
Analogously to block soft thresholding, this estimator shrinks to $0$
blocks with small norms.
On the other hand, its bias vanishes as $\|y\|_2\to\infty$.
Using once more Lemma \ref{lemma:BlockMSE} we have 
(notice that in this case there is no tuning parameter)
\begin{align*}
  M_B(\eps | \JamesStein) \equiv   \frac{1}{B}\, 
  \sup_{\nu\in\cF_{\eps,B}} 
\E_{\nu}\big\{[\bx-\eJS(\bx+\bz)]^2\big\}\, .
\end{align*}
Remarkably, the limiting $B  \goto \infty$ behavior  of this denoiser
is ideal, and noticeably better than block soft thresholding, as
shown by  Figure \ref{fig:JSvsBlocksoft} and formally in the next lemma.

\begin{figure}
\begin{center}
\includegraphics[height=2.4in]{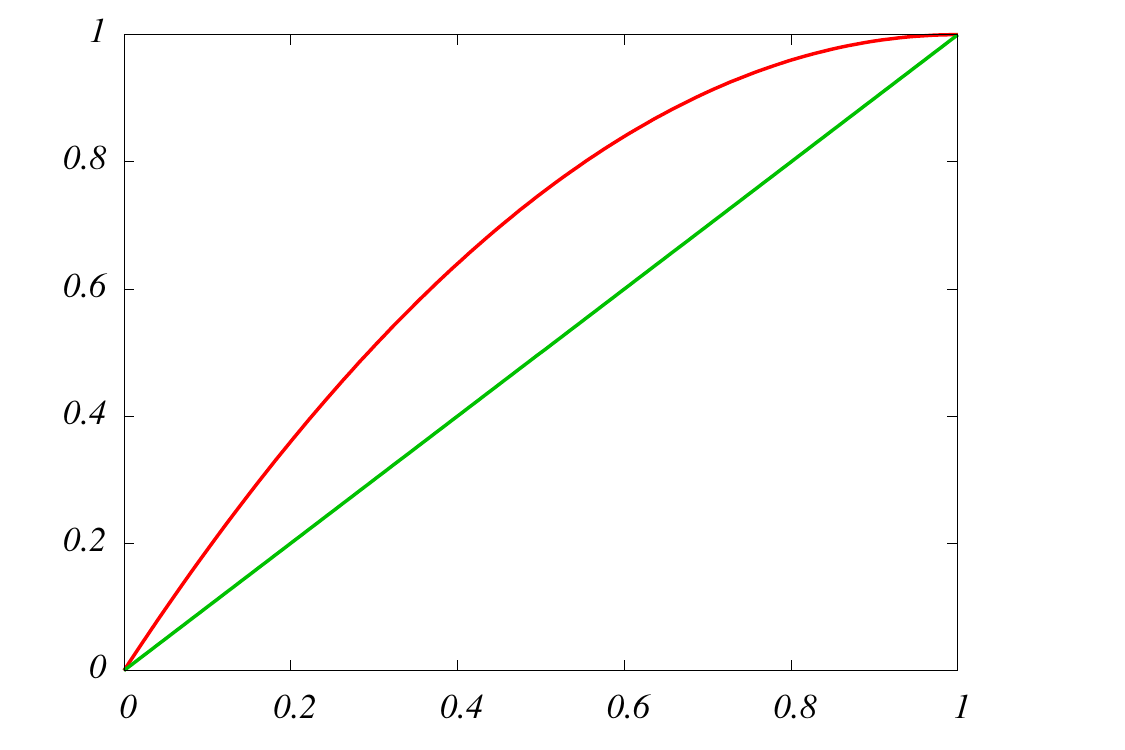}
\put(-130,-5){$\eps$}
\put(-215,135){$M_{\infty}(\eps|\Blocksoft)$}
\put(-150,50){$M_{\infty}(\eps|\JamesStein)$}
\caption{Large block sizes ($B \goto \infty$)  MSE of block soft (upper
  curve) and James-Stein (lower curve) denoisers.
The limit soft thresholding minimax MSE is
$M_{\infty}(\eps|\Blocksoft)=2\eps-\eps^2$.
The limit James-Stein minimax MSE coincides with the ideal MSE $M_{\rm
Oracle}(\eps)= \eps$. }
\end{center}
\label{fig:JSvsBlocksoft}
\end{figure}

\begin{lemma}
Let $M_B(\eps | \JamesStein)$ denote the minimax MSE for 
$\eJS$ over the class of $\eps$-block sparse  sequences.
For any $B > 2$, we have:
\[
M_B(\eps | \JamesStein ) \leq \eps + \frac{2}{B}.
\]
\end{lemma}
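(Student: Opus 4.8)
The plan is to bound the per-block (that is, per-coordinate, after dividing by $B$) risk of the James-Stein denoiser $\eJS$ uniformly over all priors $\nu\in\cF_{\eps,B}$, i.e. over all probability measures on $\bR^B$ with $\nu(\{0\})\ge 1-\eps$. By Lemma \ref{lemma:BlockMSE} it suffices to treat the single-block case. Since the worst-case prior is a mixture putting mass $1-\eps$ at the origin and mass $\eps$ on some conditional distribution $\mu$ of the nonzero block, and since $M_B(\eps|\eta)$ is, by the same convexity argument used for block soft thresholding, an affine-in-$\eps$ upper envelope of the two extreme behaviors, the natural strategy is: (a) at $\bx=0$, control $\E\{\|\eJS(\bz)\|_2^2\}$; (b) at a general $\bx\neq 0$, use the classical fact that the James-Stein estimator has risk at most $B$ uniformly, so it never does worse than the trivial estimator; then (c) combine the two with the weights $1-\eps$ and $\eps$.

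Concretely, the key steps in order. First, recall the exact risk identity for positive-part James-Stein: for any $\theta\in\bR^B$ and $\by=\theta+\bz$ with $\bz\sim\normal(0,\id_{B\times B})$,
\[
\E\big\{\|\eJS(\by)-\theta\|_2^2\big\}\le B-\E\Big\{\frac{(B-2)^2}{\|\by\|_2^2}\Big\}\le B\, ,
\]
the first inequality being Stein's unbiased risk estimate bound for the positive-part rule and the second being trivial. This handles the nonzero-block contribution. Second, at $\theta=0$ one has the sharper bound: $\eJS(\bz)=\big(1-(B-2)/\|\bz\|_2^2\big)_+\bz$, and since $\|\bz\|_2^2=X_B$ is chi-square with $B$ degrees of freedom, $\E\{\|\eJS(\bz)\|_2^2\}=\E\{(\sqrt{X_B}-(B-2)/\sqrt{X_B})_+^2\}$, which one bounds by $\E\{(B-2)^2/X_B\}$ on the event $X_B\ge B-2$ plus a small chi-square tail term, or more simply by noting the estimate is a shrinkage of $\bz$ toward $0$ so its squared norm is at most $\E\{((B-2)/\sqrt{X_B})^2\}$ on the relevant event; in any case this is $O(1)$, in fact at most $2$ for $B>2$ after a short computation using $\E\{1/X_{B}\}=1/(B-2)$. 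Third, write the minimax risk as
\[
M_B(\eps|\JamesStein)=\frac1B\sup_{\nu}\Big[(1-\eps)\,\E\{\|\eJS(\bz)\|_2^2\}+\eps\,\sup_{\mu}\E_{\theta\sim\mu}\E\{\|\eJS(\theta+\bz)-\theta\|_2^2\}\Big]\le \frac1B\big[(1-\eps)\cdot 2 + \eps\cdot B\big]\le \eps+\frac{2}{B}\, ,
\]
which is exactly the claimed bound.

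The main obstacle is step two: getting a clean constant ($2$, rather than some larger or $B$-dependent quantity) for $\frac1B\E\{\|\eJS(\bz)\|_2^2\}$ at the null block. The subtlety is the positive part — on the event $\{X_B< B-2\}$ the estimator is exactly zero and contributes nothing, so one only integrates over $\{X_B\ge B-2\}$, where $\|\eJS(\bz)\|_2^2=(\sqrt{X_B}-(B-2)/\sqrt{X_B})^2\le X_B\wedge (B-2)^2/X_B$; bounding by $(B-2)^2/X_B$ and using $\E\{X_B^{-1}\ind_{X_B\ge B-2}\}\le \E\{X_B^{-1}\}=1/(B-2)$ gives $\E\{\|\eJS(\bz)\|_2^2\}\le B-2< B$, hence the term is $\le 1<2$ per coordinate — so in fact the stated bound has slack and the null contribution is not where the $2/B$ comes from; the $2/B$ is simply $\E\{(B-2)^2/\|\by\|_2^2\}$ being dropped in the nonzero case, or equivalently absorbing $(1-\eps)\cdot(\text{null term})/B\le 1/B$ together with rounding. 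I would double-check whether the cleanest route is instead the uniform SURE bound $\E\{\|\eJS(\by)-\theta\|_2^2\}\le \min(B,\,2+\|\theta\|_2^2)$, which at $\theta=0$ gives exactly $2$ and makes the mixture bound $(1-\eps)\cdot 2+\eps\cdot B$ immediate; this is the argument I expect to use, and verifying the $\le 2+\|\theta\|^2$ form of the James-Stein risk bound (a standard but slightly fiddly Stein-identity computation, valid for $B>2$) is the one genuinely technical ingredient.
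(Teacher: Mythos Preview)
Your final approach is correct and is essentially the paper's proof: the paper invokes the oracle inequality $R(\mu;\eJS)\le R(\mu;\eta^{IDL})+2$ (citing \cite[Theorem~5]{DoJo95}), where $\eta^{IDL}(y)=\frac{\|\mu\|_2^2}{\|\mu\|_2^2+B}\,y$ has risk $R(\mu;\eta^{IDL})=\frac{B\|\mu\|_2^2}{\|\mu\|_2^2+B}\le\min(\|\mu\|_2^2,B)$, so that $M_B(\eps|\eta^{IDL})=\eps$ and adding the uniform $+2$ gives $\eps+2/B$. Your ``cleanest route'' via the bound $R(\theta;\eJS)\le\min(B,\,2+\|\theta\|_2^2)$ is exactly this oracle inequality, specialized at $\theta=0$ (giving $2$) and at large $\theta$ (giving $B$), and combined with weights $(1-\eps,\eps)$.

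One caution on your exploratory computation: the claimed pointwise inequality $(\sqrt{X_B}-(B-2)/\sqrt{X_B})^2\le (B-2)^2/X_B$ on $\{X_B\ge B-2\}$ is false (take $X_B=4(B-2)$: the left side is $\tfrac{9}{4}(B-2)$, the right side $\tfrac14(B-2)$). So that route to bounding $R(0;\eJS)$ does not work as written; the oracle-inequality route you settle on is the right one, and is what the paper uses.
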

\begin{proof}
Consider temporarily the case where the observation is $\by = \mu + \bz$ with
$\bz \sim \normal(0,\id_{B\times B})$, and  $\mu\in\reals^B$  nonrandom
and known.
A simple calculation shows that the optimal linear
estimator of the form $\eta(y) = cy$, $c\in\bR$,  is given by
\begin{align*}
   \eta^{IDL} (y)= c(\mu) \cdot y\, , \;\;\;\; c(\mu) \equiv\frac{\| \mu\|_2^2}{\| \mu^2\|_2^2 + B}\, .
\end{align*}
This estimator uses information about $\| \mu\|_2$ (which
could only be supplied by an oracle) to choose the constant $c$ as a function of $\| \mu\|_2$.
Note in particular that the risk of this estimator is
\begin{align*}
R(\mu;\eta^{IDL} ) \equiv \E\big\{\|\eta^{IDL}(\mu+Z)\|_2^2\big\} = \frac{ B\|\mu\|_2^2 } { \| \mu \|_2^2 + B } ,
\end{align*}
and, in particular,
\begin{equation} \label{eq:idealMSEbounds}
   0 = R(\mu=0,\eta^{IDL}) \leq \sup_{\mu\in\reals^B} R(\eta^{IDL},\mu ) = B
   \, .
\end{equation}
Define the ideal worst-case MSE by
\[
    M_B(\eps | \eta^{IDL}) \equiv\frac{1}{B}   \sup_{\nu \in \cF_{\eps,B} } \E_{\nu}\{ R(\mu=X,\eta^{IDL}) \}  .
\]
Applying (\ref{eq:idealMSEbounds}), and 
keeping in mind that, for $\nu\in\cF_{\eps,B}$, $\nu(\{X=0\})\ge (1-
\eps)$, we have
\begin{equation} \label{eq:MMxIdeal}
   M_B(\eps | \eta^{IDL} ) =  \eps.
\end{equation}
The oracle inequality \cite[Theorem 5]{DoJo95} shows that for $B > 2$,
and for every vector $\mu \in \bR^B$, if $Y \sim
\normal(\mu,\id_{B\times B})$, then
\[
      R(\mu;\eta^{JS}) \leq R(\mu;\eta^{IDL}) +  2\,.
\]
Combined with the previous display this proves the Lemma.
\end{proof}

The argument in the proof leads in fact to a convenient  expression
for  $M_B(\eps | \JamesStein)$. With the notations introduced there, we have
\[
M_B(\eps | \JamesStein )  =   \frac{1-\eps}{B}\, R(0;\eJS)+\frac{\eps}{B}\, \sup_{\mu\in\bR^B} R(\mu;\eJS)\,.
\]
Now $\eJS$ is known to be minimax for the unconstrained problem of
estimating a non-sparse vector $\mu$, i.e. $ \sup_{\mu\in\bR^B} R(\mu;\eta^{JS}) = B$
yielding
\[
M_B(\eps | \JamesStein)  =  \frac{1-\eps}{B}\, R(0;\eJS)+ \eps\, .
\]
Therefore computing the minimax MSE for $\eJS$ reduces to computing
the single quantity  $ R(0;\eJS)$, that can be estimated through
numerical integration.
A good approximation for large $B$ is provided by
the following  formula  $ R(0;\eJS)=B^{-1}
+\kappa B^{-3/2}+O(B^{-2})$ with $\kappa\approx 0.752$ (cf. Appendix \ref{sec:JS_at_0}). Hence we have
\begin{equation} \label{eq:JSM}
  M_B(\eps | \JamesStein )  = (1-\eps) \Big\{\frac{1}{B}    +
  \frac{\kappa}{B^{3/2}} +O(B^{-2})\Big\} + \eps .
\end{equation}
In the next section will use this formula (neglecting 
$O(B^{-2})$ terms)  in comparing the general
prediction of Eq.~(\ref{generalPT}) with the empirical results for the
James-Stein AMP algorithm. Numerical integration reveals that this
formula is accurate enough for such comparison.
  
\subsection{Empirical phase transition behavior}
\label{sec-PTExperiments-vector}

We now turn to the compressed sensing reconstruction problem whereby
the block-sparse vector $x_0$ is reconstructed from observed data
$y=Ax_0$ using the AMP algorithm.
We want to  test the hypothesis that Eq.~(\ref{generalPT})
describes the phase transition of the two 
block shrinkage AMP algorithms, corresponding to the block soft
thresholding, and block James-Stein.

We conducted a set of experiments similar to those 
described in Section \ref{sec-PTExperiments-scalar}
We constructed block-sparse signals at 
different undersampling and sparsity levels and ran tests of
block thresholding AMP. More precisely, we used the update equations
(\ref{AMPA}) to (\ref{AMPC}) with $\eta=\esoft$ (block soft
thresholding AMP) or $\eta =\eJS$ (James-Stein AMP).

It is a straightforward  calculus exercise to compute an explicit
expression for the memory term $\onsager_t$. For block
soft thresholding AMP we get
\begin{align}
\onsager_t\equiv \left.\frac{1}{n} \, \div\,
  \esoft(y;\tau,\sigma_{t-1})\right|_{y=y^{t-1}} =
\frac{1}{n}\sum_{\ell=1}^{M}\left(B-\frac{(B-1)\tau\sigma_{t-1}}{\|block_{\ell}(y^{t-1})\|_2}\right)
\ind_{\{\|block_{\ell}(y^{t-1})\|_2>\tau\sigma_{t-1}\}}\, .
\end{align}
For  James-Stein AMP  we have
\begin{align}
\onsager \equiv \left.\frac{1}{n} \, \div\,
  \eJS(y;\sigma_{t-1})\right|_{y=y^{t-1}} =
\frac{1}{n}\sum_{\ell=1}^{M}\left(B-\frac{(B-2)^2}{\|block_{\ell}(y^{t-1})\|_2^2}\right)\ind_{\{\|block_{\ell}(y^{t-1})\|^2_2>B-2)\}}\, .
\end{align}

Our results show that the curve $\delta = M_B(\eps | \Blocksoft)$  correctly separates
two phases of performance: below this curve success in AMP recovery is atypical and above
it is typical.  Similarly,
the curve $\delta = M_B(\eps | \JamesStein)$ correctly describes the phase transition for block James-Stein shrinkage.
The empirical results are presented in Figure \ref{fig:BlockSoft} (for
block soft thresholding) and Figure \ref{fig:BlockJamesStein} (for
block James-Stein). We refer to Appendix \ref{Appendix:Computation}
for further details.

\begin{figure}
\begin{center}
\includegraphics[height=3in]{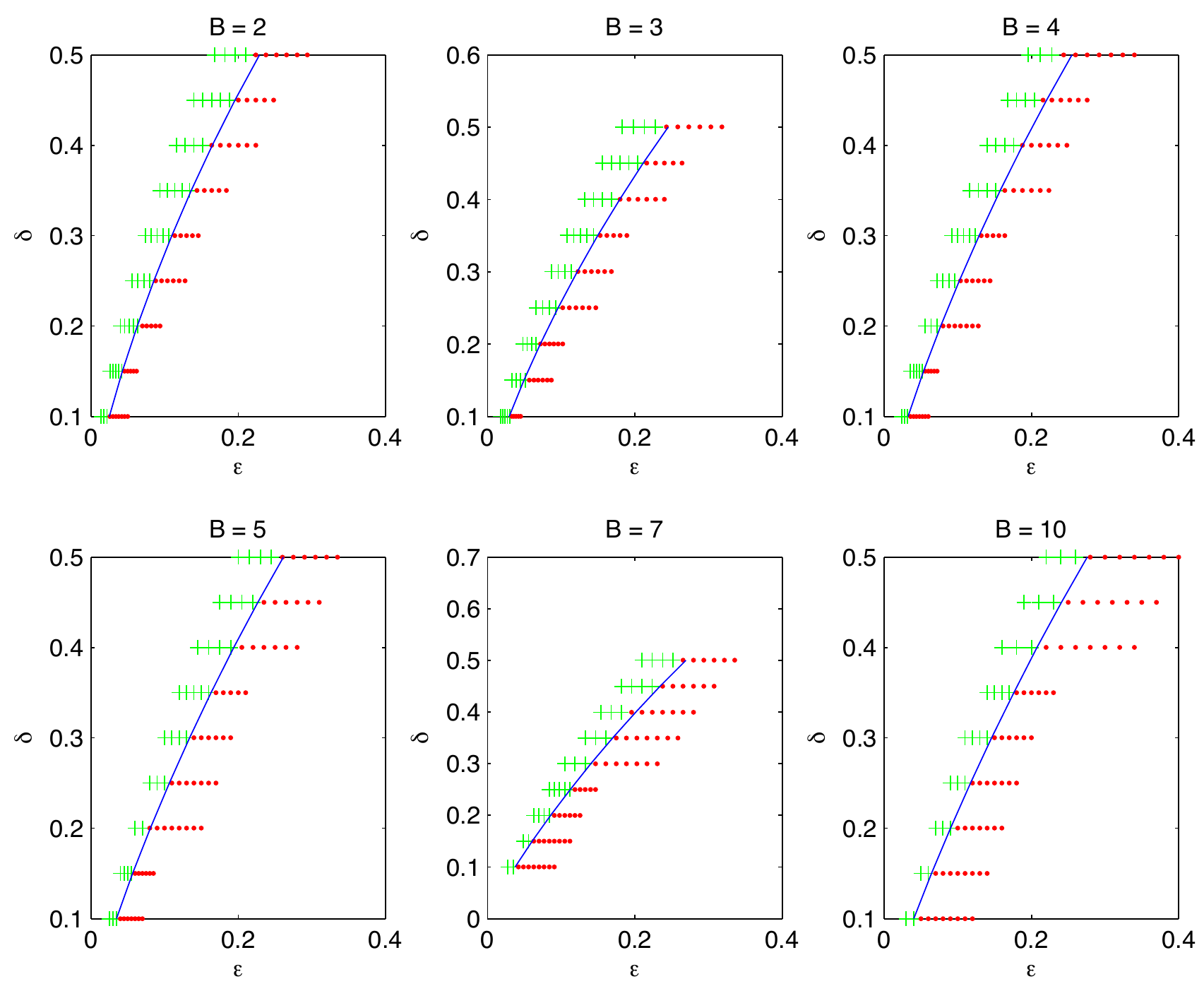}
\caption{Phase transition results for block soft thresholding AMP with minimax
  threshold. Here the signal dimension is $N=1000$,
$\delta = n/N$ 
is the undersampling fraction, and $\eps$ is the sparsity parameter
(fraction of non-zero entries). Red: less than $50\%$ fraction of correct recovery.
Green: greater than $50\%$ fraction of successful recovery. Blue Curve:
$\delta = M_B(\eps| \Blocksoft)$.}
\label{fig:BlockSoft}
\end{center}
\end{figure}

\begin{figure}
\begin{center}
\includegraphics[height=3in]{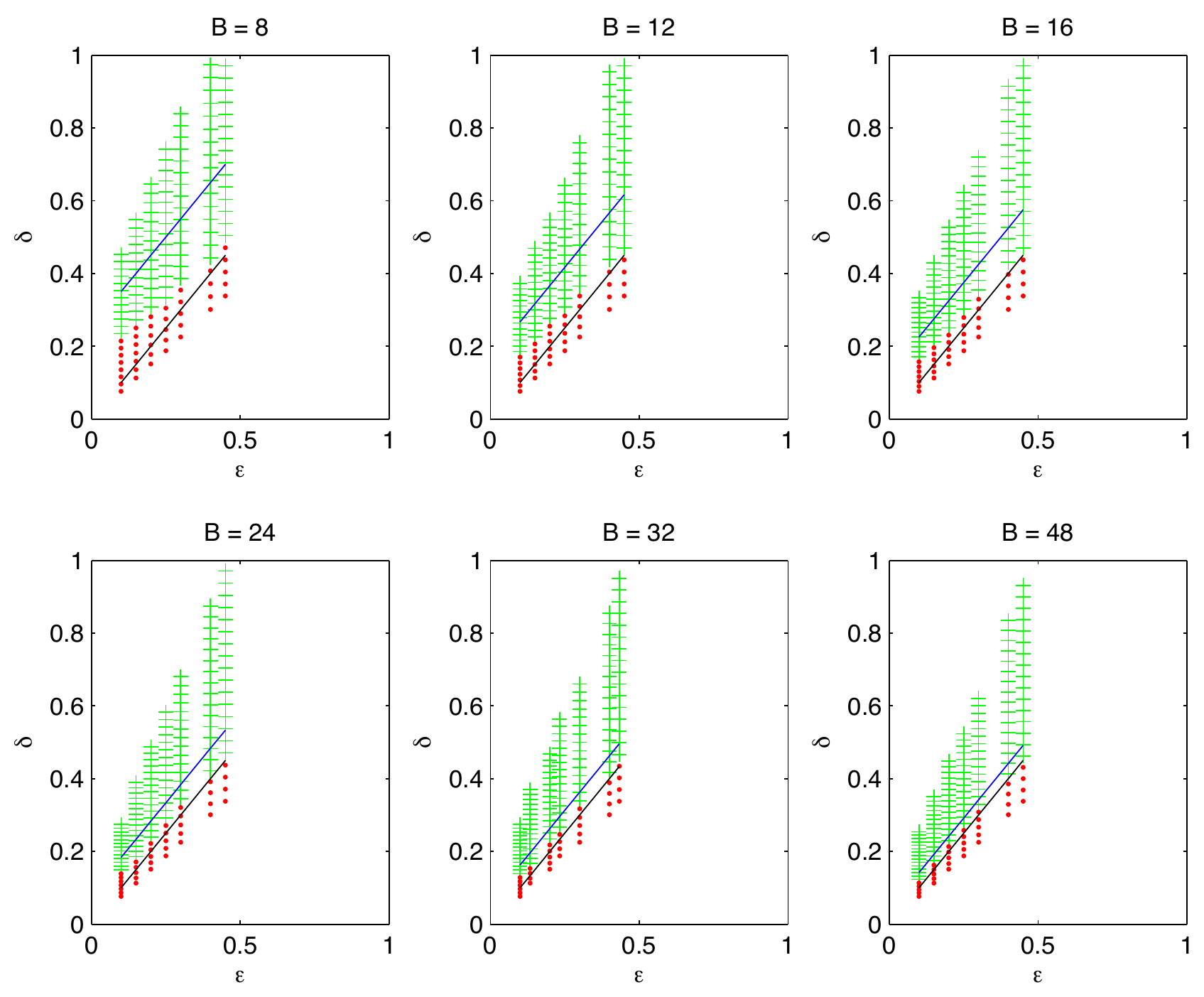}
\caption{Phase transition results for block James-Stein AMP.  
Here the signal dimension is $N=1000$,
$\delta = n/N$ 
is the undersampling fraction, and $\eps$ is the sparsity parameter
(fraction of non-zero entries). 
Red: less than $50\%$ fraction of correct recovery.
Green : greater than $50\%$ fraction of successful recovery. Blue Curve:
asymptotic (large-$B$) formula for
$\delta = M_B(\eps|\JamesStein)$ (\ref{eq:JSM}). 
Black Curve: $\delta= \eps$ (lower bound on minimax risk of
James-Stein Shrinkage).}
\label{fig:BlockJamesStein}
\end{center}
\end{figure}
%
%******************************************************************************
%
\section{Monotone regression}
\label{sec:Mono}

In this section and the next, we show that, quite surprisingly,  the
formula  
(\ref{generalPT}) can be applied also to some highly nontrivial non-separable 
denoisers.

In this section we consider vectors that are monotone, and mostly constant.
Let $\cMono$ denote the cone of nondecreasing sequences:
\begin{align*}
     \cMono_N \equiv \big\{ x  \in \bR^N \;\;\; : \;\;\; x_{t+1} \geq x_t  \;\; \mbox{
       for all } t\in\{1,2,\dots, N-1\} \big\}.
\end{align*}
We then define the class of mostly constant non-decreasing vectors
\begin{align*}
\cF_{N,\eps,\mono}\equiv
\Big\{\nu_N\in\cP(\bR^N)\, :\;\;\;\supp(\nu_N)\subseteq \cMono_N,\;\;
\E_{\nu_N}\big[\#\{t\in [N-1]:\; x_{t+1}>x_t\}\big]\le N\eps \Big\}\, .
\end{align*}
Since vectors from this class are --in general-- not sparse, 
we will occasionally refer to the parameter $\eps$  as to the `simplicity' parameter.

For this problem we will consider the denoiser  $\eta^{mono}:\bR^N\to\bR^N$, that
solves the \emph{monotone regression} problem
\begin{eqnarray}
    \eta^{mono} (y) = \argmin_{x\in\cMono_N} \;\; \| y - x \|_2^2\, .\label{eq:MonoDenoiser}
\end{eqnarray}
In other words, $\emono$ is the (Euclidean) projection on the cone of
monotone sequences.
This denoiser is highly non-separable, as one can understand most clearly
by studying the standard pool-adjacent-violators algorithm for implementing
it (see \cite{Best} for a recent reference).

\subsection{Minimax MSE}

In order to apply formula (\ref{generalPT}), we need to calculate $M(\eps | \Mono)$,
which requires in particular determining the least favorable
distribution $\nu_N\in\cF_{\eps,N,\mono}$ and proving that the limit
$N\to\infty$ of the minimax MSE exists. We present here the main
ideas, deferring details to Appendix \ref{Appendix:monotone}.

It is convenient to introduce the risk at $\mu\in\cMono_N$:
\begin{eqnarray}
R_N(\mu) \equiv \E\big\{\big\|\emono(\mu+ \bz)-\mu\big\|^2_2\big\}\, ,
\end{eqnarray}
where expectation is taken with respect to
$\bz\sim\normal(0,\id_{N\times N})$. 
It is further useful to introduce a specific notation for the risk at $0$, namely 
\begin{align}
r(N) \equiv R_N(\mu=0) =  \E\big\{\big\|\emono( \bz)\big\|^2_2\big\}\, .\label{eq:MonoRiskAtZero}
\end{align}
\begin{lemma}\label{lemma:MonotoneRiskAtInfty}
The risk of monotone regression satisfies the following properties
\begin{enumerate}
\item[$(a)$] The function $t\mapsto R_N(t\,\mu)$ is monotone
  increasing for $t\in\bR_+$.
\item[$(b)$] Let $I_+(\mu)\equiv \{i\in [N-1]:\; \mu_i<\mu_{i+1}\}$
be the set of increase points of $\mu$. Denoting them by 
$I_{+}(\mu)\equiv \{i_1,i_2,\dots, i_{K(\mu)}\}$, $i_k\le i_{k+1}$, let $J_k \equiv \{i_k+1,i_k+2,\dots, i_{k+1}\}$ for
$k\in \{0,\dots, K(\mu)\}$ (with, by convention, $i_0=0$, $i_{K(\mu)+1}=N$). 
Then, for any $\mu\in\cMono_N$,
\begin{align}
\lim_{t\to\infty} R_N(t\mu) = \sum_{k=0}^{K(\mu)} r(|J_{k}(\mu)|)\, . \label{eq:MonotoneRiskAtInfty}
\end{align}
\end{enumerate}
\end{lemma}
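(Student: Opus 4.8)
The plan is to prove both parts at the level of a fixed realization of the noise $\bz$, working with the random function
\[
\phi(t)\equiv\big\|\emono(t\mu+\bz)-t\mu\big\|_2^2\,,\qquad t\ge 0\,,
\]
and then to pass to $R_N(t\mu)=\E\{\phi(t)\}$ by monotone convergence. Two standard facts about $\emono$ will be used. First, $\emono$ is the metric projection onto the polyhedral convex cone $\cMono_N$, hence is globally $1$-Lipschitz and piecewise linear: $\bR^N$ is partitioned into finitely many full-dimensional polyhedral cells, on the interior of each of which $\emono$ coincides \emph{exactly} with the orthogonal projection $P_V$ onto the subspace $V=\mathrm{span}\{\ind_{B_1},\dots,\ind_{B_m}\}$ spanned by the indicators of the blocks (level sets) of the fitted vector. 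Second, the decomposition property of pool-adjacent-violators: if an index $j$ satisfies $\max_i\emono(y_{1:j})_i<\min_i\emono(y_{j+1:N})_i$, then $\emono(y)=\big(\emono(y_{1:j}),\,\emono(y_{j+1:N})\big)$; and $\cMono$ is invariant under adding a constant vector, so $\emono(c\ind+w)=c\ind+\emono(w)$.

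\emph{Part (a).} Fix $\bz$. Since $t\mapsto t\mu$ runs inside $\cMono_N$, the ray $\{t\mu+\bz:\,t\ge 0\}$ meets only finitely many linearity cells of $\emono$, so $[0,\infty)$ splits into finitely many intervals on each of which the active subspace $V=V(t)$ is constant. There $\emono(t\mu+\bz)=P_V(t\mu+\bz)=t\,P_V\mu+P_V\bz$, so $\emono(t\mu+\bz)-t\mu=-t\,P_{V^\perp}\mu+P_V\bz$ and, the cross term vanishing,
\[
\phi(t)=t^2\,\|P_{V^\perp}\mu\|_2^2+\|P_V\bz\|_2^2\,,
\]
a nondecreasing function of $t\ge 0$ on that interval. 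Since $\phi$ is continuous (a Lipschitz map composed with continuous maps) and nondecreasing on each of finitely many abutting intervals, it is nondecreasing on all of $\bR_+$; taking expectations over $\bz$ gives part (a). (Equivalently, wherever differentiable, $\phi'(t)=2t\,\|P_{V(t)^\perp}\mu\|_2^2\ge0$, using $\emono(t\mu+\bz)\in V(t)$ and $\emono(t\mu)=t\mu$.)

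\emph{Part (b).} Fix $\bz$ with finite coordinates, put $Z\equiv\max_\ell|z_\ell|$ and $d\equiv\min_{1\le k\le K(\mu)}(\mu_{i_k+1}-\mu_{i_k})>0$. The fitted values of $\emono$ on any sub-block of $t\mu+\bz$ are averages of entries, hence lie within $Z$ of the corresponding values of $t\mu$; since $\mu$ is constant on each $J_k$ and strictly increases across each $i_k$, once $t\,d>2Z$ we have, simultaneously at every increase point $i_k$, $\max_i\emono\big((t\mu+\bz)_{1:i_k}\big)_i<\min_i\emono\big((t\mu+\bz)_{i_k+1:N}\big)_i$. By the decomposition property the solution then factorizes across all the $i_k$, so on each block $J_k=\{i_k+1,\dots,i_{k+1}\}$, writing $\mu|_{J_k}\equiv c_k\ind$, one gets $\emono(t\mu+\bz)|_{J_k}=t c_k\ind+\emono(\bz|_{J_k})$ and hence $\phi(t)=\sum_{k=0}^{K(\mu)}\|\emono(\bz|_{J_k})\|_2^2$ for all $t$ large enough. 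Thus $\phi(t)$ increases (by part (a)) to $\sum_k\|\emono(\bz|_{J_k})\|_2^2$ as $t\to\infty$; since $r(m)=\E\{\|\emono(\bz)\|_2^2\}\le m<\infty$ in every dimension $m$ (non-expansiveness), monotone convergence yields $R_N(t\mu)\uparrow\sum_{k=0}^{K(\mu)}r\big(|J_k(\mu)|\big)$, which is (\ref{eq:MonotoneRiskAtInfty}).

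\emph{Main obstacle.} The delicate point is the first structural fact and its use along the ray: one must verify that a line meets only finitely many linearity cells of $\emono$, that on the interior of each cell $\emono$ equals $P_V$ exactly (so $\phi$ is piecewise quadratic, not merely piecewise $C^1$), and that $\phi$ stays continuous across the breakpoints where $V(t)$ changes — all standard for isotonic regression and polyhedral projections, but requiring one to match the cell containing $t\mu+\bz$ with the block structure of $\emono(t\mu+\bz)$. Once this is in place the separation estimate in part (b) is routine.
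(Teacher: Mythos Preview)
Your argument is correct. For part~(a) it is essentially the paper's proof: both exploit that $\emono$ is the projection onto a polyhedral cone, hence continuous and piecewise equal to an orthogonal projection $P_V$, and both observe that on each linearity cell the cross term between $P_{V^\perp}\mu$ and $P_V\bz$ vanishes so $\phi(t)=t^2\|P_{V^\perp}\mu\|_2^2+\|P_V\bz\|_2^2$ is nondecreasing. The paper just writes this out in block-average coordinates rather than abstractly.

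For part~(b) the two routes genuinely differ. The paper substitutes $v=x-t\mu$ to recast the problem as $\min\|v-\bz\|_2^2$ subject to $\Delta v_i\ge -t\Delta\mu_i$, lets $t\to\infty$ to drop the constraints at increase points, obtains a ``localized'' monotone regression that decouples into the segments $J_k$, and then argues convergence in probability together with uniform integrability of higher moments to pass to the risk. You instead work realization by realization: for fixed $\bz$ you show via a direct separation estimate ($t\,d>2Z$) and the PAV decomposition property that the global fit \emph{exactly} equals the concatenation of the segmentwise fits for all large $t$, so $\phi(t)$ is eventually constant in $t$; part~(a) then gives $\phi(t)\uparrow\sum_k\|\emono(\bz|_{J_k})\|_2^2$ and monotone convergence finishes. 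Your approach is a bit more elementary---it avoids the constrained reformulation and needs no separate uniform-integrability check---while the paper's reformulation makes the limiting localized problem more visible and is reused verbatim for the total-variation case in Section~\ref{sec:TV}. One cosmetic point: your bound ``fitted values lie within $Z$ of the corresponding values of $t\mu$'' should be read as ``any fitted value on $1{:}i_k$ is $\le tc_{k-1}+Z$ and any fitted value on $i_k{+}1{:}N$ is $\ge tc_k-Z$,'' which is exactly what your argument uses.
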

\begin{proof}[Proof of part $(a)$]
For a non-empty closed convex $\cS\subseteq \bR^N$, we let
$\Pr_\cS:\bR^N\to\bR^N$ denote the Euclidean projector to $\cS$, i.e. 
$\Pr_\cS(y)\equiv \argmin_{x\in S}\|x-y\|_2$. Further, for $v\in\bR^N$,
$\cS+v\equiv\{x+v\, :\; x\in \cS\}$. 

Note that it is sufficient to show that, letting
$D(\mu;z)\equiv\big\|\emono(\mu+ z)-\mu\big\|_2^2$,
$t\mapsto D(t\mu;z)$ is monotone increasing in $t\in\bR_+$.
By continuity of the projection operator, it follows that $\mu\mapsto
D(\mu;z)$ is continuous. Further, notice that $\cMono_N$ is a cone obtained
as the intersection of $N-1$ half-spaces: 
\begin{align*}
\cMono_N = \cap_{i=1}^{N-1}\cH_i\; ,\;\;\;\;\;\;\;
\cH_i\equiv\big\{x\in\bR^N:\; x_i\le x_{i+1}\big\}\, .
\end{align*}
Let $\cV_i= \{x\in\bR^N:\; x_i= x_{i+1}\}$ be the separating
hyperplane for $\cH_i$ and, for $B\subseteq[N-1]$,
define
\begin{align*}
\cV_B \equiv \cap_{i\in B}\cV_i = \big\{x\in\bR^N:\;\; x_i=x_{i+1} \;
\forall i\in B\big\}\, ,
\end{align*}
with, by convention $\cV_{\emptyset} \equiv\bR^N$. 
Since $\emono = \Pr_{\cMono_N}$, we have that $\emono$ is continuous
and piecewise
linear and equal one of the projectors $\Pr_{\cV_B}$, that we will
denote by $\Pr_{B}$ for $B\subseteq [N-1]$. It is therefore sufficient
to show that, defining for $B\subseteq [N-1]$, 
\begin{align*}
D_B(\mu;z)\equiv\big\|\Pr_B(\mu+ z)-\mu\big\|_2^2,
\end{align*}
the function $t\mapsto D_B(t\mu,z)$ is monotone increasing for $t\in
\bR_+$.

Let $B\equiv\cup_{k=1}^KB_k$ where each $B_k$ is a contiguous
segment (in the sense of point $(b)$), and $\oB_k \equiv \{i\in[N]:
i\in B_k\vee (i-1)\in B_k\}$. Further, for $x\in \bR^N$, let
$\ox_S\equiv |S|^{-1}\sum_{i\in S}x_i$. Then, for any $x\in\bR^N$,
\begin{align*}
\Pr_B(x)_i = \begin{cases}
\ox_{\oB_k}&\mbox{ if $i\in B_k$, $k\in\{1,\dots,K\}$,}\\
x_i&\mbox{ otherwise.}
\end{cases}
\end{align*}
Hence
\begin{align*}
D_B(t\mu;z) = \sum_{k=1}^K|\oB_k|\,
\left[\frac{t^2}{|\oB_k|}\sum_{i\in\oB_k}(\mu_i-\omu_{\oB_k})^2+\oz_{\oB_k}^2\right]
  +\sum_{i\in [N]\setminus \cup_{k=1}^K\oB_k } z_i^2\, ,
\end{align*}
which is clearly increasing in $t\in\bR_+$.
\end{proof}
The proof of part $(b)$ is deferred to Appendix
\ref{Appendix:monotone}.

The last Lemma shows that the least favorable signal $\mu$  is
constant on  $N(1-\eps)$ positions of the interval $\{1,2,\dots,N\}$ 
and has large (going to infinity) jumps at the remaining $N\eps$
increase points. The resulting risk only depends on the distribution
of the lengths of the intervals over which $\mu$ is constant.

 \begin{figure}
\begin{center}
\includegraphics[height=2.3in]{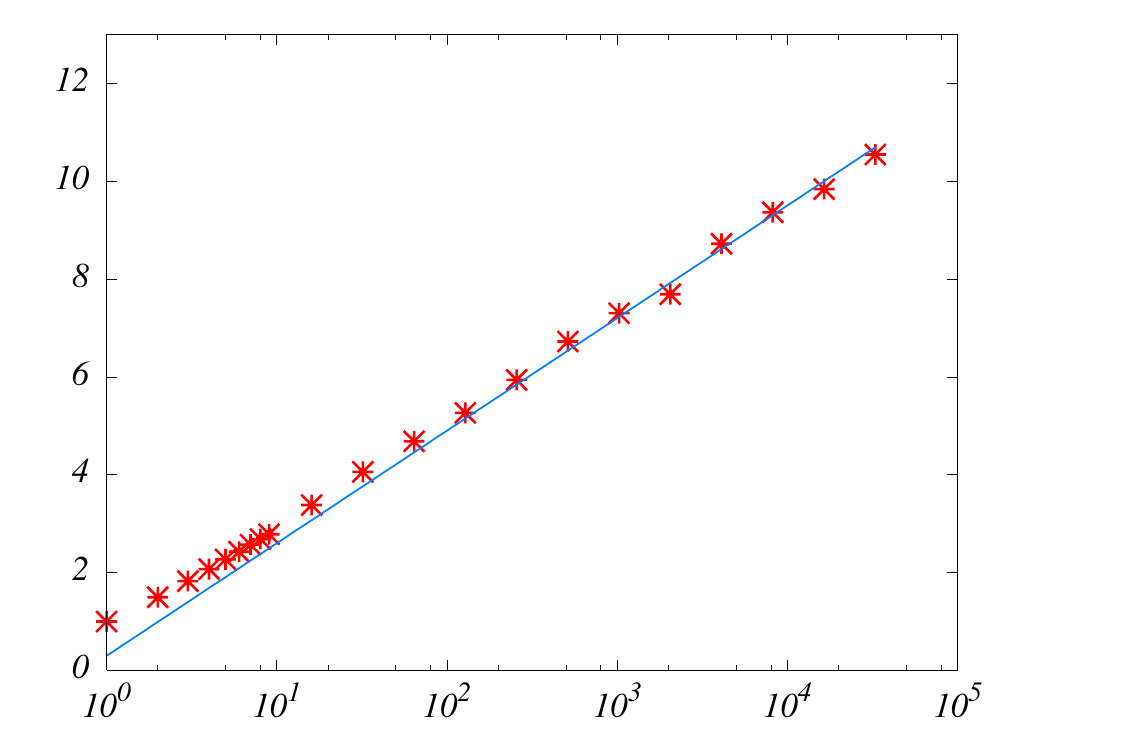}
\put(-140,-5){$\ell$}
\put(-270,100){$r(\ell)$}
\includegraphics[height=2.3in]{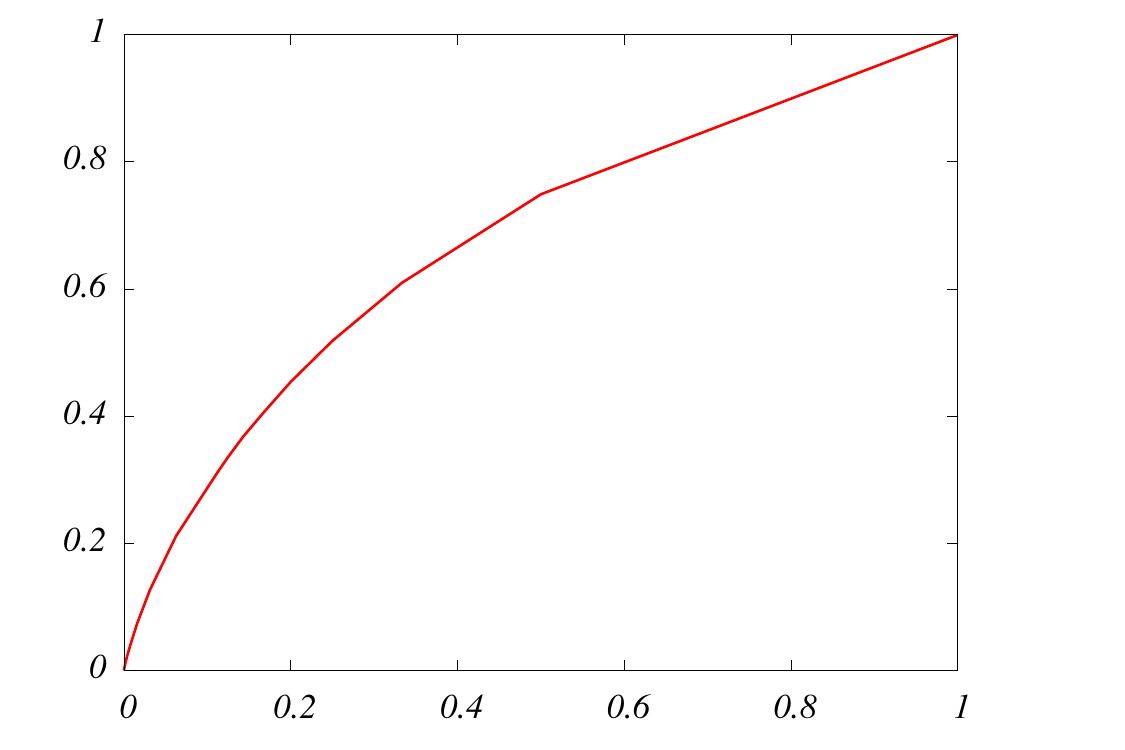}
\put(-140,-5){$\eps$}
\put(-160,80){$M(\eps|\Mono)$}
\caption{Left: risk at $0$  for monotone regression.
Here we plot the mean square error $r(\ell)\equiv R_{\ell}(\mu=0)$ as
a function of the signal dimension $\ell$. Symbols were obtained
through Monte Carlo simulations for $\ell \{1,2,3 \dots,10,16,32,
\dots, 32,768=2^{15}\}$. 
The continuous line is the fit $r(\ell) = \log(\ell)+0.3$. Right:
Minimax mean square error of monotone regression, as per Theorem \ref{thm:MMAX_Mono}.}
\label{fig:FragmentMSE}
\end{center}
\end{figure}

The next Lemma provides some useful insight on the behavior  of the
risk at $0$. This is crucial since it determines the minimax risk
though Eq.~(\ref{eq:MonotoneRiskAtInfty}).
\begin{lemma}\label{lemma:MonoRiskAtZero}
The monotone regression risk at zero, defined through
Eq.~(\ref{eq:MonoRiskAtZero}) satisfies $r(1)=1$
and, for any $N\ge 10$
\begin{align}
r(N)\le 20\, (\log N)^2\, .\label{eq:MonotoneLog}
\end{align}
Further $r(N)<2N[\log N+1]$ for all $N$.
\end{lemma}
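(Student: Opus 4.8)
For the first assertion, note that for $N=1$ the cone $\cMono_1$ is all of $\bR$, so $\emono$ is the identity and $r(1)=\E\{z^2\}=1$. The universal bound $r(N)<2N[\log N+1]$ I would get from the cruder estimate $r(N)\le N$: since $\cMono_N$ is a closed convex cone containing $0$, Moreau's decomposition gives $\langle\emono(z),\,z-\emono(z)\rangle=0$, hence $\|\emono(z)\|_2^2=\langle\emono(z),z\rangle\le\|\emono(z)\|_2\|z\|_2$, so $\|\emono(z)\|_2\le\|z\|_2$ and $r(N)=\E\{\|\emono(z)\|_2^2\}\le\E\{\|z\|_2^2\}=N$, which is $<2N[\log N+1]$ for every $N\ge1$.

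The substantive claim is $r(N)\le 20(\log N)^2$ for $N\ge10$. The plan is to exploit the pool-adjacent-violators structure of monotone regression: $\emono(z)$ is piecewise constant on a partition of $\{1,\dots,N\}$ into consecutive blocks $B_1,\dots,B_{K_N(z)}$, with value $\bar z_{B_j}=|B_j|^{-1}\sum_{i\in B_j}z_i$ on $B_j$. Since each $B_j$ is an interval,
\[
\|\emono(z)\|_2^2=\sum_{j=1}^{K_N(z)}\frac{\big(\sum_{i\in B_j}z_i\big)^2}{|B_j|}\;\le\;K_N(z)\cdot M_N(z),\qquad M_N(z)\equiv\max_{1\le a\le b\le N}\frac{\big(\sum_{i=a}^{b}z_i\big)^2}{b-a+1}.
\]
Taking expectations and applying Cauchy--Schwarz reduces the task to bounding $\E\{K_N^2\}$ and $\E\{M_N^2\}$ (alternatively, splitting on the event $\{M_N\le t_0\}$ with $t_0\asymp\log N$ one only needs $\E\{K_N\}$ together with a negligible tail term).

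The scan statistic $M_N$ is easy: for each fixed interval $[a,b]$ the ratio $(\sum_{i=a}^b z_i)^2/(b-a+1)$ is a $\chi^2_1$ variable, so a union bound over the at most $N^2$ intervals and $\prob\{\chi^2_1\ge t\}\le e^{-t/2}$ give $\prob\{M_N\ge t\}\le N^2e^{-t/2}$; integrating $\E\{M_N^2\}=\int_0^\infty 2t\,\prob\{M_N\ge t\}\,dt$ yields $\E\{M_N^2\}=O\big((\log N)^2\big)$ with an explicit constant. The main obstacle is the number of blocks $K_N$, which I would control via the classical identification of monotone regression with the greatest convex minorant of the cumulative-sum path $j\mapsto\sum_{i\le j}z_i$, whose linear faces are precisely the blocks of $\emono(z)$: by Sparre Andersen's theorem the face lengths of the convex minorant of a walk with i.i.d.\ continuous increments are distributed as the cycle lengths of a uniform random permutation of $\{1,\dots,N\}$, so $K_N$ has the law of $\sum_{i=1}^N\xi_i$ with $\xi_i$ independent $\mathrm{Bernoulli}(1/i)$; in particular $\E\{K_N\}=\sum_{i=1}^N1/i\le1+\log N$ and $\E\{K_N^2\}\le(1+\log N)^2+(1+\log N)$. (To avoid Sparre Andersen one may instead dominate $K_N$ by one plus the number of upper records of $z_1,\dots,z_N$, which is of the same order; the trivial $K_N\le N$ suffices for the universal estimate.) Plugging the two moment bounds into Cauchy--Schwarz and using $N\ge10$, so that $\log N>2$ absorbs the lower-order terms, gives $r(N)\le20(\log N)^2$.

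I would also remark that these two ingredients pin down $r(N)$ exactly. Stein's identity applied to the $1$-Lipschitz map $\emono$ gives $\E\{\langle z,\emono(z)\rangle\}=\E\{\div\,\emono(z)\}$, and $\div\,\emono$ equals $K_N$ almost everywhere (off the null set of ties); combined with the Moreau identity $\|\emono(z)\|_2^2=\langle z,\emono(z)\rangle$ this yields $r(N)=\E\{K_N\}=\sum_{i=1}^N1/i$, which is sharper than the stated bounds but not required downstream.
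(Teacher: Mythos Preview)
Your proof is correct and takes a genuinely different---and sharper---route than the paper's.

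The paper works coordinate by coordinate. From the KKT conditions it extracts that at each increase point $k$ of $x=\emono(z)$ the partial sums match, $\sum_{i\le k}x_i=\sum_{i\le k}z_i$; Gaussian concentration then gives, on an event of probability $\ge 1-N^{-2}$, the envelope $|x_i|\le\sqrt{6\log N/\min(i,N-i+1)}$, and summing squares yields $12\log N(\log N+1)$. The complementary event is handled via $|x_i|\le Z_{\max}$ and a fourth-moment bound on $Z_{\max}$. For the universal bound they use the cruder $r(N)\le N\,\E Z_{\max}^2\le 2N(\log N+1)$.

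You instead exploit the convex-minorant structure: $\|\emono(z)\|_2^2\le K_N M_N$, control $M_N$ by a union bound over $O(N^2)$ standardized partial sums, and control $K_N$ via the Sparre Andersen/Goldie identification of GCM face-lengths with cycle lengths of a uniform permutation. Your Moreau argument for $r(N)\le N$ is also cleaner than the paper's $Z_{\max}$ route. The real payoff is your final remark: Moreau gives $\|\emono(z)\|_2^2=\langle z,\emono(z)\rangle$, Stein gives $\E\langle z,\emono(z)\rangle=\E[\div\,\emono(z)]=\E K_N$, and the GCM result gives $\E K_N=\sum_{i=1}^N 1/i$. So in fact $r(N)=H_N\le 1+\log N$, which is strictly sharper than the paper's $20(\log N)^2$ and explains their numerical fit $r(\ell)\approx\log\ell+0.3$. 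Once you have this, the whole Cauchy--Schwarz/$M_N$ machinery is unnecessary.

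Two minor points. First, the pointwise domination of $K_N$ by one plus the number of upper records of $z$ is not obvious to me; records and GCM faces agree \emph{in distribution} for i.i.d.\ continuous data, but a deterministic inequality is another matter. Since you do not actually use it, either drop the claim or restate it as equality in law. Second, the constant tracking to reach exactly $20(\log N)^2$ via Cauchy--Schwarz is only sketched; but given your exact formula this is moot.
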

The proof of this Lemma can be found in Appendix \ref{sec:ProofMonoRiskAtZero}.

For moderate values of $N$, $r(N)$ can be computed numerically through
Monte Carlo simulations. Figure \ref{fig:FragmentMSE} presents the
results of such a simulation. It appears that $r(N) = \Theta(\log N)$
as $N\to\infty$ suggesting that the last lemma is loose by a
logarithmic factor.

We can finally establish our main result on the minimax MSE of
monotone regression over the class $\cF_{N,\eps,\mono}$.
Remarkably, we are able to characterize the least favorable
distribution $\nu_N\in\cF_{N,\eps,\mono}$. 
\begin{theorem}\label{thm:MMAX_Mono}
The asymptotic minimax MSE of monotone regression
\begin{align*}
M(\eps|\Mono)=\lim_{N\to\infty}M(\cF_{N,\eps,\mono}|\emono)
\end{align*}
exists and is given by
\begin{eqnarray}
       M(\eps | \Mono) = \max \Big\{ \eps\sum_{\ell\ge 1} \pi_\ell r(\ell)\;
       :\;\; \sum_{\ell\ge 1}   \ell  \pi_\ell= 1/\eps \Big\}\, , \label{eq:MonoRegMMAXFinal}
\end{eqnarray}
where the maximization is over the probability distribution $\{\pi_{\ell}\}_{\ell\in\naturals}$.
 Equivalently, the curve $(1/\eps$, $M(\eps| \Mono)/\eps)$ is the least
 concave envelope of the point set  $\{(\ell , r(\ell))\}_{\ell
   \in\naturals}$. Further, there exists $\eps_0>0$ such that, for all
 $\eps\in [0,\eps_0]$, 
\begin{eqnarray*}
M(\eps|\Mono) \le 20\,\eps\, (\log 1/\eps)^2\, . \label{eq:MepsMono}
\end{eqnarray*}
Finally, for any $\xi>0$, $\eps\in (0,1)$, the following distribution
$\nu_N^{(\eps,\xi)}\in \cF_{N,\eps,\mono}$ has risk larger that
$M(\eps|\Mono)-\xi$ for all $N$ large enough. A signal $\bx\sim
\nu^{(\eps,\xi)}_N$ has $X_{i+1}-X_{i}= \Delta>0$ at all increase
points $i\in [N-1]$ for some $\Delta = \Delta(\xi)$ large enough, and the
lengths of intervals between increase points have distribution $\pi$ 
achieving the max in Eq.~(\ref{eq:MonoRegMMAXFinal}).
\end{theorem}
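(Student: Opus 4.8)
The plan is to sandwich the per–coordinate minimax MSE $M(\cF_{N,\eps,\mono}|\emono)=\tfrac1N\sup_{\nu_N\in\cF_{N,\eps,\mono}}\E_{\nu_N}[R_N(\bx)]$ between matching lower and upper bounds, both of which reduce to the same combinatorial optimization over partitions of $\{1,\dots,N\}$ into constant blocks, and then to identify the large–$N$ limit of that optimization with the least concave envelope $\rho$ of the point set $\{(\ell,r(\ell))\}_{\ell\in\naturals}$. The claimed formula then reads $M(\eps|\Mono)=\eps\,\rho(1/\eps)$, and the ``equivalently'' clause is the elementary identity $\rho(c)=\max\{\sum_\ell\pi_\ell r(\ell):\sum_\ell\ell\pi_\ell=c\}$ over probability vectors $\pi$, the maximum being attained at a two–point $\pi$ by Carath\'eodory in the plane.

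\emph{Upper bound.} Fix $\nu_N\in\cF_{N,\eps,\mono}$. By Lemma~\ref{lemma:MonotoneRiskAtInfty}(a), $R_N(\mu)=R_N(1\cdot\mu)\le\lim_{t\to\infty}R_N(t\mu)$ for every $\mu\in\cMono_N=\supp(\nu_N)$, and by part (b) this limit equals $\sum_{k=0}^{K(\mu)}r(|J_k(\mu)|)$, which depends only on the composition of $N$ cut out by the increase points of $\mu$. Hence $\E_{\nu_N}[R_N(\bx)]\le\E_{\nu_N}\big[\sum_{k=0}^{K(\bx)}r(|J_k(\bx)|)\big]$. For a realization with $K(\bx)+1=m$ blocks, $\sum_k r(|J_k|)\le g_N(m):=\max\{\sum_{j=1}^m r(a_j):a_j\in\naturals,\ \sum_j a_j=N\}$, and since the empirical block–length distribution of any such partition has mean $N/m$, the definition of the concave envelope gives $g_N(m)\le m\,\rho(N/m)$. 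The map $m\mapsto m\,\rho(N/m)$ is the perspective of the concave $\rho$, hence concave, and nondecreasing (using that $\rho$ is nondecreasing, which follows from monotonicity of $\ell\mapsto r(\ell)$, established in Appendix~\ref{Appendix:monotone}); so Jensen applied to $m=K(\bx)+1$ together with $\E_{\nu_N}[K(\bx)]\le N\eps$ yields
\[
\E_{\nu_N}[R_N(\bx)]\ \le\ (\E_{\nu_N}[K]+1)\,\rho\!\Big(\tfrac{N}{\E_{\nu_N}[K]+1}\Big)\ \le\ (N\eps+1)\,\rho\!\Big(\tfrac{N}{N\eps+1}\Big).
\]
Dividing by $N$, taking the supremum over $\nu_N$, and letting $N\to\infty$ (continuity of $\rho$ on $(1,\infty)$) gives $\limsup_N M(\cF_{N,\eps,\mono}|\emono)\le\eps\,\rho(1/\eps)$.

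\emph{Lower bound and remaining items.} Pick a two–point probability vector $\pi$ on $\{\ell_1,\ell_2\}$ with $\sum_\ell\ell\pi_\ell=1/\eps$ and $\sum_\ell\pi_\ell r(\ell)\ge\rho(1/\eps)-\xi/3$. Define $\nu_N^{(\eps,\xi)}$ by laying down $\approx N\eps$ constant blocks with lengths an i.i.d.\ sample from $\pi$ (with a harmless adjustment so the total is exactly $N$ and the number of increase points is $\le N\eps$), separated by jumps of size $\Delta$. As $N\to\infty$ the law of large numbers concentrates the block counts, so by Lemma~\ref{lemma:MonotoneRiskAtInfty}(b) the $\Delta\to\infty$ risk per coordinate tends to $\eps\sum_\ell\pi_\ell r(\ell)$; for $\Delta=\Delta(\xi)$ large enough (so that across each jump the isotonic fit fails to merge blocks except on an event of per–coordinate cost $\le\xi/3$, by Gaussian tails, uniformly in $N$) and $N$ large enough, Lemma~\ref{lemma:MonotoneRiskAtInfty}(a) gives $\tfrac1N\E_{\nu_N^{(\eps,\xi)}}[R_N]\ge\eps\,\rho(1/\eps)-\xi$. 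Hence $\liminf_N M(\cF_{N,\eps,\mono}|\emono)\ge\eps\,\rho(1/\eps)$, the limit exists, equals $\eps\,\rho(1/\eps)$, and $\nu_N^{(\eps,\xi)}$ is the asserted near–least–favorable law. The small–$\eps$ estimate follows by bounding $\rho$: by Lemma~\ref{lemma:MonoRiskAtZero}, $r(\ell)\le20(\log\ell)^2$ for $\ell\ge10$ and $r(\ell)\le C_0$ for $\ell<10$, and since $\ell\mapsto20(\log\ell)^2$ is concave for $\ell\ge3$, writing a point of the envelope at abscissa $1/\eps$ as a convex combination of a small–$\ell$ vertex and a large–$\ell$ vertex (whose weight is $O(\eps)$) gives $\rho(1/\eps)\le20(\log1/\eps)^2$ for all small enough $\eps$, i.e.\ $M(\eps|\Mono)\le20\,\eps(\log1/\eps)^2$.

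\emph{Main obstacle.} The crux is the middle portion of the upper bound: controlling the combinatorial optimum $g_N$ uniformly, interchanging it with the $N\to\infty$ limit, and handling the expectation over the \emph{random} number of blocks $K(\bx)$ — in effect, showing that randomized block structures cannot beat deterministic ones and that the discrete partition problem collapses exactly onto $\rho$ at $1/\eps$. This is what the perspective–concavity argument and the monotonicity of $\ell\mapsto r(\ell)$ (itself requiring a short separate argument in Appendix~\ref{Appendix:monotone}) accomplish. The Carath\'eodory reduction, the law–of–large–numbers concentration and no–merging estimate in the lower bound, and the envelope bound for small $\eps$ are then routine.
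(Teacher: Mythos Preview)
Your proposal follows the same two--step structure as the paper's proof (upper bound via Lemma~\ref{lemma:MonotoneRiskAtInfty}, lower bound by evaluating at the explicit near--least--favorable construction, small--$\eps$ bound from Lemma~\ref{lemma:MonoRiskAtZero}), and you are more explicit than the paper about the Jensen/perspective step needed to pass from the random block count $K(\bx)$ to the deterministic constraint. One slip in your justification: the claim that $m\mapsto m\,\rho(N/m)$ is nondecreasing does \emph{not} follow from $\rho$ being nondecreasing (e.g.\ $\rho(c)=2c-1$ is concave and increasing but $c\mapsto\rho(c)/c$ is increasing), and monotonicity of $\ell\mapsto r(\ell)$ is not in fact proved in Appendix~\ref{Appendix:monotone}. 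What you actually need is $\rho(c)/c$ nonincreasing on $[1,\infty)$, equivalently $\rho'(1^+)\le\rho(1)=r(1)=1$; this follows immediately from $r(\ell)\le\ell$ for all $\ell$ (isotonic projection onto a cone through the origin contracts, so $\|\emono(\bz)\|_2\le\|\bz\|_2$), which bounds every chord slope $(r(\ell)-r(1))/(\ell-1)\le1$. Equivalently---and this is the one--line route implicit in the paper---with $Q_\ell=\E[\tfrac{K}{N}\pi_\ell^{\bx}]$ you have $\sum_\ell\ell Q_\ell=1$ and $\sum_\ell Q_\ell\le\eps$, and $r(\ell)\le\ell$ lets you shift mass toward $\ell=1$ to tighten $\sum_\ell Q_\ell$ up to $\eps$ without decreasing $\sum_\ell Q_\ell r(\ell)$, landing directly on the constraint set of~\eqref{eq:MonoRegMMAXFinal}.
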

\begin{proof}
With a slight abuse of notation we define, for
$\nu_N\in\cF_{N,\eps,\mono}$, the expected risk of monotone regression
as $R_N(\nu_N) = \E_{\nu_N}\{\|\emono(\bx+\bz)-\bx\|_2^2\}$ where
$\bx\sim\nu_N$. Further, for $t\in\bR_+$, let $S_t\nu_N$ be the
distribution obtained by rescaling $\nu_N$: $S_t\nu_N((a,b]) =
\nu_N((a/t,b/t])$. Further let $\{\pi^\bx_\ell\}_{\ell}$ be the
empirical distribution of the lengths of the constant intervals $J_k$
(as per Lemma \ref{lemma:MonotoneRiskAtInfty}.$(b)$), and $K(\bx)$ be
their number. We then have, by
Lemma  \ref{lemma:MonotoneRiskAtInfty},
\begin{align*}
R_N(\nu_N)\le \lim_{t\to\infty}R_N(S_t\nu_N) =  
\E_{\nu_N}\Big\{K(\bx)\sum_{\ell=1}^{\infty}\pi^{\bx}_{\ell}\; r(\ell)\Big\}\, .
\end{align*}
Further, by definition
\begin{align*}
\E\Big\{\frac{K(\bx)}{N}\Big\} =
\E\Big\{\frac{1}{\sum_{\ell=1}^{\infty}\pi^{\bx}_{\ell}\;\ell}\Big\} =\eps\, .
\end{align*}
Hence $M(\eps|\Mono)$ is immediately upper bounded by the right hand
side of Eq.~(\ref{eq:MonoRegMMAXFinal}). The matching lower bound is
obtained by evaluating the above expressions for the distribution
$\nu_N^{(\eps,\xi)}$. 

Finally Eq.~(\ref{eq:MepsMono}) follows by using Lemma
\ref{lemma:MonoRiskAtZero} in Eq.~(\ref{eq:MonoRegMMAXFinal}).
\end{proof}
The resulting curve $M(\eps|\Mono)$ is presented in
Fig.~\ref{fig:FragmentMSE}.

\subsection{Empirical phase transition behavior}

\begin{figure}
\begin{center}
\includegraphics[height=2.3in]{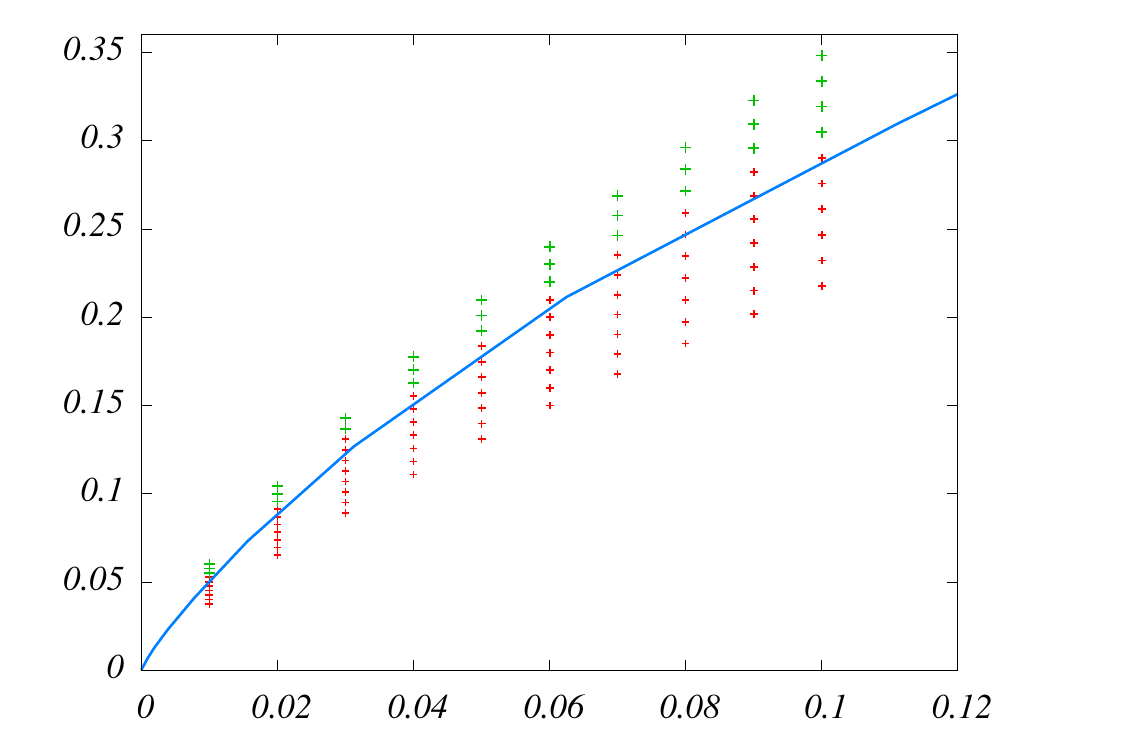}
\put(-140,-5){$\eps$}
\put(-255,100){$\delta$}
\includegraphics[height=2.3in]{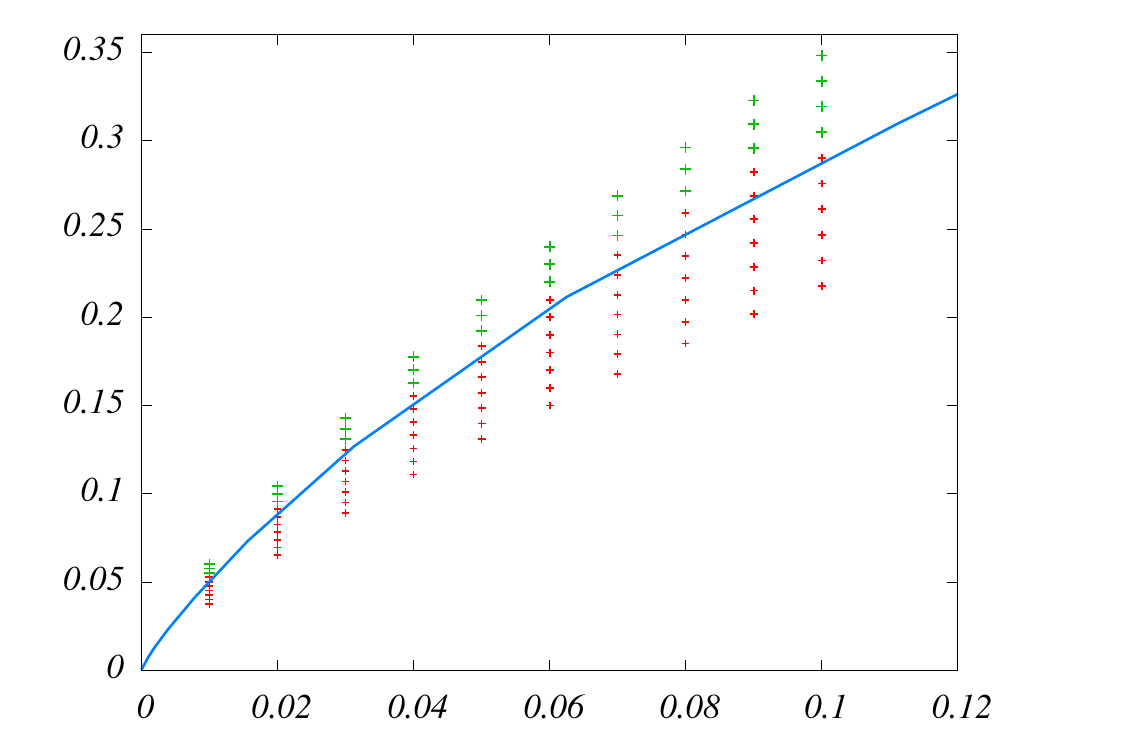}
\put(-140,-5){$\eps$}
\put(-255,100){$\delta$}
\caption{Results for monotone regression AMP.  Here $\delta = n/N$ undersampling
  fraction and $\eps$ is the sparsity measure (i.e. the fraction of
  increase points in the signal to be reconstructed). Signals were
  generated according to the least favorable distribution in Theorem \ref{thm:MMAX_Mono}.
Small red crosses:  less than $50\%$ fraction of correct recovery.
Large green crosses:  greater than $50\%$ fraction of correct
recovery.  Curve: minimax MSE  $\delta = M(\eps | \Mono)$. Left
frame: $N=1000$. Right frame: $N=2000$.}
\label{fig:monoRegAMPPT}
\end{center}
\end{figure}

We next consider the compressed sensing problem.
We programmed the  AMP iteration (\ref{AMPA})-(\ref{AMPC}),
with $\eta(\,\cdot\, ;\tau,\sigma) = \emono(\,\cdot\,)$ the monotone
regression denoiser. The denoiser itself was implemented using the
standard pool adjacent violators algorithm.
 
It is a simple exercise to obtain an explicit formula for the memory
term $\onsager_t$.
As in Lemma \ref{lemma:MonotoneRiskAtInfty}, let $K(\mu)$ denote the
number of increase points in the signal $\mu\in\bR^N$. We then have
\begin{align}
\onsager_t \equiv \left.\frac{1}{n} \, \div\, \emono(y)\right|_{y=y^{t-1}} =
\frac{1}{n}\Big\{
K(\emono(y^{t-1}))+1\Big\}\, .
\end{align}
We will refer to this specific version of AMP as to \emph{monoreg AMP}.

Our numerical simulations are summarized in Figure
\ref{fig:monoRegAMPPT}.
Each data point corresponds to the empirical success probability over
$100$ independent reconstruction experiments, using approximately
least favorable signals. More precisely, we used piecewise constant
signals, increasing, with equal jump sizes $\mu$ and 
constant intervals distributed according to the minimax law
$\{\pi_{\ell}\}$. The signals start with $x_1=\mu$, and we took
$\mu=10$ (the results were statistically independent of $\mu\gtrsim 5$). 

 In the present case we
evaluated success probability using the following (Hamming-like)
distance
\begin{align}
H_{\alpha}(x^t,x_0) \equiv \frac{1}{n}\Big|\big\{i\in [N]:\;\;
|x_i^t-x_{0,i}|\ge \alpha\big\}\Big|\, ,\label{eq:HammingDef}
\end{align}
and declared a success when $H(x^{t},x_0)\le \beta$. In Figure
\ref{fig:monoRegAMPPT} we used $t=300$ and $\alpha=\beta = 0.01$, but
very similar results are obtained with other values of the
parameters.  The rationale for using $H_{\alpha}(x^t,x_0)$ instead of
the normalized mean square error lies in the structure of the signals
$x_0$. Since the least favorable $x_0$ is monotone with large jumps,
its norm is very large,concentrated  at endpoints, and depends strongly
on $N$. This leads to subtle normalization issues across different
$N$.

The agrement between the empirical phase transition and the general
prediction $\delta = M(\eps|\Mono)$ in Fig.~\ref{fig:monoRegAMPPT} is
satisfactory
and improves with the signal's length.
%
%*************************************************
%
\section{Total variation minimization}
\label{sec:TV}

In this section we consider vectors $x\in\bR^N$ that are mostly
constant, with a few change points. In order to model this problem, we
introduce the class of probability distributions 
\begin{align*}
\cF_{N,\eps,TV}\equiv
\Big\{\nu_N\in\cP(\bR^N)\, :\;\;\;
\E_{\nu_N}\big[\#\{t\in [N-1]:\; x_{t+1}\neq x_t\}\big]\le N\eps \Big\}\, .
\end{align*}
Again $\eps\in (0,1)$ is a  `simplicity' parameter. Note that this
class is quite similar to the class $\cF_{N,\eps,\mono}$ studied in
the previous section, the `only' difference being that change points
can be either points of increase or points of decrease. 

A convenient denoiser for this setting is the  \emph{total variation penalized least-squares} \cite{ROF92},
also called fused LASSO  \cite{Fused}, that we will denote by
$\etv(\,\cdot\,;\tau):\bR^N\to\bR^N$. This depends on $\tau\in\bR_+$
and, for $y\in\bR^N$ and noise variance $\sigma^2=$, it returns 
\begin{eqnarray}
    \etv (y;\tau) &\equiv & \argmin_{x\in\bR^N} \Big\{\frac{1}{2}\| y - x \|_2^2  +
    \tau\|x\|_{TV} \Big\}\, ,\label{eq:ETV}\\
\|x\|_{TV} &\equiv &\sum_{i=1}^{N-1} | x_{i+1} - x_{i} |\, .
\end{eqnarray}
An extensive literature is devoted to solving 
this denoising problem, see for example
\cite{TVSurvey}.
For general noise variance $\sigma^2$, the above expression is generalized through
the usual scaling relationship (\ref{eq:ScalingRelation}).

Much of the analysis in this section is analogous to the one of
monotone regression. We will therefore present several arguments in
synthetic form to limit redundancy.

\subsection{Minimax MSE}

In this section we outline the computation of the asymptotic minimax
MSE of the total variation denoiser over the class $\cF_{N,\eps,TV}$,
to be denoted by $M(\eps|TV)$.  

We start by defining a generalization of the problem (\ref{eq:ETV}).
For $s = (s_1,s_2)\in \{+1,-1\}^2$, we let
\begin{eqnarray*}
   \etv_s (y;\tau) &\equiv & \argmin_{x\in\bR^N} \Big\{\frac{1}{2}\| y - x \|_2^2  +
    \tau\|x\|_{TV} + \tau(s_1x_1+s_2x_N)\Big\}\, .
\end{eqnarray*}
For economy of notation, we will write $s = ++$, $+0$, $+-$, $\dots$ instead
of, respectively, $s=(+1,+1)$, $(+1,0)$, $(+1,-1)$, $\dots$. We further omit
the subscript for the standard case $s=00$.

\begin{figure}
\begin{center}
\includegraphics[height=2.3in]{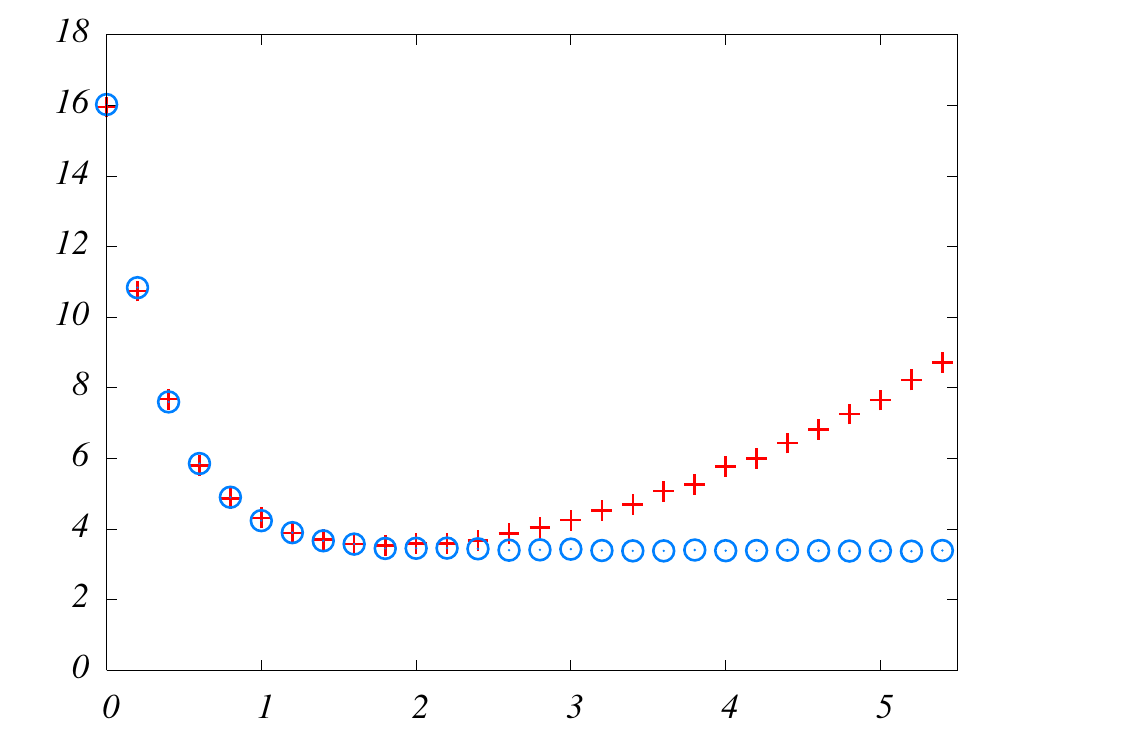}
\put(-140,-5){$\tau$}
\put(-120,85){{\small $r_{++}(N=16;\tau)$}}
\put(-130,25){{\small $r_{+-}(N=16;\tau)$}}
\includegraphics[height=2.3in]{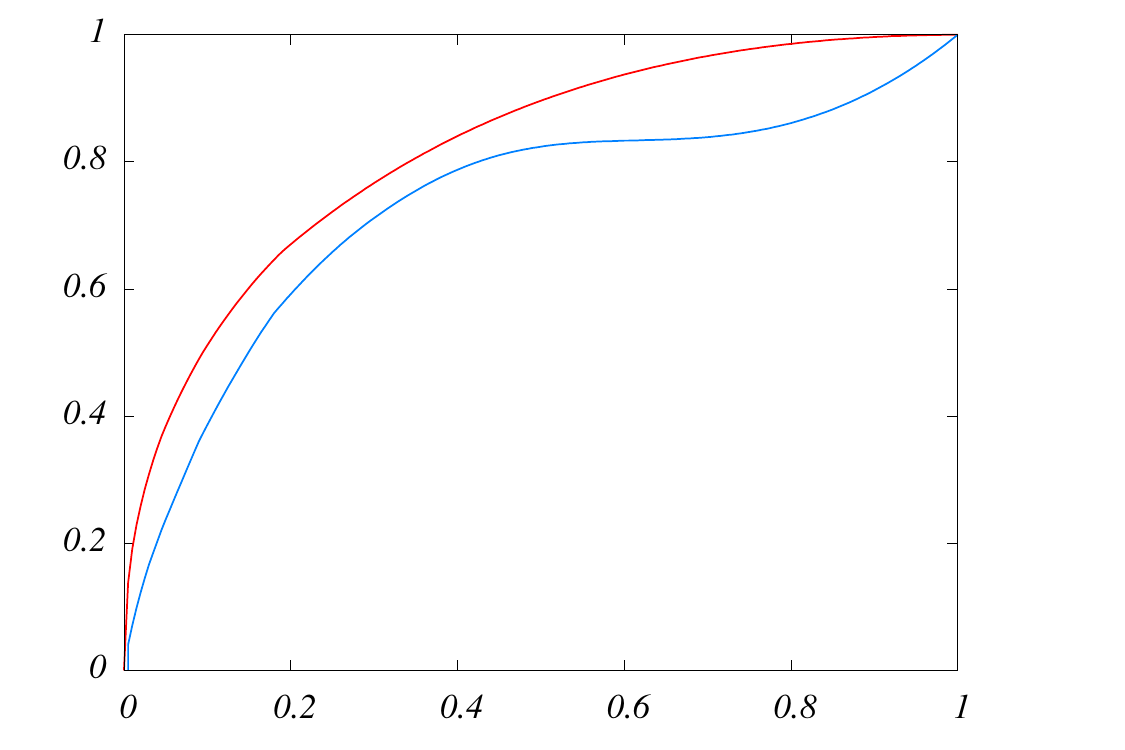}
\put(-140,-5){$\eps$}
\put(-200,135){{\small $M(\eps|TV)$}}
\put(-130,109){{\small $M_{\rm rand}(\eps|TV)$}}
\caption{Left: Risk at zero of total variation denoising, as a function of
  the regularization parameter $\tau\in \bR_+$. The risk was estimated
through Mont Carlo integration.
Right: Upper curve: minimax MSE of total variation denoiser over the class
$\cF_{M,\eps,TV}$ as per Theorem \ref{thm:MMAX_TV}. Lower curve:
minimax MSE over signals with random change points.}
\label{fig:TVRiskatZero}
\end{center}
\end{figure}
We define the risk at $\mu\in\bR^N$ as
\begin{eqnarray}
R_{N,s}(\mu;\tau) \equiv \E\big\{\big\|\etv_s(\mu+ \bz;\tau)-\mu\big\|^2_2\big\}\, ,
\end{eqnarray}
where $\bz\sim\normal(0,\id_{N\times N})$. We denote the risk at
$0$ by
\begin{align}
r_s(N;\tau) \equiv R_{N,s}(\mu=0;\tau) =  \E\big\{\big\|\etv_s( \bz;\tau)\big\|^2_2\big\}\, .\label{eq:TVRiskAtZero}
\end{align}
Notice that, by symmetry,  $r_s(N;\tau) = r_{-s}(N;\tau)$ and
$r_{s_1,s_2}(N;\tau) = r_{s_2,s_1}(N;\tau)$.
We then have the following analogous of Lemma \ref{lemma:MonotoneRiskAtInfty}.
\begin{lemma}\label{lemma:TVRiskAtInfty}
The risk of total variation regression satisfies the following properties
\begin{enumerate}
\item[$(a)$] The function $t\mapsto R_N(t\,\mu;\tau)$ is monotone
  increasing for $t\in\bR_+$.
\item[$(b)$] Let $I_{\neq}(\mu)\equiv \{i\in [N-1]:\; \mu_i\neq\mu_{i+1}\}$
be the set of change points of $\mu$. Denoting them by 
$I_{\neq}(\mu)\equiv \{i_1,i_2,\dots, i_{K(\mu)}\}$, $i_k\le i_{k+1}$, let $J_k \equiv \{i_k+1,i_k+2,\dots, i_{k+1}\}$ for
$k\in \{0,\dots, K\}$ (with, by convention, $i_0=0$, $i_{K+1}=N$).
Further, for $K\ge 1$, let $s(k) =[\sign(\mu_{i_k}-\mu_{i_k+1}),
\sign(\mu_{i_{k+1}+1}-\mu_{i_{k+1}}) ]$ for    $k\in \{1,\dots,
K-1\}$, $s(0) = [0,\sign(\mu_{i_{1}+1}-\mu_{i_{1}}) ]$, $s(K) =
[0,\sign(\mu_{i_k}-\mu_{i_{k}+1}) ]$.
For $K=0$ we let $s(0)=(0,0)$. 

Then, for any $\mu\in\bR^N$,
\begin{align}
\lim_{t\to\infty} R_N(t\mu;\tau) = \sum_{k=0}^{K(\mu)} r_{s(k)}(|J_{k}(\mu)|;\tau)\, . \label{eq:TVRiskAtInfty}
\end{align}
\end{enumerate}
\end{lemma}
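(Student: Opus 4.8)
The plan is to follow, almost verbatim, the proof of Lemma~\ref{lemma:MonotoneRiskAtInfty}, replacing the metric projection $\emono$ by the total variation denoiser $\etv_s(\,\cdot\,;\tau)$. The structural fact that makes this possible is the following: being the proximal operator of a convex function, $\etv_s(\,\cdot\,;\tau)$ is $1$-Lipschitz, hence continuous, and from the stationarity (KKT) conditions for \eqref{eq:ETV} one reads off that it is piecewise affine. Namely, $\bR^N$ splits into finitely many polyhedral regions, each indexed by the list of maximal constant blocks $\{J_k\}$ of the solution together with the signs of the jumps between consecutive blocks (and the boundary signs $s$), and on each region $\alpha$ one has
\[
\etv_s(y;\tau)=P_\alpha\,y+\tau\,w_\alpha ,
\]
where $P_\alpha$ is the orthogonal projection onto the subspace of vectors that are constant on each block of $\alpha$, and $w_\alpha$ --- obtained by summing the stationarity equations over each block --- is a fixed, block-constant vector (so $P_\alpha w_\alpha=w_\alpha$) that is independent of $y$ within the region. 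This is the exact analogue of the fact that $\emono$ coincides, on each region, with one of the face projectors $\Pr_B$; the only genuinely new feature here is the additive offset $\tau w_\alpha$.

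For part $(a)$ I would prove the slightly stronger statement that $t\mapsto R_{N,s}(t\mu;\tau)$ is nondecreasing on $\bR_+$ for every $s$ (the lemma being the case $s=00$). Fix $\mu$ and $\bz$, and write $D_s(t\mu;\bz;\tau)\equiv\|\etv_s(t\mu+\bz;\tau)-t\mu\|_2^2$. Whenever $t\mu+\bz$ lies in region $\alpha$ one has
\[
\etv_s(t\mu+\bz;\tau)-t\mu = t\,(P_\alpha-\id)\mu+\big(P_\alpha\bz+\tau w_\alpha\big) ,
\]
and, since $(P_\alpha-\id)\mu$ lies in the range of $\id-P_\alpha$, it is orthogonal to $P_\alpha\bz+\tau w_\alpha\in\mathrm{range}\,P_\alpha$, so the cross term vanishes and
\[
D_s(t\mu;\bz;\tau)= t^2\,\|(\id-P_\alpha)\mu\|_2^2+\|P_\alpha\bz+\tau w_\alpha\|_2^2 .
\]
As $t$ runs over $\bR_+$ the active region changes only finitely often, so $t\mapsto D_s(t\mu;\bz;\tau)$ is continuous and, on each of finitely many subintervals, of the form $at^2+b$ with $a,b\ge 0$; any such function is nondecreasing on $\bR_+$. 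Averaging over $\bz$ gives part $(a)$, in perfect parallel with the explicit formula for $D_B(t\mu;z)$ in the monotone case.

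For part $(b)$: by part $(a)$, for each fixed $\bz$ the map $t\mapsto D_s(t\mu;\bz;\tau)$ is nondecreasing, so its $t\to\infty$ limit exists in $[0,\infty]$, and by monotone convergence this limit may be taken through the expectation defining $R_{N,s}$; it then remains to identify the pointwise limit. For this I would show that, once $t$ is large (depending on $\mu,\bz,\tau$), the global minimizer decouples across the constant blocks of $\mu$: the vector $\hat x$ with $\hat x|_{J_k}=t\mu^{(k)}\mathbf{1}+\etv_{s(k)}(\bz|_{J_k};\tau)$, where $\mu^{(k)}$ denotes the value of $\mu$ on $J_k$ and the $s(k)$ are the boundary labels of the statement, satisfies the global KKT conditions for $\etv_s(t\mu+\bz;\tau)$. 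Within each block this is immediate from translation invariance of the (boundary-modified) TV functional, which gives $\etv_{s(k)}(t\mu^{(k)}\mathbf{1}+\bz|_{J_k};\tau)=t\mu^{(k)}\mathbf{1}+\etv_{s(k)}(\bz|_{J_k};\tau)$, so the block stationarity and subgradient conditions transfer verbatim; at a change point $i_{k+1}$ the jump of $\hat x$ equals $t(\mu^{(k+1)}-\mu^{(k)})$ plus a term that does not depend on $t$, hence for $t$ large has sign $\sign(\mu_{i_{k+1}+1}-\mu_{i_{k+1}})$, which is exactly the value forced on the corresponding subgradient coordinate by the labels $s(k)_2$ and $s(k+1)_1$. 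Granting this, $D_s(t\mu;\bz;\tau)=\sum_{k=0}^{K}\|\etv_{s(k)}(\bz|_{J_k};\tau)\|_2^2$ for all large $t$, so the pointwise limit equals this sum, and taking expectations yields $\sum_{k=0}^{K}r_{s(k)}(|J_k|;\tau)$, which is \eqref{eq:TVRiskAtInfty}.

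The step I expect to be the main obstacle --- the only one not a routine transcription of the monotone argument --- is the gluing just invoked: verifying that the per-block optimality certificates assemble into a global one, i.e. that the boundary labels $s(k)$ together with a suitable choice of the free interior subgradients yield an admissible global subgradient vector. This bookkeeping is precisely why the auxiliary family $\etv_s$, with its boundary penalties $\tau(s_1x_1+s_2x_N)$, was introduced; I expect it to be carried out exactly as in the corresponding deferred step of Lemma~\ref{lemma:MonotoneRiskAtInfty}, each large jump pinning the adjacent subgradient to $\pm1$ and thereby block-diagonalizing the KKT system.
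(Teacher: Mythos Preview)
Your argument is correct and, for part $(a)$, is exactly the paper's: the paper writes out the explicit block-average formula $D_{\cJ}(t\mu;z)=\sum_k|J_k|\big[t^2|J_k|^{-1}\sum_{i\in J_k}(\mu_i-\omu_{J_k})^2+(\oz_{J_k}+\os(k)\tau)^2\big]$ and reads off monotonicity, while you package the same computation as the orthogonal decomposition $t(P_\alpha-\id)\mu\perp P_\alpha\bz+\tau w_\alpha$; the content is identical.

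For part $(b)$ the paper simply defers to the proof of the analogous statement for monotone regression, which changes variables to $v=x-t\mu$, notes that the active constraints at change points become vacuous as $t\to\infty$, passes to the limit in probability, and then invokes uniform moment bounds to carry the limit through the expectation. Your route is a genuine (minor) variant: you use part $(a)$ to invoke monotone convergence directly, then verify the global KKT system for the glued per-block candidate. This is arguably more natural for the TV case, where one has penalties rather than inequality constraints, and it dispenses with the separate moment-bound step. The gluing bookkeeping you flag is exactly the point: once $t$ is large each jump $\hat x_{i_{k}+1}-\hat x_{i_k}$ has the sign of $\mu_{i_k+1}-\mu_{i_k}$, pinning the corresponding subgradient coordinate to $\pm1$ and matching the boundary labels $s(k)$; strict convexity in $x$ then gives uniqueness and hence equality with the global minimizer.
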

\begin{proof}
The argument in part $(b)$ is essentially the same as in part $(b)$ of
Lemma \ref{lemma:MonotoneRiskAtInfty} and we will therefore omit it. 

For proving part $(a)$, we will prove that, letting
$D(\mu;z)\equiv\|\etv(\mu+z;\tau)-\mu\|_2^2$,
the function $t\mapsto D(t\mu;z)$ is increasing for $t\in\bR_+$.
First  notice that the stationarity condition for the
minimum in Eq.~(\ref{eq:ETV}) reads
\begin{align}
x_i-y_i &= \tau \, v_i-\tau\, v_{i-1}\, ,\label{eq:SubGradient}\\
v_i &= \begin{cases}
+1 & \mbox{ if $x_{i+1}>x_i$,}\\
-1 &  \mbox{ if $x_{i+1}<x_i$,}\\
v_i\in [-1,+1] & \mbox{ otherwise,}
\end{cases}
\end{align}
with the convention that $v_0=v_N=0$. Let $I_{\neq}=I_{\neq}(x)$ and
$J_k, s(k)$ be defined as in part $(b)$ of the statement. Then,
summing Eq.~(\ref{eq:SubGradient}) over $i\in J_k$, we get
\begin{align}
x_i = \frac{1}{|J_k|}\sum_{i\in J_k} y_i + \tau\os(k)\, ,\;\;
\mbox{ for all  } i \in J_k,\label{eq:PieceTV}
\end{align}
where $\os(k) = s_1(k)+s_2(k)$
Hence $\etv(\,\cdot\,;\tau)$  is piecewise affine with components
indexed by $\cJ=\{J_k,s(k)\}_{k\in [K]}$. Within each component,
we have $\etv(y;\tau) = F_{\cJ}(y)$ with $F_{\cJ}$ defined as per
Eq.~(\ref{eq:PieceTV}).

Since $y\mapsto\etv(y;\tau)$ is continuous  (and hence $t\mapsto
D(t\mu;z)$ is), it is sufficient to prov that, letting
\begin{align*}
D_{\cJ}(\mu;z) \equiv \big\|F_{\cJ}(\mu+z)-\mu\big\|_2^2\, ,
\end{align*}
the function $t\mapsto D_{\cJ}(\mu;\,z)$ is monotone increasing for
$t\in\bR_+$.   Using Eq.~(\ref{eq:PieceTV}) we obtain
\begin{align*}
D_{\cJ}(\mu;z) = \sum_{k=0}^K|J_k|\,
\left[\frac{1}{|J_k|}\sum_{i\in J_k}(\mu_i-\omu_{J_k})^2+(\oz_{J_k}+\os(k)\tau)^2\right]\, ,
\end{align*}
where $\ox_{J_k}$ denotes the average of vector $x$ over $J_k$. It
follows that $t\mapsto D_{\cJ}(\mu;\,z)$ is increasing as claimed.
\end{proof}
The risk at $0$, $r_{s}(N;\tau)$, can be computed numerically for
moderate values of $N$. 
Notice that the cases $r_{00}(N;\tau)$, $r_{0\pm}(N;\tau)$, and
$r_{\pm 0}(N;\tau)$ are only relevant for the boundary intervals
$J_0(\mu)$ and $J_{K(\mu)}(\mu)$ and turn out to be immaterial for
the asymptotic minimax risk. Thanks to symmetries, the only relevant
cases are   $r_{++}(N;\tau)$ and $r_{+-}(N;\tau)$. The results of a
numerical computation for these quantities is shown in Figure \ref{fig:TVRiskatZero}.
These calculations suggest $r_{++}(N;\tau)\ge r_{+-}(N;\tau)$, which
is indeed consistent with intuition as boundary conditions $++$ induce
a larger bias. Also, it is easy to prove that  $r_{+-}(N;\tau)\to 0$
as $\tau\to\infty$ (as $\tau\to\infty$, $\etv(y;\tau)$ converges to a constant vector).

Using the last Lemma, and proceeding as in the proof of Theorem
\ref{thm:MMAX_Mono}, it is immediate to obtain a
characterization of the minimax MSE of the total variation denoiser. 
For technical reasons, we need to introduce the class
$\cF_{N,\eps,TV}(L)$ of vectors in $\cF_{N,\eps,TV}$ with distance at
most $L$ between changepoints.
\begin{theorem}\label{thm:MMAX_TV}
The asymptotic minimax MSE of total variation denoiser
\begin{align*}
M_L(\eps|TV)\equiv \lim_{N\to\infty}M(\cF_{N,\eps,TV}(L)|TV)
\end{align*}
exists and is given by
\begin{eqnarray}
       M_L(\eps | TV) = \inf_{\tau\in\bR_+}\max_{\pi} \Big\{ \eps\sum_{\ell, s\in\{++,+-\}} \pi_{\ell,s} r_s(\ell;\tau)\;
       :\;\; \sum_{\ell,s\in\{++,+-\}}   \ell  \pi_{\ell,s}= 1/\eps \Big\}\, ,\label{eq:TV_MMAXFinal}
\end{eqnarray}
where the maximization is over the probability distribution
$\{\pi_{\ell,s}\}_{1\le \ell\le L,s\in\{++,+-\}}$.

For any $\xi>0$, $\eps\in (0,1)$, the following distribution
$\nu_N^{(\eps,\xi)}\in \cF_{N,\eps,TV}(L)$ has risk larger that
$M_L(\eps|TV)-\xi$ for all $N$ large enough. A signal $\bx\sim
\nu^{(\eps,\xi)}_N$ has $|X_{i+1}-X_{i}|= \Delta>0$ at all change
points $i\in [N-1]$ for some $\Delta = \Delta(\xi)$ large
enough. Further, for an interval $J$ between change points, let the
\emph{type} of $J$ be $(\ell,s)$ with $\ell\in\naturals$ its length
and $s\in\{++,+-\}$ depending whether the adjacent change points are
both increase or decrease points ($+-$) or not $(++)$. Then the
empirical distribution of types under $\nu_N^{\eps,\xi}$ is given by $\pi$ 
solving the saddle point problem in Eq.~(\ref{eq:MonoRegMMAXFinal}).
\end{theorem}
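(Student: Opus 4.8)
The plan is to follow the proof of Theorem~\ref{thm:MMAX_Mono} almost verbatim, carrying the regularization parameter $\tau$ along as a spectator and taking the infimum over $\tau$ only at the end. First I would fix $\tau\in\bR_+$ and, for $\nu_N\in\cF_{N,\eps,TV}(L)$, write $S_t\nu_N$ for the law rescaled by $t$; since rescaling moves neither the positions nor the number of change points, $S_t\nu_N$ stays in $\cF_{N,\eps,TV}(L)$. Part $(a)$ of Lemma~\ref{lemma:TVRiskAtInfty}, applied pathwise, makes $t\mapsto R_N(S_t\nu_N;\tau)$ nondecreasing, so monotone convergence together with part $(b)$ gives
\[
R_N(\nu_N;\tau)\ \le\ \lim_{t\to\infty}R_N(S_t\nu_N;\tau)\ =\ \E_{\nu_N}\Big\{\sum_{k=0}^{K(\bx)}r_{s(k)}(|J_k(\bx)|;\tau)\Big\},
\]
the exact analogue of the first display in the proof of Theorem~\ref{thm:MMAX_Mono}, the only new ingredient being the boundary type $s(k)$.

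The next step is to repackage this as an optimization over the empirical distribution of interval \emph{types} $(\ell,s)$. Grouping the interior intervals $J_1,\dots,J_{K-1}$ by type and using the symmetries $r_{s_1,s_2}=r_{s_2,s_1}$ and $r_s=r_{-s}$ recorded just after (\ref{eq:TVRiskAtZero}), only $s\in\{++,+-\}$ survives; the two boundary intervals carry types containing a $0$ and, under the $\cF_{N,\eps,TV}(L)$ restriction, have length $O(1)$, hence contribute $o(N)$ per coordinate and may be dropped. Writing $\pi^{\bx}_{\ell,s}$ for the fraction of interior intervals of type $(\ell,s)$, one has $\sum_{\ell,s}\ell\,\pi^{\bx}_{\ell,s}=N/(K(\bx)+O(1))$ while $\E_{\nu_N}\{K(\bx)\}\le N\eps$. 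From here the reduction is the same convexity argument as in the monotone case: the extreme points of $\cF_{N,\eps,TV}(L)$ under the single linear constraint are point masses and two-point mixtures; the per-realization ratio $\sum_{\ell,s}\pi^{\bx}_{\ell,s}r_s(\ell;\tau)\big/\sum_{\ell,s}\ell\,\pi^{\bx}_{\ell,s}$ is dominated by the least concave envelope of $\{(\ell,r_s(\ell;\tau))\}$; that envelope pre-composed with $\eps\mapsto 1/\eps$ is a concave, nondecreasing perspective function; and Jensen's inequality then yields, for each fixed $\tau$,
\[
\limsup_{N\to\infty}\tfrac1N\,\sup_{\nu_N\in\cF_{N,\eps,TV}(L)}R_N(\nu_N;\tau)\ \le\ \max_{\pi}\Big\{\eps\!\!\sum_{\ell\le L,\ s\in\{++,+-\}}\!\!\pi_{\ell,s}\,r_s(\ell;\tau)\ :\ \sum_{\ell,s}\ell\,\pi_{\ell,s}=1/\eps\Big\},
\]
the mean-length constraint binding at $1/\eps$ for the same reason as in Theorem~\ref{thm:MMAX_Mono} (monotonicity of $r_s(\cdot;\tau)$ forces nature to spend all of its $\le N\eps$ change points). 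Taking $\inf_\tau$ of both sides gives the upper half of (\ref{eq:TV_MMAXFinal}).

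For the lower bound and for the least favorable law I would let $(\tau^\star,\pi^\star)$ be a saddle point of $F(\pi,\tau)=\eps\sum_{\ell,s}\pi_{\ell,s}r_s(\ell;\tau)$; its existence is where the hypothesis $\ell\le L$ earns its keep, since the type simplex is then a compact finite-dimensional set, $\tau\mapsto r_s(\ell;\tau)$ is continuous, and $\tau$ may be confined to a compact interval because $r_{+-}(\ell;\tau)\to 0$ as $\tau\to\infty$ makes large $\tau$ worthless to nature, so that a standard minimax theorem applies. Then I would take $\nu_N^{(\eps,\xi)}$ to be the piecewise-constant law with $|X_{i+1}-X_i|=\Delta$ at every change point and with interval lengths and increase/decrease pattern arranged so that the empirical type distribution equals $\pi^\star$ (achievable for every large $N$ up to $O(1/N)$ discretization error). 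Part $(b)$ of Lemma~\ref{lemma:TVRiskAtInfty} then gives $\tfrac1N R_N(\nu_N^{(\eps,\xi)};\tau)\to F(\pi^\star,\tau)$ as $\Delta\to\infty$ and $N\to\infty$; choosing $\Delta=\Delta(\xi)$ large and then $N$ large makes the risk exceed $\inf_\tau F(\pi^\star,\tau)-\xi=F(\pi^\star,\tau^\star)-\xi=M_L(\eps|TV)-\xi$. Since this matches the upper bound, the limit defining $M_L(\eps|TV)$ exists, equals (\ref{eq:TV_MMAXFinal}), and is attained asymptotically by the stated $\nu_N^{(\eps,\xi)}$.

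I expect the genuinely new work to be concentrated in two places. The first is the negligibility of the two boundary intervals: with the $L$-truncation in place this is routine, but proving the result without controlling the distance from the ends of the block to the nearest change point would require a TV-analogue of the $(\log N)^2$ bound of Lemma~\ref{lemma:MonoRiskAtZero} for $r_{0\pm}(N;\tau)$. The second, and the real obstacle, is the exchange of $\inf_\tau$ and $\max_\pi$ needed to produce a \emph{single} least favorable distribution rather than a $\tau$-dependent family: this needs enough regularity of $\tau\mapsto r_s(\ell;\tau)$ (continuity, plus the coercivity at $\tau\to\infty$ coming from $r_{+-}(\ell;\tau)\to 0$) to legitimately invoke a minimax theorem on the compact type simplex.
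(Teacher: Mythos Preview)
Your proposal is correct and follows the same route the paper indicates: the paper's own ``proof'' of Theorem~\ref{thm:MMAX_TV} consists entirely of the sentence ``We omit this proof since it is an immediate generalization of the one of Theorem~\ref{thm:MMAX_Mono},'' and your sketch is precisely that generalization, carried out in detail.

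One point worth noting: you have correctly isolated the one place where the TV case is \emph{not} an immediate generalization of the monotone case, namely the presence of the tuning parameter $\tau$ and the resulting need for a saddle point $(\tau^\star,\pi^\star)$ to produce a single least-favorable distribution rather than a $\tau$-dependent family. In the monotone case $\eta^{\mono}$ has no threshold, so this issue simply does not arise in Theorem~\ref{thm:MMAX_Mono}; the paper's ``immediate generalization'' remark glosses over this. Your justification (compactness of the finite type simplex from $\ell\le L$, continuity of $\tau\mapsto r_s(\ell;\tau)$, and coercivity to confine $\tau$ to a compact interval, then Sion/von~Neumann) is the right one and is presumably what the authors had in mind when they wrote ``saddle point problem'' in the theorem statement. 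Your treatment of the boundary intervals via the $L$-truncation is also exactly why the class $\cF_{N,\eps,TV}(L)$ was introduced in the first place.
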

We omit this proof since it is an immediate generalization of the one
of Theorem \ref{thm:MMAX_Mono}. Notice that $M_L(\eps)$ is monotone
increasing in $L$ and hence admit a limit as $L\to\infty$. We expect
that
\begin{align*}
M(\eps|TV) = \lim_{L\to\infty}M_{L}(\eps|TV)\, .
\end{align*}
This limit can be evaluated numerically and in Figure
\ref{fig:TVRiskatZero} we plot the resulting minimax risk. 

Notice that, by properly modifying Eq.~(\ref{eq:TV_MMAXFinal}), one
obtains the minimax risk over subsets of $\cF_{N,\eps}$ with constrained
change point distributions.  For instance, we can consider the case in
which the lengths between change points are distributed as for
uniformly random change points, and increase/decrease points are
alternating. We then get
\begin{align*}
\pi^{\eps}_{\ell,++} = \eps (1-\eps)^{\ell-1}\, ,\;\;\;\;\;\;
\pi^{\eps}_{\ell,+-} = 0 \, ,\;\;\;\;\;\;\ell\ge 1\, .
\end{align*}
We consequently define the random changepoint minimax risk as
\begin{eqnarray}
       M_{\rm rand}(\eps | TV) = \inf_{\tau\in\bR_+} \Big\{ \eps\sum_{\ell} \pi_{\ell,s}^{\eps} r_s(\ell;\tau)\; \Big\}\, . \label{eq:MTVRand}
\end{eqnarray}
This curve is plotted in Figure \ref{fig:TVRiskatZero} for comparison.

\subsection{Empirical phase transition behavior}

\begin{figure}
\begin{center}
\includegraphics[height=3in]{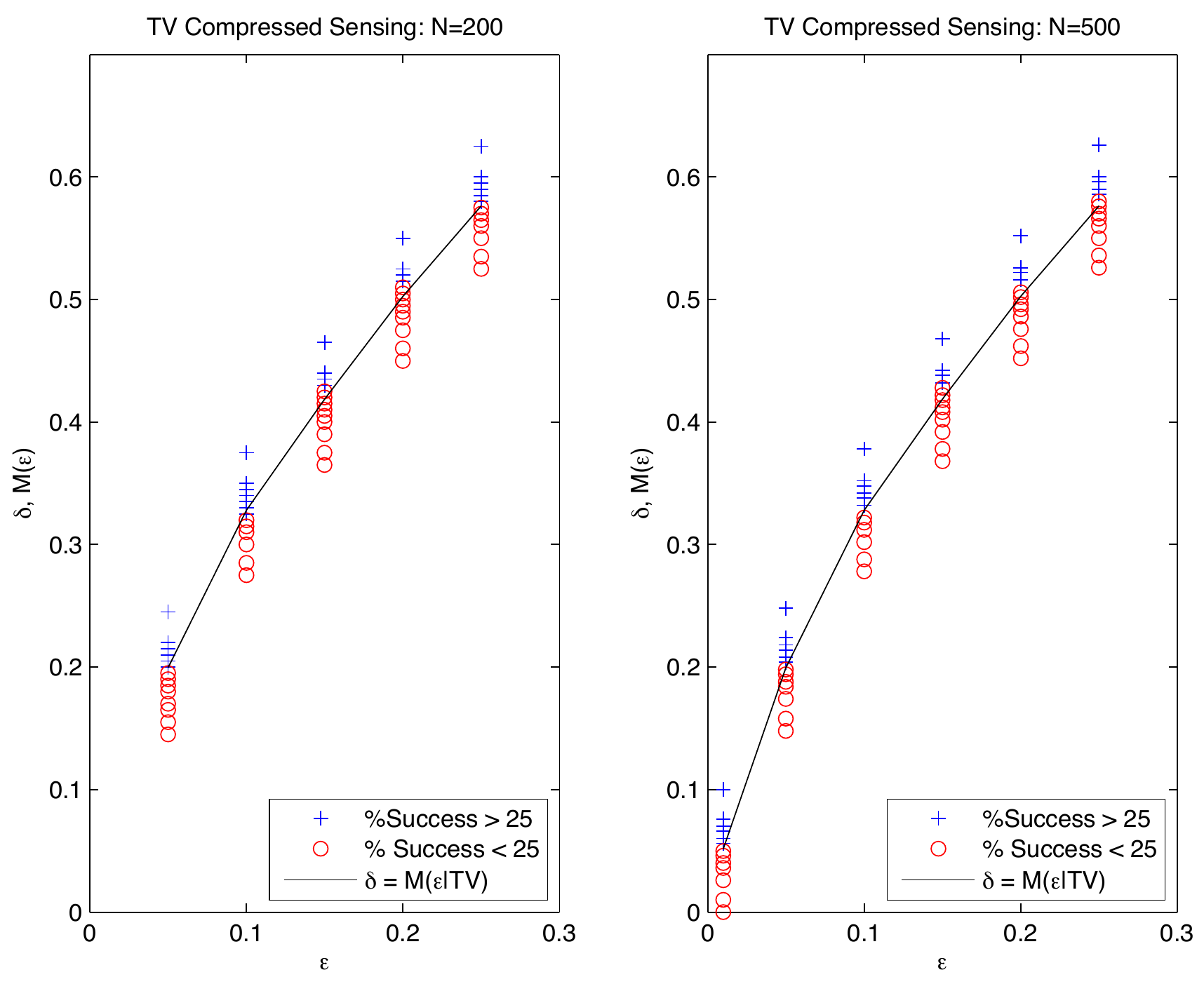}
\caption{Results for TV AMP under the random changepoint signal prior.   
Here $\delta = n/N$ is the undersampling fraction
$\eps$ is the generalized sparsity measure (i.e. the fraction of
change points in the signal to be reconstructed). 
Circles:  less than 25\%  fraction of correct recovery.
Crosses: greater than 25\% fraction of correct recovery.  Curve:
minimax MSE curve for random changepoints $\delta = M_{\rm random}(\eps | TV)$.
Left frame:  $N=200$. Right frame: $N=500$.}
\label{fig:TVAMPPT}
\end{center}
\end{figure}
We implemented the   the AMP iteration (\ref{AMPA})-(\ref{AMPC}),
using the total variation denoiser $\eta(\,\cdot\, ;\tau,\sigma) =
\etv(\,\cdot\,;\tau,\sigma)$. 
For the latter, we used the software package {\sf tvdip} (in the
Matlab implementation)
or the projected Newton method \cite{TVSurvey} (in the Java implementation).

For $x\in\bR^N$ be $K_0(x)$ denote the number of constant segments in
$x$ or, equivalently, the number of change points in $x$, plus one.
We then have the following expression for the memory term in Eq.~(\ref{AMPA})-(\ref{AMPC}):
\begin{align}
\onsager_t \equiv \left.\frac{1}{n} \, \div\,\etv(y;\tau,\sigma_{t-1})\right|_{y=y^{t-1}} =
\frac{1}{n}\, K_0(\etv(y^{t-1};\tau,\sigma^{t-1}))\, .
\end{align}
We will refer to this specific version of AMP as to \emph{TV-AMP}.

\begin{figure}
\begin{center}$
\begin{array}{ccc}
\includegraphics[height=2.1in]{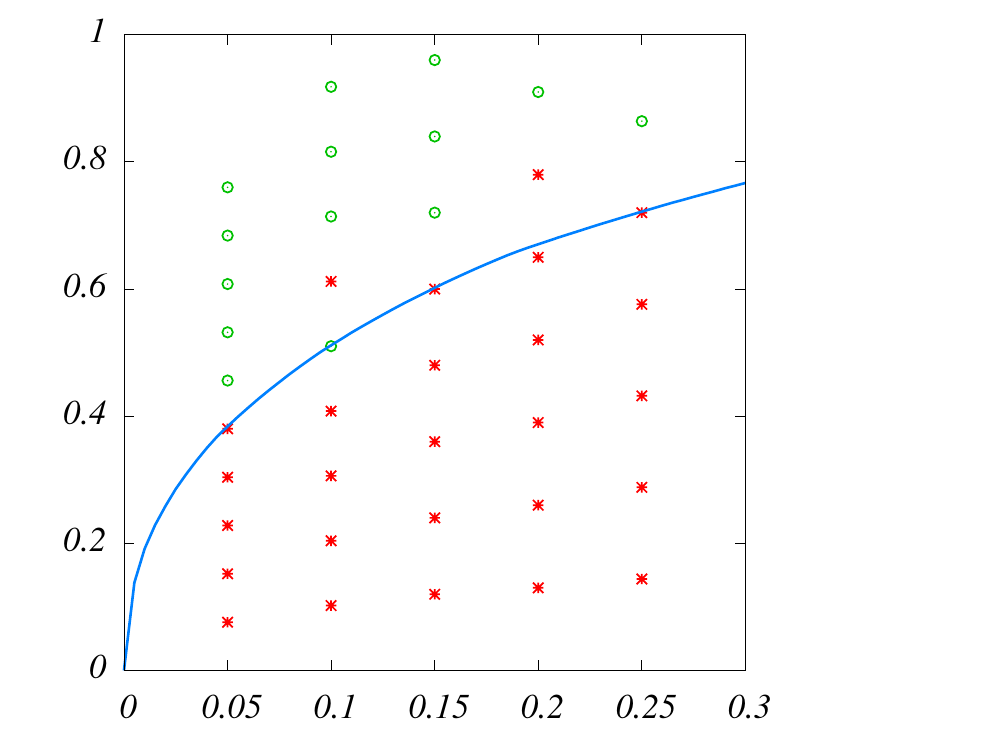}
&
\hspace{-2cm}\includegraphics[height=2.1in]{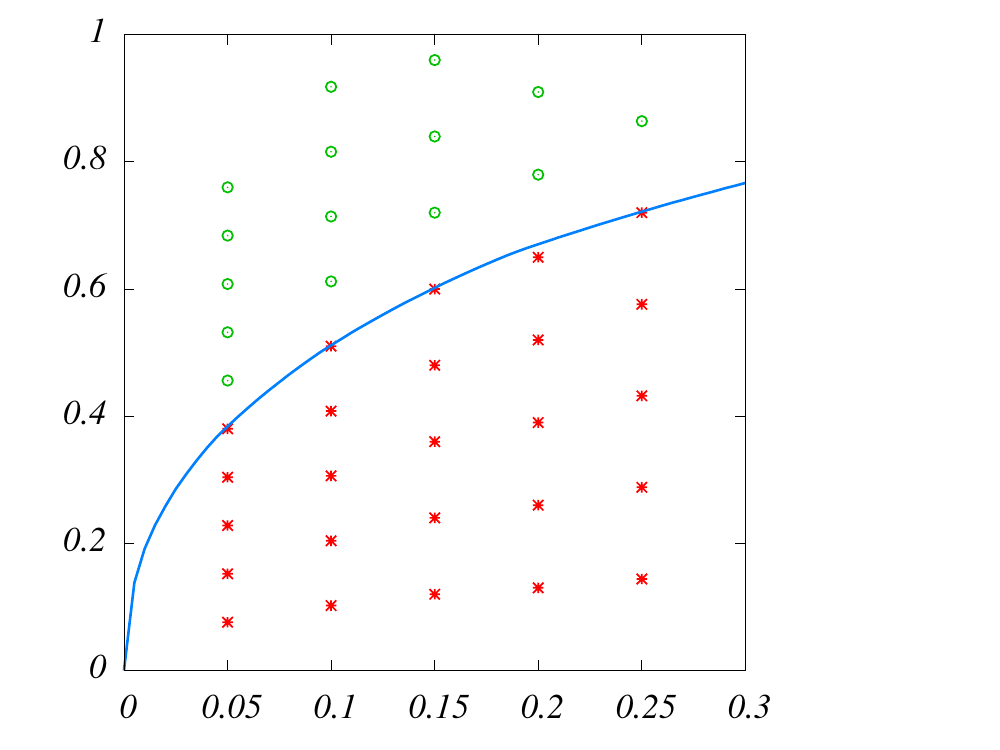}
&
\hspace{-2cm}\includegraphics[height=2.1in]{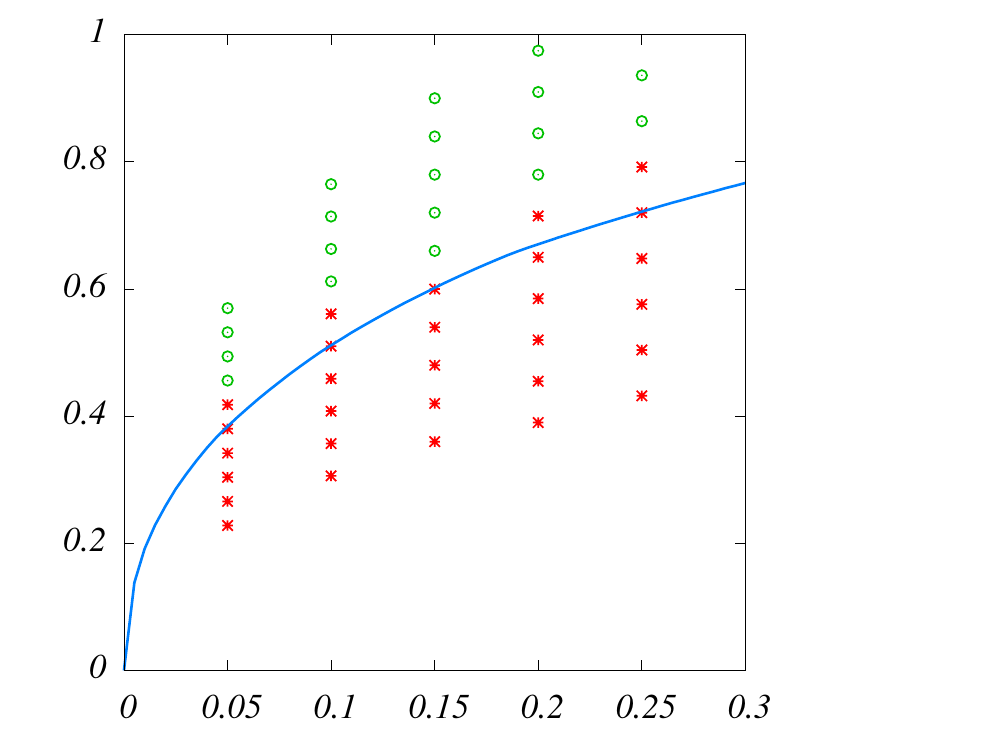}
\end{array}$
\put(-125,-75){$\eps$}
\put(-280,-75){$\eps$}
\put(-435,-75){$\eps$}
\put(-510,10){$\delta$}
\end{center}
\caption{Results for TV AMP under the least favorable signal prior. Here $\delta = n/N$ is the undersampling fraction
$\eps$ is the generalized sparsity measure (i.e. the fraction of
change points in the signal to be reconstructed). 
Red stars:  less than 50\%  fraction of correct recovery.
Green circles: greater than 50\% fraction of correct recovery.  Curve:
minimax MSE curve $\delta = M(\eps | TV)$.
The three frames correspond to (from left to right) $N=100$, $250$, $500$.}
\label{fig:TVAMPPT2}
%\end{center}
\end{figure}
We carried out two types of experiments. 
In the first class of experiments we considered signals $x$ with 
distances between change points distributed as 
\begin{align*}
\pi^{\eps}_{\ell,++} = \pi^{\eps}_{\ell,+-} =\frac{1}{2}\, \eps(1-\eps)^{\ell-1}\,
.
\end{align*}
This is the same distribution as if each position is independently an increase point
or a decrease point, each with probability $\eps/2$.
The predicted phase transition curve is given by
Eq.~(\ref{eq:MTVRand}) and the minimax value of $\tau$ is used in the
AMP implementation. The simulations results are presented in Figure
\ref{fig:TVAMPPT} for $N=200$, $500$ and show good agreement between
predictions and observations. In this case we used the
Hamming metric (\ref{eq:HammingDef}), because the norm of the typical
signal  $\|x_0\|_2^2$ scales super-linearly in $N$. 

In the second set of experiments we used the distribution
$\{\pi_{\ell,s}\}_{s\in\{++,+-\},\ell\in \naturals}$ achieving the
$\sup$ in Eq.~(\ref{eq:TV_MMAXFinal}) for $L$ large. More precisely,
we fixed $L=25,50,100$ and solved numerically the optimization problem
(\ref{eq:TV_MMAXFinal}). The solution appears to be independent of $L$
and put small weight on large lengths $\ell$. The least favorable
distribution for $L = 50$ was used to generate signals. 
The success probability is compared
with the predicted phase transition curve $M(\eps|TV)$ in Figure
\ref{fig:TVAMPPT2} for $N=100$, $250$, $500$.
In this case we used the MSE metric and declared success if
$\|\hx^5-x_0\|_2^2/\|x_0\|^2_2\le 0.001$. 
%
%******************************************
%
\section{Characterization of the phase transition using state evolution}
\label{sec:DerivePT}

In this section we prove the basic relation (\ref{generalPT}) between
minimax mean square error in denoising and the phase-transition
boundary in the sparsity-undersampling plane. Our proof assumes that 
the state evolution formalism developed in
\cite{DMM09,DMM_ITW_I,DMM-NSPT-11} holds, in the precise terms stated below.
This formalism was established rigorously for separable denoisers (under
additional regularity assumptions) in \cite{BM-MPCS-2011}.

A crucial observation  for state evolution is that 
the mean squared error of the AMP reconstruction $x^t$ at iteration
$t$ is practically non-random for large system sizes $N$, and has a
well-defined limit as $N\to\infty$. In particular, the limit
\[
           m_t = \lim_{N \goto \infty} \frac{1}{N}\,\| x_0 - x^t \|_2^2 ,
\]
exists almost surely (here we assume $n\to\infty$, while $n/N\to\delta$).
Moreover, the evolution of $m_t$ with increasing $t$ 
is dictated by a formula $m_{t+1} = \Psi(m_t)$  which is explicitly
computable, and defined below.
We will use the term \emph{state evolution} to refer both to the
mapping $m\mapsto \Psi(m)$ and to the sequence $\{m_t\}_{t\ge 0}$
with appropriate initial condition. 
State evolution allows to determine whether AMP recovers the signal
$x_0$ correctly, by simply checking whether $m_t  \goto
0$ as $t \goto \infty$ (in which case the MSE vanishes asymptotically)
or not. The latter problem does in turn reduce to a 
problem in real analysis.

The papers \cite{DMM09,DMM_ITW_I,DMM-NSPT-11} 
developed the state evolution framework for 
separable denoisers
and verified its predictions numerically for three specific
examples (namely for the shrinkers \Soft, \SoftPos, and \Cap). However, the heuristic
argument presented in those papers was much more general. 
Indeed, \cite{BM-MPCS-2011} proved that state evolution holds, in a
precise asymptotic sense, for Gaussian measurement matrices $A$ with
iid entries and generic separable denoisers, under mild regularity assumptions.
A generalization to non-gaussian entries was subsequently proved in \cite{BLM-Universality}.

Here we generalize this approach to non-separable denoisers $\eta(\,\cdot\,;\tau,\sigma): \bR^N
\mapsto \bR^N$.
This framework covers all the shrinkers discussed in Sections \ref{sec-Scalar} to \ref{sec:TV},
and yields a formal proof of the main formula (\ref{generalPT}),
under the assumption that indeed state evolution is correct in this
broader context.  We will throughout assume the scaling relation (\ref{eq:ScalingRelation}).

In the next sections we will first introduce some basic notations and
facts 
about state evolution. Then we will prove the phase transition expression
 (\ref{generalPT}) by establishing first a lower bound and then a
 matching upper bound, both given by the minimax MSE.

\subsection{State Evolution}
\label{subsec:SE}

The next definition provides the suitable generalization of the state
evolution mapping to the present setting.
\begin{definition}
For given $\delta, \tau\ge 0$, and $\nu = \{\nu_N\}_{N\in\naturals}$ a sequence of probability
distributions over $\reals^N$, 
define the \emph{state evolution mapping} $\Psi(\,\cdot\,;\delta,\tau,\nu): \bR \mapsto \bR$ by
\begin{align}
\Psi(m; \delta, \tau, \nu) \equiv \lim_{N\to\infty}\frac{1}{N}\E_{\nu_N} \Big\{
\Big\| \bx- \eta\Big(\bx+\sqrt{\frac{m}{\delta}} \,\bz ;  \tau,
\sqrt{\frac{m}{\delta}}\Big) \Big\|_2^2\Big\} \, ,\label{eq:SEMapDef}
\end{align}
whenever the limit on the right-hand side exists.
Here, as before, $\bx$ and $\bz$ are independent  vectors, 
$\bx \sim \nu_N$, $\bz\sim \normal(0,\id_{N\times N})$.
 In other words, $\Psi(m; \delta, \tau, \nu)$ is the 
per-coordinate MSE of denoiser $\eta$ at noise level  $\sigma^2 \equiv m /\delta$.

Given implicit fixed parameters $( \delta, \tau, \nu)$, \emph{state evolution}
is the one-dimensional dynamical system: $m_{t+1} = \Psi(m_t)$ starting from
$m_0 = \lim_{N\to\infty}n^{-1} \E_{\nu_N} \|\bx\|_2^2$.
\end{definition}

The fixed points of the mapping $m\mapsto\Psi(m)$ play of course a
crucial role in the analysis of state evolution.
\begin{definition}
The \emph{highest fixed point} of the mapping $\Psi(\, \cdot\,)
=\Psi(\,\cdot\,; \delta, \tau, \nu)$  
is defined as $\HFP(\Psi) \equiv \sup\{ m : \Psi(m) \geq m \}$.
\end{definition}
The importance and applicability this notion is underscored by the
next two observations.
Here and below we say that a function $f:\reals_+\to\reals$ is
\emph{starshaped} if $x\mapsto f(x)/x$ is decreasing.
\begin{lemma}\label{lemma:HFP}
 Suppose that $m_0>0$ and any one of these three conditions holds:
\begin{enumerate}
\item[(a)]  $\Psi(m)$ is an increasing function of $m$, and the initial
  condition of state evolution satisfies $m_0  \ge  \HFP(\Psi)$; 
\item[(b)]  $\Psi(m)$ is an increasing  starshaped function of $m$.
\item[(c)] $\HFP(\Psi) = 0$.
\end{enumerate}
Then  state evolution converges to the
highest fixed point:
\[
     \lim_{t \goto \infty} m_t = \HFP(\Psi).
\]
Further, if $\HFP(\Psi)>0$ and $\Psi(m)$ is a  starshaped function of
$m$, then   $\lim\inf_{t \goto \infty} m_t >0$.
\end{lemma}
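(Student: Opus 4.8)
The plan is to establish the two conclusions separately, using throughout that $\Psi$ is non-negative and continuous (continuity holds for all the denoisers treated here, $\eta$ being continuous, by dominated convergence). Write $F\equiv\HFP(\Psi)=\sup\{m\ge 0:\Psi(m)\ge m\}$. Since $\Psi$ is continuous this set is closed, so, when $F<\infty$, $\Psi(F)\ge F$; also $\Psi(m)<m$ for every $m>F$, and letting $m\downarrow F$ gives $\Psi(F)\le F$. Hence $\Psi(F)=F$ when $0<F<\infty$, while $F=0$ means $\Psi(m)<m$ for all $m>0$, and $F=\infty$ (possible only under hypothesis (b)) forces $\Psi(m)\ge m$ for all $m$ and hence $m_t\uparrow\infty=\HFP$. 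In all cases $\Psi$ has no fixed point strictly above $F$.

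For the convergence claim I would show that in each case $\{m_t\}$ is eventually monotone and bounded, so $m_t\to m_\infty$; continuity of $\Psi$ then gives $m_\infty=\Psi(m_\infty)$, and it remains to identify $m_\infty$ with $F$. In case (c) ($F=0$), $\Psi(m)<m$ on $m>0$ makes $\{m_t\}$ strictly decreasing and bounded below by $0$, whose only non-negative fixed point is $0=F$. In case (a), $m_0\ge F$ and monotonicity of $\Psi$ give by induction $F=\Psi(F)\le\Psi(m_t)=m_{t+1}\le m_t$ (the upper bound since $m_t\ge F$), so $\{m_t\}$ decreases to a fixed point $\ge F$, necessarily $F$. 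In case (b), if $m_0\ge F$ we are in case (a); if $0<m_0<F$, star-shapedness together with $\Psi(F)/F=1$ yields $\Psi(m)\ge m$ on $(0,F]$, so $m_t<F\Rightarrow m_t\le m_{t+1}=\Psi(m_t)\le\Psi(F)=F$, and $\{m_t\}$ increases to a fixed point $\le F$.

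The delicate step — the one I expect to be the real obstacle to stating cleanly — is concluding, in case (b), that this increasing limit is $F$ itself and not some smaller fixed point. That needs $\Psi(m)>m$ \emph{strictly} for $0<m<F$, i.e. that $m\mapsto\Psi(m)/m$ is not constantly $1$ on any subinterval of $(0,F)$; this is precisely the role of the star-shaped hypothesis (equivalently, one may read the hypothesis as asserting that $\HFP(\Psi)$ is the unique positive fixed point of $\Psi$). Granting it, there is no fixed point in $(0,F)$, so the increasing limit equals $F$. The analogous difficulty does not occur in cases (a) or (c), where the monotone orbit approaches $F$ from above (and $\Psi$ has no fixed point there) or decreases to $0$.

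For the final assertion ($F>0$, $\Psi$ star-shaped, not assumed monotone), write $\Psi(m)=m\,\phi(m)$ with $\phi$ decreasing on $(0,\infty)$. Since $\phi(F)=1$ this already gives $\Psi(m)\ge m$ on $(0,F]$. Moreover $\phi(m)\ge\phi_\infty:=\lim_{m\to\infty}\phi(m)>0$, where strict positivity holds because the per-coordinate MSE grows at large noise like a strictly positive multiple of the noise variance (for any non-degenerate denoiser $\E\|\eta(\bz;\tau)\|_2^2>0$); hence $\Psi(m)\ge m\phi_\infty\ge F\phi_\infty>0$ for all $m\ge F$. Now set $\beta:=\min(m_0,\,F\phi_\infty)>0$: if $m_t\ge\beta$ then either $m_t\le F$, whence $m_{t+1}=\Psi(m_t)\ge m_t\ge\beta$, or $m_t>F$, whence $m_{t+1}=\Psi(m_t)\ge m_t\phi_\infty>F\phi_\infty\ge\beta$. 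By induction $m_t\ge\beta$ for all $t$, so $\liminf_{t}m_t\ge\beta>0$.
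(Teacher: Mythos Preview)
The paper does not actually prove this lemma; it simply writes ``The proof is a standard calculus exercise; we omit it.'' So there is no proof to compare against, and your task was really to supply the omitted details. You have done that correctly.

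Your treatment of cases (a) and (c) is clean and complete (modulo the continuity hypothesis you flag, which indeed holds for every $\Psi$ arising in the paper). For case (b) with $m_0<F$, you are right that the argument needs the star-shaped hypothesis to exclude fixed points strictly below $F$; this amounts to reading ``$\Psi(m)/m$ decreasing'' as strictly so at $F$, which is what the paper's applications satisfy.

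For the final $\liminf$ claim, your argument via $\phi_\infty>0$ is valid and you correctly justify $\phi_\infty>0$ from the denoiser structure (it equals $R(0;\tau)/\delta>0$). An alternative route, perhaps closer to the spirit of a pure ``calculus exercise,'' is to observe that $\Psi$ being continuous with $\Psi(0)=0$ forces $\Psi$ to be bounded on $(0,F]$, say by $M_1$; then the orbit never exceeds $M^*:=\max(m_0,M_1)$, and one takes $\beta=\min(m_0,F\phi(M^*))$ instead of $\min(m_0,F\phi_\infty)$. Both approaches require some mild regularity beyond the bare lemma statement---as you implicitly recognized---and both hold for every state-evolution map in the paper. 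Without any such side condition the abstract claim can fail (e.g.\ $\Psi(m)=1/m^2$ is star-shaped with $\HFP=1$ but orbits from $m_0=2$ have $\liminf m_t=0$), so your instinct to import structure from the application is exactly right.
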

The proof is a standard calculus exercise; we
omit it.

\begin{lemma}\label{lemma:Star}
The function $m\mapsto \Psi(m)$ is  starshaped for all of
the following choices of the denoiser $\eta$:
\bitem
  \item Soft thresholding.
  \item Positive soft thresholding.
  \item Block soft thresholding.
  \item James-Stein shrinkage.
  \item Monotone regression denoiser.
  \item Total variation denoiser.
 \eitem
\end{lemma}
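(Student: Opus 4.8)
The plan is to reduce the starshapedness of $\Psi$, for every one of the six denoisers, to a single structural property: that the risk is non-decreasing along rays through the origin. Write $R_N(\mu;\tau)\equiv\E_{\bz}\big\{\|\eta(\mu+\bz;\tau)-\mu\|_2^2\big\}$ for the risk of $\eta$ at the deterministic signal $\mu$ at unit noise level, with $\bz\sim\normal(0,\id_{N\times N})$. Using the scaling relation (\ref{eq:ScalingRelation}) with $\sigma=\sqrt{m/\delta}$ one has $\eta(\bx+\sigma\bz;\tau,\sigma)=\sigma\,\eta(\bx/\sigma+\bz;\tau)$, and therefore
\begin{align*}
\Psi(m;\delta,\tau,\nu)=\frac{m}{\delta}\;\lim_{N\to\infty}\frac{1}{N}\,\E_{\bx\sim\nu_N}\big\{R_N(\sqrt{\delta/m}\;\bx;\tau)\big\}\, .
\end{align*}
Thus $\Psi(m)/m$ is $\delta^{-1}$ times a limit of averages of the maps $m\mapsto R_N(\sqrt{\delta/m}\;\bx;\tau)$, and since $\sqrt{\delta/m}$ is decreasing in $m$, it suffices to prove that for each denoiser and each $\mu$ in the support of the relevant signal class the map $t\mapsto R_N(t\mu;\tau)$ is non-decreasing on $\bR_+$; non-increasingness of $\Psi(m)/m$ then follows, since non-increasingness in $m$ is preserved by averaging and by pointwise limits.

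For monotone regression and total variation this ray-monotonicity is exactly part $(a)$ of Lemma \ref{lemma:MonotoneRiskAtInfty} and part $(a)$ of Lemma \ref{lemma:TVRiskAtInfty}, already proved. For soft thresholding and positive soft thresholding I would reuse the argument of Lemma \ref{lemma:MonotoneRiskAtInfty}$(a)$: each such $\eta$ is continuous and piecewise affine, equal on every linear region $R$ to $y\mapsto P_R y+c_R$ with $P_R$ a diagonal $0/1$ matrix (the retained coordinates) and $c_R$ a constant vector supported on $\{i:(P_R)_{ii}=1\}$; hence the bias direction $(P_R-\id)\mu$, supported on the zeroed coordinates, is orthogonal to $P_R\bz+c_R$, so on $R$
\begin{align*}
\big\|\eta(t\mu+\bz;\tau)-t\mu\big\|_2^2=t^2\big\|(P_R-\id)\mu\big\|_2^2+\big\|P_R\bz+c_R\big\|_2^2\, ,
\end{align*}
which is non-decreasing in $t\ge0$. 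Continuity of $\eta$ glues the affine pieces into a function of $t$ that is non-decreasing on all of $\bR_+$, and taking $\E_{\bz}$ gives the claim.

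For block soft thresholding and positive-part James--Stein shrinkage $\eta$ is no longer piecewise affine, so I would use Stein's identity. Both are block-separable with blockwise shrinkage $s(y)\equiv y-\eta(y)$, which is globally Lipschitz; Stein's identity applied coordinatewise yields
\begin{align*}
R_N(t\mu;\tau)=N-2\,\E\big\{\div\,s(t\mu+\bz)\big\}+\E\big\{\|s(t\mu+\bz)\|_2^2\big\}\, ,
\end{align*}
where $\div\,s(y)\equiv\sum_i\partial s_i(y)/\partial y_i$, and by rotational invariance within blocks both $\div\,s$ and $\|s\|_2^2$ depend on $y$ only through the block squared-norms, so $R_N(t\mu;\tau)=N+\sum_m\E\{G(\|block_m(t\mu+\bz)\|_2^2)\}$ for a single scalar function $G(u)\equiv-2\,\div\,s+\|s\|_2^2$ attached to one block (viewed as a function of $u=\|y\|_2^2$). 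Since $\|block_m(t\mu+\bz)\|_2^2$ is a noncentral $\chi^2_B$ variable with noncentrality $t^2\|block_m(\mu)\|_2^2$ — hence stochastically non-decreasing in $t$ on $\bR_+$ — it remains only to check that $u\mapsto G(u)$ is non-decreasing. For block soft thresholding this is immediate: $\|s(y)\|_2^2=\min(\tau^2,\|y\|_2^2)$ is non-decreasing and $\div\,s(y)$ (equal to $B$ for $\|y\|_2<\tau$ and to $\tau(B-1)/\|y\|_2$ for $\|y\|_2>\tau$) is non-increasing. For James--Stein, $\|s(y)\|_2^2$ equals $\|y\|_2^2$ for $\|y\|_2^2\le B-2$ and $(B-2)^2/\|y\|_2^2$ thereafter — so it is \emph{not} monotone — but $\div\,s(y)$ drops from $B$ to $B-2$ exactly at $\|y\|_2^2=B-2$, and the upward jump of $-2\,\div\,s$ there precisely offsets the turnaround in $\|s\|_2^2$, leaving $G$ non-decreasing.

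The main obstacle I anticipate is this last James--Stein computation: ray-monotonicity there cannot be obtained from the stochastic ordering of $\|s(y)\|_2^2$ alone, since that term first rises and then falls as $\|y\|_2$ grows, and one must combine it with the divergence term and verify that the jump at $\|y\|_2^2=B-2$ (where the positive part switches on) produces exactly the cancellation needed. Along the way one should also confirm the hypotheses of Stein's identity (Lipschitz $s$, finite second moments) and the correct stochastic ordering of the noncentral $\chi^2$ family in its noncentrality parameter; both are routine but should be recorded.
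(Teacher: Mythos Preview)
Your proposal is correct and its overall architecture matches the paper exactly: use the scaling relation to write $\Psi(m)/m=(1/\delta)\,\lim_{N}\E_{\nu_N}\{R_N(\sqrt{\delta/m}\,\bx;\tau)\}/N$, reduce starshapedness to ray-monotonicity $t\mapsto R_N(t\mu;\tau)$ non-decreasing, and then verify this case by case. For monotone regression and total variation you cite the same lemmas the paper does; for soft and positive soft thresholding your piecewise-affine argument is a self-contained substitute for the paper's citation of \cite{DMM09}.

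The only noteworthy difference is in the block cases. The paper handles block soft thresholding by differentiating the noncentral-$\chi^2$ density and bounding the derivative of the risk directly, and handles James--Stein by invoking the Brown--Johnstone--MacGibbon variation-diminishing theorem for the noncentral-$\chi^2$ family. Your route---writing the SURE expression $R=B+\E\{G(\|t\mu+\bz\|_2^2)\}$, checking that $G(u)=-2\,\div s+\|s\|_2^2$ is non-decreasing in $u$, and then using stochastic monotonicity of $\chi^2_B(\xi)$ in $\xi$---is more elementary and more uniform across the two cases. In fact the paper's $g(w)=U(\sqrt{w})$ equals $B+G(w)$, so the paper also ends up verifying monotonicity of $G$; it just then appeals to the heavier SVR machinery where ordinary stochastic ordering already suffices. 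Your computation of the jump at $u=B-2$ (from $-B-2$ to $-B+2$) is correct and is precisely the content of the paper's observation that $g$ is strictly increasing across the boundary.
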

The proof of this Lemma is deferred to Appendix \ref{Appendix:RiskProperties}.

\subsection{State Evolution Phase Transition}

Consider a collection $\cF_{N,\eps}$ of probability distributions over
$\reals^N$, indexed by $\eps\in [0,1]$ as per Section
\ref{sec:SignalModels} (these do not need to be simple sparse signals).
For a sequence of probability distributions $\nu =
\{\nu_N\}_{N\in\naturals}$ we write $\nu\in\cF_{\eps}$ if 
 $\nu_N\in \cF_{N,\eps}$ for all $N$ \emph{and} the limit on the
 Eq.~(\ref{eq:SEMapDef}) exists for each $m\in\bR_+$.
Letting  $\HFP(\delta,\tau,\nu) = \HFP(\Psi(\;\cdot
\; ;\delta,\tau,\nu))$,  we consider the minimax value:
\[
     \HFP^*(\eps,\delta) = \inf_{\tau\in\reals_+}  \sup_{\nu  \in \cF_{\eps}}  \HFP( \delta,\tau,\nu).
\]
\begin{definition}\label{def:deltaSE}
For $\eps\in [0,1]$,  define  the state evolution phase transition as
\[
\delta_{SE}(\eps|\eta)\equiv  \inf \big\{ \delta\ge 0 : \; \HFP^*(\eps,\delta) = 0  \big\} .
\]
\end{definition}
Note that $\HFP^*(\eps,\delta)$ is monotone decreasing as a function
of $\delta$, by  definition of the state evolution mapping $\Psi$,
cf. Eq.~(\ref{eq:SEMapDef}).
It follows that $\HFP^*(\eps,\delta) = 0 $  for $\delta >
\delta_{SE}(\eps)$ and  $\HFP^*(\eps,\delta) > 0 $  for $\delta <\delta_{SE}(\eps)$.
Further, by nestedness, it is monotone increasing as a function of  $\eps$,
which implies that $\eps\mapsto \delta_{SE}(\eps)$ is monotone increasing.
The rationale for this definition is that, for
$\delta>\delta_{SE}(\eps)$ and under any of the assumptions of  Lemma
\ref{lemma:HFP}, state evolution predicts that AMP will 
correctly recover the signal $x_0$.

\begin{theorem}\label{thm:CSeqDenoising}
Let $\cF_{N,\eps}$ be a nested, scale-invariant collection of probability distributions,
and assume that the shrinker $\eta(\,\cdot\,;\tau,\sigma)$ obeys the
scaling relation  (\ref{eq:ScalingRelation}).
Define the minimax MSE $M(\eps | \eta)$ as per
Eq.~(\ref{eq:generalMAsymptotic}). Then
\[
        \delta_{SE}(\eps|\eta)   = M(\eps|\eta)\, .
\]
\end{theorem}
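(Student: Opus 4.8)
The plan is to prove the two inequalities $\delta_{SE}(\eps|\eta) \le M(\eps|\eta)$ and $\delta_{SE}(\eps|\eta) \ge M(\eps|\eta)$ separately, as the paper suggests (establishing a lower bound and a matching upper bound). The key technical link is that, because of the scaling relation \eqref{eq:ScalingRelation}, the state evolution map takes a self-similar form. Specifically, for a signal sequence $\nu\in\cF_\eps$ and $\sigma^2 = m/\delta$, one has
\[
\Psi(m;\delta,\tau,\nu) = \lim_{N\to\infty}\frac1N\E_{\nu_N}\Big\{\big\|\bx - \sigma\,\eta\big(\tfrac{\bx}{\sigma} + \bz;\tau\big)\big\|_2^2\Big\} = \sigma^2\,\widetilde M_N\!\Big(\tfrac{\nu_N}{\sigma};\tau\Big),
\]
where $\widetilde M_N(\mu;\tau) = N^{-1}\E\{\|\mathbf{u}-\eta(\mathbf{u}+\bz;\tau)\|_2^2\}$ with $\mathbf{u}\sim(\text{rescaled }\nu_N)$. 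Since $\cF_{N,\eps}$ is scale invariant, rescaling $\nu_N$ by $1/\sigma$ keeps it in $\cF_{N,\eps}$, so $\sup_{\nu\in\cF_\eps}\Psi(m;\delta,\tau,\nu) = (m/\delta)\cdot\big(\text{asymptotic worst-case per-coordinate risk of }\eta(\cdot;\tau)\big)$. Taking the infimum over $\tau$ is subtle because inf and sup do not commute with the limit in general, but the definition of $\HFP^*$ already has the inf outside the sup, matching the structure of $M(\eps|\eta)$ in \eqref{eq:generalMAsymptotic} once one is careful about the order.

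\textbf{Upper bound} $\delta_{SE}(\eps|\eta) \le M(\eps|\eta)$: Fix $\delta > M(\eps|\eta)$; I must show $\HFP^*(\eps,\delta)=0$, i.e. there is a $\tau$ with $\HFP(\delta,\tau,\nu)=0$ for all $\nu\in\cF_\eps$. By definition of $M(\eps|\eta)$ as an infimum over $\tau$, pick $\tau$ nearly optimal so that $\sup_{\nu}\limsup_N \widetilde M_N \le M(\eps|\eta)+\xi < \delta$ for small $\xi$. Then for every $\nu\in\cF_\eps$ and every $m>0$, $\Psi(m;\delta,\tau,\nu) \le (m/\delta)(M(\eps|\eta)+\xi) = c\, m$ with $c<1$. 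An affine/starshaped-type comparison argument (or just iterating the bound $\Psi(m)\le cm$) shows the only nonnegative fixed point is $0$, hence $\HFP(\delta,\tau,\nu)=0$. A minor wrinkle: one needs $m\mapsto\Psi(m)$ to actually be dominated by $cm$ uniformly, which follows directly from the self-similarity display above, \emph{provided} the worst-case risk over $\cF_{N,\eps}$ is really bounded by the minimax value for each fixed $\tau$ — this is exactly the definition of $M(\cF_{N,\eps}|\eta,\tau)$ appearing in Lemma \ref{lemma:ScalarMSE}'s proof, generalized.

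\textbf{Lower bound} $\delta_{SE}(\eps|\eta) \ge M(\eps|\eta)$: Fix $\delta < M(\eps|\eta)$; I must exhibit, for every $\tau$, a sequence $\nu\in\cF_\eps$ with $\HFP(\delta,\tau,\nu)>0$. Since $\delta < M(\eps|\eta) \le \sup_\nu \limsup_N \widetilde M_N(\nu;\tau)$ for every $\tau$ (the sup over $\nu$ dominates the inf-sup), choose $\nu = \{\nu_N\}$ with asymptotic worst-case risk $\ge \delta' > \delta$. Then $\Psi(m;\delta,\tau,\nu) \ge (m/\delta)\,\delta' = c'm$ with $c'>1$ for $m$ in the relevant range — but here one must be careful, because $\widetilde M_N$ depends on the rescaled signal $\nu_N/\sigma$, and as $m\to 0$ the noise level $\sigma\to 0$ so the rescaled signal blows up. The right move is to use signal sequences whose risk is \emph{insensitive to scaling} in the large-amplitude limit: the least-favorable distributions identified throughout the paper (e.g. $\nu_{\eps,\mu}$ for the scalar case, with $\mu\to\infty$, or the monotone/TV least-favorable laws from Theorems \ref{thm:MMAX_Mono} and \ref{thm:MMAX_TV}) precisely have the property that their per-coordinate risk approaches the minimax value uniformly over a range of scales. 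Using such a $\nu$, one gets $\Psi(m)\ge c'm$ on an interval $(0,m^*]$, which forces $\HFP(\Psi)\ge m^*>0$, hence $\HFP^*(\eps,\delta)>0$ and $\delta \le \delta_{SE}(\eps|\eta)$.

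\textbf{The main obstacle} I expect is the non-uniformity in the lower bound: matching the minimax MSE requires producing a single signal sequence $\nu$ for which $\Psi(m;\delta,\tau,\nu)/m$ stays above $1$ not just at one point but on a whole neighborhood of $0$, and this interacts delicately with the scale-dependence of the rescaled signal class. The clean way around it is to invoke scale invariance of $\cF_{N,\eps}$ together with the large-amplitude limit results already proved (Lemma \ref{lemma:MonotoneRiskAtInfty} and its TV analogue show the risk is \emph{monotone increasing} in the amplitude and has an explicit finite limit, and in the separable case the least-favorable $\mu=\infty$ is handled by the $\nu_{\eps,\mu}$ family), so that one can legitimately pass to the high-amplitude regime where the worst-case risk is achieved and is scale-stable. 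Handling the boundary intervals $J_0, J_{K}$ in the monotone and TV cases (whose contribution is $o(N)$, hence immaterial asymptotically) is a routine but necessary bookkeeping step. Assembling these, together with the inf/sup order already baked into the definitions of $M(\eps|\eta)$ and $\HFP^*$, yields $\delta_{SE}(\eps|\eta) = M(\eps|\eta)$.
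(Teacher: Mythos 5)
Your overall strategy is the same as the paper's: the substitution $\sigma^2=m/\delta$ turns $\Psi(m)/m$ into $\delta^{-1}$ times the normalized per-coordinate risk at noise level $\sigma$, and scale invariance of $\cF_{N,\eps}$ identifies the resulting $\inf_\tau\sup_\nu$ with $M(\eps|\eta)$. The paper packages this as a single chain of equalities (Lemma \ref{lemma:IdentityPerNu}, giving $\delta_{SE}(\tau,\nu|\eta)=M(\tau,\nu|\eta)$ for every fixed pair $(\tau,\nu)$) followed by one round of $\inf$/$\sup$ bookkeeping; you prove the two inequalities directly, which is an equivalent organization. Your upper bound is fine as stated (and you do not even need to iterate: $\Psi(m)\le cm$ with $c<1$ already rules out any positive fixed point).

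Two remarks on your lower bound. First, the obstacle you single out is not actually there: by definition $\HFP(\Psi)=\sup\{m:\Psi(m)\ge m\}$, so to conclude $\HFP>0$ you need $\Psi(m_0)\ge m_0$ at a \emph{single} point $m_0>0$, not on a neighborhood of the origin; the condition $M(\tau,\nu)>\delta$ already hands you such a point $m_0=\delta\sigma_0^2$ directly from the supremum over $\sigma$. Consequently your detour through example-specific least-favorable structure (the $\nu_{\eps,\mu}$ family, Lemma \ref{lemma:MonotoneRiskAtInfty}, amplitude-monotonicity of the risk) is both unnecessary and out of place: the theorem is stated for an arbitrary nested, scale-invariant class, so its proof cannot lean on properties available only for the particular denoisers studied elsewhere in the paper. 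Second, the subtlety you should instead address is uniformity in $\tau$: establishing $\HFP^*(\eps,\delta)=\inf_\tau\sup_\nu\HFP(\delta,\tau,\nu)>0$ requires that the positive lower bounds on $\sup_\nu\HFP$ do not degenerate to zero as $\tau$ varies. Scale invariance resolves this as well: rescaling $\nu$ moves the crossing point $m_0$ wherever you like, so $\sup_\nu\HFP(\delta,\tau,\nu)$ is either $0$ or $+\infty$. Finally, both you and the paper elide the interchange of $\lim_N$ with $\inf_\tau\sup_\nu$ needed to pass from Eq.~(\ref{eq:generalMAsymptotic}) to $\inf_\tau\sup_{\nu\in\cF_\eps}M(\tau,\nu)$; flagging it, as you do, is appropriate.
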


In order to prove this result, we shall first establish a more
general fact. 
Given a sequence of distributions $\nu=\{\nu_N\}_{N\in\naturals}$ and, for any $\tau\in\Theta$,
we let
\begin{eqnarray}
\delta_{SE}(\tau,\nu|\eta) \equiv \inf\big\{ \delta\ge 0\;\; \mbox{ such that
}\;\;\HFP(\delta,\tau,\nu) = 0\big\}\, .
\end{eqnarray}
The rationale for this definition is clear. Under the assumption
that state evolution holds, for a  signal $x_0$ sampled from distribution
$\nu_N$, AMP (with tuning parameter $\tau$) is guaranteed to reconstruct $x_0$
if and only if  $\delta>\delta_{SE}(\tau,\nu|\eta)$. 

Analogously, we let
\begin{eqnarray}
   M(\tau,\nu|\eta) =  \sup_{\sigma>0}\lim_{N\to\infty}\frac{1}{N\sigma^2}\E_{\nu_N} \Big\{\big\|\bx
   - \eta(\bx+ \sigma \bz; \tau,  \sigma)\big\|_2^2\Big\} \, .
\end{eqnarray}
 the normalized  MSE for denoising with worst case case
 signal-to-noise ratio.
With these definitions we have the following.
\begin{lemma}\label{lemma:IdentityPerNu}
For any sequence of probability measures $\nu=\{\nu_N\}_{N\ge 1}$, and any
$\tau\in\Theta$, we have
\begin{eqnarray}
\delta_{SE}(\tau,\nu|\eta) = M(\tau,\nu|\eta)\, .
\end{eqnarray}
 \end{lemma}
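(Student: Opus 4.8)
The plan is to show the two quantities $\delta_{SE}(\tau,\nu|\eta)$ and $M(\tau,\nu|\eta)$ are equal by analyzing the state evolution mapping $\Psi(m;\delta,\tau,\nu)$ under the scaling relation (\ref{eq:ScalingRelation}). The key structural fact is that, thanks to scale invariance of the denoiser, the map $m\mapsto\Psi(m;\delta,\tau,\nu)$ depends on $m$ and $\delta$ only through the ratio $m/\delta$ in a controlled way. Concretely, writing $\sigma^2 = m/\delta$ and using (\ref{eq:ScalingRelation}), one has
\[
   \Psi(m;\delta,\tau,\nu) = \lim_{N\to\infty}\frac1N\,\E_{\nu_N}\Big\{\big\|\bx - \eta(\bx+\sigma\bz;\tau,\sigma)\big\|_2^2\Big\} \equiv \psi(\sigma^2;\tau,\nu)\, ,
\]
so that $\Psi(m;\delta,\tau,\nu) = \psi(m/\delta;\tau,\nu)$. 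Define $\rho(\tau,\nu) \equiv \sup_{\sigma>0}\psi(\sigma^2;\tau,\nu)/\sigma^2$; by definition $M(\tau,\nu|\eta) = \rho(\tau,\nu)$. The first step is to record that $\psi(\,\cdot\,)$ is nonnegative, that $\psi(0)=0$ (at zero noise the denoiser recovers $\bx$ exactly, at least for the shrinkers considered; more carefully one uses $\psi(\sigma^2)/\sigma^2 \to$ a finite limit as $\sigma\to 0$), and that $\psi(\sigma^2)/\sigma^2$ is bounded, so $\rho$ is finite.

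The heart of the argument is the fixed-point analysis. A value $m>0$ satisfies $\Psi(m;\delta,\tau,\nu)\ge m$ iff $\psi(m/\delta;\tau,\nu)\ge m$, i.e. iff $\psi(\sigma^2;\tau,\nu)/\sigma^2 \ge \delta$ with $\sigma^2 = m/\delta$. Hence there exists a positive $m$ with $\Psi(m)\ge m$ precisely when $\sup_{\sigma>0}\psi(\sigma^2)/\sigma^2 \ge \delta$, that is when $\delta \le \rho(\tau,\nu)$. Consequently $\HFP(\delta,\tau,\nu) = \sup\{m:\Psi(m)\ge m\} > 0$ for $\delta < \rho(\tau,\nu)$ and, since the only fixed point candidates collapse, $\HFP(\delta,\tau,\nu)=0$ for $\delta > \rho(\tau,\nu)$ (here one uses that $\Psi(0)=0$ so $m=0$ is always a fixed point, and that $\Psi(m)<m$ for all $m>0$ once $\delta$ exceeds the supremum of $\psi(\sigma^2)/\sigma^2$). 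Taking the infimum over $\delta$ for which $\HFP = 0$ gives exactly $\delta_{SE}(\tau,\nu|\eta) = \rho(\tau,\nu) = M(\tau,\nu|\eta)$.

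The main obstacle is handling the supremum over $\sigma$ carefully at the two ends of the range: whether $\psi(\sigma^2)/\sigma^2$ actually attains a maximum, and the behavior as $\sigma^2\to 0$ and $\sigma^2\to\infty$. For the denoisers of interest, $\psi(\sigma^2)/\sigma^2$ is continuous in $\sigma^2$, tends to a finite limit as $\sigma\to 0$ (the small-noise risk per unit variance), and tends to $0$ or a finite limit as $\sigma\to\infty$ depending on the denoiser — the class $\cF_{N,\eps}$ being scale-invariant is what guarantees $\psi$ itself is finite and the ratio is well-behaved; one needs the limit defining $\Psi$ in (\ref{eq:SEMapDef}) to exist (which is built into the hypothesis $\nu\in\cF_\eps$, though here the lemma is stated for arbitrary $\nu$ so existence of the relevant limit must be assumed or derived). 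A secondary subtlety is the borderline case $\delta = \rho(\tau,\nu)$, which does not affect the infimum and can be left aside. Once these continuity/boundary points are in place, the equivalence of the two infima is immediate from the fixed-point characterization above.
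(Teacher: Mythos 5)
Your proposal is correct and follows essentially the same route as the paper: both arguments hinge on the change of variables $\sigma^2=m/\delta$ permitted by the scaling relation (\ref{eq:ScalingRelation}), which converts the condition $\Psi(m;\delta,\tau,\nu)\ge m$ for some $m>0$ into the condition $\sup_{\sigma>0}\psi(\sigma^2)/\sigma^2\ge\delta$, i.e. $M(\tau,\nu|\eta)\ge\delta$. The paper writes this as a chain of equalities between sets of admissible $\delta$ rather than through the highest fixed point, and is equally cavalier about the boundary case $\delta=M(\tau,\nu|\eta)$, which as you note does not affect the infimum.
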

\begin{proof}
By definition
\begin{eqnarray*}
\delta_{SE}(\tau,\nu|\eta) &=& \inf\Big\{\delta>0\;\;\mbox{ such that}\;\;
\Psi(m;\delta,\tau,\nu)<m \mbox{ for all } m>0\Big\}\\
&= &\inf\Big\{\delta>0\;\;\mbox{ such that}\;\;
\sup_{m>0}\frac{1}{m}\Psi(m;\delta,\tau,\nu)<1\Big\}\\
&= &\inf\Big\{\delta>0\;\;\mbox{ such that}\;\;
\sup_{m>0}\lim_{N\to\infty}\frac{1}{mN}\E\{\|\bx-\eta(\bx+\sqrt{m/\delta}
\bz;\tau,\sqrt{m/\delta})\|^2\}<1\Big\}\\
& = &\inf\Big\{\delta>0\;\;\mbox{ such that}\;\;
\sup_{\sigma>0}\lim_{N\to\infty}\frac{1}{\delta \sigma^2 N}\E\{\|\bx-\eta(\bx+\sigma
\bz;\tau,\sigma)\|^2\}<1\Big\}\\
& = &\inf\Big\{\delta>0\;\;\mbox{ such that}\;\;\frac{1}{\delta
}M(\tau,\nu|\eta)<1\Big\} = M(\tau,\nu|\eta)\, .
\end{eqnarray*}
\end{proof}

We are now in position to prove Theorem \ref{thm:CSeqDenoising}.
\begin{proof}[Proof of Theorem \ref{thm:CSeqDenoising}]
Throughout the proof we drop the argument $\eta$, since this is kept constant.

Define $\delta_*(\eps)\equiv
\inf_{\tau\in\Theta}\sup_{\nu\in\cF_{\eps}}\delta(\tau,\nu)$.
We claim that $\delta_*(\eps) = \delta_{SE}(\eps)$. Indeed choose
$\delta\in [0,\delta_{SE}(\eps))$.  Then by definition there exists $m>0$ such
that, for all $\tau\in\Theta$,  there exists $\nu\in\cF_{\eps}$ with
$\HFP(\delta,\tau,\nu)>m$.
Hence,  for all $\tau\in\Theta$,  there exists $\nu\in\cF_{\eps}$ with
$\delta(\tau,\nu)\ge\delta$. This implies that, for all
$\tau\in\Theta$, $\sup_{\nu\in\cF_{\eps}}\delta(\tau,\nu)>\delta$,
i.e.  $\delta_*(\ve)\ge\delta$.
Proceeding in the same way, it is immediate to prove that, for any
$\delta\in[0,\delta_*(\eps))$, we have $\delta_{SE}(\ve)>\delta$.
Hence, we conclude that $\delta_{SE}(\eps) = \delta_*(\eps)$ as claimed.

To conclude the proof, we note that, by Lemma
\ref{lemma:IdentityPerNu},
we have 
\begin{eqnarray*}
\delta_*(\eps) &=& \inf_{\tau\in\Theta}
\sup_{\nu\in\cF_{\eps,B}}M(\tau,\nu) \\
&=&\inf_{\tau\in\Theta}
\sup_{\nu\in\cF_{\eps,B}}  \sup_{\sigma>0}
\lim_{N\to\infty}\frac{1}{N\sigma^2}\E_{\nu_N} \Big\{\| \bx
   - \eta(\bx+ \sigma\, \bz; \tau,  \sigma)\|_2^2\Big\}\\
&=&\inf_{\tau\in\Theta}
\sup_{\nu\in\cF_{\eps}}\frac{1}{N}\E_\nu \Big\{\| \bx
   - \eta(\bx+ \bz; \tau)\|_2^2\Big\} \, ,
\end{eqnarray*}
where the last equality follows from the scale invariant property of
$\cF_{N,\eps}$.
The last quantity is nothing but $M(\eps)$.
\end{proof}

\subsection{Non-convergence of state evolution}

By Definition \ref{def:deltaSE} and Lemma \ref{lemma:HFP}.(c) it
follows that, for all $\delta<\delta_{SE}(\eps|\eta)$,
all probability distributions $\nu\in \cF_{\eps}$, and all initial
conditions $m_0$, state evolution converges to the zero error fixed
point, namely $\lim_{t\to\infty}m_t=0$. Viceversa, for  $\delta>\delta_{SE}(\eps|\eta)$,
there exists $\nu\in \cF_{\eps}$, and an initial
condition $m_0$, such that $\lim_{t\to\infty}m_t=m_\infty>0$. The reader
might wonder whether this conclusion (non-convergence to $0$) 
also holds if we use the initial condition that is relevant for AMP,
i.e. if,  for  any $\delta>\delta_{SE}(\eps|\eta)$, there exists $\nu\in \cF_{\eps}$
such that, taking  $m_0=\lim_{N\to\infty}N^{-1}\E_{\nu_N}\{\|\bx\|_2^2\}$, we have
$\lim_{t\to\infty}m_t=m_\infty>0$.

Lemmas \ref{lemma:HFP}.(b) and \ref{lemma:Star} immediately imply that
the answer is positive for soft thresholding, positive soft
thresholding, block soft thresholding, James-Stein shrinkage, monotone
regression and
total variation denoising.
It turns out that the answer is positive also for firm
thresholding and the global minimax denoiser. In Appendix \ref{app:Sub}
we describe the argument for these cases. 

\section{Phase transitions for other algorithms}
Formula (\ref{generalPT}) connects an algorithmic property -- phase transitions of
AMP recovery algorithms -- with a property from statistical decision theory -- minimax mean squared error in denoising.

We want to explore a further connection, relating the behavior of convex optimization with
that of certain AMP algorithms. As proved in
\cite{BayatiMontanariLASSO}, in the large system limit
AMP with soft thresholding denoiser effectively computes the solution to
\[
       (P_{1,\lambda}) \qquad    \mbox{minimize}\;\;\;\;\frac{1}{2} \|y -
       Ax\|_2^2 + \lambda \| x \|_1 ,\qquad x\in\bR^N\, .
\]
for an appropriately calibrated $\lambda = \lambda(\tau)$. 

More generally, consider a generalized reconstruction method of the
form 
\begin{eqnarray}
  (P_{J}) \qquad   \mbox{minimize}\;\;\;\; 
\frac{1}{2} \|y - Ax\|_2^2+ \lambda J(x), \qquad x \in\bR^N\, ,
\end{eqnarray}
where  $J:\reals^N\to\reals$ is a convex penalization. To this
reconstruction problem, we can associate an
AMP algorithm, by using the denoiser $
\eta^J(\,\cdot\,;\tau)$ in Eq.~(\ref{AMPA}), (\ref{AMPB}), (\ref{AMPC}), whereby
\begin{eqnarray}
\eta^J(y;\tau) \equiv \arg\min_{x\in\bR^N}\Big\{\frac{1}{2}\|y-x\|^2_2 +
\tau\, J(x)\Big\}\, .
\end{eqnarray}
(we also let $\eta^J(\,\cdot\,;\tau,\sigma) =
\eta^J(\,\cdot\,;\tau\cdot\sigma)$). 
In other words $\eta^J$ is the proximal operator of the penalization $J(\,\cdot\,)$.
We will refer to this algorithm as
to AMP-$J$.

We then have the following general correspondence, which 
follows immediately by writing the stationarity condition of problem
$P_J = P_J(\lambda)$ and the fixed points of AMP-$J$ (see \cite{MontanariChapter}).
\begin{proposition}
Any fixed point $x^{\infty}$ of AMP-$J$ with fixed point parameters
$\tau_{\infty}$, $\onsager_{\infty}$, $\sigma_{\infty}$ corresponds to a stationary point
of $P_J(\lambda)$ with $\lambda$ given by
\begin{eqnarray}
\lambda = \tau_{\infty}\sigma_{\infty} (1-\onsager_{\infty})\, ,\qquad
\onsager_\infty =   \left.\frac{1}{n} \,\div\, \eta(y; \tau^{\infty}) 
\right|_{y = y^{\infty}}
\end{eqnarray}
In particular, if the regularizer $J$ is convex, then fixed points
correspond to minimizers.
\end{proposition}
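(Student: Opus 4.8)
The plan is to write down the two systems of equations — the stationarity (subgradient) condition for $(P_J)$ and the fixed-point equations for AMP-$J$ — and show they coincide under the stated calibration $\lambda = \tau_\infty\sigma_\infty(1-\onsager_\infty)$. First I would recall that the AMP iteration is $y^t = x^t + A^{\sT}z^t$, $x^{t+1} = \eta^J(y^t;\tau,\sigma_t)$, $z^{t+1} = y - Ax^{t+1} + \onsager_t z^t$. At a fixed point $(x^\infty, z^\infty)$ with parameters $\tau_\infty,\sigma_\infty,\onsager_\infty$, these become $y^\infty = x^\infty + A^{\sT}z^\infty$, $x^\infty = \eta^J(y^\infty;\tau_\infty,\sigma_\infty) = \eta^J(y^\infty;\tau_\infty\sigma_\infty)$ using the scaling convention $\eta^J(\cdot;\tau,\sigma)=\eta^J(\cdot;\tau\sigma)$, and $z^\infty = y - Ax^\infty + \onsager_\infty z^\infty$, i.e. $(1-\onsager_\infty)z^\infty = y - Ax^\infty$.

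Next I would unpack the defining variational characterization of the proximal operator: $x^\infty = \eta^J(y^\infty;\tau_\infty\sigma_\infty)$ means $x^\infty$ minimizes $\tfrac12\|y^\infty - x\|_2^2 + \tau_\infty\sigma_\infty J(x)$, whose first-order (subgradient) optimality condition is $0 \in x^\infty - y^\infty + \tau_\infty\sigma_\infty\,\partial J(x^\infty)$, i.e. $y^\infty - x^\infty \in \tau_\infty\sigma_\infty\,\partial J(x^\infty)$. Substituting $y^\infty - x^\infty = A^{\sT}z^\infty$ from the first fixed-point equation gives $A^{\sT}z^\infty \in \tau_\infty\sigma_\infty\,\partial J(x^\infty)$. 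Now substitute $z^\infty = (1-\onsager_\infty)^{-1}(y-Ax^\infty)$ from the third equation to obtain $A^{\sT}(y - Ax^\infty) \in \tau_\infty\sigma_\infty(1-\onsager_\infty)\,\partial J(x^\infty)$. Setting $\lambda = \tau_\infty\sigma_\infty(1-\onsager_\infty)$, this is exactly $0 \in -A^{\sT}(y-Ax^\infty) + \lambda\,\partial J(x^\infty)$, which is the stationarity condition for $(P_J(\lambda))$, namely $0\in\partial\big[\tfrac12\|y-Ax\|_2^2 + \lambda J(x)\big]$ at $x=x^\infty$. The formula $\onsager_\infty = \tfrac1n\,\div\,\eta(y;\tau_\infty)\big|_{y=y^\infty}$ is simply the fixed-point specialization of the Onsager coefficient definition and needs no further argument. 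Finally, when $J$ is convex the objective of $(P_J(\lambda))$ is convex, so every stationary point (point where $0$ lies in the subdifferential) is a global minimizer; this is the standard fact that for convex functions $0\in\partial f(x^*)$ iff $x^*\in\argmin f$.

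I do not anticipate a genuine obstacle here — the proposition is essentially a bookkeeping identity. The one point requiring minor care is the handling of the scaling parameter $\sigma_\infty$ versus $\tau_\infty$: one must be consistent about whether the proximal threshold is $\tau_\infty$ or the product $\tau_\infty\sigma_\infty$, per the convention $\eta^J(\cdot;\tau,\sigma)=\eta^J(\cdot;\tau\sigma)$ adopted just above the proposition, and this is exactly what makes the $\sigma_\infty$ appear in the calibration formula for $\lambda$. A secondary subtlety, if one wants to be scrupulous, is that $\partial J(x^\infty)$ should be interpreted as the convex subdifferential (for convex $J$) so that the proximal optimality condition is both necessary and sufficient; for nonconvex $J$ the same computation still shows fixed points are \emph{stationary} points in the appropriate (e.g. limiting/Clarke) subdifferential sense, which is all the proposition claims in that generality.
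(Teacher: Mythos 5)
Your argument is correct and is precisely the computation the paper has in mind: the paper states the proposition "follows immediately by writing the stationarity condition of $P_J(\lambda)$ and the fixed points of AMP-$J$," and your substitution of $y^\infty - x^\infty = A^{\sT}z^\infty$ into the proximal optimality condition, followed by $z^\infty = (1-\onsager_\infty)^{-1}(y - Ax^\infty)$, is exactly that bookkeeping, with the calibration $\lambda = \tau_\infty\sigma_\infty(1-\onsager_\infty)$ emerging as it should. Your handling of the $\tau\sigma$ scaling convention and the convex-case upgrade from stationary point to minimizer are both consistent with the paper.
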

\begin{example}
The fixed points of AMP with positive soft thresholding denoiser are
solutions of
\[
       (P_{1,\lambda}^+) \qquad   \mbox{minimize}\;\;\;\; 
\frac{1}{2}  \|y - Ax\|_2^2+ \lambda \<1,x\> , \qquad x \in\bR_+^N,
\]
where $\<\,\cdot\,,\,\cdot\,\>$ denotes the standard scalar product
over $\bR^N$, and $1$ is the all-one vector. 
\end{example}
\begin{example}
The fixed points of AMP with capping denoiser effectively are
solutions of
\[
       (P_{\Box}) \qquad   \mbox{minimize}\;\;\;\;   \frac{1}{2}\|y -
       Ax\|_2^2,\qquad x \in [0,1]^N. 
\]
\end{example}

For  noiseless compressed sensing reconstruction, the correspondence
involves the limit $\lambda\to 0$ of the above problem,
that is
\begin{eqnarray*}
\mbox{minimize}&& J(x)\,\;\;\;\;\;\; x\in\bR^N\, ,\\
\mbox{subject to}&& y=Ax\, .
\end{eqnarray*}
Phase transitions for such convex programs were characterized in
\cite{Do,DoTa08,DoTa10b} for the three examples mentioned above, using
methods from combinatorial geometry.
Thus, whenever AMP converges with high probability, formula (\ref{generalPT})
connects fundamental problems of minimax decision theory to 
fundamental problems in high dimensional combinatorial geometry.

We wil next  verify numerically this correspondence beyond the three
classical examples (again focusing
on noiseless measurements).
In the block-structured case, we can compare
block-soft AMP to the following convex optimization problem:
\begin{eqnarray*}
\peeTwoOne \quad  \mbox{minimize}&&\| x \|_{2,1}\equiv \sum_{m=1}^M \| block_m(x) \|_2 \,
,\\
  \mbox{ subject to }&& y = A x\,  .
\end{eqnarray*}
The norm $ \| x \|_{2,1} $ is called the mixed $\ell_{2,1}$ norm.
Figure \ref{fig:P21PT} presents empirical reconstruction results
obtained by solving $(P_{2,1})$. These experiments verify that the phase transition occurs
around the location predicted by (\ref{generalPT}), just as with block soft thresholding AMP.

Analogously,  we can compare
monoreg AMP to the following convex optimization problem:
\begin{eqnarray*}
(P_{{\rm mono}}) \quad  \mbox{minimize} && \<1, \Delta x \>\, \\
  \mbox{ subject to }&& y = A {x} , \;\; \;\;\Delta x \geq 0,
\end{eqnarray*}
where $\Delta x = (\Delta x_1,\dots ,\Delta x_N)$,
$\Delta x_i = (x_{i+1} - x_i )$.
Figure \ref{fig:PmonoPT} verifies that the phase transition occurs
around the location predicted by (\ref{generalPT}), just as with monoreg AMP.

\begin{figure}
\begin{center}
\includegraphics[height=3in]{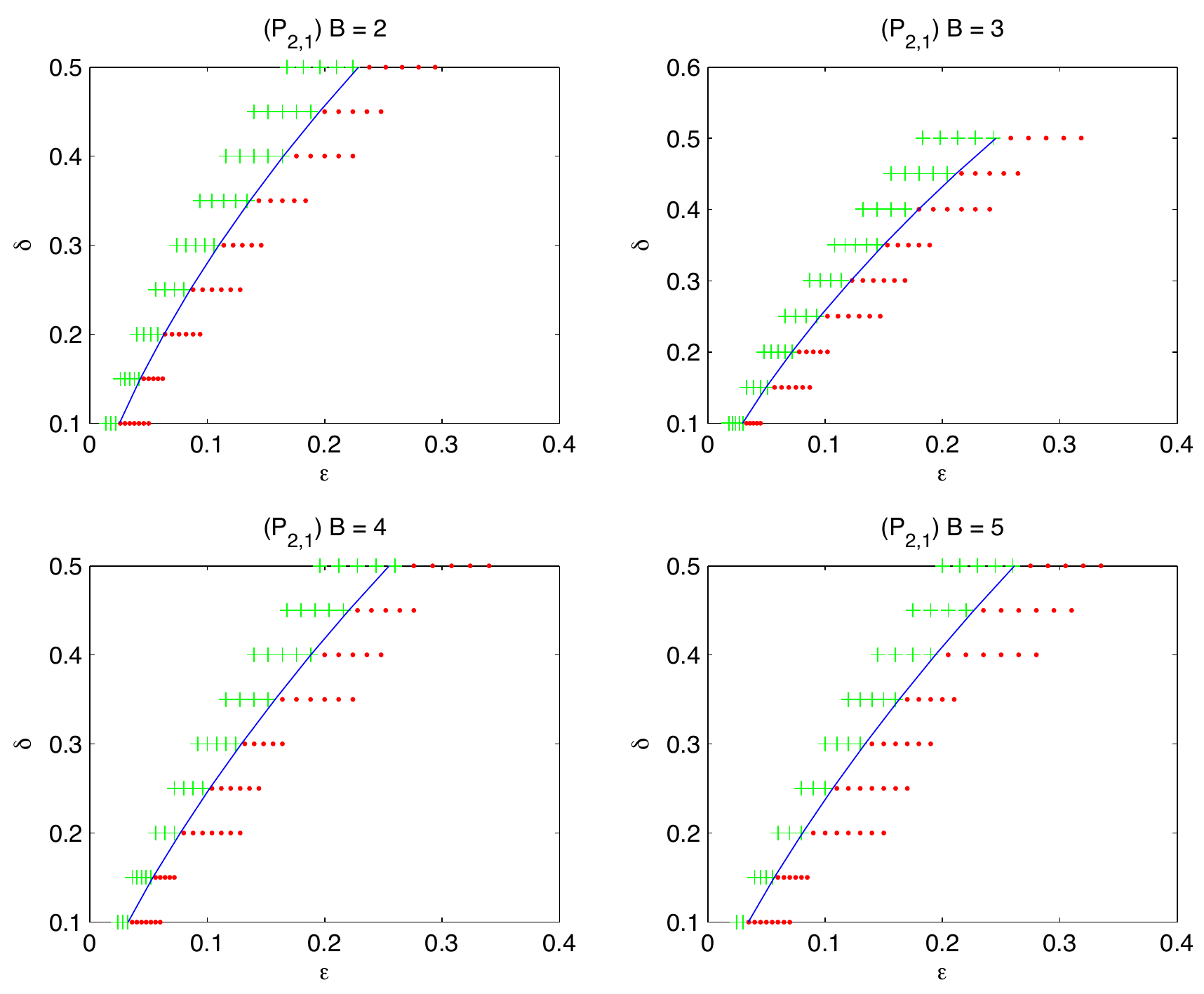}
\caption{Phase transition results for reconstructing block-sparse
  signals using the convex program $\peeTwoOne$. Here  $\delta = n/N$ is the
  undersampling fraction and 
$\eps = k/N$ is the sparsity measure. 
Here the signal dimension is $N=1000$,
$\delta = n/N$ 
is the undersampling fraction, and $\eps$ is the sparsity parameter
(fraction of non-zero entries). 
Red:  probability of correct
recovery smaller than 50\%.
Green:  probability of correct recovery larger than 50\%.  The problem
$(P_{2,1})$ was solved using Sedumi. 
The curve separating success from failure is the minimax MSE curve $\delta = M(\eps | \Blocksoft)$.}
\label{fig:P21PT}
\end{center}
\end{figure}

\begin{figure}
\begin{center}
\includegraphics[height=3in]{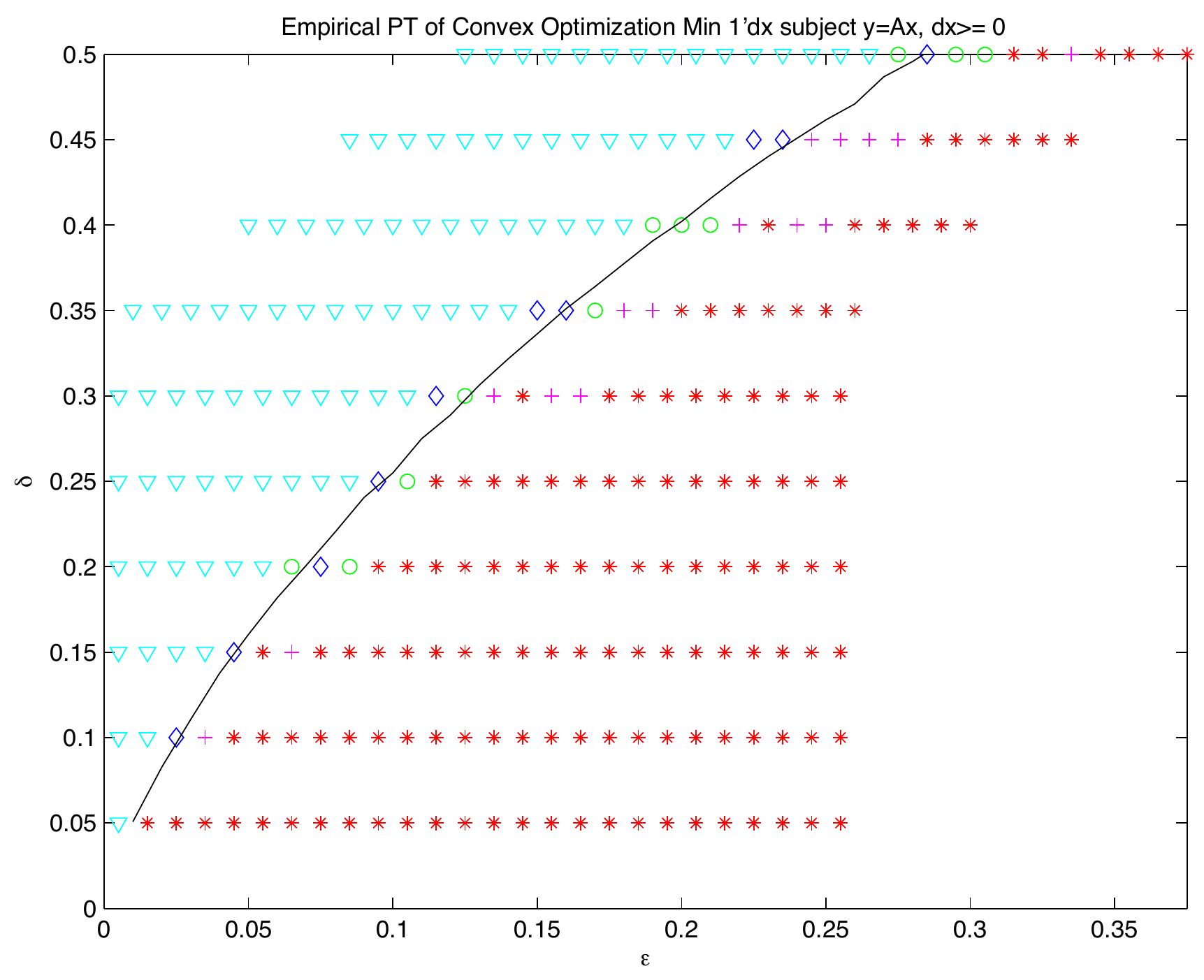}
\caption{Phase transition results for reconstructing monotone signals
  using the convex program $(P_{\rm mono})$.  
Here the signal dimension is $N=200$,
$\delta = n/N$ 
is the undersampling fraction, and $\eps$ is the sparsity parameter
(fraction of change points). 
Red: fraction of
correct recovery smalled than 50\%.
Aqua: fraction of correct recovery larger than 50\%.  Curve: minimax MSE curve $\delta = M(\eps | \Mono)$.
Compare with figure \ref{fig:monoRegAMPPT}.}
\label{fig:PmonoPT}
\end{center}
\end{figure}

Finally, we can compare
TV AMP to the following convex optimization problem:
\begin{eqnarray*}
(P_{TV}) \quad  \mbox{minimize}&& \|x \|_{TV}\, , \\
  \mbox{ subject to}&& y = A {x} ,
\end{eqnarray*}
where again $\|x\|_{TV}\equiv \sum_{i=1}^{N-1}|\Delta x_i|$.
Figure \ref{fig:PtvPT} verifies that the phase transition occurs
at the location predicted by (\ref{generalPT}), just as with TV AMP.

\begin{figure}
\begin{center}
\includegraphics[height=3in]{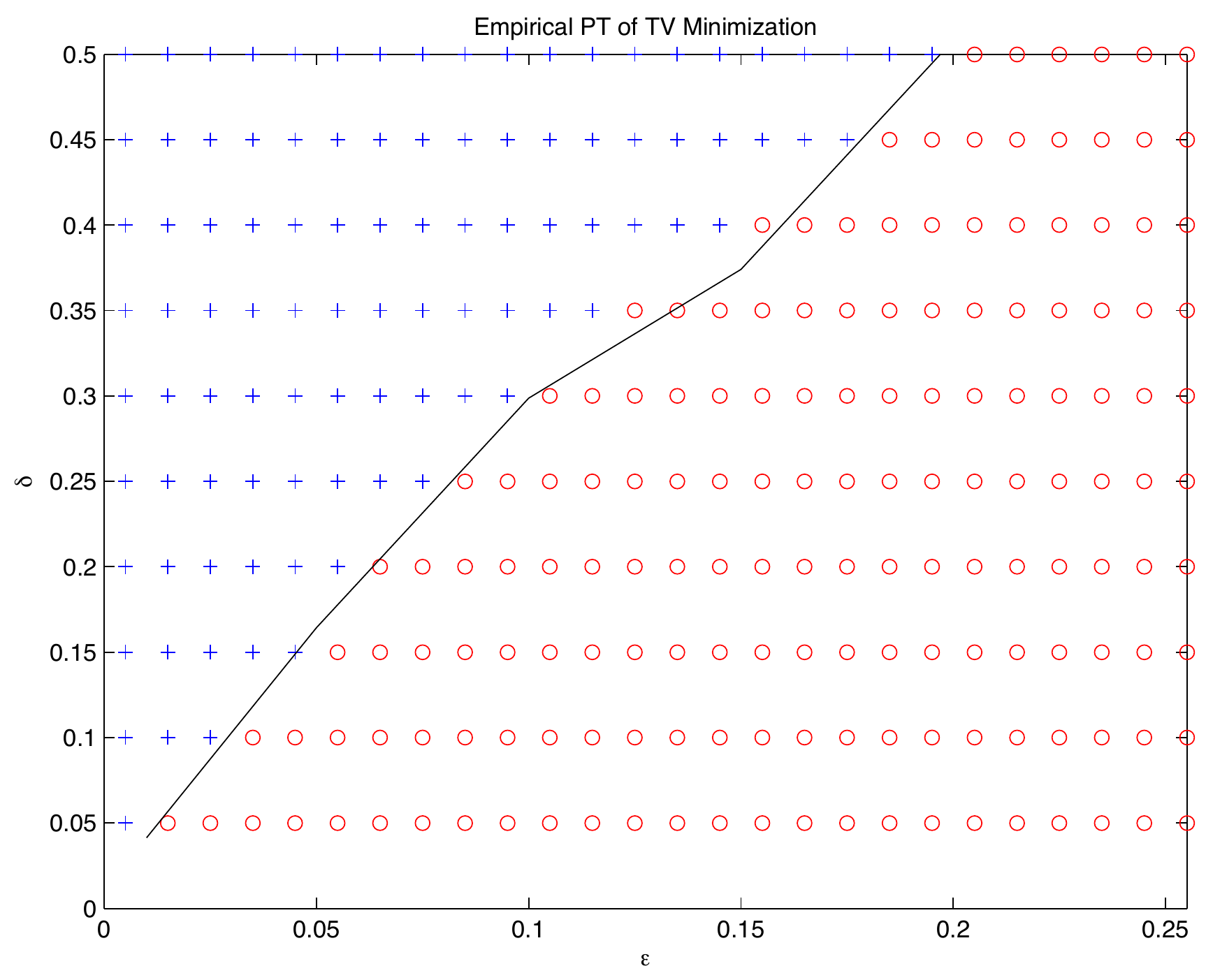}
\caption{Phase transition results for reconstructing piecewise constant signals
  using the convex program $(P_{TV})$. Here, $N=200$,  $\delta = n/N$ is the
  undersampling fraction and
$\eps$ is the fraction of change points.
Here random changepoints were used. 
 Red: fraction of
correct recovery smalled than 50\%.
Aqua: fraction of correct recovery larger than 50\%.  Curve: minimax
MSE curve $\delta = M_{\rm random}(\eps | TV)$.
Compare with  Figure \ref{fig:TVAMPPT}.}
\label{fig:PtvPT}
\end{center}
\end{figure}

\subsection{Interpretation of separable denoisers in terms of  penalized least-squares}

Scalar soft thresholding $\eta^{soft}$ can be interpreted as the
solution of a $\ell_1$-penalized least-squares 
problem
\[
    (P_{J}) \quad  \mbox{minimize}\;\;\;\; \frac{1}{2}(y-x)^2  + 
    J(x) \, ,
\]
where $J(x) =   \lambda |x|$ (here we redefined  $J$ to absorb the factor $\tau$). 

It turns out that a similar interpretation can be given for any scalar 
denoiser (and hence for any separable denoiser as well).
In particular, the minimax and firm thresholding rules
$\eta^{all}$ and $\eta^{firm}(\,\cdot\,;\tau_1,\tau_2)$
are optimizers of  penalization schemes of the same form as above,
with $J(x)$ \emph{non-convex}.

We can construct the penalty $J(\,\cdot\, )$ corresponding to a denoiser $\eta(\,\cdot\,)$
 by observing that $x + J'(x) = y$ at the
solution $x = \eta(y)$. Defining the residual $\Delta(y) = y - \eta(y)$,
and noting that $\eta(y) + \Delta(y) = y$, we obtain that $\Delta(y)$ and
$J'(x)$ are related through the change of variables
\[
     J'(\eta(y)) =  \Delta(y).
\]
In other words, 
\begin{eqnarray*}
J(x) = \int_0^{x} \Delta(\eta^{-1}(u)) \, \de u.
\end{eqnarray*}

Figure \ref{fig:scalarPenalties} shows the implied penalties for minimax firm and
global minimax shrinkage, respectively.  
In both cases, the optimal penalizations are nonconvex, in accord with the commonly-held
belief that nonconvex penalization is ``superior'' to $\ell_1$
penalization \cite{DBLP:conf/icassp/ChartrandY08,DBLP:journals/corr/abs-1004-0402}.
However, the optimal penalization is seemingly very close to the $\ell_1$ penalization,
so that the prejudice in favor of nonconvex penalization
must be re-examined.\footnote{The minimax nonlinearity is somewhat complicated to
implement -- see Appendix \ref{Appendix:minimax}.
Figure \ref{fig:scalarPenalties} suggests to us that among piecewise linear rules,
close to the best phase transition behavior is obtained by a rule which
obeys $\eta(y) < y - \rm{const}$ for large $y$, a possibility that is not explored in the firm family, or in any other conventional proposal.}

\begin{figure}
\begin{center}
\includegraphics[height=4in]{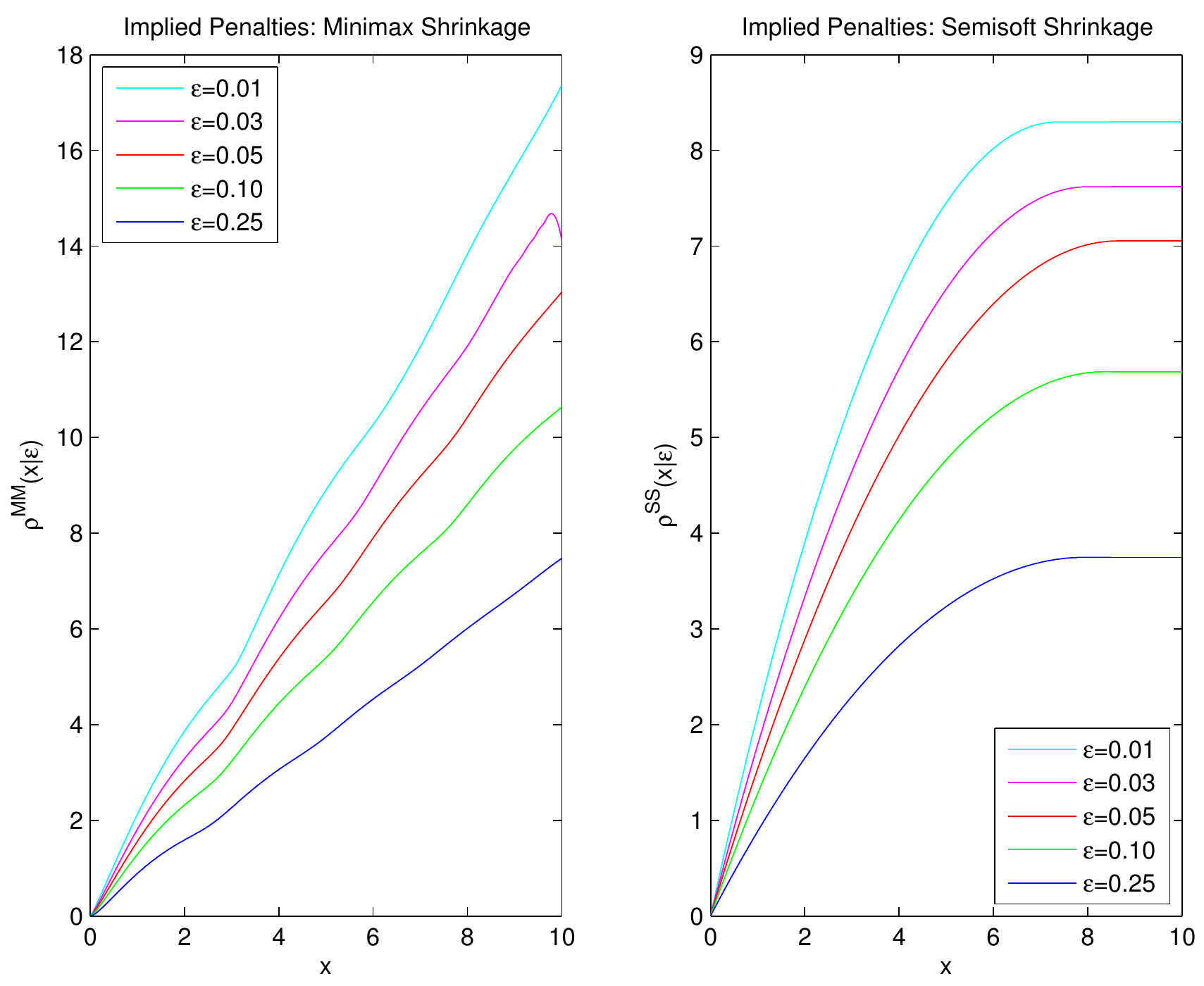}
\caption{Implied penalties  $J(x)$
for two denoisers at $\eps = 0.01$, $0.025$, $0.05$, $0.10$, $0.25$
(from top to bottom). Left:
Globally minimax shrinkage. Right: Minimax firm shrinkage. We plot
only results for positive arguments $x > 0$; results for negative arguments are obtained by symmetry $J(-x) = J(x)$.}
\label{fig:scalarPenalties}
\end{center}
\end{figure}

A similar analysis can be carried out for block separable denoisers
that are covariant under rotation, i.e. if
$\eta(\,\cdot\,;\tau):\bR^B\to\bR^B$ satisfies $\eta(Rx;\tau) =
R\eta(x;\tau)$ for any rotation $R$.  We already mentioned that the
block soft denoiser can be written as
\begin{align*}
\esoft(y;\tau)  =\argmin_{x\in\bR^B}\Big\{\frac{1}{2} \|y-x\|^2_2 +
\tau\|x\|_2\Big\}\, .
\end{align*}
An implied penalty can also be derived for block James-Stein
shrinkage $\eJS$. 
Due to the covariance under rotation, the corresponding penalty only
depends on the modulus of $x\in\reals^B$.
Figure \ref{fig:blockPenalties} shows the implied penalty $J(s|B)$ as a
function of $s=\|x\|_2$.
Namely, the penalty $J(\,\cdot\,|B):\bR_+\to\bR$ is such that, for $y\in\bR^B$,
\[
\eJS(y)  =\argmin   \min_{x\in\bR^B}\Big\{\frac{1}{2} \|y-x\|^2_2 +
J(\|x\|_2|B)\Big\}\, 
\]
coincides with the positive-part James-Stein estimator.
The implicit penalization is again nonconvex.

\begin{figure}
\begin{center}
\includegraphics[height=4in]{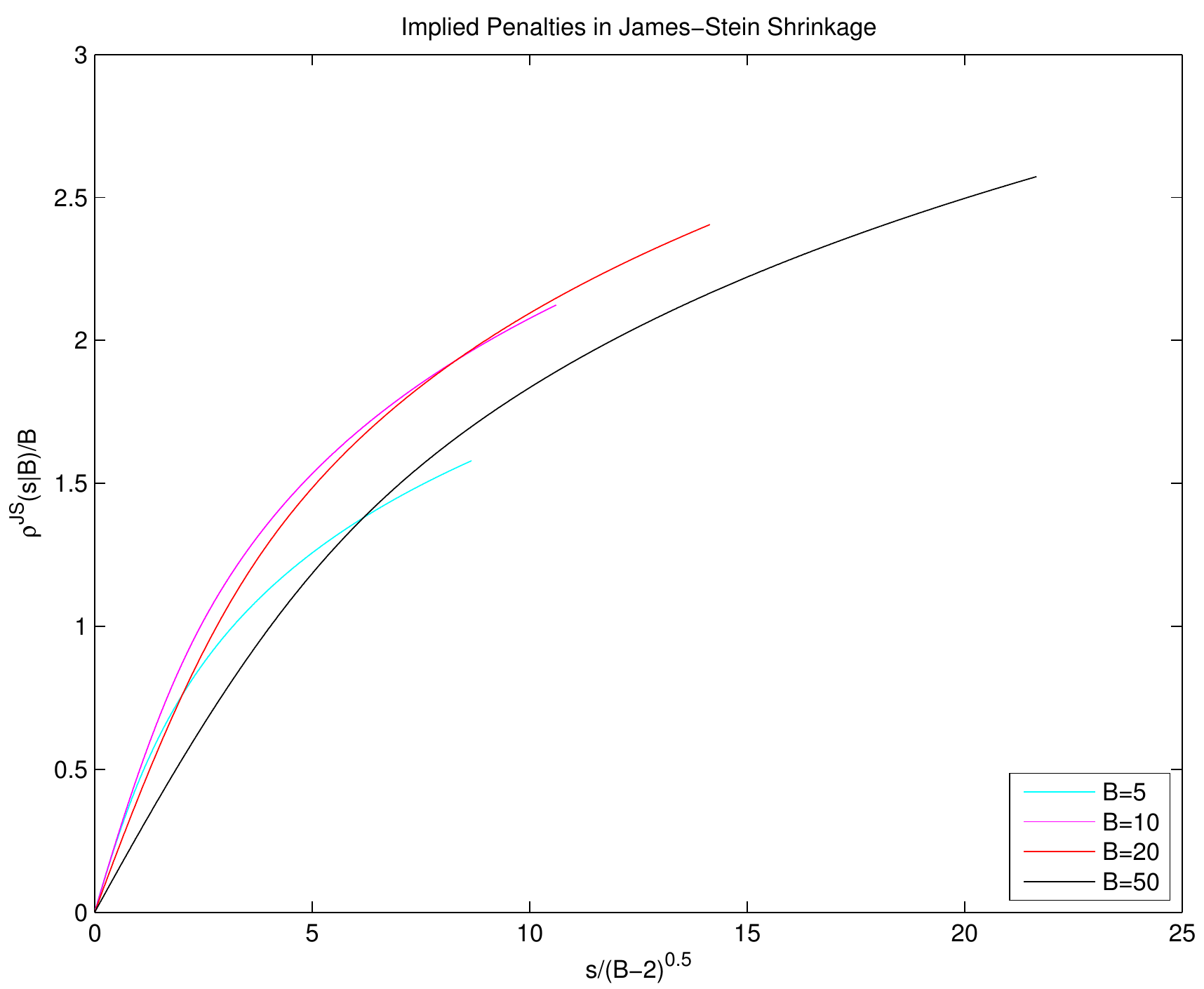}
\caption{Implied penalties $J(s|B)$ for James-Stein positive-part
  estimator for block sizes $B \in \{ 5,10,20,50 \}$. Note that both axes are scaled
with $B$ dependent, so that all plots fit on a common scale. More
precisely, the $y$-axis presents $J(s|B)/B$ and  the $x$-axis presents $s/\sqrt{B-2}$.}
\label{fig:blockPenalties}
\end{center}
\end{figure}

\appendix
\newpage

\section{Classical cases}
\label{Appendix:classical}

The general formula (\ref{generalPT}) was already validated in 
\cite{DMM09} for the following three classical cases:
\begin{enumerate}
\item[$(i)$] Simple sparse signals from the class $\cF_{N,\eps}$
(cf. Eq.~(\ref{eq:SimpleSparse}))
and soft thresholding denoiser.
\item[$(ii)$] Non-negative sparse signals with soft positive thresholding
denoiser, cf. Example \ref{example:Softpos}.
\item[$(iii)$] Box constrained signals with  capping denoiser, cf. Example~\ref{example:Softpos}.
\end{enumerate}
Analytical expressions for the phase transition curves
were computed using state evolution in the Online Supplement to
\cite{DMM09}. We review  the results here
since they provide a useful stepping stone for
understanding more complicate cases (cf., e.g.. Section
\ref{sec:DerivePT}).

Notice that the last of the examples above (box-constrained signals) is not scale invariant,
according to the general definition of Section \ref{sec:SignalModels}.
For non scale-invariant classes, the definitions (\ref{generalM}) and
(\ref{eq:generalMAsymptotic}) are generalized by taking the supremum
over the noise covariance as well. Namely, for a generic class
$\cF_{N,\eps}$, we let 
\begin{eqnarray}
     M(\cF_{N,\eps} | \eta)&=&\inf_{\tau\in\Theta} 
\sup_{\sigma\in \bR_+}\sup_{\nu_N \in \cF_{N,\eps}} 
\frac{1}{N\sigma^2} 
\E_{\nu_N} \Big\{\big\| \bx - \eta(\bx+\sigma\,\bz; \tau\sigma)
\big\|_2^2\Big\} ,\\
M(\eps | \eta) &=&\lim_{N\to\infty}M(\cF_{N,\eps} | \eta)\, .
\end{eqnarray}
It is easy to check that this definition coincides with the earlier one
for scale-invariant classes. We will write $M(\eps|\Cap)$ instead of 
$M(\eps|\ecap)$ for box constrained signals with capping denoiser,
i.e. case $(iii)$ above. 
The results for  case $(iii)$ is further elucidated by comparing it with the following
scale invariant problem:
\begin{enumerate}
\item[$(iv)$]  We consider sparse non-negative signals
  $x_0 \geq 0$, modeled
through the class  $\cF_{N,\eps,+}$, and the simple positive part
denoiser $\eplus(y) \equiv \max(y,0)$.  We denote corresponding minimax
risk by $M(\eps | \Pos)$.
\end{enumerate}

The minimax risk for examples $(i)$, $(ii)$, $(iv)$ can be computed
explicitly. Indeed in both cases we have
\begin{align*}
     M(\eps| \eta) &=  \inf_{\tau\in\bR_+} \sup_{\nu \in
       \cF_{1,\eps}/\nu\in \cF_{1,\eps,+}}   \E\big\{(\eta(X+Z;\tau) - X)^2\big\}  \\
         &=  \inf_{\tau\in\bR_+} \sup_{\nu = (1-\eps) \delta_0 + \eps \delta_\mu}  \E\big\{(\eta(X+Z;\tau) - X)^2 \big\}
\, .
\end{align*}
Here  $X \sim \nu$ and $Z\sim\normal(0,1)$ are independent random
variables. 
The reduction second equality  follows
from th remark that the extremal distributions in $\cF_{1,\eps}$ and $\cF_{1,\eps,+}$ 
are mixtures of  two point distributions. This remark reduces the
calculation of the minimax risk to a simple calculation \cite{DMM09}, whose results
are summarized below.
\begin{proposition}\label{propo:Classical}
The minimax risk for the problems stated above is 
\begin{align}
M(\eps|\Soft) &=
\frac{2\phi(\tau)}{\tau+2\phi(\tau)-2\tau\Phi(-\tau)}\, ,\;\;\;\;\;
\eps =
\frac{2\phi(\tau)-2\tau\Phi(-\tau)}{\tau+2\phi(\tau)-2\tau\Phi(-\tau)}\,
,\\
M(\eps|\SoftPos) & = \frac{\phi(\tau)}{\tau+\phi(\tau)-\tau\Phi(-\tau)}\, ,\;\;\;\;\;
\eps =
\frac{\phi(\tau)-\tau\Phi(-\tau)}{\tau+\phi(\tau)-\tau\Phi(-\tau)}\,
,\\
M(\eps|\Cap) & = \frac{1}{2}(1+\eps)\, ,\\
M(\eps|\Pos) & = \frac{1}{2}(1+\eps)\, .
\end{align}
(The first two are parametric expressions in $\tau$, which is the
optimal threshold at the given sparsity level.)

Also, the AMP phase transition for the noiseless reconstruction
problem is given in all of these cases by $\delta=M(\eps|\eta)$. In
other words AMP succeeds with high probability of
$\delta>M(\eps|\eta)$ and fails with high probability if $\delta<M(\eps|\eta)$.
\end{proposition}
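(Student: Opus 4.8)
The plan is to treat the two assertions separately: first derive the four explicit minimax-risk formulas, then read the AMP phase transition off them.

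\emph{Minimax risks.} For the three scale-invariant cases $\Soft$, $\SoftPos$, $\Pos$ I would first apply Lemma~\ref{lemma:ScalarMSE} to reduce the $N$-dimensional minimax MSE to the scalar problem $M(\eps|\eta)=\inf_{\tau}\sup_{\nu}\E\{(\eta(X+Z;\tau)-X)^2\}$ over $\nu\in\cF_{1,\eps}$ (or $\cF_{1,\eps,+}$ in the non-negative cases); for $\Cap$ the signal class is not scale-invariant, so I instead use the generalized definition of this appendix with the extra $\sup_{\sigma>0}$, which is still separable and factorizes over coordinates. Next I would restrict the inner supremum to two-point priors: on the even (respectively non-negative, box-supported) subset of $\cF_{1,\eps}$ the extreme points are exactly the mixtures $(1-\eps)\delta_0+\eps\delta_\mu$ (or their symmetrizations), and the risk is affine in $\nu$, so the worst case occurs among these. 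Finally I would identify the least-favorable $\mu$: for $\Soft$, $\SoftPos$, $\Pos$ the map $\mu\mapsto\E\{(\eta(\mu+Z;\tau)-\mu)^2\}$ is non-decreasing in $|\mu|$ --- a one-line derivative computation, as in \cite{DJHS92,DJ94a} --- so the supremum sits at $\mu=\infty$, where the risk equals $1+\tau^2$ for the two soft rules and $1$ for $\eplus$; for $\Cap$, since capping is a contraction, $\sigma^{-2}\E\{(\ecap(\mu+\sigma Z)-\mu)^2\}$ is largest as $\sigma\downarrow0$, with limit $1$ at an interior point and $\tfrac12$ at a box corner, so the least-favorable configuration places the $N(1-\eps)$ ``non-interior'' coordinates at $\{0,1\}$.

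Each minimax value is then an explicit one-variable problem. For $\Soft$ one minimizes $F(\tau)=2(1-\eps)[(1+\tau^2)\Phi(-\tau)-\tau\phi(\tau)]+\eps(1+\tau^2)$ over $\tau\ge0$; since $\tfrac{d}{d\tau}[(1+\tau^2)\Phi(-\tau)-\tau\phi(\tau)]=2\tau\Phi(-\tau)-2\phi(\tau)$, the equation $F'(\tau)=0$ rearranges to $\eps=(2\phi(\tau)-2\tau\Phi(-\tau))/(\tau+2\phi(\tau)-2\tau\Phi(-\tau))$, hence $1-\eps=\tau/(\tau+2\phi(\tau)-2\tau\Phi(-\tau))$, and substituting this back collapses $F$ to $2\phi(\tau)/(\tau+2\phi(\tau)-2\tau\Phi(-\tau))$ --- the stated parametric curve. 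The $\SoftPos$ computation is identical with every factor of $2$ halved. For $\Cap$ and $\Pos$ there is no threshold and the objective is just $\eps\cdot1+(1-\eps)\cdot\tfrac12=\tfrac12(1+\eps)$. I would also record that the stationary point is the global minimum (convexity/monotonicity of $F$ in $\tau$) and that the endpoint behaviour $\eps\to0,1$ is as claimed.

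\emph{Phase transition.} For $\Soft$, $\SoftPos$, $\Pos$ --- separable, Lipschitz, scale-invariant denoisers on nested scale-invariant classes --- Theorem~\ref{thm:CSeqDenoising} already gives $\delta_{SE}(\eps|\eta)=M(\eps|\eta)$, and state evolution is a rigorous description of AMP for i.i.d.\ Gaussian $A$ by \cite{BM-MPCS-2011}. To turn $\delta_{SE}$ into a success/failure statement I would use Lemma~\ref{lemma:HFP}(b) together with the starshapedness of the state-evolution map recorded in Lemma~\ref{lemma:Star} (which lists $\Soft$ and $\SoftPos$; for $\Pos$ the worst-case map is linear, as $M(\eps|\Pos)$ is constant in the noise level, so the same conclusion is immediate): for $\delta>M(\eps|\eta)$ state evolution started from the AMP initial condition converges to $0$, so AMP succeeds with high probability, whereas for $\delta<M(\eps|\eta)$ the highest fixed point is positive and, as in the non-convergence discussion of Section~\ref{sec:DerivePT}, state evolution from the AMP initial condition stays bounded away from $0$, so AMP fails with high probability. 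The $\Cap$ case is not scale-invariant, so here I would instead appeal to the modified framework of this appendix together with the direct state-evolution analysis of \cite{DMM09}, which gives the same conclusion $\delta=M(\eps|\Cap)=\tfrac12(1+\eps)$.

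The main obstacle is the third reduction: showing that the least-favorable prior concentrates at $\mu=\infty$ (or at a box corner for $\Cap$) and that $\inf_\tau$ and $\sup_\nu$ may be exchanged, so that the two-point-prior minimax coincides with the one-dimensional minimization done afterwards. Once the monotonicity of the scalar risk in $|\mu|$ and the convexity of $F$ in $\tau$ are in hand --- both short arguments --- the remaining differentiation is routine. For $\Cap$ there is extra bookkeeping: the $\sup_\sigma$ forced by non-scale-invariance, and the verification that it is attained only in the limit $\sigma\downarrow0$.
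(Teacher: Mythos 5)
Your proposal is correct and follows essentially the same route as the paper: reduction to two-point extremal priors, the least-favorable limit $\mu\to\infty$, the one-variable calculus yielding the parametric curves, and the identification $\delta=\delta_{SE}(\eps|\eta)=M(\eps|\eta)$ via Theorem~\ref{thm:CSeqDenoising} and Lemmas~\ref{lemma:HFP} and~\ref{lemma:Star} (with the non-scale-invariant $\Cap$ case handled separately via \cite{DMM09}). The only difference is one of completeness: the paper delegates the explicit optimization to \cite{DMM09}, whereas you carry it out, and your computation of $F'(\tau)$ and the back-substitution giving $2\phi(\tau)/(\tau+2\phi(\tau)-2\tau\Phi(-\tau))$ checks out.
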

As mentioned above, the calculation of the minimax risk is a calculus
exercise, and follows the same lines as in \cite{DMM09}. This coincide
with the AMP threshold by the general analysis of Section
\ref{sec:DerivePT}
for cases $(i)$, $(ii)$, $(iv)$. For the non-scale invariant case, we
refer, once more, to \cite{DMM09}.

Notice that problem $(iii)$ and $(iv)$ have the same minimax
risk. This identity mirrors a result in  \cite{DoTa10b} that
characterizes  the phase transition threshold for reconstructing
$x_0\in \cS_N\subseteq\bR^N$ from noiseless linear measurements
$y=Ax_0$, with $\cS_N=[0,1]^N$ or $\cS_N=\bR_+^N$. If a simple
feasibility linear program is used (namely, find any $x\in\cS_N$ with
$y=Ax$), then the undersampling threshold for both problems is given
by $\delta =(1+\eps)/2$.

\section{Calculation of minimax MSE}
\label{Appendix:minimax}

In this Appendix we describe the calculation of the global minimax
risk over the class $\cF\equiv \cF_{1,\eps}$, as defined per
Eq.~(\ref{eq:AllMMAX}). In particular, we will explain how the values
in Table \ref{table:MMSEscalar} and Figure \ref{fig:MMSEscalar} for
$M(\eps|\Minimax)$ have
been computed.

Throughout  this section we let $\MSE(\nu,\eta) \equiv \E\{(\eta(X+Z)-X)^2\}$
where $X \sim \nu$ and $Z\sim\normal(0,1)$ are independent random
variables.
From standard minimax theory \cite{JohnstoneBook}, and using an
identity attributed to Brown, we  have, for  $\cF$ convex and
weakly compact
\begin{eqnarray}
 \inf_\eta \sup_{\nu \in \cF} \MSE(\nu,\eta) =   \sup_{\nu \in \cF}
 \inf_{\eta}\MSE(\nu,\eta) = \sup_{\nu \in \cF}\MSE(\nu,\eta^{\nu}) =
 1 -\inf_{\nu \in \cF} I ( \gamma \star \nu) .
\label{eq:BrownIdentity}
\end{eqnarray}
Here $\eta^{\nu}$ is he posterior mean estimator for prior $\nu$, 
$\gamma$ is the Gaussian measure $\gamma(\de x) = \phi(x)\de x$,
$\phi(x) = e^{-x^2/2}/\sqrt{2\pi}$, $\gamma \star \nu$ denotes the
convolution of measures, and $I$ denotes the Fisher information. For a probability measure $\nu_f(\de x) =
f(x)\de x$, with density $f$ with respect to the Lebesgue measure
this is defined as
\[
     I(\nu_f) = \int \frac{ (f'(x))^2}{f(x)}\, \de x\, .
\]

Further, if $\cF$ is convex and weakly compact, the set of probability
measures $\{\gamma\star\nu :\; \nu\in\cF\}$ is also convex and weakly
compact. If follows from \cite[Theorem 4]{HuberMinimax} that the $\inf$ in
Eq.~(\ref{eq:BrownIdentity}) is achieved at a unique point $\nu =
\nu^*$. Hence
\begin{eqnarray}
M(\eps|\Minimax) = 1-\min_{\nu \in \cF} I ( \gamma \star \nu) =
1-I(\gamma\star\nu^*)\equiv 1-I^*\, ,
\end{eqnarray}
where we defined $I^*=I^*(\eps) = I(\gamma\star\nu^*)$.
The unique minimizer $\nu^*$ is known as the\emph{least favorable}
distribution. The minimax optimal denoiser (achieving the
$\inf$ over $\eta$ in Eq.~(\ref{eq:BrownIdentity})) is the posterior
expectation with respect to the prior $\nu^*$.

Bickel and Collins  \cite[Theorem 1]{BiCo83},
prove that, under suitable assumption on the class $\cF$, the least favorable  distribution is a mixture
of point masses
\[
    \nu^* = \sum_{i=-\infty}^\infty  \alpha_i \delta_{ \mu_i},
\]
where  $\sum_i \alpha_i = 1$, $\alpha_i > 0$ and the sequence
$\{\mu_i\}_{i\in\bZ}$ has no accumulation points except, possibly, at $\pm\infty$.
As mentioned above, the minimax denoiser is the posterior expectation
associate to
the prior $\nu^*$. By Tweedie's formula, this takes the form
\[
   \eta^*(y) = y - \psi^*(y),
\]
where  $\psi^*$ is the so-called score function
\[
    \psi^*(y) = -\frac{\de\phantom{y}}{\de y} \log{ f^*(y) },
\]
and $f^*$ is the density of  $\nu^* \star \gamma$.

Focusing now specifically on the class  $\cF = \cF_{1,\eps}$.
This case is covered by the general theory of \cite{BiCo83}, and
corresponds to their example $(ii)$. 
Without loss of generality we can assume that the $\mu_i$ are monotone
increasing, with $\mu_{-i}=\mu_i$,  and that $\mu_0 = 0$, with $\alpha_0 = (1-\eps)$.
A conjecture of Mallows \cite{Mallows78} states that in fact we may take
\[
    \mu_i = c \cdot i \qquad  \alpha_i = \frac{1}{2} (1-\eps) (1-\lambda) \lambda^{i-1},
    \qquad  i\in\bZ\setminus\{ 0\}\, .
\]
In other words,  the conjectured least favorable distribution has the form of a two-sided geometric distribution
on a scaled copy of $\bZ$. 
While the conjecture has not been proved, Mallows \cite{Mallows78}
provided an argument (based on an analogous problem in robust
estimation \cite{HuberMinimax}) that suggesting that indeed it
captures the correct tail behavior.

For estimating the minimax risk numerically, we chose a large parameter $K$ and assume
a generalized ``Mallows form'' for $|i| > K$. More precisely,
  we assume an equispaced grid, and 
geometrically decaying weights. This is a little more general than what Mallows proposed, having 3 total
degrees of freedom (spacing, total weight and rate of decay), rather than two.
For $-K\le i\le K$, we  allow the parameters $\alpha_i$ and $\mu_i$ to
vary freely. 
In this way we obtained a parametric  family $\nu_\theta$ of probability distributions,
with parameter $\theta = ((\alpha_i)_{i\in[K]} , (\mu_i)_{i\in[K]}, c_0,c_1,\lambda)$, with
\[
    \nu_\theta = (1-\eps) \delta_0 + \frac{\eps}{2} \cdot \sum_{k=1}^K \alpha_i
    \cdot ( \delta_{\mu_i} + \delta_{-\mu_i})+ 
\frac{c_0}{2}\sum_{k > K} \lambda^k (\delta_{c_1 k} + \delta_{-c_1 k}).
\]
Define
\begin{equation}
     I^+ \equiv \min \{ I(\gamma \star \nu_{\theta}) : \theta \in \bR^{2K+3} \}.\label{eq:ParametricMinimax}
\end{equation}
The quantity $I^+=I^+(\eps)$ can be estimated numerically, and provides an upper
bound on $I^*$. 
We used  up to $K=50$ and checked that the resulting $I^+$ is insensitive to
this choice. Notice that choice of the Mallows form for $i>K$ is
immaterial for two reasons: $(i)$ As a consequence \cite{BiCo83} and
of the weak continuity of Fisher information, $I^+$ should be
insensitive to the tail behavior of the distribution $F$; $(ii)$ We
are only using it to derive an upper bound.

In order to get a lower bound on $I^*$,
we use Huber's minimax theorem \cite{HuberMinimax,HuberBook}, which
implies that, for any $\psi:\reals\to\reals$ differentiable in measure, 
\begin{equation} 
\label{saddlepoint}
     I(\gamma \star \nu^*) \geq  \sup_g J(\psi,g)  \equiv \sup_g
\frac{(\int
       \psi' g_\psi)^2} {\int \psi^2 g_\psi }\, ,
\end{equation}
where in the supremum $g$ ranges over all densities of probability
measures  $\gamma \star \nu$ with $\nu \in \cF_{1,\eps}$.

Let $\nu^+$ denote the probability measure corresponding to the optimum of the parametric optimization 
(\ref{eq:ParametricMinimax}).
Denote by $ \psi^+$ denote the corresponding score function 
\[
    \psi^+(y) = -\frac{\de\phantom{y}}{\de y} \log{ f^+(y) },
\]
where $f^+ = \nu^+ \star \gamma$. 
This corresponds to a denoiser $\eta^+(y) = y-\psi^+(y)$.
Let $g^+$ denote a maximizing 
density $g$ for $J(\psi^+,g)$.
By Huber's theory, this can be chosen to be two-point mixture
$(1-\eps) \delta_0+ \eps \delta_{\mu^+}$
where $\mu^+$ is chosen to
achieve the worst case value on the right side of (\ref{saddlepoint}) and set $I_- =I_-(\eps)= J(\psi^+,g^+)$.

We have the bounds $I_- \leq I^* \leq I^+$. Numerically, we compute integrals and
extrema over fine grids with at least $100$ samples per unit of range, getting not $I^+$ and $I_-$ but
instead numerical approximations $\widetilde{I}^+$ and $\widetilde{I}_-$. 
Table \ref{table:minInfo} presents some information about numerical approximation 
results, which may help the reader assess its accuracy for small
values of $K$.
Some minimizing distributions obtained in this way are shown in Figure \ref{fig:LFDist};
the mass points $(\mu_i)$ are displayed in Figure
\ref{fig:LFMassPoints}.

\begin{table}
\begin{center}
\begin{tabular}{|l|l|l|l|}
\hline
 $\eps$ & $K$ & $1-I^+(\eps)$ & $\max_{\nu}\MSE^*(\nu,\eta^+)$ \\
 \hline
 0.01 & 2 & 0.0533 &0.0533 \\
 0.05 & 2 & 0.1841 & 0.1841\\
 0.10 & 2 &0.3026 &0.3026 \\
 0.15 & 2 & 0.3983&0.3984 \\
 0.20 & 4 & 0.4802&0.4803\\
 0.25 & 4 & 0.5516&0.5516 \\
 \hline
\end{tabular}
\caption{\emph{Numerical values of  the lower bound on minimax $\MSE$
    ($1-I^+(\eps)$) and of upper bound on minimax} $\MSE$   ($1-I_-(\eps)$).
The upper bound is the numerically computed worst-case
MSE of the corresponding denoiser $\eta^+$. In all of these cases the bounds agree except possibly in the fourth decimal place. }
\label{table:minInfo}
\end{center}
\end{table}

 \begin{figure}
\begin{center}
\includegraphics[height=3in]{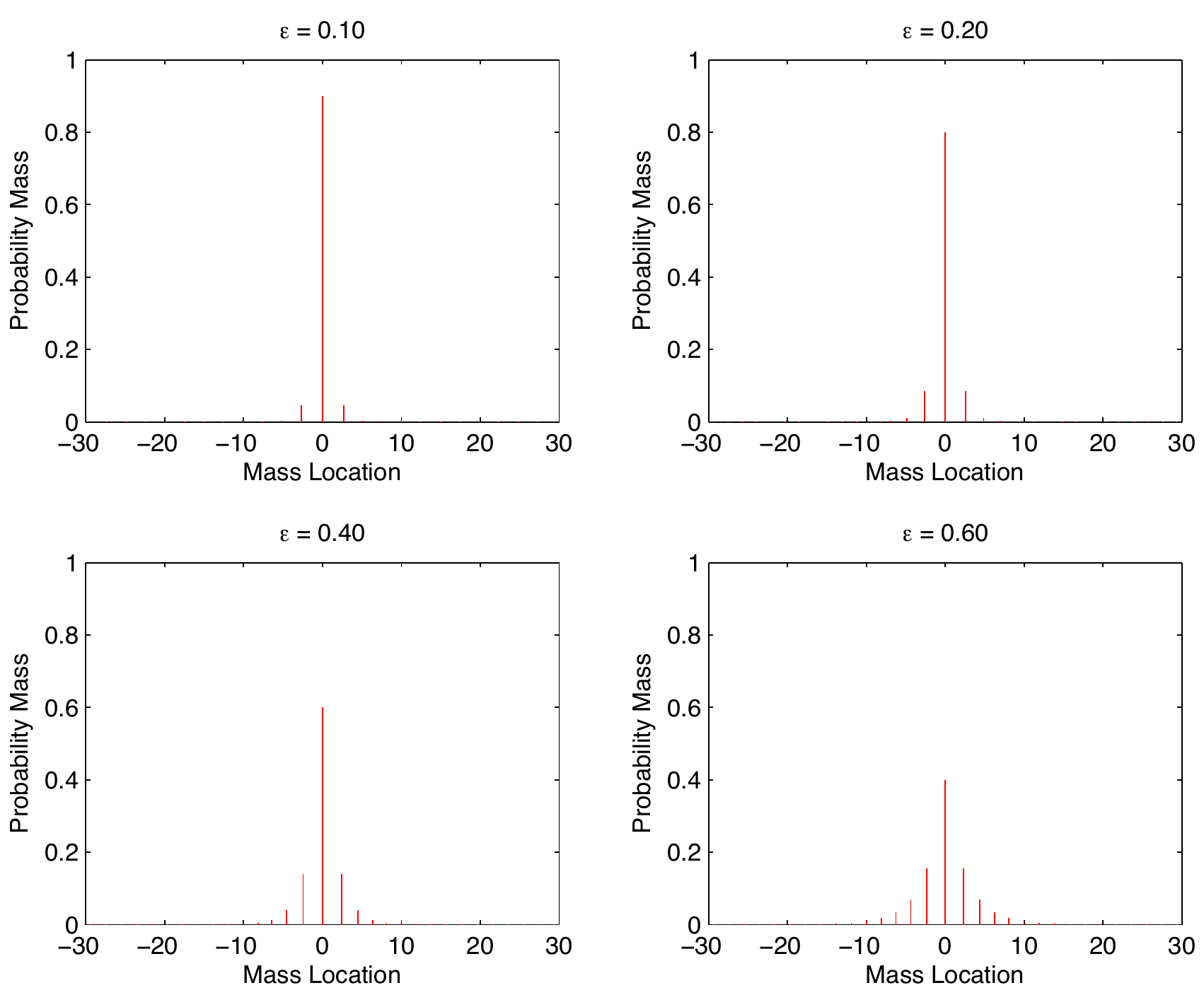}
\caption{Least-favorable distributions over the class $\cF_{1,\eps}$
  obtained by numerically minimizing Fisher
information.}
\label{fig:LFDist}
\end{center}
\end{figure}

 \begin{figure}
\begin{center}
\includegraphics[height=3in]{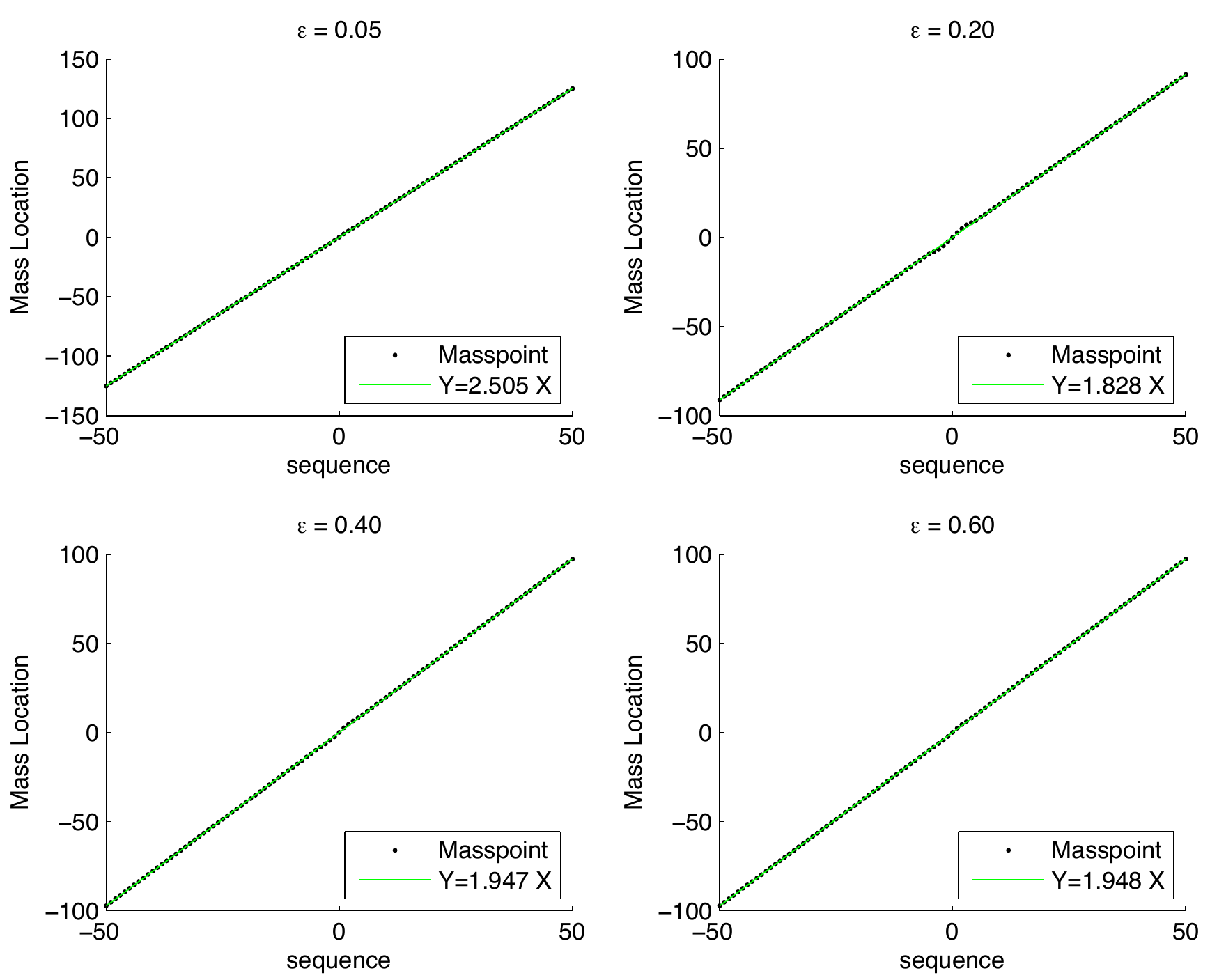}
\caption{Locations of mass points in approximate least-favorable distribution
$(\mu_i)$ versus index $i$, at various $\eps$,
 Note the approximate linearity; least-squares linear fit has slope parameter given in legend.
 The `wiggles' away from the linear behavior occur near the origin.}
\label{fig:LFMassPoints}
\end{center}
\end{figure}

Our numerical
results, showing that  $\widetilde{I}_- \approx \widetilde{I}_+$
allows us to infer that the Mallows form is
approximately correct. The denoiser that we actually apply in our estimation
and compressed sensing experiments is:
\[
    \eta^+(y) = y - \psi^+(y) \, .
\]

\section{Convergence properties of AMP}
\label{Appendix:Convergence}

\begin{figure}
\begin{center}
\includegraphics[height=2.3in]{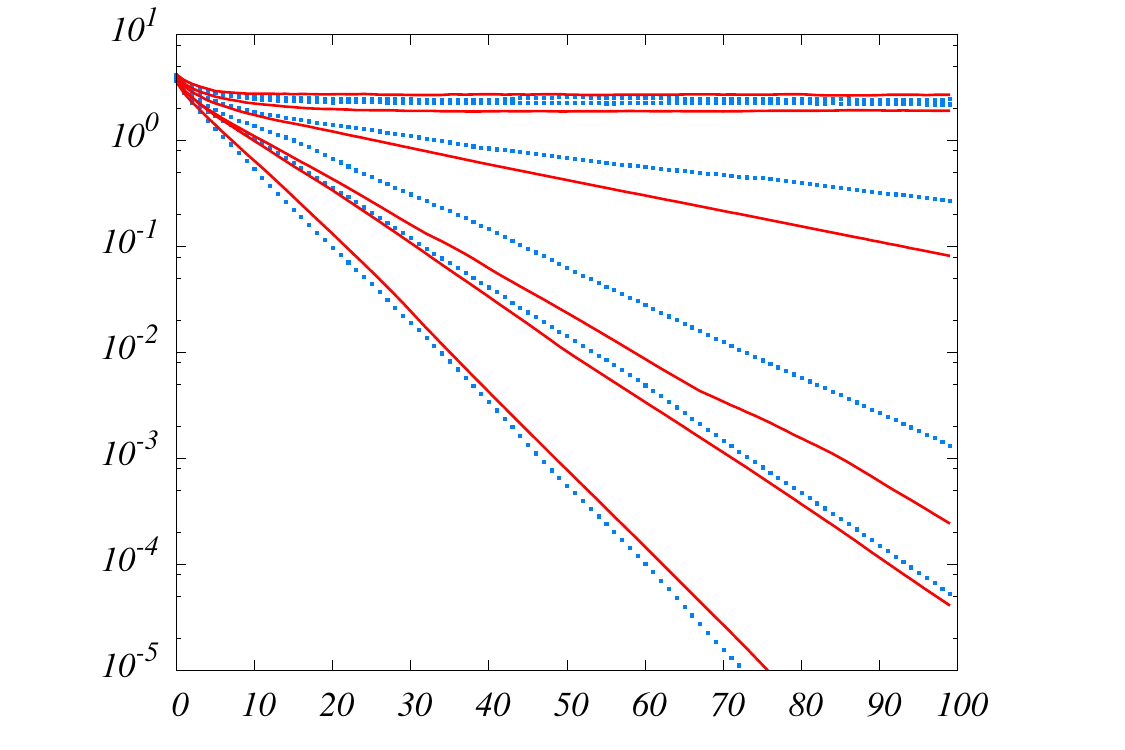}
\put(-140,-5){$t$}
\put(-210,85){{\small $\MSE$}}
\includegraphics[height=2.3in]{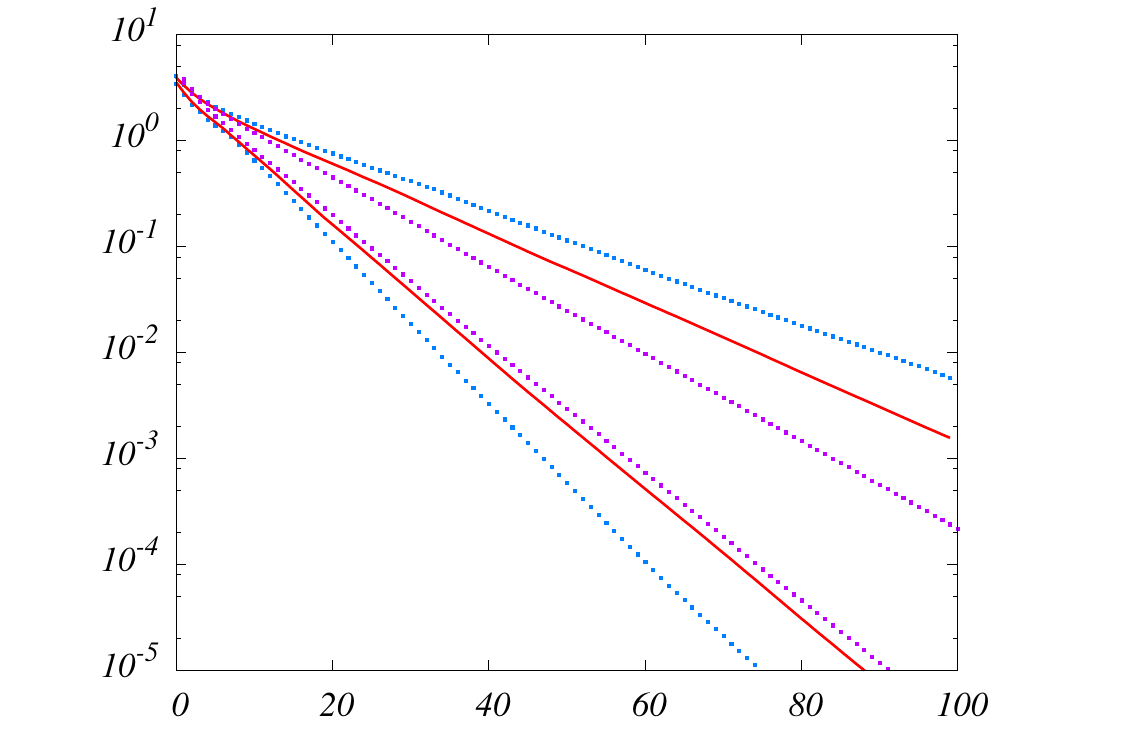}
\put(-140,-5){$t$}
\put(-210,85){{\small $\MSE$}}
\caption{Convergence of AMP. Here we consider soft thresholding AMP
  for simple sparse signals with $\eps=0.05$ (predicted threshold
  $M(\eps|\eta)\approx 0.2039$). Left frame: Median $\MSE$ after $t$
  iterations over $50$ independent realizations of the sensing
  matrix. Blue dotted lines correspond to $n=2000$ and red continuous
  lines to $n=4000$. From top to bottom: $\delta = 0.19$, $0.20$,
  $0.21$, $0.22$, $0.23$, $0.24$. Right frame: $25\%$ (lower tree
  curves) and $75\%$ (upper tree curves)
  percentile curves for the case $\delta=0.23$. The external (dotted),
middle (continuous) and internal (dotted) pair refer, respectively, to
$n=2000$, $4000$, and $8000$.}
\label{fig:Convergence}
\end{center}
\end{figure}
Throughout the paper we checked convergence of AMP by imposing
a threshold on the reconstruction accuracy and the number of
iterations. For instance, in the case of separable denoisers,
cf. Section \ref{sec-PTExperiments-scalar}, we declared the
reconstruction successful if
\begin{align*}
     \frac{ \| \hx^t - x_0\|_2^2}{ \|x_0\|_2^2} < \gamma 
\end{align*}
for a certain choice of $t$ and $\gamma>0$. In particular, the results
presented correspond to $\gamma = 0.01$ and $t=300$. 

It is natural to ask how to choose $\gamma$ and $t$, and whether different choices of $\gamma$ and $t$
would lead to significantly different estimates of the phase
transition boundary. It turns out that the empirical phase transition
is fairly insensitive to these choices for the cases considered here, as soon as $t\gtrsim 100$ is sufficiently
large and $\gamma\lesssim 0.05$. 
This insensitivity is related to the
convergence properties of AMP. Indeed both theory and empirical
evidence \cite{DMM09,DMM-NSPT-11} indicate exponential  convergence.
Namely,  for all $\delta>M(\eps|\eta)$, there exist dimension
independent constants $C=C(\delta)$, $b=b(\delta)>0$ such that, with high probability,
\begin{eqnarray*}
\| \hx^t - x_0\|_2^2\le C \|x_0\|_2^2\, e^{-bt}\, .
\end{eqnarray*}
On the other hand, for $\delta<M(\eps|\eta)$, $\| \hx^t - x_0\|_2^2\ge
c(\delta)n$, with high probability.

Figure \ref{fig:Convergence} presents data that confirm this behavior
(further numerical evidence can be found in \cite{DMM09}). The data
refer to simple sparse signals with $\eps = 0.05$, and soft
thresholding denoising. The curves correspond to several values of $\delta$ close to the
predicted phase transition location $\delta=M(\eps|\eta) \approx
0.0239$. Notice the clear exponential decay of the error for
$\delta>M(\eps|\eta)$ and a large constant mean square error for
$\delta<M(\eps|\eta)$.

If the phase transition has to be determined with relative accuracy
$\Delta$, this suggests the rule of thumb
$\exp\{-b(\delta_*+\Delta)t\}\le \gamma$ and $c(\delta_*-\Delta)\ge
\gamma$. We verified that these conditions are satisfied by our
choices of $t$ and $\gamma$.

\section{Calculation of minimax MSE for block soft thresholding}
\label{Appendix:block}

\subsection{Proof of Lemma \ref{lemma:MSEBlockSoft}}

The argument is analogous to the one for the scalar case (corresponding to $B=1$) treated
in Appendix \ref{Appendix:classical}. For $\mu\in\bR^B$ and
$\tau\in\bR_+$, define the risk at $\mu$ as
\begin{align}
R(\mu;\tau)\equiv \E\big\{\|\esoft(\mu+\bx;\tau)-\mu\|_2^2\big\}\, .\label{eq:BlockRisk}
\end{align}
where $\bx \sim \nu$ and $\bz\sim\normal(0,\id_{B\times B})$.
Since the two point mixtures are the extremal distributions in
$\cF_{\eps,B}$, we have
\begin{align*}
     M_B(\eps| \Blocksoft)&=  \inf_{\tau\in\bR_+} \sup_{\nu \in \cF_{B,\eps}}
     \frac{1}{B} \, \E \big\{\| \eta(\bx+\bz;\tau) - \bx \|^2 \big\} \\
         &=  \inf_{\tau\in\bR_+} \sup_{\begin{array}{c}\nu = (1-\eps) \delta_0 + \eps
           \delta_\mu\\ \mu\in\bR^B\end{array}}   \frac{1}{B} \E\big\{ \| \eta(\bx+\bz;\tau) - \bz \|^2 \big\}\\
         & = \frac{1}{B}\inf_{\tau\in\bR_+} \sup_{\mu\in\reals^B} \Big\{ (1-\eps)
         R(0;\tau)+ \eps\, R(\mu;\tau)\Big\}\, .
\end{align*}
By the definition of chi-square distribution, we have
\begin{align*}
  R(0;\tau) = \E \big\{\| \eta(Z;\tau) \| ^2\big\}
  =    \E \big\{( \sqrt{X_{B}} - \tau)_+^2 \big\}\, .
\end{align*}
It follows from the invariance of the distribution of $\bz$ under
rotations that $R(\mu;\tau)$ only depends on $\mu$ through its norm
$\|\mu\|$. Further, as proved in Appendix \ref{lemma:PropertiesOfBlock} 
$R(\mu;\tau)$ is increasing in $\|\mu\|$, and
\begin{align*}
R(\infty;\tau) \equiv \lim_{\|\mu\|\to\infty} R(\mu;\tau) = B+\tau^2\,
.
\end{align*}
We therefore obtain
\begin{align}
M_B(\eps| \Blocksoft) = \frac{1}{B}\inf_{\tau\in\bR_+} \Big\{ (1-\eps) \E \big\{( \sqrt{X_{B}} - \tau)_+^2 \big\}
         + \eps\, (B+\tau^2)\Big\}\, .\label{eq:BlockProofLast}
\end{align}
At this point the problem is reduced to a calculus exercise.

\subsection{Proof of Lemma  \ref{lem:blockMSElimit}}
\label{app:LargeBlocks}

In this appendix we consider the asymptotics for large block size $B\to\infty$.
It is easy to show that  the minimax threshold level $\tau$ must be of
order $\sqrt{B}$. 
By a compactness argument, we can assume $\tau=c \,\sqrt{B}$ for some
$c$ to be determined.
Define the risk as in Eq.~(\ref{eq:BlockRisk}) (note that this depends
implicitely on $B$) and the normalized risk as $\tR_B(\mu;\tau) =
R(\mu;\tau)/B$.
We claim that 
\begin{align}
\lim_{B\to\infty} \tR_B(0;c\sqrt{B}) &= (1-c)_+^2\, ,\label{eq:ClaimBlock1}\\
\lim_{B\to\infty} \tR_B(\infty;c\sqrt{B}) &= 1+c^2\, . \label{eq:ClaimBlock2}
\end{align}
Assuming these claim to hold, we have, by Eq.~(\ref{eq:BlockProofLast}),
\begin{align*}
\lim_{B \goto \infty} M_B(\eps;\Blocksoft)
    =  (1-\eps) \, (1-c)_+^2 + \eps  \, ( 1 + c^2)  . 
\end{align*}
Calculus shows that the minimum is achieved at  $c^* = 1-\eps$, whence
\[
   \lim_{B \goto \infty} M_B(\eps|\Blocksoft) =  (1-\eps) \eps^2   +
   \eps \, (1 + (1-\eps)^2) \, ,
 \]
which coincides with the statement of Lemma \ref{lem:blockMSElimit}.

In order to complete the proof, we have to prove claims
(\ref{eq:ClaimBlock1}) and (\ref{eq:ClaimBlock2}).
The second one is immediate because of Lemma
\ref{lemma:PropertiesOfBlock}  that implies indeed
$\tR_B(\infty;c\sqrt{B}) = 1+c^2$.

The limit (\ref{eq:ClaimBlock1})
follows instead from the central limit theorem.
Indeed, let $X_{B}$ denote a central chi-squared with $B$ degrees of freedom.
Its square root is the norm of a standard Gaussian random vector in dimension $B$,
and concentrates around $\sqrt{B}$. 
Indeed by the central limit theorem $\sqrt{2} (\sqrt{X_{B}}
-\sqrt{B})   \Rightarrow_D  \normal(0,1)$ as 
$B \goto \infty$.
Therefore, we have
\begin{align*}
\lim_{B\to\infty} \tR_B(0;c\sqrt{B})  &=\lim_{B\to\infty} \frac{1}{B}\E \big\{( \sqrt{X_{B}} - c\sqrt{B})_+^2 \big\}\\
                         & = \lim_{B\to\infty} \frac{1}{B}\E\big\{(
                         \sqrt{B} +\frac{1}{\sqrt{2B}}  Z -
                          c\sqrt{B})_+^2\big\}\\
  & = \lim_{B\to\infty} \E\Big\{\Big(
                         1-c +\frac{1}{\sqrt{2B}}  Z\Big)_+^2\Big\}
\end{align*}
and the latter converges to $(1-c)_+^2$ by dominated convergence.

%
%*****************************************
%
\section{Non-convergence of state evolution}
\label{app:Sub}
 
In this appendix we show non-convergence of state evolution  for firm-thresholding AMP and globally
minimax AMP below $\delta_{SE}(\eps)$. More precisely, we show that,
for $\delta<\delta_{SE}(\eps)$ there exists a probability distribution
$\nu$ such that the state evolution sequence $\{m_t\}_{t\ge 0}$, $m_0=\E_{\nu}\{X^2\}$ does not converge
to $0$.
 
We begin by developing  a lower bound that holds for all the denoisers
$\eta$ studied in this paper. For notational simplicity, we consider
in fact separable denoisers, but the result is easily see to hold in
general, provided that the signal class is scale-invariant.
 
Let  $\nu_{\eps,\mu}$ denote the mixture $(1-\eps)\delta_0 + \eps \delta_\mu$
where $\mu\in \bR_+\cup\{\infty\}$ can be either finite or infinite.
Define the risk at $\mu$ as
\begin{align*}
R(\mu;\tau) = \E_{\nu_{\eps,\mu}}\Big\{\big[\eta(X+Z;\tau)-X\big]^2\Big\}\, .
\end{align*}
where $X\sim\nu_{\eps,\mu}$ is independent of $Z\sim\normal(0,1)$.
 \begin{definition}
We say that the risk function $R$ is \emph{super-quadratic}  on $[0,\mu_*)$ if, for any
$\mu\in [0,\mu_*)$,
\begin{equation} \label{eq:superquadratic}
     R(\mu_*)  \geq  \left( \frac{\mu}{\mu_*} \right)^2 \cdot R(\mu_*;\tau).
\end{equation}
\end{definition}
 
The next result shows that superquadratic behavior of the risk
function implies non convergence of state evolution
for signal distribution $\nu_{\ve,\mu}$.
\begin{lemma} {\bf (State Evolution Non-Convergence)}
 \label{lem:SEnonConverge}
Fix any $\tau\in \Theta$, and assume there exists $\mu_0>0$ such that:
$(i)$ The risk function $R(\mu; \tau)$ is superquadratic
on $[0,\mu_*)$,  $(ii)$ $R(\mu_*) \ge \delta$, and $(iii)$ $\delta \geq
\eps$. 

Let $\Psi(m) = \Psi(m;\delta,\tau,\nu_{\eps,\mu})$. Then 
there is  $m_{\rm fp} > 0$ such that
\[
       m\ge \mfp\Rightarrow \Psi(m) \geq \mfp \, .
\]
In particular, if $\mfp < m_0 = \E_{\nu_{\eps,\mu}} \|X\|^2=\eps\mu^2$ then, for all $t\ge 0$,
\begin{equation} \label{nonconvergence}
      m_t \geq \mfp > 0 \, . 
\end{equation}
\end{lemma}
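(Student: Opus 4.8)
The plan is to unwind the definition of the state evolution mapping $\Psi$ for the specific two-point prior $\nu_{\eps,\mu}$ and use the superquadratic hypothesis to produce a fixed point $\mfp$ from which the iteration cannot escape downward. First I would observe that, because $\nu_{\eps,\mu}$ is a product-type sequence (the coordinates are i.i.d.\ draws from $(1-\eps)\delta_0+\eps\delta_\mu$) and $\eta$ is separable and satisfies the scaling relation~(\ref{eq:ScalingRelation}), the limit in~(\ref{eq:SEMapDef}) exists and equals a scalar expectation: writing $\sigma^2=m/\delta$,
\[
\Psi(m) = \sigma^2\,\E_{\nu_{\eps,\mu}}\Big\{\big[\eta\big(\tfrac{X}{\sigma}+Z;\tau\big)-\tfrac{X}{\sigma}\big]^2\Big\}
        = \frac{m}{\delta}\, R\!\left(\frac{\mu}{\sigma};\tau\right),
\]
after recognizing that the scaled prior puts mass $\eps$ on $\mu/\sigma$ and $1-\eps$ on $0$, i.e.\ it is $\nu_{\eps,\mu/\sigma}$, and $R(\cdot;\tau)$ as defined just above the Lemma is exactly this per-coordinate risk. (I would double-check that the normalization $m_0=\lim n^{-1}\E\|\bx\|_2^2$ used in the definition matches $\E X^2 = \eps\mu^2$; with $n/N\to\delta$ there may be a factor of $\delta$, but since we only need a lower bound on $m_t$ this is harmless, and I would simply use $m_0 \ge \eps\mu^2$ or adjust constants accordingly.)

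The core of the argument is then a monotonicity-in-disguise observation. Fix the threshold $\mfp \equiv \mu^2 R(\mu_*;\tau)/\mu_*^2$ — equivalently the unique value of $m$ at which the scaled signal $\mu/\sigma = \mu/\sqrt{m/\delta}$ equals $\mu_*$ is $m = \delta\mu^2/\mu_*^2$; I would set $\mfp$ to be this latter value actually, and reconcile the two in the write-up. The key step: for $m \le \mfp$ we have $\sigma = \sqrt{m/\delta} \le \mu/\mu_*$, hence the scaled signal satisfies $\mu/\sigma \ge \mu_*$. Now I need a monotonicity of $R$ in the signal strength; $R(\mu;\tau)$ being superquadratic on $[0,\mu_*)$ combined with knowledge that $R$ is nondecreasing (or at least that $R$ at any point $\ge \mu_*$ is $\ge R(\mu_*;\tau)$, which for the concrete separable denoisers here follows from the starshaped/monotone risk facts in Lemma~\ref{lemma:Star} and the discussion of risk properties in Appendix~\ref{Appendix:RiskProperties}) gives $R(\mu/\sigma;\tau) \ge R(\mu_*;\tau) \ge \delta$ by hypothesis $(ii)$. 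Therefore, for $m \le \mfp$,
\[
\Psi(m) = \frac{m}{\delta}\,R\!\left(\frac{\mu}{\sigma};\tau\right) \ge \frac{m}{\delta}\cdot\delta = m.
\]
This says $\Psi(m)\ge m$ for all $m \le \mfp$. To get the clean statement "$m \ge \mfp \Rightarrow \Psi(m)\ge\mfp$" I would instead argue: if $m \ge \mfp$ then either $\Psi(m) \ge \mfp$ and we are done, or $\Psi(m) < \mfp \le m$, but applying the superquadratic bound at the point $\mu/\sigma < \mu_*$ gives $R(\mu/\sigma;\tau) \ge (\sigma_*/\sigma)^2 R(\mu_*;\tau)$ where $\sigma_* = \mu/\mu_*$ is the noise level corresponding to $\mfp$; plugging in, $\Psi(m) = (m/\delta)R(\mu/\sigma;\tau) \ge (m/\delta)(\mu_*^2 m/(\delta^{-1}\cdot\,\ldots))$ — this produces $\Psi(m) \ge \mfp$ after the algebra, since $(\mu/\sigma)^{-2} = \sigma^2/\mu^2 = m/(\delta\mu^2)$ and $R(\mu_*;\tau)\ge\delta$. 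I would carry out this short computation carefully; it is the heart of the Lemma.

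Finally, the "in particular" clause is an immediate induction: $m_0 > \mfp$ by assumption, and the displayed implication gives $m_1 = \Psi(m_0) \ge \mfp$, hence $m_1 \ge \mfp$, and inductively $m_t \ge \mfp$ for all $t$ (using that once $m_t \ge \mfp$ the implication applies again at $m = m_t$). In particular $\liminf_t m_t \ge \mfp > 0$, so state evolution does not converge to $0$. The main obstacle I anticipate is purely bookkeeping: pinning down the exact definition of $\mfp$ in terms of $\mu,\mu_*,\delta$ and chasing the $\sigma \leftrightarrow m$ substitution through the superquadratic inequality without sign errors, plus confirming that the relevant risk-monotonicity ("$R$ at strengths $\ge \mu_*$ dominates $R(\mu_*;\tau)$") is available for the denoisers in question — for firm thresholding and the global minimax denoiser this should follow from the same style of argument used in Appendix~\ref{Appendix:RiskProperties}, since the risk of a scale-invariant shrinker under $\nu_{\eps,\mu}$ is a limit of monotone-in-$\mu$ quantities. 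No deep new ideas are needed beyond the change of variables and the hypothesis.
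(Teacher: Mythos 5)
Your proposal is correct and follows the paper's proof essentially verbatim: the scaling relation reduces $\Psi(m)$ to $(m/\delta)\,R(\mu/\sigma;\tau)$, the threshold $\mfp$ is chosen so that $m\ge\mfp$ places the rescaled signal $\mu/\sigma$ in $[0,\mu_*]$, and the superquadratic bound together with hypothesis $(ii)$, $R(\mu_*;\tau)\ge\delta$, yields $\Psi(m)\ge\mfp$, after which the induction on $t$ is immediate. The only superfluous part is your first ``key step'' deriving $\Psi(m)\ge m$ for $m\le\mfp$ from monotonicity of $R$ (which is not among the lemma's hypotheses and is not needed); the ``instead argue'' computation you then sketch is exactly the paper's argument, and the factor-of-$\delta$ bookkeeping you flag is an inconsistency in the paper's own normalization of $\Psi$ that does not affect the substance.
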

\begin{proof}
By the scaling relation  (\ref{eq:ScalingRelation}).
\begin{eqnarray*}
\Psi(m) \equiv \frac{1}{\delta}\E_{\nu_{\eps,\mu}} \Big\{
\big\| X - \eta(X+\sqrt{m/\delta} Z;  \tau, \sqrt{m/\delta})
\big\|_2^2\Big\} 
=\frac{m}{\delta}\, R\big(\frac{\mu}{\sqrt{m}};\tau\big).
\end{eqnarray*}
Define $\mfp = (\mu/\mu_*)^2$ and assume $m\ge \mfp$.
This implies $\mu/\sqrt{m}\le \mu_*$, and, since $R$ is superquadratic by
assumption,
\begin{eqnarray*}
\Psi(m)
\ge\frac{m}{\delta}\Big(\frac{\mu}{\sqrt{m}}\,\frac{1}{\mu_*}\Big)^2\,R(\mu_*;\tau)
= \Big(\frac{\mu}{\mu_*}\Big)^2=\mfp\, ,
\end{eqnarray*}
which concludes the proof.
\end{proof}

For firm and minimax denoisers $\eta\in\{\efirm, \eall\}$, we took a fine grid of $\eps$ and 
at each fixed $\eps$ evaluated $R(\mu;\tau)$ on a fine grid of $\mu$, checking the inequality
(\ref{eq:superquadratic}). In the case of firm thresholding we used
the minimax threshold values $\tau = \tau^*(\eps)$.
We further used the least favorable $\mu$, $\mu = \mu^*(\eps)$.
 Sample results are
presented in Figures \ref{fig:VerifySuperQuadFirm}
and \ref{fig:VerifySuperQuadMinimax}.

These computations show that the risk function $R(\mu;\tau^*(\eps))$ is
superquadratic on $(0,\mu^*(\eps))$.

\begin{figure}
\begin{center}
 \includegraphics[height=3in]{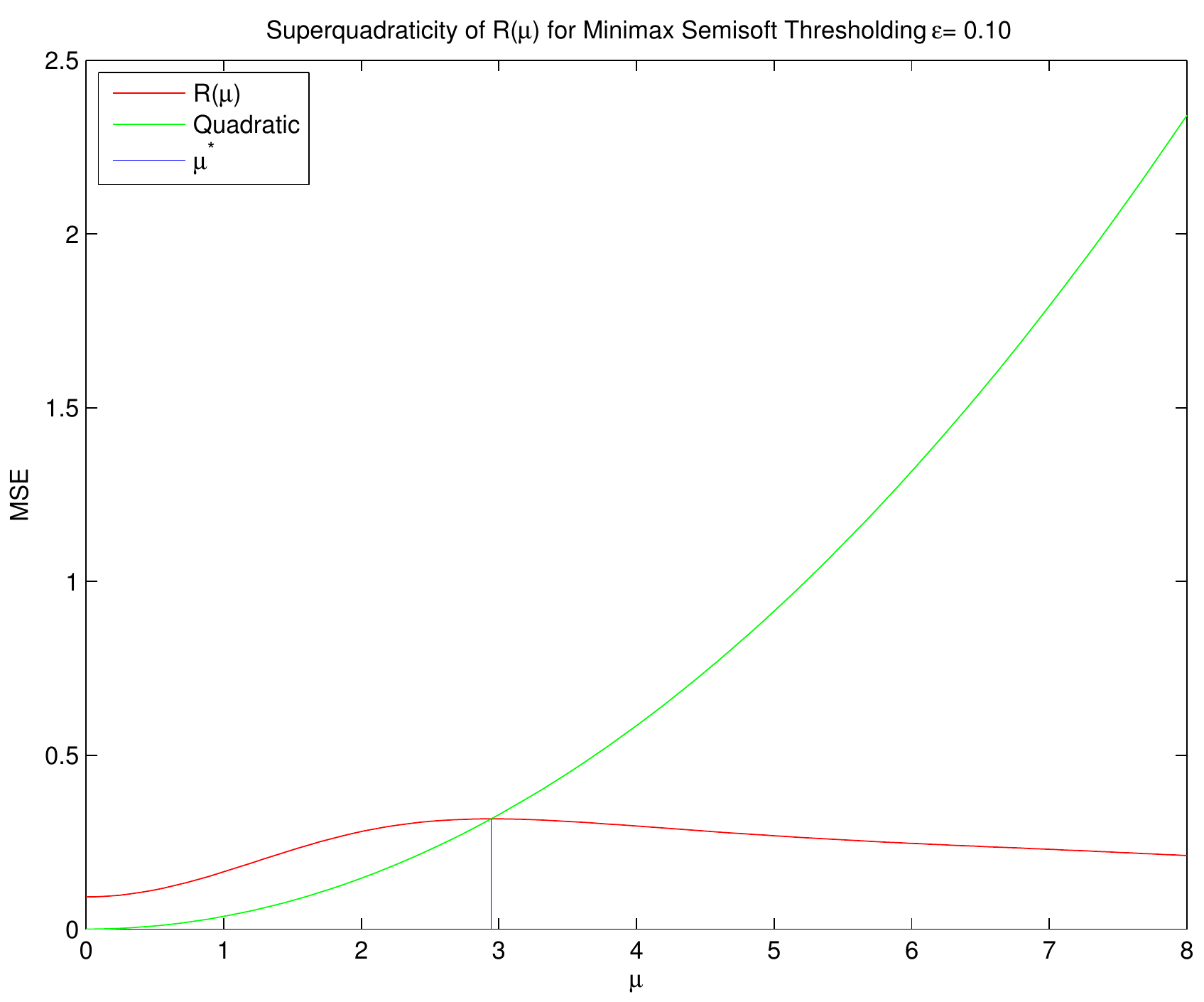}
\caption{Numerical verification of superquadraticity for firm shrinkage.
The green parabola depicts the quadratic function $R(\mu^*;\tau^*)  \cdot (\mu/\mu^*)^2$.
The vertical line depicts the position of $\mu^*$. The red curve depicts $R(\mu;\tau^*)$.
The fact that the latter is above the parabola throughout the interval
$[0,\mu^*)$ verifies the superquadratic condition (\ref{eq:superquadratic}).}
\label{fig:VerifySuperQuadFirm}
\end{center}
\end{figure}

\begin{figure}
\begin{center}
 \includegraphics[height=3in]{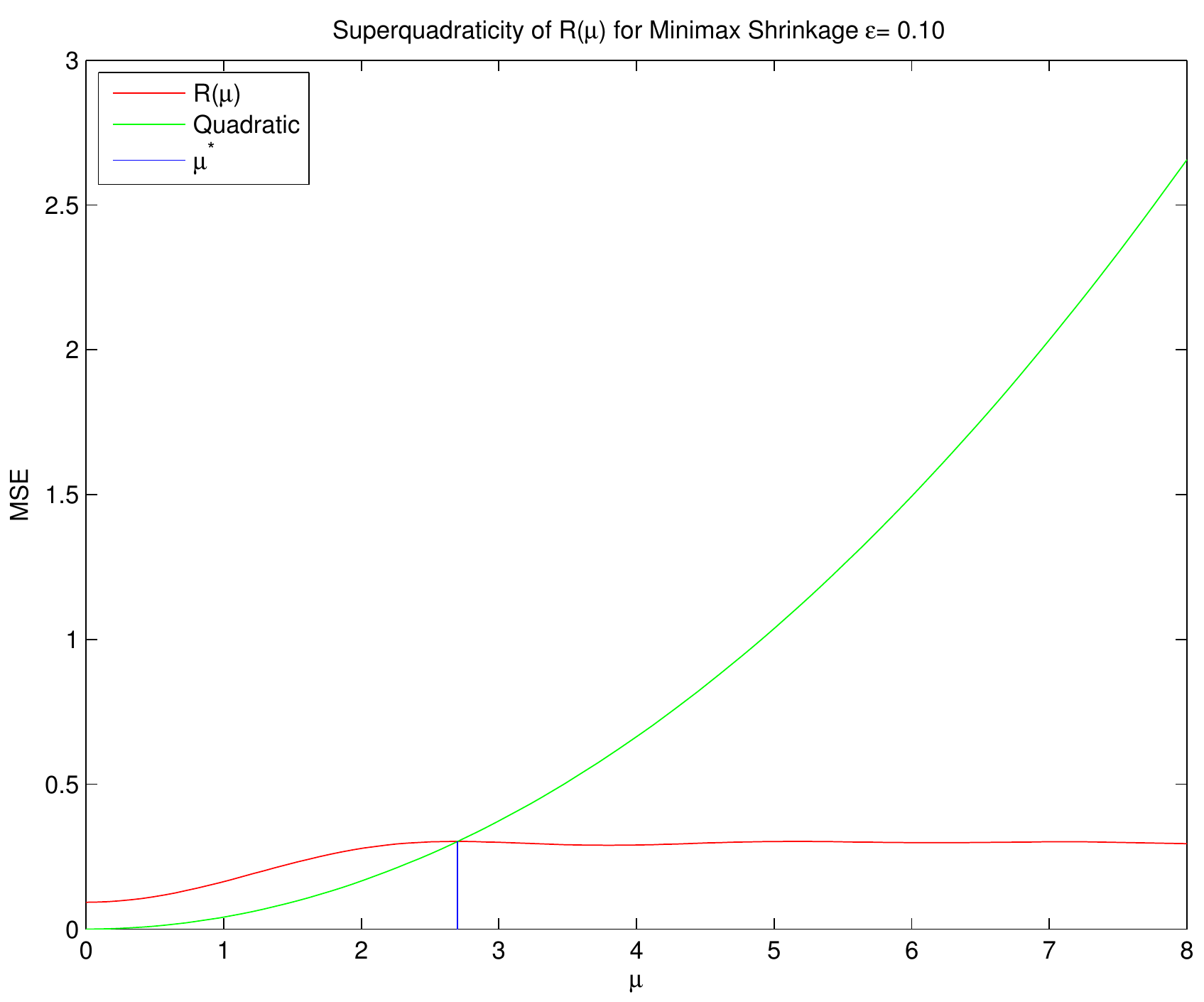}
\caption{Numerical verification of superquadraticity for minimax
  shrinkage.
The green parabola depicts the quadratic function $R(\mu^*;\tau^*)  \cdot (\mu/\mu^*)^2$.
The vertical line depicts the position of $\mu^*$. The red curve depicts $R(\mu;\tau^*)$.
The fact that the latter is above the parabola throughout the interval
$[0,\mu^*)$ verifies the superquadratic condition
(\ref{eq:superquadratic}).
}
\label{fig:VerifySuperQuadMinimax}
\end{center}
\end{figure}

%
%*****************************************
%
\section{Monotone regression}
\label{Appendix:monotone}

\subsection{Proof of Lemma \ref{lemma:MonotoneRiskAtInfty}, part $(b)$}

The risk at $\mu$ can be equivalently be written as $R_N(t\mu) =
\E\{\|v(t\mu,\bz)\|_2^2\}$, $\bz\sim\normal(0,\id_{N\times N})$,
where $v = v(t\mu,\bz)$ solves the optimization problem
\begin{eqnarray*}
\mbox{ minimize } &&\| v-z \|_2^2\, ,\\
  \mbox{ subject to }  && \Delta v_i \ge -t\Delta\mu_i, \mbox{ for all } i \in\{1,\dots,N-1\}\, ,
\end{eqnarray*}
where we recall that $\Delta v_i = v_{i+1}-v_i$ is the discrete derivative.
(Of course this problem does not provide an algorithm since it
requires to know $\Delta\mu_i$, but here we are interested in it only for
analysis purposes.) 

As $t\to \infty$, all the constraints $\Delta v_i \geq
-t\Delta\mu_i$ for which $\Delta\mu_i>0$ become irrelevant.
We are naturally led to defining the following \emph{localized} monotone regression problem
\begin{eqnarray*}
                (Q_{lmono})   \quad \mbox{ minimize } &&\|  v - z
                \|_2^2\, ,\\
  \mbox{ subject to }  && \Delta v_i \geq 0, \mbox{ for all } i \in
  I_0\equiv [N-1]\setminus I_+\, .
\end{eqnarray*}
(Here and below omit the dependence of $I_0$, $I_+$ on $\mu$.)
 Let $\eta^{lmono}(z ; I_0)$ denote the solution of $(Q_{lmono})$ with
data $z, I_0$. The above discussion implies that, for
$z=\bZ\sim\normal(0,\id_{N\times N})$,
 we have the following limit in probability:
 \[
               \lim_{t\to\infty } \big(\eta^{mono}(t\mu +  \bz) - \mu\big) = \eta^{lmono}(\bz ; I_0).
 \]
As a consequence (and using the fact that the higher moments of 
 $(\eta^{mono}(t\mu +  \bz) - \mu)$ are bounded uniformly in $t$)
defining $R^{loc} (I_0) = \E\{ \|  \eta^{lmono}(\bz ; I_0) \|_2^2\}$, we
 have
 \[
                \lim_{t \goto \infty} R_N( t\mu )  = R^{loc} ( I_0 ) .
 \]
 In  words, the risk `at infinity' of monotone regression is simply
 given by the local risk.

In order to conclude the proof, it is sufficient to show that $R^{loc}
( I_0 )$ is given by the right-hand side of Eq.~(\ref{eq:MonotoneRiskAtInfty}).
It is  easy to check that  the problem $(Q_{lmono})$ separates into
 independent optimization problem for each $J_k$. Namely, for $i\in
 J_k$, $v_i$ can be found by solving the following smaller problem
\begin{eqnarray*}
      (Q_{lmono,J_k})  \quad \mbox{ minimize}&& \|v_{J_k} -
      z_{J_k}\|_2^2, ,\\
      \mbox{ subject to }&& \Delta v_i   \geq 0, \qquad \mbox{ if both
      } i, i+1\in J_k ,
\end{eqnarray*}
 where $v_{J_k} = (v_{i_k+1},\dots, v_{i_{k+1}})$ and, if the segment $J_k$ is a singleton, the constraint
 disappears. Let $v_{J_k}(z_{J_k})$ be the solution of this
 problem. Then, 
\begin{align*}
R^{loc}( I_0 )=\sum_{k=0}^{K}\E\{\|v_{J_k}(\bz_{J_k})\|^2\}\, .
\end{align*}
On the other hand $(Q_{lmono,J_k})$ is simply the monotone regression
problem on the segment $J_k$ with data $y_{J_k} = 0+z_{J_k}$, and
hence $\E\{\|v_{J_k}(\bz_{J_k})\|^2\} = r(|J_k|)$, which implies
immediately the desired claim.

\subsection{Proof of Lemma \ref{lemma:MonoRiskAtZero}}
\label{sec:ProofMonoRiskAtZero}

Throughout the proof, we denote by $x = x(\bz) = \emono(z=\bz)$ the solution of the monotone
regression problem
\begin{eqnarray*}
\mbox{ minimize } &&\| x-z \|_2^2\, ,\\
  \mbox{ subject to }  && x_{i} \le x_{i+1}\, , \;\;\;\mbox{ for all } i \in\{1,\dots,N-1\}\, ,
\end{eqnarray*}
with data $z=\bz\sim\normal(0,\id_{N\times N})$. We then have $r(N) =
\E\{\|x(\bz)\|_2^2\}$.

Clearly $r(1)=1$ since in this case there is no monotonicity
constraint and the solution of the regression problem is simply
$x=z$. In order to prove Eq.~(\ref{eq:MonotoneLog}), let $k\in I_+(x)$ 
(i.e. an increase point: $x_k<x_{k+1}$) and define $r^{(k)}\in \bR^N$
by letting $r^{(k)}_i = 1_{\{i>k\}}$, and $l^{(k)}\in \bR^N$ by letting $l^{(k)}_i = 1_{\{i\le k\}}$. Then, $x+\xi\,r^{(k)}
\in\cMono_N$, $x+\xi\, l^{(k)}\in\cMono_N$ for all $\xi$ small enough. Hence, we must have $\|x+\xi\, r^{(k)}-z\|_2^2\ge
\|x-z\|_2^2$, and $\|x+\xi\, l^{(k)}-z\|_2^2\ge \|x-z\|_2^2$ for all
$\xi$ small enough. Expanding to linear order in $\xi$, we conclude
that, for all $k\in I_+(x)$:
\begin{align}
\sum_{i=1}^kx_i = \sum_{i=1}^kz_i\, ,\;\;\;\;\;\;
\sum_{i=k+1}^Nx_i = \sum_{i=k+1}^Nz_i\, . \label{eq:MinCondition}
\end{align}
Further, if $r^{(0)}$ is the all $1$ vector, $x+\xi\, r^{(0)}\in
\cMono_N$ for all $\xi\in\bR$. Minimizing with respect to $\xi$, we get
\begin{align}
\sum_{i=1}^Nx_i = \sum_{i=1}^Nz_i\, .\label{eq:MinCondition2}
\end{align}

Define the events (for $k\in [N-1]$) 
\begin{align*}
\cE_k \equiv \Big\{ k\in I_+(x)\, ;\; x_k\le -\sqrt{\frac{6\log N}{k}}
\; \mbox{ or }\; x_{k+1} \ge \sqrt{\frac{6\log N}{N-k}}\, \Big\}\, .
\end{align*}
By virtue of Eq.~(\ref{eq:MinCondition}), and using the fact that $x$
is monotone, we have
\begin{align}
\prob\{\cE_k\}&\le \prob\Big\{ \sum_{i=1}^kZ_i \le -\sqrt{6k\log
  N}\Big\} + \prob\Big\{ \sum_{i=k+1}^NZ_i \ge \sqrt{6(N-k)\log
  N}\Big\}\nonumber\\
&\le \frac{1}{2}\exp\Big\{-\frac{6k\log N}{2k}\Big\} + \frac{1}{2}\,
\exp\Big\{-\frac{6(N-k)\log N}{2(N-k)}\Big\} = \frac{1}{N^3}\, ,
\end{align}
where we used the fact that, for $Z\sim\normal(0,1)$ and $a\ge 0$,
$\prob\{Z\ge a\}\le (1/2)\, exp(-a^2/2)$.

Next define
\begin{align*}
\cE_{\emptyset} \equiv \Big\{  x_1\ge \sqrt{\frac{6\log N}{N}} \mbox{
  or } x_N\le -\sqrt{\frac{6\log N}{N}} \, \Big\}\, .
\end{align*}
By Eq.~(\ref{eq:MinCondition2}) and monotonicity we have
\begin{align}
\prob\{\cE_{\emptyset}\} \le \prob\Big\{\Big|\sum_{i=1}^NZ_i\Big|\ge
\sqrt{6N\log N}\Big\}\le \frac{1}{N^3}\, .
\end{align}

Finally, consider the event 
\begin{align*}
\cE \equiv \Big\{ -\sqrt{\frac{6\log N}{i}}\le x_i\le \sqrt{\frac{6\log
    N}{N-i+1}} \;\;\mbox{ for all } i\in [N]\Big\}\, .
\end{align*}
It is then easy to check that, letting $\cE^c$ denote the complement
of $\cE$,
\begin{align*}
\cE^c \subseteq \cE_{\emptyset}\cup\cE_1\cup\cdots\cup\cE_{N-1}\, .
\end{align*}
Indeed define, for
$i\in [N]$, $k(i)\equiv \max\{k :\; k<i, \; k\in I_+(x)\}$ with, by
convention, $k(i) = 0$ if the set $\{k<i, \; k\in I_+(x)\}$ is empty.
Then, by definition of $I_+(x)$, 
\begin{align*}
x_i = x_{k(i)+1}\le \sqrt{\frac{6\log N}{N-k(i)}}\le \sqrt{\frac{6\log
    N}{N-i+1}}\, ,
\end{align*}
where the first inequality follows by definition of the events
$\cE_{\emptyset}$, $\cE_1$, \dots $\cE_{N-1}$ and the second since
$i\ge k(i)+1$. The inequality $x_i\ge -\sqrt{(6\log N)/i}$ follows
essentially by the same argument. 
By union bound we therefore obtain 
\begin{align}
\prob\{\cE\}\ge 1 -\frac{1}{N^2}\, .\label{eq:EventMono}
\end{align}

We can therefore bound the risk as follows 
\begin{align*}
r(N) = \E\{\|x(\bz)\|_2^2\} = \E\{\|x(\bz)\|_2^2;\cE\}
+\E\{\|x(\bz)\|_2^2;\cE^c\} \, .
\end{align*}
By definition of $\cE$, we have 
\begin{align}
\E\{\|x(\bz)\|_2^2;\cE\} &\le \sum_{i=1}^{N}\max\Big\{\frac{6\log
  N}{i};\frac{6\log N}{N-i+1}\Big\}\nonumber\\
&\le 12\log N \sum_{i=1}^{\lceil N/2\rceil}\frac{1}{i} \le 12\log
N\big(\log  N+1\big) \, . \label{eq:MonoProofAlmost}
\end{align} 
On the other hand it is easy to see that, defining $Z_{\max} =
\max_{i\in [N]} |Z_i|$, we necessarily have $|x_i(\bZ)|\le Z_{\max}$
for all $i\in [N]$, whence
\begin{align}
\E\{\|x(\bz)\|_2^2;\cE^c\} \le N\E\{Z_{\max}^2;\cE^c\}\le
N\,\E\{Z_{\max}^4\}^{1/2}\prob\{\cE^c\}^{1/2}
\le \E\{Z_{\max}^4\}^{1/2}\, .\label{eq:BoundZmax}
\end{align}
Here the second inequality follows from Cauchy-Schwarz and the last
one from Eq.~(\ref{eq:EventMono}). By union bound, $\prob\{Z_{\max}\ge
z\}\le N\,\prob\{|Z_1|\ge z\}\le N\,\exp\{-z^2/2\}$. Hence
\begin{align*}
\E\{Z_{\max}^4\} & = 4\,\int_0^{\infty}z^3\, \prob\{Z_{\max}\ge
z\}\,\de z\\
&\le   4\,\int_0^{\sqrt{2\log N}}z^3\,
\de z + 4N\, \int^{\infty}_{\sqrt{2\log n}}z^3\, e^{-z^2/2}\,\de z\\
&\le (2\log N)^2 + 4\int_0^{\infty}(\sqrt{2\log N}+u)^3e^{-\sqrt{2\log
    N}\; u}\de u\\
&\le  (2\log N)^2 + 4(2\log N) + 20 \le (4\log N)^2\, ,
\end{align*}
where the last inequality holds for $N\ge 10$. 
Using this bound in conjunction with Eq.~(\ref{eq:MonoProofAlmost}),
we finally get
\begin{align*}
r(N)\le 12(\log N)^2+ 16\log N \le 20 (\log N)^2\, .
\end{align*}

Finally notice that for arbitrary $N$ the simple bound $r(N)\le
N\E\{Z_{\max}^2\}\le 2N[\log N +1]$ can be proved by controlling
$Z_{\max}$ along the same lines as above.

\section{Further computational details}
\label{Appendix:Computation}

We record here a few details that have been omitted from the main text.
\begin{description}
\item[Estimate of the effective noise level.] 
The AMP iteration, cf. Eqs.~(\ref{AMPB}), (\ref{AMPB}),  (\ref{AMPC}),
requires to estimate the variance $\sigma_t^2$ of the effective
observation at time $t$. According to the the general theory of state evolution
\cite{DMM09,BM-MPCS-2011}, the empirical distribution of the
coordinates of $z^t$ is asymptotically Gaussian with mean $0$ and
variance $\sigma_t^2$. 
This motivates the following estimator, first proposed in \cite{DMM09}:
\[
     \hat{\sigma}^t =  \frac{1}{\Phi^{-1}(0.75)}{\rm median}\{\, | z^t|\,
     \} \,  ,
\]
where $\Phi(z)$ is the normal distribution function.
This is known as the $25$\% pseudo-variance in robust estimator and
has the advantage of being insensitive to a small fraction of large outliers.
\item[Computation environment.] Computations were done in partly using
  Matlab, and partly through a Java program .
In the spirit of reproducible research, a suite of Java classes that
allow to repeat our simulations is available through an open code repository
\cite{AssemblaCode}.
\item[Numerical computation of minimax risk.] 
The plots of minimax risk were obtained by evaluating numerically the
expression in the main text. For separable and block-separable
denoisers (with the exception of the global minimax denoiser $\eall$),
the integrals can be expressed in terms of the Gaussian distribution
function or incomplete beta functions. 
For the global minimax denoiser, integration was performed numerically
using the standard MAtlab routines. 
 
Evaluation of the minimax risk required searching the least favorable
distribution among two point mixtures of the form  $(1-\eps) \delta_0
+ \delta_\mu$, in the separable case and
$(-\eps)\delta_0+\eps\delta_{\mu\,e_1}$ in the block separable one. 
Optimization over $\mu\in\bR_+$ was performed by brute force search
over a grid, with recursive refinement of the grid.

 For the non-separable cases, the procedure for approximating the minimax MSE was explained
Section \ref{sec:Mono} and \ref{sec:TV}.
\end{description}
 
 \section{Finite-$N$ scaling and error analysis}
\label{Appendix:NScaling}

The empirical phase transitions observed in this paper admit further analysis,
to verify whether the following expected behavior take place, namely: 
$(a)$ the offsets tend towards zero with increasing $N$;
$(b)$ the steepness of the phase transition increases with increasing $N$.

\subsection{Offsets decay toward zero}

As described in Section \ref{sec-PTExperiments-scalar}, at each fixed value $\eps$ of the sparsity parameter,
we gathered data at several different values of $\delta$, and obtained 
the empirical phase transition parameter $\hPT(N,\eps,\eta)$,
recorded as \emph{offset} from prediction, so that  $\hPT(N,\eps,\eta) = 0$ means that
the  50\% success location  fitted to the $\eps$-fixed, $\delta$-varying dataset
is exactly  at the predicted location $M(\eps | \eta)$.  Our analysis gave not only
the empirical phase transition location, but also its formal standard error
$SE(\hPT)$. (Here we make explicit the dependence of $\hPT$ on the
specific denoiser.)

We fit a linear model to the dataset including all the phase
transition results for soft and firm thresholding. We considered several exponents $\gamma$
that might be describing the decay of the offset with increasing $N$:
\beq \label{eq:OffsetModel}
      \hPT(N,\eps,\eta) = c(\eps,\eta) N^{-\gamma} + SE(\hPT) \cdot\normal(0,1),
\eeq
Table \ref{table:OffsetModel} shows that  $\gamma = 1/3$ provides an
adequate description of the offsets, with an $R^2$ exceeding $0.995$.
A plot of raw $\hPT$'s versus the predictions of model (\ref{eq:OffsetModel})
is given in Figure \ref{fig:OffsetModel}.

\begin{table}
\begin{center}
\begin{tabular}{|l|l|}
\hline
$\gamma$ & $R^2$ \\
\hline
$1/3$ & 0.99521 \\
$1/2$ &0.98724 \\
$2/3$ & 0.9666811 \\
$3/4$ & 0.950506\\
$1$    & 0.893454\\
\hline
\end{tabular}
\caption{Powers $\gamma$ and resulting $R^2$ for fitting model
  (\ref{eq:OffsetModel}) to the empirical offsets of various phase transitions.}
\label{table:OffsetModel}
\end{center}
\end{table}

\begin{figure}
\begin{center}
\includegraphics[height=3in]{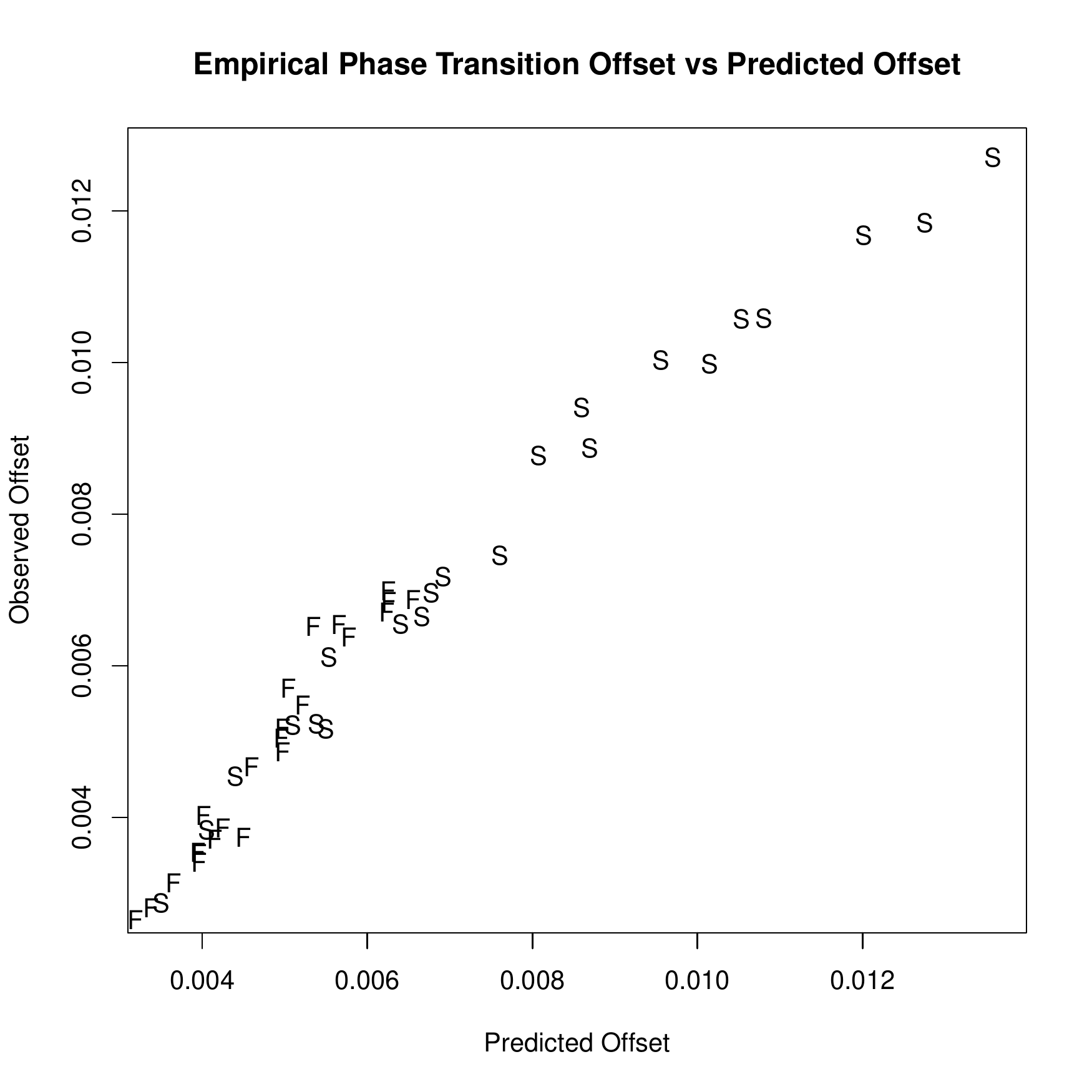}
\caption{Fitted offsets versus predicted offsets in model
  (\ref{eq:OffsetModel}) for various values of $\eps$ and soft and
  firm thresholding (respectively S and F).}
\label{fig:OffsetModel}
\end{center}
\end{figure}

\subsection{Transitions sharpen}

In addition to an empirical phase transition parameter $\hPT(N,\eps,\eta)$ 
we also fitted an empirical steepness parameter $\hbeta(N,\eps,\eta)$, according to
the logistic model:
\[
\log\frac{\hp_i}{1-\hp_i} = \beta \big(\delta_i -
      \hPT\big) , 
\]
where $\hp_i$ is the empirical success probability for the $i$-th experiment. 

We expect $\hbeta(N,\eps,\eta)$ to grow with increasing $N$, corresponding to increasingly abrupt transitions
from complete failure at $\delta \ll \hPT(N,\eps,\eta) $ to complete success at $\delta \gg \hPT(N,\eps,\eta)$.
In order to test this behavior, we fitted a linear model to the
values of $\hbeta$ computed for multiple values of $N$, $\eps$, and
denoisers $\eta$. We considered a range of powers $\tgamma$
that might be describing the growth of the steepness with increasing $N$:
\beq \label{eq:SlopeModel}
      \hbeta(N,\eps,\eta) = c(\eps,\eta) \, N^{\tgamma} + {\rm Error}.
\eeq

Table \ref{table:SlopeModel} shows that  $\gamma = 1/2$ provides an
adequate description of the steepnesses, with an $R^2$ exceeding $0.999$.
A plot of raw $\hbeta$'s versus the predictions of model (\ref{eq:SlopeModel})
is given in Figure \ref{fig:SlopeModel}.

\begin{table}
\begin{center}
\begin{tabular}{|l|l|}
\hline
$\gamma$ & $R^2$ \\
\hline
$1/3$ & 0.99286 \\
$1/2$ &0.99927\\
$2/3$ & 0.991064 \\
$3/4$ & 0.982350\\
$1$    & 0.947641\\
\hline
\end{tabular}
\caption{Powers $\gamma$ and resulting $R^2$ for fitting model
  (\ref{eq:SlopeModel}) to the empirical slope parameters.}
\label{table:SlopeModel}
\end{center}
\end{table}

\begin{figure}
\begin{center}
\includegraphics[height=3in]{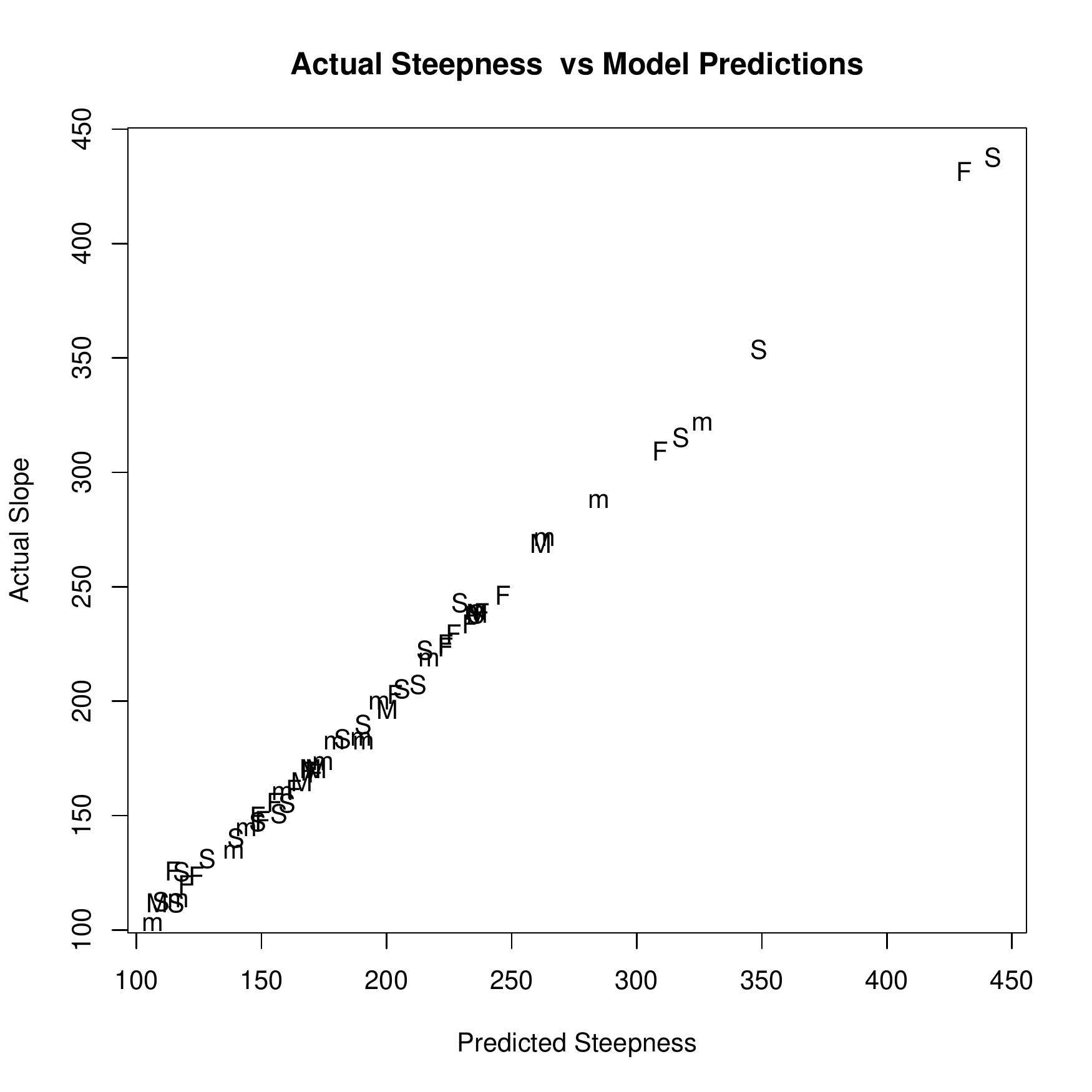}
\caption{Fitted steepness versus predicted steepness in model
  (\ref{eq:SlopeModel}) 
for soft (S), firm (F) and minimax (m) denoisers.}
\label{fig:SlopeModel}
\end{center}
\end{figure}

\section{Further properties of the risk function}
\label{Appendix:RiskProperties}

In this appendix we prove several useful properties of the risk
function of the block soft and
James-Stein denoisers. Throughout this section, we define the risk at 
$\mu$ as
\begin{eqnarray}
R(\mu;\eta) = \E\big\{\|\eta(\mu+\bz)-\mu\|_2^2\big\}\, .
\end{eqnarray}
The argument $\eta$ will be droppend or replaced by the threshold
level $\tau$ whenever clear from the context. 
Since we only consider denoisers that are equivariant under rotation,
$R(\mu;\eta)$ depends on the vector $\mu$ only through its norm
$\|\mu\|_2$. With a slight abuse of notation, we will use $\mu$ to
denote the norm as well. In other words, the reader can assume $\mu =
\mu\, e_1$.
%
%******************************************************
%
\subsection{Block soft thresholding}
\label{sec:PropertiesOfBlock}

In this section we consider the block soft denoiser
$\esoft(\,\cdot\,;\tau):\bR^B\to\bR^B$.
We will write $R(\mu;\tau)$ for $R(\mu;\esoft(\,\cdot\,;\tau))$.
\begin{lemma}\label{lemma:PropertiesOfBlock}
For block soft thresholding the risk function $\mu \mapsto R(\mu\tau)$
has these properties:
\begin{align}
&  \mu  \rightarrow R(\mu;\tau) \text{ is monotone increasing},  \label{eq:mono} \\
& \frac{\partial\phantom{\mu}}{\partial \mu^2} R(\mu;\mu)  \leq 1, \label{eq:bound1} \\
& R(\mu;\tau)  \leq \min \{ R(0;\tau) + \mu; B + \tau^2 \}, \label{eq:oracle1}\\
& \lim_{\mu\to\infty}R(\mu;\tau)   = B + \tau^2 \, .  \label{eq:oracle2}
\end{align}
\end{lemma}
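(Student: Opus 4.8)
The plan is to exploit that block soft thresholding is a contraction obtained by peeling off a projection, and then apply Stein's unbiased risk estimate to collapse the $B$-dimensional problem to a one-dimensional statement about the noncentral $\chi^2$ law. \emph{First, reductions.} Since $\bz\sim\normal(0,\id_{B\times B})$ is rotationally invariant, $R(\mu;\tau)$ depends on $\mu$ only through $\|\mu\|_2$, so I will take $\mu=\|\mu\|_2\,e_1$ and treat $R(\,\cdot\,;\tau)$ as a function of the scalar $\mu=\|\mu\|_2\ge0$. The useful structural fact is $\esoft(y;\tau)=y-\Pi_{B_\tau}(y)$, where $\Pi_{B_\tau}$ is Euclidean projection onto the ball $B_\tau=\{v\in\bR^B:\|v\|_2\le\tau\}$; equivalently $\esoft(\,\cdot\,;\tau)=\nabla\varphi_\tau$ with $\varphi_\tau(y)=\tfrac12(\|y\|_2-\tau)_+^2$ convex, so $\esoft(\,\cdot\,;\tau)$ is $1$-Lipschitz (its symmetric Jacobian has eigenvalues in $[0,1]$) and $\|\Pi_{B_\tau}(y)\|_2^2=\min(\|y\|_2^2,\tau^2)$. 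Writing $y=\mu+\bz$ gives $R(\mu;\tau)=\E\|\bz-\Pi_{B_\tau}(\mu+\bz)\|_2^2$, from which \eqref{eq:oracle2} follows at once: as $\mu\to\infty$ one has $\Pi_{B_\tau}(\mu e_1+\bz)\to\tau e_1$ almost surely while $\|\Pi_{B_\tau}(\,\cdot\,)\|_2\le\tau$, so dominated convergence gives $R(\mu;\tau)\to\E\|\bz-\tau e_1\|_2^2=B+\tau^2$.

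\emph{Monotonicity \eqref{eq:mono}.} Applying Gaussian integration by parts (the Stein/SURE identity) to $\E\|\bz-\Pi_{B_\tau}(\mu+\bz)\|_2^2$, using $\div\,\Pi_{B_\tau}(y)=B\,\ind_{\{\|y\|_2\le\tau\}}+(B-1)\tau\|y\|_2^{-1}\ind_{\{\|y\|_2>\tau\}}$, I expect to reach $R(\mu;\tau)=B+\E[g_\tau(S)]$, where $S\equiv\|\mu+\bz\|_2^2$ is noncentral $\chi^2$ with $B$ degrees of freedom and noncentrality $\|\mu\|_2^2$, and $g_\tau(s)=\min(s,\tau^2)-2B\,\ind_{\{s\le\tau^2\}}-2(B-1)\tau\,s^{-1/2}\ind_{\{s>\tau^2\}}$. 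A short check shows $g_\tau$ is nondecreasing on $(0,\infty)$: slope $1$ on $(0,\tau^2]$, increasing on $(\tau^2,\infty)$, with an upward jump of size $2$ at $s=\tau^2$. Since the noncentral $\chi^2_B$ family is stochastically increasing in the noncentrality parameter and $g_\tau$ is nondecreasing, $\E[g_\tau(S)]$ is nondecreasing in $\|\mu\|_2^2$, which is \eqref{eq:mono}.

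\emph{The bounds \eqref{eq:oracle1} and \eqref{eq:bound1}.} The estimate $R(\mu;\tau)\le B+\tau^2$ is then immediate from \eqref{eq:mono} and \eqref{eq:oracle2}. For the other part of \eqref{eq:oracle1} and for \eqref{eq:bound1} I would use the Poisson mixture $S\mid N\sim\chi^2_{B+2N}$ with $N\sim\mathrm{Poisson}(\|\mu\|_2^2/2)$, so that $R(\mu;\tau)=B+\E[G_\tau(N)]$ with $G_\tau(k)\equiv\E[g_\tau(\chi^2_{B+2k})]$ nondecreasing in $k$; estimating the increments $G_\tau(k+1)-G_\tau(k)=\E[g_\tau(U+V)-g_\tau(U)]$ ($U\sim\chi^2_{B+2k}$, $V\sim\chi^2_2$ independent) via the slope-$1$ piece, the single jump, and the uniform bound $g_\tau\le\tau^2$, and then summing against the law of $N$, should deliver the right-hand side of \eqref{eq:oracle1}. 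Differentiating $R(\mu;\tau)=B+\E_{N\sim\mathrm{Poisson}(\mu^2/2)}[G_\tau(N)]$ in $\mu^2$ and using $\tfrac{d}{d\lambda}\E_{N\sim\mathrm{Poisson}(\lambda)}h(N)=\E[h(N+1)-h(N)]$ gives $\tfrac{\partial}{\partial\mu^2}R(\mu;\tau)=\tfrac12\E[G_\tau(N+1)-G_\tau(N)]$; specializing to the diagonal $\tau=\mu$ (where the $\tau$-dependence of $g_\tau$ contributes an analogous, easily controlled term) and reusing the increment bound yields \eqref{eq:bound1}.

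\emph{Expected obstacle.} Nothing here is conceptually deep; the two fiddly points are (i) justifying the SURE identity across the sphere $\{\|y\|_2=\tau\}$ on which $\Pi_{B_\tau}$ is not differentiable --- this set is Lebesgue-null, so one argues by mollification or via the divergence theorem in its Lipschitz/weak form --- and (ii) pushing the increment estimates on $G_\tau$ through cleanly enough to recover the exact right-hand sides of \eqref{eq:oracle1} and \eqref{eq:bound1} rather than a slightly weaker constant.
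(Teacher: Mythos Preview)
Your approach is correct and is essentially the paper's: both reduce $R(\mu;\tau)$ via Stein's unbiased risk estimate to $\E\,U(S)$ with $S^2=\|\mu+\bz\|_2^2\sim\chi^2_B(\|\mu\|_2^2)$ and $U$ the same piecewise monotone function you call $B+g_\tau$, and then exploit the noncentral $\chi^2$ family. A few differences in execution are worth noting. For monotonicity the paper differentiates explicitly using $\partial_\xi f_{\xi,d}(w)=-\partial_w f_{\xi,d+2}(w)$ and integration by parts to obtain a manifestly nonnegative formula for $\partial R/\partial\mu^2$; your stochastic-ordering argument ($g_\tau$ nondecreasing, $\chi^2_B(\xi)$ stochastically increasing in $\xi$) is an equally valid shortcut. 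For the limit \eqref{eq:oracle2} your projection identity $\esoft=\mathrm{id}-\Pi_{B_\tau}$ plus dominated convergence is cleaner than the paper's rewriting of SURE. For the derivative bound \eqref{eq:bound1}, however, the paper's route is more direct than your sketched increment bound: having the explicit derivative formula in hand, it uses the density inequality $f_{\xi,d+2}(w)\le (w/d)\,f_{\xi,d}(w)$ (an immediate consequence of the Poisson mixture and $f_{d+2+2j}(w)/w=f_{d+2j}(w)/(d+2j)$) to get $\partial R/\partial\mu^2\le 1$ in two lines; your Poisson-increment estimate would ultimately need the same inequality. Two clarifications: the ``$R(\mu;\mu)$'' in the statement is a typo for $R(\mu;\tau)$---you differentiate in $\mu^2$ at \emph{fixed} $\tau$, so no diagonal term arises---and once \eqref{eq:bound1} holds, the first half of \eqref{eq:oracle1} follows simply by integrating $\partial R/\partial\mu^2\le 1$ from $0$ to $\mu^2$ (the ``$+\mu$'' there should read ``$+\mu^2$''), so your separate increment-summing argument is unnecessary.
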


\begin{proof}
It will be convenient within the proof to set $d\equiv B$ and  $\xi \equiv \mu^2$.
Let $S^2 = \| \mu +\bZ\|_2^2$, so that $S^2$ is distributed
as a non-central chi-square $\chi^2_d(\xi)$ with, in particular,  $\E S^2 = d+\xi $.
Since the block soft thresholding rule is weakly differentiable, Stein's
unbiased estimate of risk yields $R(\mu;\tau) = \E U(S)$, with
\begin{displaymath}
  U(S) =
  \begin{cases}
    S^2 - d & \mbox{for $S < \tau$,} \\
    d+\tau^2 - 2(d-1) \tau S^{-1} & \mbox{for $S \geq \tau$.}
  \end{cases}
\end{displaymath}

Let $f_{\xi,d}(w)$ be the density function of 
$S^2 \sim \chi_d^2(\xi)$. This  satisfies
\begin{displaymath}
  \frac{\partial}{\partial \xi} f_{\xi,d}(w) 
  = - \frac{\partial}{\partial w} f_{\xi,d +2}(w)
  = \frac{1}{2}\, [f_{\xi,d+2}(w) - f_{\xi,d}(w)].
\end{displaymath}
Applying the first identity, integrating by parts, canceling terms,
and then using the second identity, we obtain
\begin{equation}
  \label{eq:deriv-formula}
  \frac{\partial\phantom{\mu}  }{\partial \mu^2} R(\mu^2;\tau) 
   = \int_0^{\tau^2} f_{\xi,d} (w)\, \de w + (d-1) \tau \int_{\tau^2}^\infty 
     w^{-3/2} f_{\xi,d+2}(w) \,\de w 
  \geq 0.
\end{equation}
For the upper bound (\ref{eq:bound1}), use the Poisson mixture
representation 
$f_{\xi,d+2}(w) = \sum_{j=0}^\infty p_{\xi/2}(j) f_{d+2+2j}(w)$ with
$p_{\lambda}(j)\equiv \lambda^je^{-\lambda}/j!$ and an
identity for the central $\chi^2$ density family to obtain
\begin{displaymath}
  \frac{f_{d+2+2j}(w)}{w} 
  = \frac{f_{d+2j}(w)}{d + 2j}
  \leq \frac{f_{d+2j}(w)}{d}  
\end{displaymath}
and so to conclude that $f_{\xi,d+2}(w) \leq (w/d) f_{\xi,d}(w)$. 
Hence the second term in (\ref{eq:deriv-formula}) is bounded by
$(1 - 1/\tau) \int_{\tau^2}^\infty f_{\xi,d}(w) \,\de w$, whence follows the
conclusion $(\partial R/ \partial \mu^2) \leq 1$. 
Property (\ref{eq:oracle1}) is immediate from (\ref{eq:mono}) and
(\ref{eq:bound1}) and the large-$\mu$ limit of $R$. 

To obtain the bound in (\ref{eq:oracle2}), write the risk function
using the unbiased risk formula as
\begin{displaymath}
R(\mu) = \xi + \E [ 2d + \tau^2 - S^2 - 2(d-1) \tau S^{-1}, S > \tau]
\end{displaymath}
and observe that the integrand is decreasing in $S$, and bounded above
by $2$ for $S > \tau$, so that $R(0;\tau) \leq 2 \prob( \chi_d^2 \geq \tau^2)$. 
This yields (\ref{eq:oracle2}). 
\end{proof}
%
%**************************
%
\subsection{Positive-part James-Stein denoiser}
\label{sec:JSProperties}

As in the previous section, we set $\xi\equiv \mu^2$ and $d=B$.
HEre we consider the James-Stein denoiser
$\eJS:\bR^d\to\bR^d$ defined by
\begin{eqnarray}
\eJS(y) = \left(1 - \frac{(d-2)}{|y\|^2}\right)_+ \, y\,,
\end{eqnarray}
and we will write $R(\mu) = R(\mu;\eJS)$. 

We again let $S^2 = \| \mu+\bz \|^2_2$, with
$\bz\sim\normal(0,\id_{d\times d})$. 
We have the noncentral chi-squared distribution $S \sim \chi_d^2(\xi)$ 
with noncentrality $\xi$. 
Stein's unbiased estimate of risk is
\begin{displaymath}
  U(S) =
  \begin{cases}
    S^2 - d  &  \mbox{ for $S^2 < d - 2$}\, , \\
    d - (d-2)^2/S^2 & \mbox{for $S^2 > d-2$}\, ,
  \end{cases}
\end{displaymath}
and $R(\mu) = \E U(S)$.

\subsubsection{Risk  at $0$}
\label{sec:JS_at_0}

We will first develop an approximation of the risk at $0$ that was
used in Sections (\ref{sec:BlockJS}) and
(\ref{sec-PTExperiments-vector}).

Let $f_d(w)$ denote the density of a central chi-squared with $d$
degrees of freedom. We then have
 the density satisfies
\begin{align*}
 R(0) 
   & = \E \Big\{\Big\| \bigl( 1 - \frac{d-2}{S^2}\bigr)_+ \bz \Big\|^2\Big\} \\
   & = \int_{d-2}^\infty \Big[ w - 2(d-2) - \frac{(d-2)^2}{w}\Big]
   f_d(w)\, \de w \, .
\end{align*}
Using the identity $wf_{d-2}(w) = (d-2) f_{d}(w)$
and letting $D= d-2$, we can rewrite the last expression as
\begin{eqnarray}
  R(0) = \frac{1}{(d-2)}\int_{d-2}^\infty \big[w - (d-2) \big]^2
   f_{d-2}(w)\, \de w  = D^{-1} \E (\chi_D^2 - D)_+^2 . \label{eq:RiskAtZeroFormula}
\end{eqnarray}
We have the convergence in distribution, 
\[
D^{-1/2} (\chi_D^2 - D) \Rightarrow \normal(0,1), \qquad D \goto \infty.
\]
By a standard tightness argument, this implies that that
\[
  \lim_{d\to\infty} R(0) =2 \E Z_+^2 = 1\, ,
\] 
where $Z \sim \normal(0,1)$.

An Edgeworth series leads to the expansion $R(0) = 1+R_1\, d^{-1/2}+\Theta(d^{-2})$.
Indeed, one can integrate the   expression
(\ref{eq:RiskAtZeroFormula}) numerically, and the numerical values are
consistent with   $R(0) \approx 1 + 0.752 /\sqrt{d}$ for large $d$.

\subsubsection{Monotonicity of $R(\mu)$.}

We use the variation-diminishing version of total positivity,
developed in \cite{BrJoMa81}.
\begin{theorem}[Brown, Johnstone, and MacGibbon]
The non-central $\chi^2$ family is strictly variation diminishing of all
orders
\end{theorem}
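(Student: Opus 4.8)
Since this is the cited theorem of \cite{BrJoMa81}, the plan is to indicate the standard route to it. First I would reduce ``strictly variation diminishing of all orders'' to the assertion that the noncentral $\chi^2_d$ density, viewed as a kernel $K(\xi,w)\equiv f_{\xi,d}(w)$ in the noncentrality $\xi\ge0$ and the argument $w>0$ (with $d$ fixed), is \emph{strictly totally positive of all orders} ($STP_\infty$): every minor $\det\big[K(\xi_i,w_k)\big]_{i,k=1}^{n}$ with $0\le\xi_1<\cdots<\xi_n$ and $0<w_1<\cdots<w_n$ is strictly positive. That strictly totally positive kernels are strictly variation diminishing of the corresponding order is classical (Karlin, \emph{Total Positivity}, Ch.\ 5), so this reduction does the real work. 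The engine is the Poisson mixture representation already used above,
\[
f_{\xi,d}(w)=\sum_{j\ge0}p_{\xi/2}(j)\,f_{d+2j}(w),\qquad p_\lambda(j)=e^{-\lambda}\lambda^{j}/j!,
\]
which exhibits $K$ as a composition, over the discrete middle variable $j\in\mathbb Z_{\ge0}$, of the Poisson kernel $P(\xi,j)=p_{\xi/2}(j)$ with the central-density kernel $G(j,w)=f_{d+2j}(w)$.

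Next I would check total positivity of the two factors and apply Karlin's basic composition (Cauchy--Binet) formula for kernels. For $G$: write $f_{d+2j}(w)=\big(2^{d/2+j}\Gamma(d/2+j)\big)^{-1}\,w^{d/2-1}e^{-w/2}\cdot w^{j}$ and pull out the strictly positive row factor $(2^{d/2+j}\Gamma(d/2+j))^{-1}$ and column factor $w^{d/2-1}e^{-w/2}$; strict total positivity of $G$ then reduces to strict positivity of $\det\big[w_k^{\,j_l}\big]$ for $0\le j_1<\cdots<j_n$ and $0<w_1<\cdots<w_n$, a generalized Vandermonde determinant, positive because it factors as an ordinary Vandermonde times a Schur polynomial in positive arguments. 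For $P$: likewise $p_{\xi/2}(j)=e^{-\xi/2}(2^{j}j!)^{-1}\xi^{j}$, so the same power-kernel fact gives $\det\big[P(\xi_i,j_l)\big]\ge0$ in general and $>0$ when $\xi_1>0$. The composition formula then yields
\[
\det\big[K(\xi_i,w_k)\big]_{i,k=1}^{n}=\sum_{0\le j_1<\cdots<j_n}\det\big[P(\xi_i,j_l)\big]_{i,l}\,\det\big[G(j_l,w_k)\big]_{l,k},
\]
a sum of nonnegative terms containing the strictly positive term indexed by $j_l=l-1$, whence $K$ is $STP_n$ for every $n$.

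The step I expect to demand the most care is the endpoint $\xi=0$, where $P(0,j)=\mathbf 1\{j=0\}$ and $P$ alone is no longer strictly totally positive. One must argue that the composition stays strictly positive there: when $\xi_1=0$ every term in the displayed sum with $j_1\ge1$ vanishes, while the terms with $j_1=0$ collapse to a minor of $P$ over the strictly positive abscissas $\xi_2<\cdots<\xi_n$ times a minor of $G$, each strictly positive. This gives $STP_\infty$ on the full range $\xi\ge0$, which is precisely what the intended monotonicity of $R(\mu)$ in $\mu\ge0$ (i.e.\ in $\xi=\mu^{2}\ge0$) requires. If one prefers, $STP_\infty$ can instead be proved on $\xi>0$ and extended to $\xi=0$ by continuity of $K$ in $\xi$, the limit being transparent from the Bessel form $f_{\xi,d}(w)=\tfrac12 e^{-(\xi+w)/2}(w/\xi)^{(d-2)/4}I_{d/2-1}(\sqrt{\xi w})$.
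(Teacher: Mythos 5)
The paper does not prove this statement: it is imported verbatim from \cite{BrJoMa81} and used as a black box in Appendix \ref{sec:JSProperties}, so there is no in-paper argument to compare yours against. Your reconstruction is correct and is the standard route to the result. The reduction of ``strictly variation diminishing'' to strict total positivity of the kernel $(\xi,w)\mapsto f_{\xi,d}(w)$ is exactly Karlin's variation-diminishing theorem (\emph{Total Positivity}, Ch.~5); the Poisson mixture $f_{\xi,d}(w)=\sum_{j\ge 0}p_{\xi/2}(j)f_{d+2j}(w)$ exhibits the kernel as a composition of the Poisson kernel with the central $\chi^2$ family over the discrete index $j$; both factors reduce, after extracting strictly positive row and column multipliers, to the power kernel $(x,j)\mapsto x^{j}$, whose minors are generalized Vandermonde determinants (ordinary Vandermonde times a Schur polynomial) and hence strictly positive for $0<x_1<\cdots<x_n$; and the basic composition (Cauchy--Binet) formula transfers strict positivity to the composite. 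The one genuinely delicate point is the boundary $\xi_1=0$, where $P(0,j)=\mathbf 1\{j=0\}$, and you handle it correctly: only the terms with $j_1=0$ survive, and for those the $P$-minor collapses to a strictly positive minor over $\xi_2<\cdots<\xi_n$ while the $G$-minor is strictly positive, so the composite minor remains strictly positive on all of $\xi\ge 0$ --- which is what the monotonicity of $R(\mu)$ at $\mu=0$ actually needs. Two cosmetic points that do not affect correctness: you should say a word justifying the term-by-term application of the composition formula to the infinite sum (immediate here, since every summand is nonnegative and the Poisson weights decay superexponentially), and you could note explicitly that the consequence invoked in the paper is $S^{+}(\gamma)\le S^{-}(g)$ together with the sign-agreement clause $IS(\gamma)=IS(g)$ in case of equality, which is precisely what $STP_\infty$ delivers.
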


For $g:[0,\infty)\to\bR$, let  $S^-(g)$ and $S^+(g)$ denote the number of sign changes and strict sign
changes of  $g$, and let
$IS(g)$ denote the  sign of $g(0)$ (assuming that $g(0)\neq 0$, the
more general definition being given in \cite{BrJoMa81}). 
Further define the function $\gamma:  [0,\infty) \mapsto \bR$ by
\begin{eqnarray}
    \gamma(\xi) = \int_0^{\infty} g(w) f_{d,\xi}(w)\,\de w\, ,\label{eq:GammaDef}
\end{eqnarray}
where $f_{d,\xi}(\,\cdot\,)$ is the density of the noncentral
chi-square with $d$ degrees of freedom  and noncentrality $\xi$.
By the SVR property we have that that $S^+(\gamma) \leq S^-(g)$ and that if
$S^+(\gamma) = S^-(g)$ then necessarily $IS(\gamma) = IS(g)$. 
In particular this implies that, if $g$ is strictly increasing, then
$\gamma$ is strictly increasing as well. Indeed this follows by
letting $g_a(w) = g(w)-a$ for $a\in\bR$ and
\[
    \gamma_a(\xi) \equiv \int_0^{\infty} g_a(w) f_{d,\xi}(w)\,\de w =\gamma(\theta)-a\, .
\]
If $g$ is strictly increasing, then $S^-(g_a)\le 1$ for all $a\in\bR$,
whence $S^{+}(\gamma_a)\le 1$ for all $a$, with $IS(\gamma) = IS(g)$
whenever $S^+(\gamma_a)=1$. This in turns implies that
$\gamma$ is increasing.

We now verify that the risk $R(\mu)$ of $\eta^{JS}$ is monotone increasing
in $\xi = \|\mu\|^2 \in [0,\infty)$. 
Let
\begin{displaymath}
  g(w)=U(w^{1/2}) =
  \begin{cases}
    w - d  &  \mbox{ for $w < d - 2$}\, , \\
    d - (d-2)^2/w & \mbox{for $w > d-2$}\, ,
  \end{cases}
\end{displaymath}
and define $\gamma(\xi)$ using Eq.~(\ref{eq:GammaDef}). 
Note that $g$ is strictly increasing and hence $\xi\mapsto\gamma(\xi)$
is
increasing as well by the above argument. 
But $U(S)$ is Stein's unbiased risk estimator and hence $R(\mu) =
\gamma(\xi=\|mu\|^2)$, which implies the claim.
 
\subsection{Proof of Lemma \ref{lemma:Star}}

Since any probability distribution is written as a convex combination
of point masses, it is sufficient to prove the claim for $\nu =
\delta_{\mu}$.
In this case, using the scaling relation
(\ref{eq:ScalingRelation}), we have
\begin{eqnarray}
\Psi(m;\delta,\tau,\delta_{\mu})  =
\frac{m}{\delta}\, R\big(\mu\sqrt{\delta/m}\big)\, ,
\end{eqnarray}
with $R(\,\cdot\,)$ the risk function.
Therefore the state evolution mapping is starshaped for all
distributions $\nu$ if and only if $\mu\mapsto R(\mu)$ is monotone
increasing.

The monotonicity of the risk function was proved in \cite{DMM09} for 
soft thresholding and positive soft thresholding. It is proved in
Section \ref{sec:Mono} and \ref{sec:TV} for monotone regression and
total variation denoising.
Finally, it is proved in Section \ref{sec:PropertiesOfBlock} and
\ref{sec:JSProperties} for block soft and James-Stein denoisers.

\bibliographystyle{amsalpha}

\newcommand{\etalchar}[1]{$^{#1}$}
\providecommand{\bysame}{\leavevmode\hbox to3em{\hrulefill}\thinspace}
\providecommand{\MR}{\relax\ifhmode\unskip\space\fi MR }
% \MRhref is called by the amsart/book/proc definition of \MR.
\providecommand{\MRhref}[2]{%
  \href{http://www.ams.org/mathscinet-getitem?mr=#1}{#2}
}
\providecommand{\href}[2]{#2}

\end{document}